\algnewcommand{\LineComment}[1]{\Statex \hskip\ALG@thistlm {\color{gray}\texttt{// #1}}}
\definecolor{almond}{rgb}{0.94, 0.87, 0.8}
\definecolor{bubbles}{rgb}{0.91, 1.0, 1.0}
\title{A linear-time algorithm for $(1+\eps)\Delta$-edge-coloring}
\date{}
\author{\lsstyle Anton~Bernshteyn}
\email{bernshteyn@math.ucla.edu}
\author{\lsstyle Abhishek~Dhawan}
\email{adhawan2@illinois.edu}
\address{\normalfont{}(AB) \textls{Department of Mathematics, University of California, Los Angeles, CA, USA}}
\address{\normalfont{}(AD) \textls{Department of Mathematics, University of Illinois Urbana-Champaign, Urbana, IL, USA}}
\thanks{AB's research was partially supported by the NSF grant DMS-2045412 and the NSF CAREER grant DMS-2239187.
AD's research was partially supported by the Georgia Tech ARC-ACO Fellowship, NSF grant DMS-2053333 (PI: Cheng Mao), the NSF CAREER grant DMS-2239187 (PI: Anton Bernshteyn), and the NSF RTG grant DMS-1937241.}
\newtheoremstyle{bfnote}%
{}{}%
{\slshape}{}%
{\bfseries}{\bfseries.}%
{ }%
{\thmname{#1}\thmnumber{ #2}\thmnote{ \ep{\normalfont{}#3}}}
\newtheoremstyle{claim}%
{}{}%
{\slshape}{}%
{\itshape}{.}%
{ }%
{\thmname{#1}\thmnumber{ #2}\thmnote{ \ep{\normalfont{}#3}}}
\theoremstyle{bfnote}
\newtheorem{theo}{Theorem}[section]
\newtheorem*{theo*}{Theorem}
\newtheorem{prop}[theo]{Proposition}
\newtheorem{Lemma}[theo]{Lemma}
\newtheorem{conj}[theo]{Conjecture}
\newtheorem*{corl*}{Corollary}
\theoremstyle{definition}
\newtheorem{defn}[theo]{Definition}
\newtheorem*{defn*}{Definition}
\newtheorem*{exmp*}{Example}
\theoremstyle{remark}
\newtheorem*{ques*}{Question}
\newtheorem*{remk*}{Remark}
\theoremstyle{claim}
\newcounter{ForClaims}[section]
\newtheorem{claim}{Claim}[ForClaims]
\newtheorem*{claim*}{Claim}
\newcommand{\neutralize}[1]{\expandafter\let\csname c@#1\endcsname\count@}
\newenvironment{claimproof}{\noindent$\rhd$\hspace{1em}}{\hfill$\lhd$}
\newcommand{\0}{\varnothing}
\newcommand{\set}[1]{\{#1\}}
\newcommand{\N}{{\mathbb{N}}}
\newcommand{\Z}{\mathbb{Z}}
\renewcommand{\P}{\mathbb{P}}
\newcommand{\E}{\mathbb{E}}
\renewcommand{\epsilon}{\varepsilon}
\newcommand{\eps}{\epsilon}
\renewcommand{\phi}{\varphi}
\renewcommand{\theta}{\vartheta}
\renewcommand{\leq}{\leqslant}
\renewcommand{\geq}{\geqslant}
\newcommand{\defeq}{\coloneqq}
\newcommand{\bemph}[1]{{\normalfont#1}} 
\newcommand{\ep}[1]{\bemph{(}#1\bemph{)}} 
\newcommand{\emphdef}[1]{\textbf{\textit{{#1}}}}
\newcommand{\blank}{\mathsf{blank}}
\newcommand{\col}{\mathsf{col}}
\newcommand{\dom}{\mathsf{dom}}
\newcommand{\Shift}{\mathsf{Shift}}
\newcommand{\vend}{\mathsf{vEnd}}
\newcommand{\vstart}{\mathsf{vStart}}
\numberwithin{equation}{section}
\newcommand{\emphd}[1]{\emphdef{#1}}
\newcommand{\poly}{\mathsf{poly}}
\newcommand{\Start}{\mathsf{Start}}
\newcommand{\End}{\mathsf{End}}
\newcommand{\Pivot}{\mathsf{Pivot}}
\newcommand{\length}{\mathsf{length}}
\newcommand{\visited}{\mathsf{visited}}
\newcommand{\aug}{\mathsf{Aug}}
\newcommand{\bbone}{\mathbbm{1}}
\newcommand{\IE}{E_{\mathsf{int}}}
\newcommand{\IV}{V_{\mathsf{int}}}
\newcommand{\val}{\mathsf{val}}
\newcommand{\wt}{\mathsf{wt}}
\newcommand{\kmax}{k_{\max}}
\newcommand{\boldcol}{\mathbf{col}}
\newcommand{\same}{s}
\newcommand{\y}{\tilde y}
\newcommand{\algsize}{\small}
\newenvironment{breakablealgorithm}
  {
   \begin{center}
     \refstepcounter{algorithm}
     \hrule height.8pt depth0pt \kern2pt
     \renewcommand{\caption}[2][\relax]{
       {\raggedright\textbf{\ALG@name~\thealgorithm} ##2\par}%
       \ifx\relax##1\relax 
         \addcontentsline{loa}{algorithm}{\protect\numberline{\thealgorithm}##2}%
       \else 
         \addcontentsline{loa}{algorithm}{\protect\numberline{\thealgorithm}##1}%
       \fi
       \kern2pt\hrule\kern2pt
     }
  }{
     \kern2pt\hrule\relax
   \end{center}
  }
\titleformat{\section}[block]{\scshape}{\thesection.}{1ex}{}
\titleformat{\subsection}[block]{\bfseries}{\thesubsection.}{1ex}{}
\titleformat{\subsection}[block]{\bfseries}{\thesubsection.}{1ex}{}
\titleformat{\subsubsection}[runin]{\bfseries}{\bfseries\upshape\thesubsubsection.}{1ex}{}[.---]
\titlespacing*{\section}{0pt}{*3}{*1}
\titlespacing*{\subsection}{0pt}{*3}{*1}
\titlespacing*{\subsubsection}{0pt}{*1.5}{*0}
\setlist{topsep=3pt,itemsep=3pt}
\renewenvironment{shaded}{%
  \MakeFramed{\advance\hsize-\width \FrameRestore\FrameRestore}}%
 {\endMakeFramed}
\definecolor{shadecolor}{gray}{0.9}
\begin{document}

\maketitle

\begin{abstract}
    We present a randomized algorithm that, given a constant $\epsilon > 0$, outputs a proper $(1+\epsilon)\Delta$-edge-coloring of an $m$-edge simple graph $G$ of maximum degree $\Delta \geq 1/\epsilon$ in $O(m)$ time with high probability. This is the first linear-time algorithm for this problem covering the full range of possible values of $\Delta$. 
    Indeed, even for edge-coloring with $2\Delta - 1$ colors (i.e., meeting the ``greedy'' bound), no such linear-time algorithm has been previously known.
\end{abstract}

\section{Introduction}

    \subsection{The main result}

    All graphs in this paper are finite, undirected, and simple. Edge-coloring is one of the central and most well-studied problems in graph theory \cites[\S17]{BondyMurty}[\S5.3]{Diestel}{EdgeColoringMonograph}. In what follows, we write $\N \defeq \set{0,1,\ldots}$, $\N^+ \defeq \set{1,2,\ldots}$, and, given $q \in \N$, we let $[q] \defeq \set{1,2,\ldots, q}$.

    \begin{defn}[Proper edge-coloring]
        Two edges $e$, $f \in E(G)$ of a graph $G$ are \emphd{adjacent}, or \emphd{neighbors}, if $e \neq f$ and $e \cap f \neq \0$. An \emphd{edge-coloring} of $G$ is a mapping $\phi \colon E(G) \to \N^+$, and a \emphd{$q$-edge-coloring} for $q \geq 0$ is an edge-coloring $\phi$ with $\phi(e) \leq q$ for all $e \in E(G)$.\footnote{Note that, for convenience, we allow $q$ to be non-integer.} We refer to the value $\phi(e)$ as the \emphd{color} of the edge $e$. An edge-coloring $\phi$ is \emphd{proper} if $\phi(e) \neq \phi(f)$ whenever $e$ and $f$ are neighbors. The \emphd{chromatic index} of $G$, denoted by $\chi'(G)$, is the minimum $q$ such that $G$ admits a proper $q$-edge-coloring.
    \end{defn}

    In a graph $G$ of maximum degree $\Delta$ every edge is adjacent to at most $2\Delta - 2$ other edges, so a simple greedy algorithm yields the bound $\chi'(G) \leq (2\Delta - 2) + 1 = 2\Delta - 1$. For $\Delta \geq 3$, this can be improved by a factor of roughly $2$:

    \begin{theo}[{Vizing \cite{Vizing}}]\label{theo:Vizing}
        If $G$ is a graph of maximum degree $\Delta$, then $\chi'(G) \leq \Delta + 1$.
    \end{theo}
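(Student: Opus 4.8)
The plan is a classical extremal argument based on the method of \emph{Vizing fans} together with Kempe chain switches. Suppose towards a contradiction that $G$ has maximum degree $\Delta$ but admits no proper $(\Delta+1)$-edge-coloring, and among all \emph{partial} proper colorings $\phi \colon E(G) \pto [\Delta+1]$ (proper colorings of a subset of $E(G)$ by colors from $[\Delta+1]$) fix one that colors as many edges as possible. By assumption some edge $e_0 = xy_0$ is left uncolored. Call a color $c \in [\Delta+1]$ \emph{missing at} a vertex $v$ if no edge at $v$ has color $c$; since $\deg(v) \leq \Delta < \Delta+1$, every vertex has a missing color.

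Next I would build a maximal \emph{fan} at $x$: a sequence $y_0, y_1, \ldots, y_k$ of \emph{distinct} neighbors of $x$ with $xy_0$ uncolored and $\phi(xy_i)$ missing at $y_{i-1}$ for each $1 \leq i \leq k$, chosen so it cannot be prolonged. The basic operation is \emph{shifting} the fan down to position $j$: uncolor $xy_j$ and reassign $\phi(xy_i) \defeq \phi(xy_{i+1})$ for $0 \leq i < j$; since this merely permutes the colors among the edges at $x$, the result is again a proper partial coloring with the same number of colored edges, now with $xy_j$ uncolored. Fix a color $\alpha$ missing at $x$. If some color $\beta$ missing at $y_k$ is also missing at $x$, then shifting the whole fan down to $y_k$ and setting $\phi(xy_k) \defeq \beta$ colors one more edge, a contradiction. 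So every color $\beta$ missing at $y_k$ occurs at $x$, and maximality of the fan forces the $\beta$-colored edge at $x$ to be $xy_i$ for some $1 \leq i \leq k$ (otherwise its other endpoint could be appended to the fan). Fix such $\beta$ and $i$; then $\beta$ is missing at both $y_{i-1}$ and $y_k$ (and, since $\beta$ is present at $y_k$ would contradict this, in fact $i \leq k-1$, so $y_{i-1} \neq y_k$), while $\alpha$ is missing at $x$.

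The crux is a two-color Kempe analysis. Let $H$ be the subgraph of edges colored $\alpha$ or $\beta$; each component of $H$ is a path or even cycle, and each of $x$, $y_{i-1}$, $y_k$ is an endpoint of its component, because $\alpha$ is missing at $x$ and $\beta$ is missing at $y_{i-1}$ and at $y_k$. The component $P$ through $x$ is a nontrivial path with $x$ as one end, and its other end cannot be both $y_{i-1}$ and $y_k$, so at least one of these two vertices lies off $P$. If $y_{i-1} \notin P$, I would first shift the fan down to $y_{i-1}$ — this permutes colors only among edges at $x$, none colored $\alpha$, and leaves $H$, the $\beta$-edge $xy_i$ at $x$, and the statuses of $\alpha$ at $x$ and $\beta$ at $y_{i-1}$ intact — then switch $\alpha$ and $\beta$ along $P$, which turns $xy_i$ into an $\alpha$-edge (freeing $\beta$ at $x$) without touching $y_{i-1}$, and finally color $xy_{i-1}$ with $\beta$. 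If instead $y_k \notin P$ (so $P$ runs from $x$ to $y_{i-1}$), I would shift the whole fan down to $y_k$; this moves the $\beta$-edge at $x$ to $xy_{i-1}$ and transforms $P$ into a path $P'$ from $x$ to $y_i$ whose vertex set equals that of $P$, hence disjoint from $y_k$, so switching $\alpha$ and $\beta$ along $P'$ frees $\beta$ at $x$ while leaving $y_k$ untouched and $xy_k$ can be colored $\beta$. In every case we obtain a proper partial coloring with one more colored edge — the contradiction we want.

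The delicate point, and the place genuine care is needed, is the interplay between shifting a fan and switching a Kempe chain: one must check that a shift never perturbs the colors $\alpha$ and $\beta$ (true because a shift permutes only among the edges at $x$, on which $\alpha$ never appears and $\beta$ appears exactly once, a fact the bookkeeping must track through each case), that after shifting $\beta$ is still missing at the fan vertex we intend to color into, and that a Kempe switch alters only its own component of $H$. Getting the \emph{order} of shift-then-switch right in each case — and, crucially, keeping both candidate vertices $y_{i-1}$ and $y_k$ available so that one can always switch the component that avoids the vertex to be recolored — is the heart of the proof; once this structure is set up, the remaining verifications are routine.
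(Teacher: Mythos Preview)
Your argument is the standard Vizing fan plus Kempe chain proof and is correct; the only minor sloppiness is the parenthetical ``(so $P$ runs from $x$ to $y_{i-1}$)'' in Case~2, which tacitly assumes Case~1 fails---but since the two cases are exhaustive this is harmless. Note, however, that the paper does not give its own proof of this theorem: it is quoted as a classical result with references to Vizing's original paper and to textbook treatments, so there is nothing in the paper to compare against beyond observing that your approach is exactly the one those references present.
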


    See \cite[\S{}A.1]{EdgeColoringMonograph} for an English translation of \cite{Vizing} and \cites[\S17.2]{BondyMurty}[\S5.3]{Diestel} for textbook presentations. Since the edges incident to a vertex of $G$ must receive distinct colors in a proper edge-coloring, we have an almost matching lower bound $\chi'(G) \geq \Delta$. By a result of Holyer \cite{Holyer}, distinguishing between the cases $\chi'(G)=\Delta$ and $
    \chi'(G) = \Delta + 1$ is \textsf{NP}-hard, even when $\Delta = 3$.

    In this paper, we investigate the natural algorithmic problem of efficiently generating a proper $q$-edge-coloring of a given graph $G$ for $q \geq \Delta + 1$. In the following discussion, we use $n$ and $m$ for the number of vertices and edges of $G$ respectively. The benchmark results in this line of inquiry are summarized in Table \ref{table:history}. When $q = \Delta + 1$ (i.e., if we are trying to construct a coloring that matches Vizing's bound), the best known running time of $O(m \log \Delta)$ was achieved very recently by Assadi, Behnezhad, Bhattacharya, Costa, Solomon, and Zhang \cite{assadi2024vizing}.\footnote{We remark that an earlier version of this paper appeared several months prior to \cite{assadi2024vizing} and is cited there as giving the fastest algorithm that was known at the time for the regime when $\Delta$ is much smaller than $n$.} 
    On the other hand, it is clear that any edge-coloring algorithm requires at least $\Omega(m)$ time. In other words, the result of \cite{assadi2024vizing} is at most a $\log \Delta$ factor away from being optimal. A natural question arises: Can the $\log\Delta$ factor be removed, i.e., is there a \emph{truly linear-time algorithm} for $(\Delta+1)$-edge-coloring? More generally: 

    \begin{shaded}
        \begin{quote}
            \textsl{For which values of $q$ does there exist an edge-coloring algorithm using $q$ colors with \textbf{linear}---i.e., $O(m)$---running time?}
        \end{quote}
    \end{shaded}

     {
		\renewcommand{\arraystretch}{1.3}
		\begin{table}[t]\small
			\begin{tabular}{| c | c | c | >{\centering\arraybackslash\scriptsize}m{0.25\textwidth} |}
				\hline
				\thead{\#{} Colors} & \thead{Runtime} & \thead{Deterministic?} & \thead{References} \\\hline\hline
                 $\Delta + 1$ & $O(mn)$ & Deterministic & {Vizing \cite{Vizing}, Bollob\'as \cite[94]{Bollobas}, Rao--Dijkstra \cite{RD},  Misra--Gries \cite{MG}} \\\hline
                $\Delta + 1$ & $O(m\sqrt{n \log n})$ & Deterministic & {Arjomandi \cite{Arjomandi}\footnotemark, Gabow--Nishizeki--Kariv--Leven--Terada~\cite{GNKLT}} \\\hline
                $\Delta + 1$ & $O(m\sqrt{n})$ & Deterministic & {Sinnamon \cite{Sinnamon}} \\\hline
                $\Delta + 1$ & $\tilde{O}(mn^{1/3})$ & Randomized & {Bhattacharya--Carmon--Costa--
                Solomon--Zhang \cite{BCCSZ_cube_root}} \\\hline
                $\Delta + 1$ & expected $O(n^2 \log n)$ & Randomized & {Assadi \cite{Assadi}} \\\hline
                 $\Delta + 1$ & $O(m\,\Delta \log n)$ & Deterministic & {Gabow--Nishizeki--Kariv--Leven--Terada~\cite{GNKLT}} \\\hline
                $\Delta + 1$ & $O(m\,\Delta^{17})$ & Randomized & {AB--AD \cite{bernshteyn2023fast}\footnotemark} \\\hline
                  $\Delta + 1$ & $O(m\,\log\Delta)$ & Randomized & Assadi--Behnezhad--Bhattacharya--Costa--Solomon--Zhang \cite{assadi2024vizing} \\\hline\hline
                $2\Delta - 1$ & $O(m\,\Delta)$ & Deterministic & {Folklore (greedy)} \\\hline
                 $2\Delta - 1$ & $O(m\,\log\Delta)$ & Deterministic & {Sinnamon \cite{Sinnamon}, Bhattacharya--Chakrabarty--Henzinger--Nanongkai \cite{BCHN_dynamic}} 
                 \\\hline\hline
             $(1+\epsilon)\Delta$ & $O(m\,\log^6 n/\epsilon^2)$ for $\Delta = \Omega(\log n/\epsilon)$ & Randomized & {Duan--He--Zhang \cite{DHZ_dynamic}} \\\hline
                 $(1+\epsilon)\Delta$ & $O(m\,\log n/\epsilon)$ & Deterministic & {Elkin--Khuzman \cite{EK_splitting}} \\\hline
                 $(1+\epsilon)\Delta$ & $O(\max \{m/\epsilon^{18}, \ m \,\log\Delta\})$ & Randomized & {Elkin--Khuzman \cite{EK_splitting}} \\\hline
                 $(1+\epsilon)\Delta$ & \makecell{$O(m \,\log(1/\epsilon)/\epsilon^2)$ for $\Delta \geq (\log n / \epsilon)^{\poly(1/\epsilon)}$} & Randomized & {Bhattacharya--Costa--
                 Panski--Solomon \cite{BCPS_nibble}} \\\hline
                 $(1+\epsilon)\Delta$ & \makecell{expected $O(m \,\log(1/\epsilon))$ for $\Delta \gg \log n / \epsilon$} & Randomized & {Assadi \cite{Assadi}} \\\hline\hline
                 \rowcolor{almond} $(1+\epsilon)\Delta$ & $O(m \,\log(1/\eps)/\epsilon^4)$ & Randomized & {\textbf{This paper}} \\\hline
			\end{tabular}
            \caption{A brief survey of edge-coloring algorithms. The stated runtime of randomized algorithms is attained with high probability, unless indicated otherwise. 
            }\label{table:history}
		\end{table}	
	}

    \footnotetext[3]{See the comments in \cite[41]{GNKLT} regarding a possible gap in Arjomandi's analysis.}
    \footnotetext{The bound explicitly stated in \cite{bernshteyn2023fast} is $O(n\,\Delta^{18})$, but in fact the argument there gives $O(m \, \Delta^{17})$.}

    Perhaps surprisingly, heretofore no such linear-time algorithm has been known even for edge-coloring with $2\Delta-1$ colors! As mentioned earlier, a proper $(2\Delta - 1)$-edge-coloring of $G$ can be found by a straightforward greedy algorithm, where the edges are processed one by one, with each edge receiving an arbitrary color not yet used on any of its neighbors. At first glance, this may seem like a linear-time algorithm, but in fact its running time is only $O(m \,\Delta)$, because finding an available color for an edge requires surveying its neighbors and takes $O(\Delta)$ time. This bound has been improved to $O(m\,\log\Delta)$ by Sinnamon \cite{Sinnamon} (it is also implicit in \cite{BCHN_dynamic} by Bhattacharya, Chakrabarty, Henzinger, and Nanongkai), which has remained the state of the art prior to our work.

    That being said, there do exist known linear-time edge-coloring algorithms that operate under additional assumptions on $\Delta$. For example, when $\Delta$ is a \emph{constant}, a proper $(\Delta+1)$-edge-coloring can be found in linear time (this was first shown by the authors in \cite{bernshteyn2023fast}). On the other hand, for constant $\epsilon > 0$ and \emph{sufficiently large} $\Delta$, there are linear-time algorithms for $(1+\epsilon)\Delta$-edge-coloring. More precisely, for
    \begin{equation}\label{eq:Delta_lower_bound}
        \Delta \,\geq\, \left(\frac{\log n}{\epsilon}\right)^{\Omega(\log(1/\epsilon)/\epsilon)},
    \end{equation}
    Bhattacharya, Costa, Panski, and Solomon \cite{BCPS_nibble} gave a $(1+\epsilon)\Delta$-edge-coloring algorithm with running time $O(m \,\log(1/\epsilon)/\epsilon^2)$, which is linear in $m$ when $\epsilon = \Theta(1)$.

    Bhattacharya et al.~\cite[\S1.4]{BCPS_nibble} described it as a ``challenging open question'' to improve the lower bound \eqref{eq:Delta_lower_bound} on $\Delta$. A first step toward answering this question was recently taken by Assadi \cite{Assadi}, who designed a randomized $(1+\epsilon)\Delta$-edge-coloring algorithm with expected runtime $O(m \, \log(1/\epsilon))$ under the assumption $\Delta \gg \log n/ \epsilon$. (The algorithm from \cite{BCPS_nibble} attains its stated runtime with high probability, not just in expectation, while to reach a high-probability result using the approach of \cite{Assadi} requires a super-linear runtime.)

    In our main result we fully answer Bhattacharya et al.'s question by completely eliminating the lower bound on $\Delta$. That is, we give a linear-time algorithm for $(1+\epsilon)\Delta$-edge-coloring for any constant $\epsilon > 0$ and with no restrictions on $\Delta$ (except the necessary inequality $\Delta \geq 1/\epsilon$):

    \begin{tcolorbox}
    \begin{theo}[$(1+\epsilon)\Delta$-edge-coloring in linear time]\label{theo:main_theo}
            There is a randomized algorithm that, given $0 < \epsilon < 1$ and a graph $G$ with $m$ edges and maximum degree $\Delta \geq 1/\eps$, outputs a proper $(1 + \eps)\Delta$-edge-coloring of $G$ in time $O(m\,\log (1/\eps) /\epsilon^4)$ with probability at least $1 - \eps^{\Omega(m)}$.
    \end{theo}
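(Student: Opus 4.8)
The plan is to build the coloring through a multi-stage process that reduces the general case to the favorable regime where the Bhattacharya–Costa–Panski–Solomon (or Assadi) nibble-type algorithm already runs in linear time, and to handle the ``leftover'' low-degree parts by recursion. First I would fix a small constant $\epsilon$ and work with a target palette of size roughly $(1+\epsilon)\Delta$, splitting the budget into a constant number of ``sub-palettes'' of sizes proportional to $\epsilon \Delta$. The key structural tool is an \emph{edge partition}: decompose $E(G)$ into a bounded number of subgraphs $G_1, \dots, G_k$ with $k = O(1/\epsilon)$ or so, each of maximum degree about $\epsilon \Delta$ (say, via an Euler-tour-based splitting as in Elkin–Khuzman, which is itself linear time), color each $G_i$ from its own disjoint sub-palette, and argue that the union is proper because the palettes are disjoint. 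The issue is that naively this costs a factor $\log \Delta$ per piece from the greedy/Sinnamon subroutine, which we cannot afford; so instead the pieces must themselves be colored by the nibble method, and that requires each $G_i$ to have degree that is large relative to $\log n$ — which is exactly the obstruction.

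The heart of the argument, therefore, is a \textbf{reduction that removes the $\Delta \gg \log n/\epsilon$ assumption}. Here I would use the observation that the BCPS/Assadi lower bound on $\Delta$ is only needed for a concentration (Lovász Local Lemma / Chernoff) argument to go through globally; one can instead run the nibble locally on small clusters. Concretely: partition $V(G)$ into $O(m/\mathrm{poly}(\Delta))$ clusters each inducing a subgraph on $\mathrm{poly}(\Delta, 1/\epsilon)$ vertices (using a linear-time low-diameter or ball-carving decomposition), color the ``interior'' edges of each cluster independently — within a cluster of bounded size, an exhaustive / Vizing-chain-based routine runs in time polynomial in the cluster size, and crucially the randomness needed is only $\mathrm{poly}(\Delta)$ bits per cluster, so the failure probability $\epsilon^{\Omega(m)}$ comes from a product over clusters — and then clean up the $O(1/\epsilon)$ or so ``boundary'' edges between clusters by a second round using a reserved slice of $\epsilon\Delta/2$ colors. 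The deletion of boundary edges must reduce the maximum degree of the remaining graph by a controlled amount so that the reserved slice suffices; this is where the $\Delta \geq 1/\epsilon$ hypothesis and the precise cluster sizes interact.

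Putting it together: (i) reserve a palette slice of size $\epsilon\Delta/c$ for boundary cleanup; (ii) linear-time cluster the graph into pieces of size $s = \mathrm{poly}(1/\epsilon)$ independent of $n$; (iii) in parallel over clusters, color interior edges using $(1+\epsilon/2)\Delta$ colors via the within-cluster subroutine, each run using fresh randomness and succeeding with probability $1 - \epsilon^{\Omega(s)}$, so a union bound over $O(m/s)$ clusters gives global success probability $1 - \epsilon^{\Omega(m)}$; (iv) iterate on the boundary graph, whose edge count has dropped by a constant factor, so after $O(\log(1/\epsilon))$ rounds — or a geometric series argument giving total time $\sum_i O(m_i \,\mathrm{poly}(1/\epsilon)) = O(m\,\mathrm{poly}(1/\epsilon))$ — all edges are colored. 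The $\log(1/\epsilon)$ factor in the runtime is precisely the number of such rounds, and the $1/\epsilon^4$ is the per-round per-edge cost of the local subroutine. \textbf{The main obstacle} I anticipate is step (iii): making the within-cluster coloring subroutine genuinely run in time $O(\text{poly}(1/\epsilon))$ per edge with the stated exponential-in-$m$ tail, rather than merely in expectation or with a $\log$ factor — this likely needs a careful Vizing-chain analysis (bounding chain lengths by $O(1/\epsilon)$ with high probability, as in the authors' earlier work \cite{bernshteyn2023fast}, but now sharpened) together with an amortization showing the total chain length over a cluster is linear in its size.
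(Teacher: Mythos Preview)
Your proposal has a genuine structural gap that cannot be patched within the clustering framework. You propose vertex clusters of size $s = \poly(1/\eps)$ independent of $n$, but there is no constraint forcing $\Delta$ to be small: when $\Delta \gg \poly(1/\eps)$ (which is exactly the regime where the theorem is new), every vertex has more neighbors than can fit inside its own cluster, so essentially \emph{all} edges are boundary edges and the claim that ``the boundary graph has edge count dropped by a constant factor'' fails outright. If instead you take clusters of size $\poly(\Delta,1/\eps)$ (as you also suggest earlier), then the ``exhaustive / Vizing-chain-based routine'' inside a cluster costs $\poly(\Delta)$ per edge, which reintroduces precisely the $\Delta$-dependence the theorem is meant to eliminate. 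Either choice of cluster size defeats the reduction; the difficulty does not localize.

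The paper takes a completely different route: no clustering, no degree-splitting, no reduction to BCPS/Assadi. It runs the multi-step Vizing chain algorithm of \cite{bernshteyn2023fast} \emph{directly} on $G$, and the whole point is to make each augmenting chain cost $\poly(1/\eps)$ on average rather than $\poly(\Delta)$. Two ideas make this work. First, fans are built by a \emph{randomized} procedure (Algorithm~\ref{alg:rand_fan}) that repeatedly samples a uniformly random missing color and restarts if the fan exceeds $k_{\max}=\Theta(1/\eps)$ edges; because each vertex has $\geq \eps\Delta$ missing colors, this succeeds quickly with high probability (Proposition~\ref{lemma:fan_number_of_tries}), replacing the naive $O(\Delta)$ fan cost by $\poly(1/\eps)$. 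Second, the number of basic-loop iterations is bounded via entropy compression (Proposition~\ref{theo:num_iter_while}), but now the random missing-color choices inject an extra factor of $\eps\Delta$ into the input-sequence count, which exactly cancels the factors of $\Delta$ that arise when reconstructing the chain from its record---this balance is what removes $\Delta$ from the final bound and is the technical heart of the argument. Your proposal does not contain either of these ideas, and your identified ``main obstacle'' (step (iii)) is in fact the entire content of the theorem.
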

    \end{tcolorbox}

    Again, we emphasize that in the range $1 \ll \Delta = O(\log n)$, no linear-time algorithm has previously been available \emph{even for $(2\Delta-1)$-edge-coloring}, let alone for $(1+\epsilon)\Delta$-edge-coloring. Indeed, as we explain below, due to certain fundamental limitations, earlier approaches seem incapable of achieving a linear running time without requiring $\Delta$ to be at least $\Omega(\log n)$.  Reducing the number of colors from $(1+\epsilon)\Delta$ to $\Delta + 1$ remains an interesting (and likely quite difficult) open problem.

    Another respect in which our algorithm outperforms the contributions of Bhattacharya et al. \cite{BCPS_nibble}, Assadi \cite{Assadi}, and Assadi et al.~\cite{assadi2024vizing} is in its failure probability, which is \emph{exponentially small} in $m$. For comparison, Assadi's result \cite{Assadi} only bounds the expected runtime of the algorithm, while the algorithms in \cite{BCPS_nibble,assadi2024vizing} fail with polynomial probability. (Indeed, the only randomized algorithms in Table~\ref{table:history} with exponentially small failure probabilities are the one from \cite{bernshteyn2023fast} and the result of this paper.) 
    Developing {deterministic} linear-time edge-coloring algorithms remains a major open problem. Even with $2\Delta - 1$ colors, the best currently known deterministic algorithm has running time $O(m \,\log \Delta)$ \cite{Sinnamon,BCHN_dynamic}.

    \subsection{Overview of the algorithm and its analysis}\label{section: informal overview}

    Our proof of Theorem~\ref{theo:main_theo} builds on the groundwork laid in our earlier paper \cite{bernshteyn2023fast}. As mentioned previously, that paper provides (among other things) a $(\Delta + 1)$-edge-coloring algorithm with running time $O(m\,\Delta^{17})$. In \cite{EK_splitting}, Elkin and Khuzman used it as a subroutine in order to build proper $(1+\epsilon)\Delta$-edge-colorings for $\epsilon = \Theta(1)$ in time $O(m\,\log\Delta)$. Roughly, their idea is to recursively split the edges of $G$ into subgraphs of progressively smaller maximum degree, eventually reducing the degree to a constant, and then run the algorithm from \cite{bernshteyn2023fast} on each constant-degree subgraph separately. The factor of $\log \Delta$ in the running time arises from the number of degree-splitting steps necessary to bring the maximum degree down to a constant. Hence, such dependence on $\Delta$ seems unavoidable if one wishes to reduce the problem to applying the algorithm from \cite{bernshteyn2023fast} in a black-box manner on bounded degree graphs. 
    
    To achieve a linear runtime independent of $\Delta$, we instead implement (a variant of) the algorithm from \cite{bernshteyn2023fast} on $G$ \emph{directly} and make use of the larger set of available colors---i.e., $(1+\epsilon)\Delta$ rather than $\Delta + 1$---in the analysis. The resulting argument is similar to the one in \cite{bernshteyn2023fast} but considerably more intricate. As the algorithm and its analysis are quite involved, we will ignore a number of technical details in this informal overview (in particular, the actual algorithms and rigorous definitions given in the rest of the paper may slightly diverge from how they are described here). At the same time, we will attempt to intuitively explain the key ideas of the proof and especially highlight the differences between this work and \cite{bernshteyn2023fast}.

    Throughout the rest of the paper, we fix a graph $G$ with $n$ vertices, $m$ edges, and maximum degree $\Delta$, and we write $V \defeq V(G)$ and $E \defeq E(G)$. We shall assume that $\Delta \geq 2$ (edge-coloring graphs of maximum degree $1$ is a trivial problem). In all our algorithms, we tacitly treat $G$ and $\Delta$ as global variables included in the input. We also fix $\epsilon \in (0,1)$ such that $\Delta \geq 1/\epsilon$ and set $q \defeq (1+\epsilon)\Delta$. Without loss of generality, we assume that $q$ is an integer. 
    We call a function $\phi \colon E\to [q]\cup \{\blank\}$ a \emphd{partial $q$-edge-coloring} (or simply a \emphd{partial coloring}) of $G$. Here $\phi(e) = \blank$ indicates that the edge $e$ is uncolored. As usual, $\dom(\phi)$ denotes the \emphd{domain} of $\phi$, i.e., the set of all colored edges.
    
    Most edge-coloring algorithms rely on \emph{augmenting subgraphs}, i.e., certain gadgets that allow increasing the domain of a partial coloring. Indeed, of the algorithms listed in Table~\ref{table:history} with fewer than $2\Delta - 1$ colors, only the ones from \cite{BCPS_nibble} (which uses the nibble method) and \cite{Assadi} (which are based on a particular way to find large matchings) do not invoke augmenting subgraphs. The following formal definition is taken, slightly modified, from \cite[Definition 2.1]{bernshteyn2023fast}.

    \begin{defn}[Augmenting subgraphs]\label{defn:aug}
        Let $\phi \colon E \to [q] \cup \set{\blank}$ be a proper partial $q$-edge-coloring with domain $\dom(\phi) \subset E$. A subgraph $H \subseteq G$ is \emphd{$e$-augmenting} for an uncolored edge $e \in E \setminus \dom(\phi)$ if $e \in E(H)$ and there is a proper partial coloring $\phi'$ with $\dom(\phi') = \dom(\phi) \cup \set{e}$ that agrees with $\phi$ on the edges that are not in $E(H)$; in other words, by only modifying the colors of the edges of $H$, it is possible to add $e$ to the set of colored edges. We refer to such modification operation as \emphd{augmenting} $\phi$ using $H$.
    \end{defn}

    To find a proper $q$-edge-coloring of $G$, we employ the following simple algorithm template:

    {
    \floatname{algorithm}{Algorithm Template}
    \begin{algorithm}[H]\algsize
        \caption{A randomized $q$-edge-coloring algorithm}\label{temp:seq}
        \begin{flushleft}
            \textbf{Input}: A graph $G = (V,E)$ of maximum degree $\Delta$. \\
            \textbf{Output}: A proper $q$-edge-coloring of $G$.
        \end{flushleft}
        \begin{algorithmic}[1]
            \State $\phi \gets$ the empty coloring
            \While{there are uncolored edges}
                \State Pick an uncolored edge $e$ uniformly at random.\label{step:uar}
                \State Find an $e$-augmenting graph $H$.\label{step:H}
                \State Augment $\phi$ using $H$.
            \EndWhile
            \State \Return $\phi$
        \end{algorithmic}
    \end{algorithm}
    }

    The running time of an algorithm that follows Template~\ref{temp:seq} is largely determined by the complexity of finding an $e$-augmenting subgraph $H$ on Step~\ref{step:H} and then augmenting $\phi$ using $H$. In practice, both of these operations are typically performed in time proportional to the number of edges in $H$. Thus, to make the algorithm faster, we need to be able to construct, given an uncolored edge $e$, a ``small'' $e$-augmenting subgraph. This is achieved using a modified version of the procedure from \cite{bernshteyn2023fast} called the \textsf{Multi-Step Vizing Algorithm}. Given a partial coloring $\phi$ and an uncolored edge $e$, it outputs an $e$-augmenting subgraph $H$ of a special form, called a \emphd{multi-step Vizing chain}, which is assembled of a sequence of \emphd{fans}---i.e., sets of edges incident to a common vertex---and \emphd{alternating paths}---i.e., paths whose edge colors form the sequence $\alpha$, $\beta$, $\alpha$, $\beta$, \ldots{} for some $\alpha$, $\beta \in [q]$; see Fig.~\ref{fig:MSVC_illustration} for an illustration. \emph{One-step} Vizing chains (i.e., those consisting of a single fan and an alternating path) were introduced by Vizing in the seminal paper \cite{Vizing} for the purpose of proving Theorem~\ref{theo:Vizing} and were used in a 
    variety of edge-coloring algorithms \cites[94]{Bollobas}{Arjomandi}{GNKLT}{RD}{MG}{DHZ_dynamic}{SV}{Sinnamon}{BCCSZ_cube_root}. In their breakthrough paper \cite{GP}, Greb\'ik and Pikhurko analyzed \emph{two-step} Vizing chains, and the general multi-step version for $(\Delta + 1)$-edge-coloring was developed in \cite{VizingChain} by the first author. (A variant of multi-step Vizing chains for $(1+\eps)\Delta$-edge coloring appeared in the earlier work of Duan, He, and Zhang \cite{DHZ_dynamic} on dynamic edge-coloring, although they did not use this terminology; see also \cite{dhawan2024fast} for an adaptation of their approach to static edge-coloring.) Since then, this concept proved extremely useful in the study of ``constructive'' facets of edge-coloring \cite{bernshteyn2023fast,GmeasVizing,SubexpVizing,dhawan2024edge,Christ}. 

    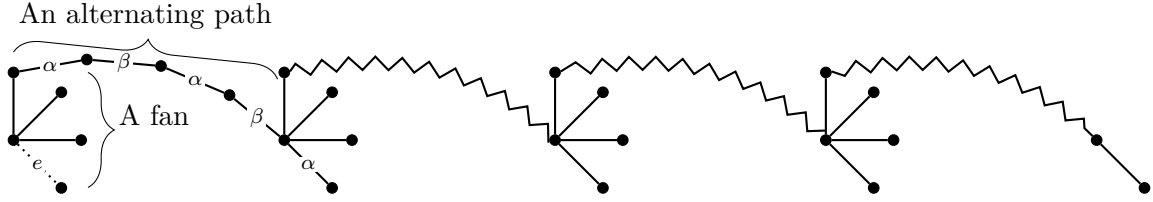
\begin{figure}[t]
	\centering
	\begin{tikzpicture}[scale=0.9]
        \node[circle,fill=black,draw,inner sep=0pt,minimum size=4pt] (a) at (0,0) {};
        	\path (a) ++(-45:1) node[circle,fill=black,draw,inner sep=0pt,minimum size=4pt] (b) {};
        	\path (a) ++(0:1) node[circle,fill=black,draw,inner sep=0pt,minimum size=4pt] (c) {};
        	\path (a) ++(45:1) node[circle,fill=black,draw,inner sep=0pt,minimum size=4pt] (d) {};
        	\path (a) ++(90:1) node[circle,fill=black,draw,inner sep=0pt,minimum size=4pt] (e) {};

                \path (e) ++(10:1.1) node[circle,fill=black,draw,inner sep=0pt,minimum size=4pt] (w) {};
                \path (w) ++(-5:1.1) node[circle,fill=black,draw,inner sep=0pt,minimum size=4pt] (x) {};
                \path (x) ++(-23:1.1) node[circle,fill=black,draw,inner sep=0pt,minimum size=4pt] (y) {};

            \path (c) ++(0:3) node[circle,fill=black,draw,inner sep=0pt,minimum size=4pt] (f) {};
            
        	\path (f) ++(-45:1) node[circle,fill=black,draw,inner sep=0pt,minimum size=4pt] (g) {};
        	\path (f) ++(0:1) node[circle,fill=black,draw,inner sep=0pt,minimum size=4pt] (h) {};
        	\path (f) ++(45:1) node[circle,fill=black,draw,inner sep=0pt,minimum size=4pt] (i) {};
        	\path (f) ++(90:1) node[circle,fill=black,draw,inner sep=0pt,minimum size=4pt] (j) {};
        	
        	\path (h) ++(0:3) node[circle,fill=black,draw,inner sep=0pt,minimum size=4pt] (k) {};

            \path (k) ++(-45:1) node[circle,fill=black,draw,inner sep=0pt,minimum size=4pt] (l) {};
        	\path (k) ++(0:1) node[circle,fill=black,draw,inner sep=0pt,minimum size=4pt] (m) {};
        	\path (k) ++(45:1) node[circle,fill=black,draw,inner sep=0pt,minimum size=4pt] (n) {};
        	\path (k) ++(90:1) node[circle,fill=black,draw,inner sep=0pt,minimum size=4pt] (o) {};
        	
        	\path (m) ++(0:3) node[circle,fill=black,draw,inner sep=0pt,minimum size=4pt] (p) {};
        	
        	\path (p) ++(-45:1) node[circle,fill=black,draw,inner sep=0pt,minimum size=4pt] (q) {};
        	\path (p) ++(0:1) node[circle,fill=black,draw,inner sep=0pt,minimum size=4pt] (r) {};
        	\path (p) ++(45:1) node[circle,fill=black,draw,inner sep=0pt,minimum size=4pt] (s) {};
        	\path (p) ++(90:1) node[circle,fill=black,draw,inner sep=0pt,minimum size=4pt] (t) {};
        	
        	\path (r) ++(0:3) node[circle,fill=black,draw,inner sep=0pt,minimum size=4pt] (u) {};
        	
        	\path (u) ++(-45:1) node[circle,fill=black,draw,inner sep=0pt,minimum size=4pt] (v) {};
        	
        	\draw[thick,dotted] (a) to node[font=\fontsize{8}{8},midway,inner sep=1pt,outer sep=1pt,minimum size=4pt,fill=white] {$e$} (b);
        	
        	\draw[thick, decorate,decoration=zigzag] (j) to[out=10,in=135] (k) (o) to[out=10,in=135] (p) (t) to[out=10,in=135] (u);
        	
        	\draw[thick] (a) -- (c) (a) -- (d) (a) -- (e) (f) to node[font=\fontsize{8}{8},midway,inner sep=1pt,outer sep=1pt,minimum size=4pt,fill=white] {$\alpha$} (g) (f) -- (h) (f) -- (i) (f) -- (j) (k) -- (l) (k) -- (m) (k) -- (n) (k) -- (o) (p) -- (q) (p) -- (r) (p) -- (s) (p) -- (t) (u) -- (v) (e) to node[font=\fontsize{8}{8},midway,inner sep=1pt,outer sep=1pt,minimum size=4pt,fill=white] {$\alpha$} (w) to node[font=\fontsize{8}{8},midway,inner sep=1pt,outer sep=1pt,minimum size=4pt,fill=white] {$\beta$} (x) to node[font=\fontsize{8}{8},midway,inner sep=1pt,outer sep=1pt,minimum size=4pt,fill=white] {$\alpha$} (y) to node[font=\fontsize{8}{8},midway,inner sep=1pt,outer sep=1pt,minimum size=4pt,fill=white] {$\beta$} (f);

            \draw[decoration={brace,amplitude=10pt},decorate] (1.1,1) -- node [midway,above,yshift=-2pt,xshift=25pt] {A fan} (1.1, -0.7);
            \draw[decoration={brace,amplitude=10pt},decorate] (0,1.3) -- node [midway,above,yshift=10pt,xshift=0pt] {An alternating path} (3.9, 0.9);
	\end{tikzpicture}
	\caption{A multi-step Vizing chain.}\label{fig:MSVC_illustration}
    \end{figure}

    Before we sketch our \textsf{Multi-Step Vizing Algorithm}, let us briefly mention two of its key features.

    \begin{itemize}
        \item At the start of the algorithm we fix a parameter $\ell$. The algorithm guarantees that every alternating path in the chain has length at most $\ell$. This is achieved by truncating each path whose length exceeds this bound at a random location among the first $\ell$ edges and growing a new fan and alternating path starting from that edge.\footnote{Technically, it ends up being more convenient 
        to let each alternating path be of length up to $2\ell$. Moreover, we truncate each ``long'' path to a random length between $\ell$ and $2\ell - 1$ (this ensures that the resulting path is neither too long nor too short). This is one of the minor details that we are for simplicity ignoring in this informal discussion.}

        \item When working with multi-step Vizing chains, it turns out to be crucial to require them to be \emphd{non-intersecting}. The precise definition of this notion is somewhat technical (see Definition~\ref{defn:non-int}), but, roughly, it prohibits certain types of overlaps between the fans and paths out of which the multi-step Vizing chain is built. For example, no two paths in the construction are allowed to share an {internal} edge. We achieve this by going back to an earlier stage of the algorithm whenever an intersection occurs.
    \end{itemize}
    We can now present an outline of our algorithm:

    \vspace{0.1in}

    {
    \floatname{algorithm}{Algorithm Sketch}
    \begin{breakablealgorithm}
        \caption{A randomized algorithm for non-intersecting multi-step Vizing chains}\label{inf:MSVC}
        \algsize
        \begin{flushleft}
            \textbf{Input}: 
            A proper partial $q$-edge-coloring $\phi$, an uncolored edge $e$, and a parameter $\ell \in \N$. \\
            \textbf{Output}: A non-intersecting chain $C = F_0 + P_0 + \cdots + F_k + P_k$ with $\length(P_i) \leq \ell$ for all $i$.
        \end{flushleft}
        \begin{algorithmic}[1]
            \State $e_0 \gets e$, \quad $k \gets 0$
            \While{true} \Comment{Basic \textsf{while} loop}
                \State $F + P \gets$ an $e_k$-augmenting one-step Vizing chain\label{step:VC}
                \If{$F + P$ intersects $F_j + P_j$ for some $j < k$}
                    \State $k \gets j$ \label{line:back1} \Comment{Return to step $j$}
                \ElsIf{$\length(P) \leq \ell$}
                        \State $F_k \gets F$, \quad $P_k \gets P$
                        \State \Return $C = F_0 + P_0 + \cdots + F_k + P_k$
                \Else
                    \State Pick a random natural number $\ell' \leq \ell$. \label{line:forward1}\label{line:short}
                    \State $F_k \gets F$, \quad $P_k \gets $ the first $\ell'$ edges of $P$ \Comment{Randomly shorten the path}
                    \State $e_{k+1} \gets $ the last edge of $P_k$, \quad $k \gets k+1$ \Comment{Move on to the next step}
                \EndIf
            \EndWhile
        \end{algorithmic}
    \end{breakablealgorithm}
    }

    \vspace{0.1in}
    
    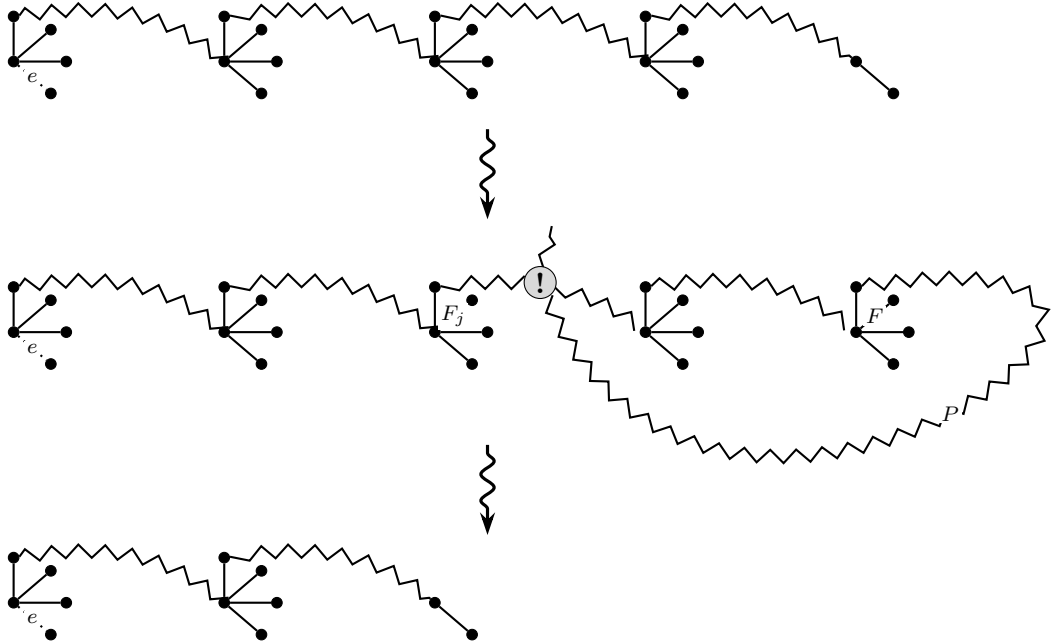
\begin{figure}[htb!]
    \centering
    \begin{tikzpicture}[xscale = 0.7,yscale=0.6]
            \clip (-0.5,-7) rectangle (20,7.5);
    
            \node[circle,fill=black,draw,inner sep=0pt,minimum size=4pt] (a) at (0,0) {};
        	\path (a) ++(-45:1) node[circle,fill=black,draw,inner sep=0pt,minimum size=4pt] (b) {};
        	\path (a) ++(0:1) node[circle,fill=black,draw,inner sep=0pt,minimum size=4pt] (c) {};
        	\path (a) ++(45:1) node[circle,fill=black,draw,inner sep=0pt,minimum size=4pt] (d) {};
        	\path (a) ++(90:1) node[circle,fill=black,draw,inner sep=0pt,minimum size=4pt] (e) {};

            \path (c) ++(0:3) node[circle,fill=black,draw,inner sep=0pt,minimum size=4pt] (f) {};
            
        	\path (f) ++(-45:1) node[circle,fill=black,draw,inner sep=0pt,minimum size=4pt] (g) {};
        	\path (f) ++(0:1) node[circle,fill=black,draw,inner sep=0pt,minimum size=4pt] (h) {};
        	\path (f) ++(45:1) node[circle,fill=black,draw,inner sep=0pt,minimum size=4pt] (i) {};
        	\path (f) ++(90:1) node[circle,fill=black,draw,inner sep=0pt,minimum size=4pt] (j) {};
        	
        	\path (h) ++(0:3) node[circle,fill=black,draw,inner sep=0pt,minimum size=4pt] (k) {};

            \path (k) ++(-45:1) node[circle,fill=black,draw,inner sep=0pt,minimum size=4pt] (l) {};
        	\path (k) ++(0:1) node[circle,fill=black,draw,inner sep=0pt,minimum size=4pt] (m) {};
        	\path (k) ++(45:1) node[circle,fill=black,draw,inner sep=0pt,minimum size=4pt] (n) {};
        	\path (k) ++(90:1) node[circle,fill=black,draw,inner sep=0pt,minimum size=4pt] (o) {};
        	
        	\path (m) ++(0:3) node[circle,fill=black,draw,inner sep=0pt,minimum size=4pt] (p) {};
        	
        	\path (p) ++(-45:1) node[circle,fill=black,draw,inner sep=0pt,minimum size=4pt] (q) {};
        	\path (p) ++(0:1) node[circle,fill=black,draw,inner sep=0pt,minimum size=4pt] (r) {};
        	\path (p) ++(45:1) node[circle,fill=black,draw,inner sep=0pt,minimum size=4pt] (s) {};
        	\path (p) ++(90:1) node[circle,fill=black,draw,inner sep=0pt,minimum size=4pt] (t) {};
        	
        	\path (r) ++(0:3) node[circle,fill=black,draw,inner sep=0pt,minimum size=4pt] (u) {};
        	
        	\path (u) ++(-45:1) node[circle,fill=black,draw,inner sep=0pt,minimum size=4pt] (v) {};sep=0pt,minimum size=4pt] (q) {};
        	\path (u) ++(0:1) node[circle,fill=black,draw,inner sep=0pt,minimum size=4pt] (w) {};
        	\path (u) ++(45:1) node[circle,fill=black,draw,inner sep=0pt,minimum size=4pt] (x) {};
        	\path (u) ++(90:1) node[circle,fill=black,draw,inner sep=0pt,minimum size=4pt] (y) {};
        	
        	\node[circle,inner sep=1pt] (z) at (10, 1.1) {\textbf{!}};
        	\path (z) ++(80:1.5) node (aa) {};

        	\draw[thick,dotted] (a) to node[font=\fontsize{8}{8},midway,inner sep=1pt,outer sep=1pt,minimum size=4pt,fill=white] {$e$} (b);
        	
        	\draw[thick, decorate,decoration=zigzag] (e) to[out=10,in=135] (f) (j) to[out=10,in=135] (k) (o) to[out=10,in=180] (z) to[out=-20,in=135] (p) (t) to[out=10,in=135] (u) (y) to[out=10,in=-70, looseness=4] node[font=\fontsize{8}{8},midway,inner sep=1pt,outer sep=1pt,minimum size=3pt,fill=white] {$P$} (z) -- (aa);
        	
        	\draw[thick] (a) -- (c) (a) -- (d) (a) -- (e) (f) -- (g) (f) -- (h) (f) -- (i) (f) -- (j) (k) -- (l) (k) -- (m) (k) to node[font=\fontsize{8}{8},midway,inner sep=1pt,outer sep=1pt,minimum size=4pt,fill=white] {$F_j$} (n) (k) -- (o) (p) -- (q) (p) -- (r) (p) -- (s) (p) -- (t) (u) -- (v) (u) -- (w) (u) to node[font=\fontsize{8}{8},midway,inner sep=1pt,outer sep=1pt,minimum size=4pt,fill=white] {$F$} (x) (u) -- (y);

            \node[circle,fill=gray!30,draw,inner sep=1.3pt] at (10, 1.1) {\textbf{!}};
        
        \begin{scope}[yshift=-3.5cm]
            \draw[-{Stealth[length=3mm,width=2mm]},very thick,decoration = {snake,pre length=3pt,post length=7pt,},decorate] (9,1) -- (9,-1);
        \end{scope}
        
        \begin{scope}[yshift=-6cm]
            \node[circle,fill=black,draw,inner sep=0pt,minimum size=4pt] (a) at (0,0) {};
        	\path (a) ++(-45:1) node[circle,fill=black,draw,inner sep=0pt,minimum size=4pt] (b) {};
        	\path (a) ++(0:1) node[circle,fill=black,draw,inner sep=0pt,minimum size=4pt] (c) {};
        	\path (a) ++(45:1) node[circle,fill=black,draw,inner sep=0pt,minimum size=4pt] (d) {};
        	\path (a) ++(90:1) node[circle,fill=black,draw,inner sep=0pt,minimum size=4pt] (e) {};

            \path (c) ++(0:3) node[circle,fill=black,draw,inner sep=0pt,minimum size=4pt] (f) {};
            
        	\path (f) ++(-45:1) node[circle,fill=black,draw,inner sep=0pt,minimum size=4pt] (g) {};
        	\path (f) ++(0:1) node[circle,fill=black,draw,inner sep=0pt,minimum size=4pt] (h) {};
        	\path (f) ++(45:1) node[circle,fill=black,draw,inner sep=0pt,minimum size=4pt] (i) {};
        	\path (f) ++(90:1) node[circle,fill=black,draw,inner sep=0pt,minimum size=4pt] (j) {};
        	
        	\path (h) ++(0:3) node[circle,fill=black,draw,inner sep=0pt,minimum size=4pt] (k) {};

            \path (k) ++(-45:1) node[circle,fill=black,draw,inner sep=0pt,minimum size=4pt] (l) {};

        	\draw[thick,dotted] (a) to node[font=\fontsize{8}{8},midway,inner sep=1pt,outer sep=1pt,minimum size=4pt,fill=white] {$e$} (b);
        	
        	\draw[thick, decorate,decoration=zigzag] (e) to[out=10,in=135] (f) (j) to[out=10,in=135] (k);
        	
        	\draw[thick] (a) -- (c) (a) -- (d) (a) -- (e) (f) -- (g) (f) -- (h) (f) -- (i) (f) -- (j) (k) -- (l);
        \end{scope}

        \begin{scope}[yshift=6cm]
            \node[circle,fill=black,draw,inner sep=0pt,minimum size=4pt] (a) at (0,0) {};
        	\path (a) ++(-45:1) node[circle,fill=black,draw,inner sep=0pt,minimum size=4pt] (b) {};
        	\path (a) ++(0:1) node[circle,fill=black,draw,inner sep=0pt,minimum size=4pt] (c) {};
        	\path (a) ++(45:1) node[circle,fill=black,draw,inner sep=0pt,minimum size=4pt] (d) {};
        	\path (a) ++(90:1) node[circle,fill=black,draw,inner sep=0pt,minimum size=4pt] (e) {};

            \path (c) ++(0:3) node[circle,fill=black,draw,inner sep=0pt,minimum size=4pt] (f) {};
            
        	\path (f) ++(-45:1) node[circle,fill=black,draw,inner sep=0pt,minimum size=4pt] (g) {};
        	\path (f) ++(0:1) node[circle,fill=black,draw,inner sep=0pt,minimum size=4pt] (h) {};
        	\path (f) ++(45:1) node[circle,fill=black,draw,inner sep=0pt,minimum size=4pt] (i) {};
        	\path (f) ++(90:1) node[circle,fill=black,draw,inner sep=0pt,minimum size=4pt] (j) {};
        	
        	\path (h) ++(0:3) node[circle,fill=black,draw,inner sep=0pt,minimum size=4pt] (k) {};

            \path (k) ++(-45:1) node[circle,fill=black,draw,inner sep=0pt,minimum size=4pt] (l) {};
        	\path (k) ++(0:1) node[circle,fill=black,draw,inner sep=0pt,minimum size=4pt] (m) {};
        	\path (k) ++(45:1) node[circle,fill=black,draw,inner sep=0pt,minimum size=4pt] (n) {};
        	\path (k) ++(90:1) node[circle,fill=black,draw,inner sep=0pt,minimum size=4pt] (o) {};
        	
        	\path (m) ++(0:3) node[circle,fill=black,draw,inner sep=0pt,minimum size=4pt] (p) {};
        	
        	\path (p) ++(-45:1) node[circle,fill=black,draw,inner sep=0pt,minimum size=4pt] (q) {};
        	\path (p) ++(0:1) node[circle,fill=black,draw,inner sep=0pt,minimum size=4pt] (r) {};
        	\path (p) ++(45:1) node[circle,fill=black,draw,inner sep=0pt,minimum size=4pt] (s) {};
        	\path (p) ++(90:1) node[circle,fill=black,draw,inner sep=0pt,minimum size=4pt] (t) {};
        	
        	\path (r) ++(0:3) node[circle,fill=black,draw,inner sep=0pt,minimum size=4pt] (u) {};
        	
        	\path (u) ++(-45:1) node[circle,fill=black,draw,inner sep=0pt,minimum size=4pt] (v) {};
        	
        	\draw[thick,dotted] (a) to node[font=\fontsize{8}{8},midway,inner sep=1pt,outer sep=1pt,minimum size=4pt,fill=white] {$e$} (b);
        	
        	\draw[thick, decorate,decoration=zigzag] (e) to[out=10,in=135] (f) (j) to[out=10,in=135] (k) (o) to[out=10,in=135] (p) (t) to[out=10,in=135] (u);
        	
        	\draw[thick] (a) -- (c) (a) -- (d) (a) -- (e) (f) -- (g) (f) -- (h) (f) -- (i) (f) -- (j) (k) -- (l) (k) -- (m) (k) -- (n) (k) -- (o) (p) -- (q) (p) -- (r) (p) -- (s) (p) -- (t) (u) -- (v);
        \end{scope}

        \begin{scope}[yshift=3.5cm]
            \draw[-{Stealth[length=3mm,width=2mm]},very thick,decoration = {snake,pre length=3pt,post length=7pt,},decorate] (9,1) -- (9,-1);
        \end{scope}
        \end{tikzpicture}
    \caption{An iteration of the basic \textsf{while} loop in Algorithm \ref{inf:MSVC} when there is an intersection between $F + P$ and $F_j + P_j$.}
    \label{fig:iteration_illustration}
\end{figure}

    We refer to the \textsf{while} loop in Algorithm~\ref{inf:MSVC} as the \emphd{basic loop}. Fig.~\ref{fig:iteration_illustration} shows a cartoon of an iteration of the basic loop. Inserting Algorithm~\ref{inf:MSVC} into Template~\ref{temp:seq}, we obtain the following final form of our edge-coloring algorithm:

    {
    \floatname{algorithm}{Algorithm Sketch}
    \begin{algorithm}[h]\algsize
        \caption{A randomized $q$-edge-coloring algorithm using multi-step Vizing chains}\label{inf:seq}
        \begin{flushleft}
            \textbf{Input}: A parameter $\ell \in \N$. \\
            \textbf{Output}: A proper $q$-edge-coloring of $G$.
        \end{flushleft}
        \begin{algorithmic}[1]
            \State $\phi \gets$ the empty coloring
            \While{there are uncolored edges}
                \State Pick an uncolored edge $e$ uniformly at random.\label{step:uar1}
                \State Find an $e$-augmenting multi-step Vizing chain $C$ using Algorithm~\ref{inf:MSVC}.\label{step:H1}
                \State Augment $\phi$ using $C$.
            \EndWhile
            \State \Return $\phi$
        \end{algorithmic}
    \end{algorithm}
    }

    In \cite{bernshteyn2023fast}, it was shown that when $q = \Delta + 1$ and $\ell = \Theta(\Delta^{16})$, this algorithm with high probability runs in time at most  $O(m\,\Delta^{17})$. The problem we are now facing is this:

    \begin{shaded}
        \begin{quote}\textsl{Can we exploit the larger number of available colors, i.e., $q = (1+\epsilon)\Delta$, to reduce the running time of Algorithm~\ref{inf:seq} to just $m \, \poly(1/\epsilon)$?}
        \end{quote}
    \end{shaded}

    To explain our solution, we need to say a few words about how Algorithms~\ref{inf:MSVC} and \ref{inf:seq} are analyzed. We take $\ell = \poly(1/\epsilon)$. There are two main factors affecting the runtime of Algorithm~\ref{inf:seq}:
    \begin{itemize}
        \item the number of iterations of the basic \textsf{while} loop, and
        \item the runtime of each iteration.
    \end{itemize}
    Throughout the entire execution of Algorithm~\ref{inf:seq}, our aim is to have at most $m\,\poly(1/\epsilon)$ iterations of the basic loop and spend (on average) at most $\poly(1/\epsilon)$ time per iteration, leading to the total running time of $m\,\poly(1/\epsilon)$. Both the number of iterations and the runtime per iteration present interesting challenges.

    \subsubsection*{Runtime per iteration: the Random Fan Algorithm}

    Let us first look at the running time of an individual iteration of the basic loop in Algorithm~\ref{inf:MSVC}. Most of the work there takes place on Step~\ref{step:VC}, where we obtain a fan $F$ and an alternating path $P$ such that the chain $F+P$ is $e_k$-augmenting. Note that the path $P$ could, in principle, be of length up to $\Theta(n)$, which would make it too time-consuming to list all of its edges. We circumvent this issue by only keeping track of the initial segment $\tilde{P}$ of $P$ of length $\min\set{\length(P), \,\ell}$. 
    
    To simplify this informal discussion, we may view the number of edges in $F+\tilde{P}$ as a convenient proxy for the runtime of Step~\ref{step:VC}. By definition, $\tilde{P}$ has length at most $\ell$, while $F$ has size at most $\Delta$. Thus, $F + \tilde{P}$ contains at most $\Delta + \ell$ edges. Unfortunately, this bound is too weak for our purposes; instead, we wish $F+ \tilde{P}$ to only have at most $\poly(1/\epsilon)$ edges. As $\ell = \poly(1/\epsilon)$ by our choice, what's needed is a $\poly(1/\epsilon)$ upper bound on the number of edges in the fan $F$. (Note that this problem does not arise in \cite{bernshteyn2023fast}, where the target runtime per iteration is $\poly(\Delta)$; in general, fans do not significantly affect the analysis in \cite{bernshteyn2023fast}, but are a major source of difficulties in this paper.)

    We address this problem by introducing a \emph{randomized} algorithm for building fans (see Algorithm~\ref{alg:rand_fan}). This is in contrast to the way fans are constructed in previous applications of the multi-step Vizing chain technique \cite{GP,VizingChain,bernshteyn2023fast,GmeasVizing}, where they are built deterministically. It is crucial for this randomized approach that we are using $(1+\epsilon)\Delta$ colors, which implies that for each vertex $v \in V(G)$, there are at least $\epsilon \Delta$ colors that are not used on any of its incident edges. We call such colors \emphd{missing} at $v$. To construct a fan, we repeatedly pick colors missing at certain vertices. In our algorithm, the missing colors are chosen \emph{uniformly at random}. To guarantee that the resulting fan $F$ has size at most $\poly(1/\epsilon)$, we fix a parameter $k_\mathrm{max} = \poly(1/\epsilon)$ and restart the procedure from scratch whenever the fan's size exceeds $k_\mathrm{max}$. The flexibility provided by the random choice of the missing colors allows us to establish an upper bound on the probability that the process has to restart ``too many'' times (see Proposition~\ref{lemma:fan_number_of_tries}). In our final analysis, this is used to conclude that with high probability, the average time spent constructing each fan during the execution of Algorithm~\ref{inf:seq} is at most $\poly(1/\epsilon)$, as desired.

    Here we should make another important remark. The fact that the running time of our algorithm must be linear in $m$ regardless of $\Delta$ creates subtle complications even in seemingly innocuous parts of the analysis. For example, as mentioned above, our algorithm for building fans requires choosing a random missing color at a given vertex $v \in V(G)$. However, we cannot just search through all the colors to find a missing one without introducing an unwanted dependence on $\Delta$ in the runtime. Our solution is to pick a uniformly random color from $[q]$ and then check whether it is missing at $v$ (this takes $O(1)$ time); if yes, we have succeeded, and if not, we pick a random color again and continue iterating until a missing color is found (see Algorithm \ref{alg:rand_col}). Since there are at least $\epsilon\Delta$ colors that are missing at $v$, the expected number of attempts until a missing color is found is at most $1/\epsilon$, and it is not hard to bound the probability that the number of attempts is ``large'' (see Lemma~\ref{lemma:rand_color_runtime}). At the end of the day, this is used to show that, while executing Algorithm~\ref{inf:seq}, we spend on average $\poly(1/\epsilon)$ time to pick a random missing color at a vertex, with high probability.

    \subsubsection*{Number of iterations: entropy compression}

    To bound the number of iterations of the basic \textsf{while} loop, we employ the so-called \emph{entropy compression method}. (The idea of using it in this setting was introduced in \cite{bernshteyn2023fast}.) This method was invented by Moser and Tardos \cite{MT} in order to prove an algorithmic version of the Lov\'asz Local Lemma \ep{the name ``entropy compression method'' was coined by Tao~\cite{Tao}}. Grytczuk, Kozik, and Micek \cite{Grytczuk} discovered that the entropy compression method can sometimes lead to improved combinatorial results if applied directly, with no explicit mention of the Local Lemma. Since then, the method has found numerous applications in combinatorics, probability, and computer science; see \cite{Duj,Esperet,acyclic_entropy,entropy_permutations,Achlioptas,Oracles,Physics} and the references therein for a small sample of the vast related literature.

    The entropy compression method is used to prove that a given randomized algorithm terminates after a specified time with high probability. The idea of the method is to encode the execution process of the algorithm in such a way that the original sequence of random inputs can be recovered from the resulting encoding. One then shows that if the algorithm runs for too long, the space of possible codes becomes smaller than the space of inputs, which leads to a contradiction.
    In our application, for each iteration of the basic loop in Algorithm~\ref{inf:MSVC}, the encoding includes information such as, for example, whether the algorithm went to line \ref{line:back1} or \ref{line:forward1} on that iteration, and if the algorithm went to line \ref{line:back1}, the value $k - j$ (i.e., how many steps back were taken). Additionally, we record certain data pertaining to the final state of the algorithm, such as the edge $e_k$ after the last iteration. This information is insufficient to fully reconstruct the execution  process, but we can establish an upper bound on the number of different processes that can result in the same encoding.

    The random inputs in Algorithm~\ref{inf:MSVC} are the numbers $\ell'$ generated at line~\ref{line:short}. Hence, there are at most about $\ell^t$ random input sequences for which the basic loop is iterated $t$ times. On the other hand, it is shown in \cite{bernshteyn2023fast} that each encoding of $t$ iterations of the basic loop can arise from at most about $\Delta^{O(t)}$ different processes. Roughly speaking, the analysis in \cite{bernshteyn2023fast} succeeds because there one can take $\ell$ so large that $\ell^t \gg \Delta^{O(t)}$. Unfortunately, here we only have $\ell = \poly(1/\epsilon)$, so this calculation breaks down. Thankfully, we have an ace up our sleeve, namely the randomized algorithm for building fans discussed above. The fact that the fans are constructed in a \emph{randomized} fashion means that the algorithm takes in more randomness than just the numbers $\ell'$ generated in line~\ref{line:short}. Indeed, each randomly chosen missing color is picked from a set of at least $\epsilon \Delta$ options. Intuitively, this means that when we pick a random missing color, this allows us to tolerate an extra factor of $\Delta$ in the bound on the number of encodings/processes per encoding. With this observation in hand, we are able to carefully balance the powers of $\Delta$ on both sides of the inequality to ultimately eliminate the dependence on $\Delta$ in the necessary lower bound on $\ell$.

    We should stress that the resulting argument is quite delicate, since the factors of $\Delta$ have to be balanced exactly (even a single surplus factor of $\Delta$ somewhere in the analysis would ultimately lead to a dependence on $\Delta$ in the algorithm's runtime). For example, it is necessary for us to include enough information in the encoding to be able to fully reconstruct not only the alternating paths, but also the fans generated by the algorithm. (By contrast, in \cite{bernshteyn2023fast} the fans were built deterministically, so they could be recovered without any extra information.) Furthermore, we must take care to avoid super-polynomial dependence on $\epsilon$. In particular, using the naive lower bound of $\epsilon \Delta$ on the number of missing colors at a vertex would result in a factor of roughly $(1/\epsilon)^{O(1/\epsilon)}$ in the running time of the algorithm. To achieve polynomial dependence on $1/\epsilon$, we take into account the exact number of missing colors at each vertex and make use of the fact that it may exceed $\epsilon \Delta$.

    \medskip

    \noindent In summary, the central difference between our version of Algorithm~\ref{inf:MSVC} and the one studied in \cite{bernshteyn2023fast} is that we use a randomized construction to build the fan $F$ in Step~\ref{step:VC}. With a careful analysis, this extra randomness allows us to both bound the time needed to construct the fan by $\poly(1/\epsilon)$ and to remove the dependence on $\Delta$ in the number of iterations of the basic loop.
    
    Before finishing the introduction, we would like to discuss two more aspects of our algorithm: its success probability and the size of the augmenting subgraphs it generates.

    \subsubsection*{Success probability and restrictions on $\Delta$} 
    The randomized algorithms for $(1+\epsilon)\Delta$-edge-coloring developed in \cite{BCPS_nibble,Assadi} rely, among other ingredients, on performing a randomized construction on $G$ (such as randomly splitting the edges of $G$ into subgraphs $G_1$, \ldots, $G_k$) and then arguing that certain desirable properties hold with high probability for every vertex of $G$ (for example, the maximum degree of each $G_i$ is bounded by roughly $\Delta/k$). The desired property for a particular vertex $v$ is determined by independent random choices in the neighborhood of $v$, so the best one can hope for is to bound the failure probability at $v$ by a quantity of the type $\exp(-\Theta(\Delta))$ via a Chernoff-style concentration bound. One then takes the union bound over all vertices to argue that the total failure probability is at most $n \exp(-\Theta(\Delta))$. For this quantity to be much less then $1$, it is necessary to have $\Delta = \Omega(\log n)$---this explains the appearance of the lower bounds on $\Delta$ such as \eqref{eq:Delta_lower_bound} in these results and indicates that a lower bound on $\Delta$ is unavoidable with these approaches. Furthermore, for $\Delta = \Theta(\log n)$, the resulting failure probability is only polynomially small.

    Our analysis is different in that we do not require any favorable outcomes to occur at every vertex or edge of $G$. For example, we do not claim that with high probability, it takes only $\poly(1/\epsilon)$ time to color \emph{each} edge. Instead, we only control the \emph{average} runtime over all edges throughout the execution of Algorithm~\ref{inf:seq}; and it is perhaps not surprising that the average of $m$ random variables would be concentrated around its mean with an exponentially small in $m$ tail probability. Moreover, such ``global'' analysis does not require a lower bound on $\Delta$ and remains valid even if $\Delta$ is constant.

    \subsubsection*{Small augmenting subgraphs}

    A byproduct of our analysis is a construction of ``small'' augmenting subgraphs for $(1+\epsilon)\Delta$-edge-coloring. More precisely, we consider the following question:

    \begin{shaded}
        \begin{quote}\textsl{Let $\phi \colon E \to [q] \cup \set{\blank}$ be a proper partial coloring and let $e$ be an uncolored edge. What is the minimum number of edges in an $e$-augmenting subgraph $H$?}
        \end{quote}
    \end{shaded}

    \noindent For $q = \Delta + 1$, Vizing's original proof of Theorem~\ref{theo:Vizing} gives augmenting subgraphs with $O(n)$ edges \cite{Vizing}. The first improvement to this bound comes from the paper of Greb\'ik and Pikhurko \cite{GP}, whose two-step Vizing chain construction yields augmenting subgraphs with $\poly(\Delta) \sqrt{n}$ edges.\footnote{Although this result is not explicitly stated in \cite{GP}, it follows easily from the arguments presented there; see \cite[\S3]{VizingChain} for a detailed discussion.} By using multi-step Vizing chains, the first author was able to build augmenting subgraphs with $\poly(\Delta) \log^2 n$ edges \cite{VizingChain}. This was improved to just $O(\Delta^7 \log n)$ by Christiansen \cite{Christ}.

    When $q = (1+\epsilon)\Delta$ with $\epsilon \gg 1/\Delta$, we may hope to lower the bound even further. In particular, it is desirable to remove the polynomial dependence on $\Delta$. By the following result of Chang, He, Li, Pettie, and Uitto, the best one can hope for is $\Theta(\log (\epsilon n)/\epsilon)$:

    \begin{theo}[{Chang--He--Li--Pettie--Uitto \cite[Theorem 7.1]{CHLPU}}]\label{theo:CHLPU}
        For any $1/\Delta \leq \epsilon \leq 1/3$ and $n > \Delta$, there exists an $n$-vertex graph $G$ of maximum degree $\Delta$ with a proper partial $(1+\epsilon)\Delta$-edge-coloring $\phi$ and an uncolored edge $e$ such that every $e$-augmenting subgraph $H$ of $G$ has diameter $\Omega(\log (\epsilon n)/\epsilon)$ \ep{and hence it has $\Omega(\log (\epsilon n)/\epsilon)$ edges}.
    \end{theo}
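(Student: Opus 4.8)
The plan is to prove Theorem~\ref{theo:CHLPU} by an explicit adversarial construction: a graph $G$, a proper partial $(1+\epsilon)\Delta$-coloring $\phi$, and an uncolored edge $e$ for which \emph{no} recoloring of a small-diameter subgraph can extend $\phi$. The construction is hierarchical. One starts with an elementary \emph{blocking gadget} around $e$: the endpoints of $e$ are given (near-)maximum degree, and their colored edges are chosen so that the sets of colors missing at the two endpoints are disjoint. This is feasible in the stated range of $\epsilon$, since two sets of size $\epsilon\Delta+1$ fit disjointly inside a palette of size $(1+\epsilon)\Delta$ whenever $\epsilon\leq 1-2/\Delta$, which holds for $\epsilon\leq 1/3$ and $\Delta\geq 3$ (note $\Delta\geq 1/\epsilon\geq 3$). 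Thus $e$ cannot be colored without recoloring something. One then arranges the colors so that every attempt to ``make room'' for $e$ --- formally, every Vizing-chain-style rotation --- is forced to run along an alternating path of length $\Omega(1/\epsilon)$ (there are only $\approx\epsilon\Delta$ colors missing at the relevant vertex, and the adversary makes each of them ``blocked'' until the path has traversed $\Omega(1/\epsilon)$ edges) before it reaches a vertex where the conflict could be resolved --- and there it simply encounters another, deeper blocking gadget. Iterating this $T$ times, with a \emph{branching} at each level (so that the augmenting process cannot tell which of two sub-gadgets to descend into and is thereby prevented from taking a local shortcut), produces a graph in which any extension of $\phi$ must recolor edges all the way down a root-to-leaf path of the hierarchy, of total length $\Omega(T/\epsilon)$.

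The second ingredient is the vertex budget. The number of levels $T$ is taken as large as the constraint $|V(G)|\leq n$ permits; one designs the gadgets so that $|V(G)|$ grows by a bounded factor per level (keeping the high-degree ``scarcity'' vertices shared as much as possible and using low-degree vertices for the long corridors), so that $T=\Theta(\log(\epsilon n))$ is achievable while $|V(G)|\leq n$; any leftover vertices are added as isolated vertices. Since the degree is capped at $\Delta$ throughout and $n>\Delta$ leaves room for at least one scarcity region, the construction is well defined over the whole stated range. Combining this with the previous paragraph, every $e$-augmenting subgraph $H$ contains a path of length $\Omega(T/\epsilon)=\Omega(\log(\epsilon n)/\epsilon)$ and therefore has diameter --- and hence also a number of edges --- that is $\Omega(\log(\epsilon n)/\epsilon)$.

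The heart of the argument, and the step I expect to be the main obstacle, is the \emph{propagation lemma}: for any subgraph $H\ni e$ and any proper coloring $\phi'$ with $\dom(\phi')=\dom(\phi)\cup\{e\}$ agreeing with $\phi$ off $E(H)$ (i.e.\ for any $H$ that is $e$-augmenting in the sense of Definition~\ref{defn:aug}), the edge set $E(H)$ must contain a path that descends through all $T$ levels of the hierarchy, spending $\Omega(1/\epsilon)$ edges at each level. The natural route is an invariant argument by induction on the levels: attach to each gadget a ``deficiency certificate'' (a color missing at a designated boundary vertex but not usable there under $\phi$) and show that any local repair either violates properness or necessarily transfers the certificate to a sub-gadget one level deeper, at a cost of an $\Omega(1/\epsilon)$-long alternating detour. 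The delicate points --- and the reason the final bound is only logarithmic rather than linear --- are (i) ruling out \emph{all} small clusters of recolorings that could resolve the conflict without descending, which is exactly what the branching buys and which must be argued uniformly over every admissible $H$; and (ii) engineering the corridors so that \emph{every} locally available color is blocked, forcing the $\Theta(1/\epsilon)$ detour. A routine side task is checking realizability: that all the prescribed degrees and missing sets can be simultaneously realized by a simple graph of maximum degree $\Delta$ carrying a proper partial coloring --- a bipartite degree-sequence / list-edge-coloring feasibility check that can be handled by building the scarcity regions out of near-complete bipartite graphs.
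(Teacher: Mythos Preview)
The paper does not prove this theorem. Theorem~\ref{theo:CHLPU} is quoted from \cite{CHLPU} as a lower bound to motivate Theorem~\ref{theo:augment_small}; no proof (or even sketch) appears in the present paper, so there is nothing to compare your attempt against here.

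For what it is worth, your outline is the right shape for this kind of lower bound: a recursive tree-like construction in which the uncolored edge sits at the root, each level forces any augmentation to traverse an $\Omega(1/\epsilon)$-long alternating corridor before reaching a fresh copy of the gadget, and the vertex budget caps the depth at $\Theta(\log(\epsilon n))$. This is essentially the strategy in \cite{CHLPU}. That said, your write-up is still only a plan, not a proof: the ``propagation lemma'' you flag as the crux really is the crux, and your sketch does not yet supply the invariant that rules out \emph{arbitrary} recolorings of a small-diameter subgraph (not just Vizing-chain-style rotations). In the actual construction one typically makes the gadgets tree-like so that any recoloring that frees a color at the root must propagate a color change along a unique path to a leaf; formalizing this requires specifying the gadget concretely and then arguing about the symmetric difference of $\phi$ and $\phi'$ on the tree, rather than reasoning about ``attempts'' or ``descents.'' If you want to turn your proposal into a proof, that is the part to make precise.
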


    Here we obtain an upper bound that matches Theorem~\ref{theo:CHLPU} up to a $\poly(1/\epsilon)$ factor:

    \begin{tcolorbox}
    \begin{theo}[Small augmenting subgraphs for $(1+\epsilon)\Delta$-edge-coloring]\label{theo:augment_small}
            Let $G$ be an $n$-vertex graph with maximum degree $\Delta \geq 1/\eps$, where $0 < \epsilon < 1$. Then for any proper partial $(1+\epsilon)\Delta$-edge-coloring $\phi$ of $G$ and every uncolored edge $e$, there exists an $e$-augmenting subgraph $H$ of $G$ with $O((\log n)/\epsilon^4)$ edges.
    \end{theo}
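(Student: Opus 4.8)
The plan is to obtain the subgraph $H$ directly from a single run of our \textsf{Multi-Step Vizing Algorithm} (the procedure outlined in Algorithm~\ref{inf:MSVC}). First I would fix the truncation parameter $\ell \defeq \Theta(1/\epsilon^{s})$ for a suitable absolute constant $s$ --- the same value that will be used in the proof of Theorem~\ref{theo:main_theo} --- and consider running Algorithm~\ref{inf:MSVC} on the input $(\phi, e, \ell)$. A preliminary step, which has to be established anyway when the algorithm is formally analyzed, is that whenever this randomized procedure halts it returns a non-intersecting multi-step Vizing chain $C = F_0 + P_0 + \cdots + F_k + P_k$ that, viewed as a subgraph of $G$, is $e$-augmenting in the sense of Definition~\ref{defn:aug}. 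So it suffices to exhibit one sequence of random inputs for which the procedure halts and the resulting chain $C$ has $O((\log n)/\epsilon^4)$ edges.

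Next I would bound the size of $C$ in terms of the number $t$ of iterations of the basic \textsf{while} loop in that run. Each path $P_i$ has length at most $2\ell = O(1/\epsilon^{s})$ by construction; the one-step Vizing chains produced on line~\ref{step:VC} use the \textsf{Random Fan Algorithm}, which restarts as soon as the fan exceeds $\kmax = O(1/\epsilon^{a})$ edges, so each stored fan $F_i$ has at most $\kmax$ edges; and the index $k$ increases only on iterations that execute line~\ref{line:forward1}, whence $k + 1 \leq t + 1$. Putting these together, $C$ has at most $(t+1)(\kmax + 2\ell) = O\big(t/\epsilon^{\max\{a,s\}}\big)$ edges, so it remains to produce random inputs with $t = O\big(\epsilon^{-b}\log n\big)$ for an appropriate constant $b$.

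This last point is exactly what the \emph{entropy compression} analysis behind Theorem~\ref{theo:main_theo} is designed to deliver: that analysis shows the probability that Algorithm~\ref{inf:MSVC} fails to halt within $t$ iterations of the basic loop is at most $n\cdot\epsilon^{\Omega(t)}$. The reason is that a run of $t$ iterations can be recovered from an encoding whose size exceeds the entropy of the random inputs consumed during the run --- the truncation lengths $\ell'$ from line~\ref{line:short} together with the random missing colors drawn by the \textsf{Random Fan Algorithm} --- by only $O(\log n) + t\cdot O(\log(1/\epsilon))$; here the lone factor of $n$ comes from recording the final edge $e_k$, and, crucially, every power of $\Delta$ on the encoding side is cancelled by an equal power of $\Delta$ on the input side, since each random missing color is drawn uniformly from a set of at least $\epsilon\Delta$ candidates. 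Taking $t = C\epsilon^{-b}\log n$ for a large enough constant $C$ then makes $n\cdot\epsilon^{\Omega(t)} < 1$, so some choice of random inputs yields a run halting within $t$ iterations; for that run the output $C$ is the desired $e$-augmenting subgraph, and choosing $s$, $a$, $b$ so that the exponents combine to $4$ gives the bound $O((\log n)/\epsilon^4)$.

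I expect the main obstacle to be precisely the tail estimate on $t$, i.e., setting up the entropy-compression encoding so that the powers of $\Delta$ cancel \emph{exactly}: a single unmatched factor of $\Delta$ would leave a residual $\Delta$-dependence and destroy the bound, and it is here that the randomized (rather than deterministic) construction of the fans is essential, since it is the $\epsilon\Delta$ choices per missing color that offset the $\Delta$-sized ambiguity the fans introduce into the encoding. A secondary, more routine obstacle is verifying that Algorithm~\ref{inf:MSVC} always outputs a genuine non-intersecting $e$-augmenting chain whenever it halts, which requires tracking the interaction between fan construction, path truncation, and the backtracking triggered when $F+P$ intersects some earlier $F_j + P_j$.
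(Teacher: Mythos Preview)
Your proposal is correct and follows the same approach as the paper: run the MSVA with $\kmax = \Theta(1/\eps)$ and $\ell = \Theta(\kmax^4)$, invoke the entropy-compression tail bound (Proposition~\ref{theo:num_iter_while}) to get termination within $t = \Theta(\log n)$ iterations with positive probability, and take $H$ to be the resulting chain, which has $O((\ell+\kmax)\, t) = O((\log n)/\eps^4)$ edges. One small correction: with these parameter choices the tail probability is $\poly(n)\cdot c^{t}$ for a \emph{constant} $c<1$ rather than $n\cdot\eps^{\Omega(t)}$, so $t=\Theta(\log n)$ (that is, $b=0$) already suffices and the exponent $4$ comes entirely from $\ell$.
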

    \end{tcolorbox}

    Theorem~\ref{theo:augment_small} follows as a consequence from our analysis of the \textsf{Multi-Step Vizing Algorithm}; see \S\ref{subsec: rmsva analysis} for a proof. We remark that \emph{on average} the augmenting subgraphs generated during the execution of Algorithm~\ref{inf:seq} have only $\poly(1/\epsilon)$ edges; that is why the total running time of the algorithm is linear in $m$ instead of $O(m \log n)$.

    \bigskip

    \noindent We finish this introduction by an outline of the structure of the paper. In \S\ref{sec:notation} we introduce the notation and terminology pertaining to fans, paths, and Vizing chains, which is largely borrowed from \cite{VizingChain,bernshteyn2023fast}. Then, in \S\ref{section: data_structures}, we carefully describe the data structures used to store the graph, the current partial coloring, and other relevant information throughout the execution of our algorithm. A judicious choice of the data structures is important to achieve a linear runtime (for example, checking whether a given color is missing at a vertex must be doable in $O(1)$ time). The \textsf{Multi-Step Vizing Algorithm} (i.e., the formal version of Algorithm~\ref{inf:MSVC}) is described in \S\ref{sec:MSVC}. That section also includes the proof of the algorithm's correctness. The runtime of the \textsf{Multi-Step Vizing Algorithm} is analyzed in \S\S\ref{subsec: fan analysis}--\ref{subsec: rmsva analysis}. Specifically, \S\ref{subsec: fan analysis} studies our randomized algorithm for building fans, while \S\ref{subsec: rmsva analysis} contains the entropy compression-based analysis of the number of iterations of the basic loop. Finally, we put everything together and complete the proof of Theorem~\ref{theo:main_theo} in \S\ref{section: sequential}.

\section{Notation and Preliminaries on Augmenting Chains}\label{sec:notation}

In this section, we define the augmenting subgraphs we will be constructing.
The definitions in this section are taken from \cite[\S3]{bernshteyn2023fast} and so the familiar reader may skip on to \S\ref{section: data_structures}.

\subsection{Chains: general definitions}

Given a partial coloring $\phi$ and $x\in V$, we let \[M(\phi, x) \defeq [q]\setminus\{\phi(xy)\,:\, xy \in E\}\] be the set of all the \emphd{missing} colors at $x$ under the coloring $\phi$. Since there are more than $\Delta$ available colors, $M(\phi, x)$ is always nonempty. An uncolored edge $xy$ is \emphd{$\phi$-happy} if $M(\phi, x)\cap M(\phi, y)\neq \0$. If $e = xy$ is $\phi$-happy, we can extend the coloring $\phi$ by assigning any color in $M(\phi, x)\cap M(\phi, y)$ to $e$.

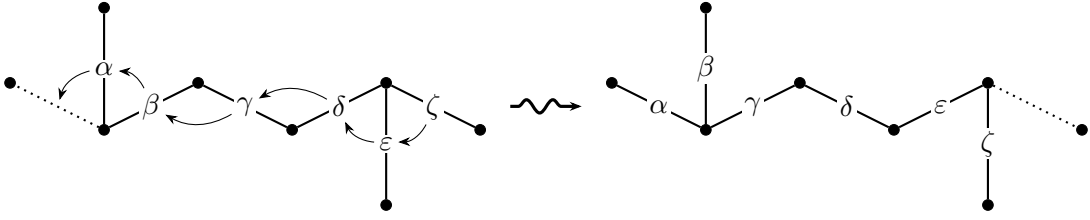
\begin{figure}[t]
	\centering
	\begin{tikzpicture}
	\begin{scope}
	\node[circle,fill=black,draw,inner sep=0pt,minimum size=4pt] (a) at (0,0) {};
	\node[circle,fill=black,draw,inner sep=0pt,minimum size=4pt] (b) at (-1.25,0.625) {};
	\node[circle,fill=black,draw,inner sep=0pt,minimum size=4pt] (c) at (0,1.625) {};
	\node[circle,fill=black,draw,inner sep=0pt,minimum size=4pt] (d) at (1.25,0.625) {};
	\node[circle,fill=black,draw,inner sep=0pt,minimum size=4pt] (e) at (2.5,0) {};
	\node[circle,fill=black,draw,inner sep=0pt,minimum size=4pt] (f) at (3.75,0.625) {};
	\node[circle,fill=black,draw,inner sep=0pt,minimum size=4pt] (g) at (3.75,-1) {};
	\node[circle,fill=black,draw,inner sep=0pt,minimum size=4pt] (h) at (5,0) {};
	
	\draw[thick,dotted] (a) to node[midway,inner sep=0pt,minimum size=4pt] (i) {} (b);
	\draw[thick] (a) to node[midway,inner sep=1pt,outer sep=1pt,minimum size=4pt,fill=white] (j) {$\alpha$} (c);
	\draw[thick] (a) to node[midway,inner sep=1pt,outer sep=1pt,minimum size=4pt,fill=white] (k) {$\beta$} (d);
	\draw[thick] (d) to node[midway,inner sep=1pt,outer sep=1pt,minimum size=4pt,fill=white] (l) {$\gamma$} (e);
	\draw[thick] (f) to node[midway,inner sep=1pt,outer sep=1pt,minimum size=4pt,fill=white] (m) {$\delta$} (e);
	\draw[thick] (f) to node[midway,inner sep=1pt,outer sep=1pt,minimum size=4pt,fill=white] (n) {$\epsilon$} (g);
	\draw[thick] (f) to node[midway,inner sep=1pt,outer sep=1pt,minimum size=4pt,fill=white] (o) {$\zeta$} (h);
	
	\draw[-{Stealth[length=1.6mm]}] (j) to[bend right] (i);
	\draw[-{Stealth[length=1.6mm]}] (k) to[bend right] (j);
	\draw[-{Stealth[length=1.6mm]}] (l) to[bend left] (k);
	\draw[-{Stealth[length=1.6mm]}] (m) to[bend right] (l);
	\draw[-{Stealth[length=1.6mm]}] (n) to[bend left] (m);
	\draw[-{Stealth[length=1.6mm]}] (o) to[bend left] (n);
	\end{scope}
	
	\draw[-{Stealth[length=1.6mm]},very thick,decoration = {snake,pre length=3pt,post length=7pt,},decorate] (5.4,0.3125) -- (6.35,0.3125);
	
	\begin{scope}[xshift=8cm]
	\node[circle,fill=black,draw,inner sep=0pt,minimum size=4pt] (a) at (0,0) {};
	\node[circle,fill=black,draw,inner sep=0pt,minimum size=4pt] (b) at (-1.25,0.625) {};
	\node[circle,fill=black,draw,inner sep=0pt,minimum size=4pt] (c) at (0,1.625) {};
	\node[circle,fill=black,draw,inner sep=0pt,minimum size=4pt] (d) at (1.25,0.625) {};
	\node[circle,fill=black,draw,inner sep=0pt,minimum size=4pt] (e) at (2.5,0) {};
	\node[circle,fill=black,draw,inner sep=0pt,minimum size=4pt] (f) at (3.75,0.625) {};
	\node[circle,fill=black,draw,inner sep=0pt,minimum size=4pt] (g) at (3.75,-1) {};
	\node[circle,fill=black,draw,inner sep=0pt,minimum size=4pt] (h) at (5,0) {};
	
	\draw[thick] (a) to node[midway,inner sep=1pt,outer sep=1pt,minimum size=4pt,fill=white] (i) {$\alpha$} (b);
	\draw[thick] (a) to node[midway,inner sep=1pt,outer sep=1pt,minimum size=4pt,fill=white] (j) {$\beta$} (c);
	\draw[thick] (a) to node[midway,inner sep=1pt,outer sep=1pt,minimum size=4pt,fill=white] (k) {$\gamma$} (d);
	\draw[thick] (d) to node[midway,inner sep=1pt,outer sep=1pt,minimum size=4pt,fill=white] (l) {$\delta$} (e);
	\draw[thick] (f) to node[midway,inner sep=1pt,outer sep=1pt,minimum size=4pt,fill=white] (m) {$\epsilon$} (e);
	\draw[thick] (f) to node[midway,inner sep=1pt,outer sep=1pt,minimum size=4pt,fill=white] (n) {$\zeta$} (g);
	\draw[thick,dotted] (f) to node[midway,inner sep=0pt,minimum size=4pt] (o) {} (h);
	\end{scope}
	\end{tikzpicture}
	\caption{Shifting a coloring along a chain (Greek letters represent colors).}\label{fig:shift}
\end{figure}

Given a proper partial coloring, we wish to modify it in order to create a partial coloring with a happy edge. 
We will do so by finding small augmenting subgraphs called chains.
A \emphd{chain} of length $k$ is a sequence of distinct edges $C = (e_0, \ldots, e_{k-1})$ such that $e_i$ and $e_{i+1}$ are adjacent for each $0 \leq i < k-1$. 
Let $\Start(C) \defeq e_0$ and $\End(C) \defeq e_{k-1}$ denote the first and the last edges of $C$ respectively and let $\length(C) \defeq k$ be the length of $C$. 
We also let $E(C) \defeq \set{e_0, \ldots, e_{k-1}}$ be the edge set of $C$ and $V(C)$ be the set of vertices incident to the edges in $E(C)$.
To assist with defining our augmenting process, we will define the $\Shift$ operation (see Fig.~\ref{fig:shift}). 
Given a chain $C = (e_0, \ldots, e_{k-1})$, we define the coloring $\Shift(\phi, C)$ as follows:
\[\Shift(\phi, C)(e) \defeq \left\{\begin{array}{cc}
   \phi(e_{i+1})  & e = e_i, \, 0 \leq i < k-1; \\
    \blank & e = e_{k-1}; \\
    \phi(e) & \text{otherwise.}
\end{array}\right.\]
In other words, $\Shift(\phi, C)$ ``shifts'' the color from $e_{i+1}$ to $e_i$, leaves $e_{k-1}$ uncolored, and keeps the coloring on the other edges unchanged.
We call $C$ a \emphd{$\phi$-shiftable chain} if $\phi(e_0) = \blank$ and the coloring $\Shift(\phi, C)$ is proper.
For $C = (e_0, \ldots, e_{k-1})$, we let $C^* = (e_{k-1}, \ldots, e_0)$. 
It is clear that if $C$ is $\phi$-shiftable, then $C^*$ is $\Shift(\phi, C)$-shiftable and $\Shift(\Shift(\phi, C), C^*) = \phi$.

The next definition captures the class of chains that can be used to create a happy edge:

\begin{defn}[Happy chains]
    We say that a chain $C$ is \emphd{$\phi$-happy} for a partial coloring $\phi$ if it is $\phi$-shiftable and the edge $\End(C)$ is $\Shift(\phi, C)$-happy.
\end{defn}

Using the terminology of Definition~\ref{defn:aug}, we observe that a $\phi$-happy chain with $\Start(C) = e$ forms an $e$-augmenting subgraph of $G$ with respect to the partial coloring $\phi$. 
For a $\phi$-happy chain $C$ and a \textit{valid} color $\alpha$, we let $\aug(\phi, C, \alpha)$ be a proper coloring obtained from $\Shift(\phi, C)$ by assigning to $\End(C)$ the color $\alpha$ ($\alpha$ is valid if the resulting color is proper).
Our aim now becomes to develop algorithms for constructing $\phi$-happy chains for a given partial coloring $\phi$.

For a chain $C = (e_0, \ldots, e_{k-1})$ and $1 \leq j \leq k$, we define the \emphd{initial segment} of $C$ as
\[C|j \,\defeq\, (e_0, \ldots, e_{j-1}).\]
If $C$ is $\phi$-shiftable, then $C|j$ is also $\phi$-shiftable. 
Given two chains $C = (e_0, \ldots, e_{k-1})$ and $C' = (f_0, \ldots, f_{j-1})$ with $e_{k-1} = f_0$, we define their \emphd{sum} as follows:
\[C+C' \,\defeq\, (e_0, \ldots, e_{k-1} = f_0, \ldots, f_{j-1}).\]
Note that if $C$ is $\phi$-shiftable and $C'$ is $\Shift(\phi, C)$-shiftable, then $C+C'$ is $\phi$-shiftable.

With these definitions in mind, we are now ready to describe the types of chains that will be used as building blocks in our algorithms.

\subsection{Path chains}\label{subsec:pathchains}

The first special type of chains we will consider are path chains:

\begin{defn}[Path chains]
    A chain $P = (e_0, \ldots, e_{k-1})$ is a \emphd{path chain} if the edges $e_1$, \ldots, $e_{k-1}$ form a path in $G$, i.e., if there exist distinct vertices $x_1$, \ldots, $x_k$ such that $e_i = x_ix_{i+1}$ for all $1 \leq i \leq k-1$. We let $x_0$ be the vertex in $e_0$ which is not in $e_1$ and let $\vstart(P) \defeq x_0$, $\vend(P) \defeq x_k$ denote the first and last vertices on the path chain respectively. Note that the vertex $x_0$ may coincide with $x_i$ for some $3 \leq i \leq k$; see Fig.~\ref{subfig:unsucc} for an example. Furthermore, the vertices $\vstart(P)$ and $\vend(P)$ are uniquely determined unless $P$ is a single edge.
\end{defn}

\begin{defn}[Internal vertices and edges]\label{defn:internal}
    An edge of a path chain $P$ is \emphd{internal} if it is distinct from $\Start(P)$ and $\End(P)$. We let $\IE(P)$ denote the set of all internal edges of $P$. A vertex of $P$ is \emphd{internal} if it is not incident to $\Start(P)$ or $\End(P)$. We let $\IV(P)$ denote the set of all internal vertices of $P$.
\end{defn}

We are particularly interested in path chains containing at most $2$ colors, except possibly at their first edge.
We refer to such chains as \emphd{alternating paths}.
Specifically, given a proper partial coloring $\phi$ and $\alpha$, $\beta \in [q]$, we say that a path chain $P$ is an \emphd{$\alpha\beta$-path} under $\phi$ if all edges of $P$ except possibly $\Start(P)$ are colored $\alpha$ or $\beta$. To simplify working with $\alpha\beta$-paths, we introduce the following notation.
Let $G(\phi, \alpha\beta)$ be the spanning subgraph of $G$ with 
\[E(G(\phi, \alpha\beta)) \,\defeq\, \{e\in E\,:\, \phi(e) \in \{\alpha, \beta\}\}.\]
Since $\phi$ is proper, the maximum degree of $G(\phi, \alpha\beta)$ is at most $2$. 
Hence, the connected components of $G(\phi, \alpha\beta)$ are paths or cycles (an isolated vertex is a path of length $0$).
For $x\in V$, let $G(x;\phi, \alpha\beta)$ denote the connected component of $G(\phi, \alpha\beta)$ containing $x$ and $\deg(x; \phi, \alpha\beta)$ denote the degree of $x$ in $G(\phi, \alpha\beta)$. We say that $x$, $y \in V$ are \emphd{$(\phi, \alpha\beta)$-related} if $G(x;\phi, \alpha\beta) = G(y;\phi, \alpha\beta)$, i.e., if $y$ is reachable from $x$ by a path in $G(\phi, \alpha\beta)$.

\begin{defn}[Hopeful and successful edges]
    Let $\phi$ be a proper partial coloring of $G$ and let $\alpha$, $\beta \in [q]$. Let $xy \in E$ be an edge such that $\phi(xy) = \blank$. We say that $xy$ is \emphd{$(\phi, \alpha\beta)$-hopeful} if $\deg(x;\phi, \alpha\beta) < 2$ and $\deg(y;\phi, \alpha\beta) < 2$. Further, we say that $xy$ is \emphd{$(\phi, \alpha\beta)$-successful} if it is $(\phi, \alpha\beta)$-hopeful and $x$ and $y$ are not $(\phi, \alpha\beta)$-related.
\end{defn}

Let $\phi$ be a proper partial coloring and let $\alpha$, $\beta \in [q]$. Consider a $(\phi, \alpha\beta)$-hopeful edge $e = xy$. Depending on the degrees of $x$ and $y$ in $G(\phi, \alpha\beta)$, the following two situations are possible:

\begin{figure}[t]
	\centering
	\begin{subfigure}[t]{.4\textwidth}
		\centering
		\begin{tikzpicture}
		\node[draw=none,minimum size=2.5cm,regular polygon,regular polygon sides=7] (P) {};

		\node[circle,fill=black,draw,inner sep=0pt,minimum size=4pt] (x) at (P.corner 4) {};
		\node[circle,fill=black,draw,inner sep=0pt,minimum size=4pt] (y) at (P.corner 5) {};
		\node[circle,fill=black,draw,inner sep=0pt,minimum size=4pt] (a) at (P.corner 6) {};
		\node[circle,fill=black,draw,inner sep=0pt,minimum size=4pt] (b) at (P.corner 7) {};
		\node[circle,fill=black,draw,inner sep=0pt,minimum size=4pt] (c) at (P.corner 1) {};
		\node[circle,fill=black,draw,inner sep=0pt,minimum size=4pt] (d) at (P.corner 2) {};
		\node[circle,fill=black,draw,inner sep=0pt,minimum size=4pt] (e) at (P.corner 3) {};
		
		\node[anchor=north] at (x) {$x$};
		\node[anchor=north] at (y) {$y$};
		
		\draw[thick,dotted] (x) to (y);
		\draw[thick] (y) to node[midway,inner sep=1pt,outer sep=1pt,minimum size=4pt,fill=white] {$\alpha$} (a);
		\draw[thick] (a) to node[midway,inner sep=1pt,outer sep=1pt,minimum size=4pt,fill=white] {$\beta$} (b);
		\draw[thick] (b) to node[midway,inner sep=1pt,outer sep=1pt,minimum size=4pt,fill=white] {$\alpha$} (c);
		\draw[thick] (c) to node[midway,inner sep=1pt,outer sep=1pt,minimum size=4pt,fill=white] {$\beta$} (d);
		\draw[thick] (d) to node[midway,inner sep=1pt,outer sep=1pt,minimum size=4pt,fill=white] {$\alpha$} (e);
		\draw[thick] (e) to node[midway,inner sep=1pt,outer sep=1pt,minimum size=4pt,fill=white] {$\beta$} (x);
		\end{tikzpicture}
		\caption{The edge $xy$ is not $(\phi, \alpha \beta)$-successful.}\label{subfig:unsucc}
	\end{subfigure}%
	\qquad%
	\begin{subfigure}[t]{.4\textwidth}
		\centering
		\begin{tikzpicture}
		\node[draw=none,minimum size=2.5cm,regular polygon,regular polygon sides=7] (P) {};

		\node[circle,fill=black,draw,inner sep=0pt,minimum size=4pt] (x) at (P.corner 4) {};
		\node[circle,fill=black,draw,inner sep=0pt,minimum size=4pt] (y) at (P.corner 5) {};
		\node[circle,fill=black,draw,inner sep=0pt,minimum size=4pt] (a) at (P.corner 6) {};
		\node[circle,fill=black,draw,inner sep=0pt,minimum size=4pt] (b) at (P.corner 7) {};
		\node[circle,fill=black,draw,inner sep=0pt,minimum size=4pt] (c) at (P.corner 1) {};
		\node[circle,fill=black,draw,inner sep=0pt,minimum size=4pt] (d) at (P.corner 2) {};
		\node[circle,fill=black,draw,inner sep=0pt,minimum size=4pt] (e) at (P.corner 3) {};
		\node[circle,fill=black,draw,inner sep=0pt,minimum size=4pt] (f) at (-2.2,0) {}; 
		
		\node[anchor=north] at (x) {$x$};
		\node[anchor=north] at (y) {$y$};
		
		\draw[thick,dotted] (x) to (y);
		\draw[thick] (y) to node[midway,inner sep=1pt,outer sep=1pt,minimum size=4pt,fill=white] {$\alpha$} (a);
		\draw[thick] (a) to node[midway,inner sep=1pt,outer sep=1pt,minimum size=4pt,fill=white] {$\beta$} (b);
		\draw[thick] (b) to node[midway,inner sep=1pt,outer sep=1pt,minimum size=4pt,fill=white] {$\alpha$} (c);
		\draw[thick] (c) to node[midway,inner sep=1pt,outer sep=1pt,minimum size=4pt,fill=white] {$\beta$} (d);
		\draw[thick] (d) to node[midway,inner sep=1pt,outer sep=1pt,minimum size=4pt,fill=white] {$\alpha$} (e);
		\draw[thick] (e) to node[midway,inner sep=1pt,outer sep=1pt,minimum size=4pt,fill=white] {$\beta$} (f);
		\end{tikzpicture}
		\caption{The edge $e = xy$ is $(\phi, \alpha \beta)$-successful.}
	\end{subfigure}
	\caption{The chain $P(e; \phi, \alpha\beta)$.}\label{fig:path}
\end{figure}
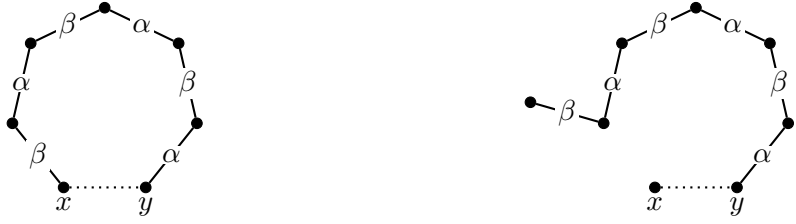

\begin{enumerate}[label=\ep{\textbf{Case \arabic*}},wide]
    \item If $\deg(x;\phi, \alpha\beta) = 0$ or $\deg(y; \phi, \alpha\beta) = 0$ (or both), then either $\alpha$ or $\beta$ is missing at both $x$ and $y$, so $e$ is $\phi$-happy.

    \item If $\deg(x;\phi, \alpha\beta) = \deg(y; \phi, \alpha\beta) = 1$, then both $x$ and $y$ miss exactly one of the colors $\alpha$, $\beta$. If they miss the same color, $e$ is $\phi$-happy. Otherwise suppose that, say, $\alpha \in M(\phi, x)$ and $\beta \in M(\phi, y)$.
    We define a path chain $P(e; \phi, \alpha\beta)$ by
    \[
        P(e; \phi, \alpha\beta) \,\defeq\, (e_0=e, e_1, \ldots, e_k),
    \]
    where $(e_1, \ldots, e_k)$ is the maximal path in $G(\phi, \alpha\beta)$ starting at $y$. Note that we have $\phi(e_1) = \alpha$ (in particular, the order of $\alpha$ and $\beta$ in the notation $P(e; \phi, \alpha\beta)$ matters, as $P(e; \phi, \beta\alpha)$ is the maximal path in $G(\phi, \alpha\beta)$ starting at $x$).
    The chain $P(e; \phi, \alpha\beta)$ is $\phi$-shiftable \cite[Fact 4.4]{VizingChain}. Moreover, if $e$ is $(\phi, \alpha\beta)$-successful, then $P(e; \phi, \alpha\beta)$ is $\phi$-happy \cite[Fact 4.5]{VizingChain}. See Fig.~\ref{fig:path} for an illustration.
\end{enumerate}

\subsection{Fan chains}

The second type of chains we will be working with are fans:

\begin{defn}[Fans]
    A \emphd{fan} is a chain of the form $F = (xy_0, \ldots, xy_{k-1})$ where $x$ is referred to as the \emphd{pivot} of the fan and $y_0$, \ldots, $y_{k-1}$ are distinct neighbors of $x$. We let $\Pivot(F) \defeq x$, $\vstart(F) \defeq y_0$ and $\vend(F) \defeq y_{k-1}$ denote the pivot, start, and end vertices of a fan $F$. (This notation is uniquely determined unless $F$ is a single edge.)
\end{defn}

The process of shifting a fan is illustrated in Fig.~\ref{fig:fan}.

\begin{figure}[htb!]
	\centering
	\begin{tikzpicture}[xscale=1.1]
	\begin{scope}
	\node[circle,fill=black,draw,inner sep=0pt,minimum size=4pt] (x) at (0,0) {};
	\node[anchor=north] at (x) {$x$};
	
	\coordinate (O) at (0,0);
	\def\radius{2.6cm}
	
	\node[circle,fill=black,draw,inner sep=0pt,minimum size=4pt] (y0) at (190:\radius) {};
	\node at (190:2.9) {$y_0$};
	
	\node[circle,fill=black,draw,inner sep=0pt,minimum size=4pt] (y1) at (165:\radius) {};
	\node at (165:2.9) {$y_1$};
	
	\node[circle,fill=black,draw,inner sep=0pt,minimum size=4pt] (y2) at (140:\radius) {};
	\node at (140:2.9) {$y_2$};
	
	\node[circle,fill=black,draw,inner sep=0pt,minimum size=4pt] (y4) at (90:\radius) {};
	\node at (90:2.9) {$y_{i-1}$};
	
	\node[circle,fill=black,draw,inner sep=0pt,minimum size=4pt] (y5) at (65:\radius) {};
	\node at (65:2.9) {$y_i$};
	
	\node[circle,fill=black,draw,inner sep=0pt,minimum size=4pt] (y6) at (40:\radius) {};
	\node at (40:3) {$y_{i+1}$};
	
	\node[circle,fill=black,draw,inner sep=0pt,minimum size=4pt] (y8) at (-10:\radius) {};
	\node at (-10:3.1) {$y_{k-1}$};
	
	\node[circle,inner sep=0pt,minimum size=4pt] at (115:2) {$\ldots$}; 
	\node[circle,inner sep=0pt,minimum size=4pt] at (15:2) {$\ldots$}; 
	
	\draw[thick,dotted] (x) to (y0);
	\draw[thick] (x) to node[midway,inner sep=1pt,outer sep=1pt,minimum size=4pt,fill=white] {$\beta_0$} (y1);
	\draw[thick] (x) to node[midway,inner sep=1pt,outer sep=1pt,minimum size=4pt,fill=white] {$\beta_1$} (y2);
	
	\draw[thick] (x) to node[midway,inner sep=1pt,outer sep=1pt,minimum size=4pt,fill=white] {$\beta_{i-2}$} (y4);
	\draw[thick] (x) to node[pos=0.75,inner sep=1pt,outer sep=1pt,minimum size=4pt,fill=white] {$\beta_{i-1}$} (y5);
	\draw[thick] (x) to node[midway,inner sep=1pt,outer sep=1pt,minimum size=4pt,fill=white] {$\beta_i$} (y6);
	
	\draw[thick] (x) to node[midway,inner sep=1pt,outer sep=1pt,minimum size=4pt,fill=white] {$\beta_{k-2}$} (y8);
	\end{scope}
	
	\draw[-{Stealth[length=1.6mm]},very thick,decoration = {snake,pre length=3pt,post length=7pt,},decorate] (2.9,1) to node[midway,anchor=south]{$\Shift$} (5,1);
	
	\begin{scope}[xshift=8.3cm]
	\node[circle,fill=black,draw,inner sep=0pt,minimum size=4pt] (x) at (0,0) {};
	\node[anchor=north] at (x) {$x$};
	
	\coordinate (O) at (0,0);
	\def\radius{2.6cm}
	
	\node[circle,fill=black,draw,inner sep=0pt,minimum size=4pt] (y0) at (190:\radius) {};
	\node at (190:2.9) {$y_0$};
	
	\node[circle,fill=black,draw,inner sep=0pt,minimum size=4pt] (y1) at (165:\radius) {};
	\node at (165:2.9) {$y_1$};
	
	\node[circle,fill=black,draw,inner sep=0pt,minimum size=4pt] (y2) at (140:\radius) {};
	\node at (140:2.9) {$y_2$};
	
	\node[circle,fill=black,draw,inner sep=0pt,minimum size=4pt] (y4) at (90:\radius) {};
	\node at (90:2.9) {$y_{i-1}$};
	
	\node[circle,fill=black,draw,inner sep=0pt,minimum size=4pt] (y5) at (65:\radius) {};
	\node at (65:2.9) {$y_i$};
	
	\node[circle,fill=black,draw,inner sep=0pt,minimum size=4pt] (y6) at (40:\radius) {};
	\node at (40:3) {$y_{i+1}$};
	
	\node[circle,fill=black,draw,inner sep=0pt,minimum size=4pt] (y8) at (-10:\radius) {};
	\node at (-10:3.1) {$y_{k-1}$};
	
	\node[circle,inner sep=0pt,minimum size=4pt] at (115:2) {$\ldots$}; 
	\node[circle,inner sep=0pt,minimum size=4pt] at (15:2) {$\ldots$}; 
	
	\draw[thick] (x) to node[midway,inner sep=1pt,outer sep=1pt,minimum size=4pt,fill=white] {$\beta_0$} (y0);
	\draw[thick] (x) to node[midway,inner sep=1pt,outer sep=1pt,minimum size=4pt,fill=white] {$\beta_1$} (y1);
	\draw[thick] (x) to node[midway,inner sep=1pt,outer sep=1pt,minimum size=4pt,fill=white] {$\beta_2$} (y2);
	
	\draw[thick] (x) to node[midway,inner sep=1pt,outer sep=1pt,minimum size=4pt,fill=white] {$\beta_{i-1}$} (y4);
	\draw[thick] (x) to node[pos=0.75,inner sep=1pt,outer sep=1pt,minimum size=4pt,fill=white] {$\beta_i$} (y5);
	\draw[thick] (x) to node[midway,inner sep=1pt,outer sep=1pt,minimum size=4pt,fill=white] {$\beta_{i+1}$} (y6);
	
	\draw[thick, dotted] (x) to (y8);
	\end{scope}
	\end{tikzpicture}
	\caption{The process of shifting a fan.}\label{fig:fan}
\end{figure}
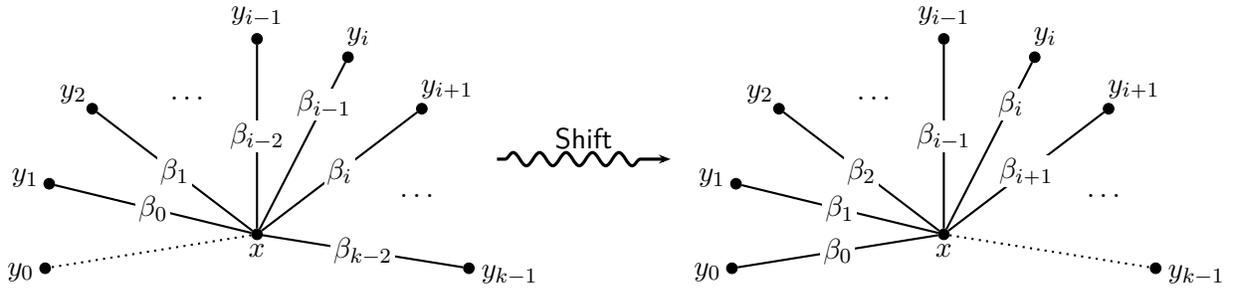


\begin{defn}[Hopeful, successful, and disappointed fans]\label{defn:hsf}
    Let $\phi$ be a proper partial coloring and let $\alpha$, $\beta \in [q]$. Let $F$ be a $\phi$-shiftable fan with $x \defeq \Pivot(F)$ and $y \defeq \vend(F)$ and suppose that $F$ is not $\phi$-happy (which means that the edge $xy$ is not $\Shift(\phi, F)$-happy). We say that $F$ is:
    \begin{itemize}
        \item \emphd{$(\phi, \alpha\beta)$-hopeful} if $\deg(x;\phi, \alpha\beta) < 2$ and $\deg(y;\phi, \alpha\beta) < 2$;
        \item \emphd{$(\phi, \alpha\beta)$-successful} if it is $(\phi, \alpha\beta)$-hopeful and $x$ and $y$ are not $(\Shift(\phi, F), \alpha\beta)$-related;
        \item \emphd{$(\phi, \alpha\beta)$-disappointed} if it is $(\phi, \alpha\beta)$-hopeful but not $(\phi, \alpha\beta)$-successful.
    \end{itemize}
\end{defn}

Note that, in the setting of Definition~\ref{defn:hsf}, we have $M(\phi, x) = M(\Shift(\phi, F), x)$ and $M(\phi, y) \subseteq M(\Shift(\phi, F), y)$. Therefore, if $F$ is $(\phi, \alpha\beta)$-hopeful (resp. successful), then
\[\deg(x;\phi, \alpha\beta) = \deg(x;\Shift(\phi, F), \alpha\beta) < 2, \quad \deg(y;\Shift(\phi, F), \alpha\beta) \leq \deg(y;\phi, \alpha\beta) < 2,\]
and so $xy$ is $(\Shift(\phi, F), \alpha\beta)$-hopeful (resp. successful) \cite[Fact 4.7]{VizingChain}.

\subsection{Multi-Step Vizing chains}\label{subsec:Vizingdefn}

The third type of chain we will be working with are Vizing chains, which are formed by combining a fan and an alternating path:

\begin{defn}[Vizing chains]
    A \emphd{Vizing chain} in a proper partial coloring $\phi$ is a chain of the form $F + P$, where $F$ is a $(\phi, \alpha\beta)$-hopeful fan for some $\alpha$, $\beta \in [q]$ and $P$ is an initial segment of the path chain $P(\End(F);  \Shift(\phi, F), \alpha\beta)$ with $\vstart(P) = \Pivot(F)$.
    In particular, letting $x \defeq \Pivot(F)$ and $y \defeq \vend(F)$, the chain $P$ consists of the edge $xy$ followed by a (not necessarily maximal) path starting at $y$ whose edges are colored $\alpha$ and $\beta$ under the coloring $\Shift(\phi, F)$ such that $\beta \in M(\Shift(\phi, F), y)$. See Fig.~\ref{fig:Viz_col} for an illustration. Note that we allow $P$ to comprise only the single edge $\End(F)$, in which case the Vizing chain coincides with the fan $F$.
\end{defn}

\begin{figure}[ht]
	\centering
	\begin{subfigure}[t]{.2\textwidth}
		\centering
		\begin{tikzpicture}[scale=1.2]
		\node[circle,fill=black,draw,inner sep=0pt,minimum size=4pt] (a) at (0,0) {};
		\node[circle,fill=black,draw,inner sep=0pt,minimum size=4pt] (b) at (1,0) {};
		\path (b) ++(150:1) node[circle,fill=black,draw,inner sep=0pt,minimum size=4pt] (c) {}; 
		\path (b) ++(120:1) node[circle,fill=black,draw,inner sep=0pt,minimum size=4pt] (d) {}; 
		\node[circle,fill=black,draw,inner sep=0pt,minimum size=4pt] (e) at (1,1) {};
		\node[circle,fill=black,draw,inner sep=0pt,minimum size=4pt] (f) at (1.8,1.4) {};
		\node[circle,fill=black,draw,inner sep=0pt,minimum size=4pt] (g) at (1,1.8) {};
		\node[circle,fill=black,draw,inner sep=0pt,minimum size=4pt] (h) at (1.8,2.2) {};
		\node[circle,fill=black,draw,inner sep=0pt,minimum size=4pt] (i) at (1,2.6) {};
		\node[circle,fill=black,draw,inner sep=0pt,minimum size=4pt] (j) at (1.8,3) {};
		\node[circle,fill=black,draw,inner sep=0pt,minimum size=4pt] (k) at (1,3.4) {};
		\node[circle,fill=black,draw,inner sep=0pt,minimum size=4pt] (l) at (1.8,3.8) {};
		\node[circle,fill=black,draw,inner sep=0pt,minimum size=4pt] (m) at (1,4.2) {};
		\node[circle,fill=black,draw,inner sep=0pt,minimum size=4pt] (n) at (1.8,4.6) {};
		
		\draw[ thick,dotted] (a) -- (b);
		\draw[ thick] (b) to node[midway,inner sep=1pt,outer sep=1pt,minimum size=4pt,fill=white] {$\epsilon$} (c) (b) to node[midway,inner sep=0.5pt,outer sep=0.5pt,minimum size=4pt,fill=white] {$\zeta$} (d) (b) to node[midway,inner sep=1pt,outer sep=1pt,minimum size=4pt,fill=white] {$\eta$} (e) to node[midway,inner sep=1pt,outer sep=1pt,minimum size=4pt,fill=white] {$\alpha$} (f) to node[midway,inner sep=1pt,outer sep=1pt,minimum size=4pt,fill=white] {$\beta$} (g) to node[midway,inner sep=1pt,outer sep=1pt,minimum size=4pt,fill=white] {$\alpha$} (h) to node[midway,inner sep=1pt,outer sep=1pt,minimum size=4pt,fill=white] {$\beta$} (i) to node[midway,inner sep=1pt,outer sep=1pt,minimum size=4pt,fill=white] {$\alpha$} (j) to node[midway,inner sep=1pt,outer sep=1pt,minimum size=4pt,fill=white] {$\beta$} (k) to node[midway,inner sep=1pt,outer sep=1pt,minimum size=4pt,fill=white] {$\alpha$} (l) to node[midway,inner sep=1pt,outer sep=1pt,minimum size=4pt,fill=white] {$\beta$} (m) to node[midway,inner sep=1pt,outer sep=1pt,minimum size=4pt,fill=white] {$\alpha$} (n);
		
		\draw[decoration={brace,amplitude=10pt,mirror},decorate] (2,0.9) -- node [midway,below,xshift=15pt,yshift=5pt] {$P$} (2,4.7);
		
		\draw[decoration={brace,amplitude=10pt},decorate] (-0.35,0) -- node [midway,above,yshift=3pt,xshift=-10pt] {$F$} (1,1.35);
		
		\node[anchor=west, rotate=-20] at (e) {$y = \vend(F)$};
		\node[anchor=west] at (b) {$x = \Pivot(F)$};
		\end{tikzpicture}
		\caption{Before shifting.}\label{fig:Viz_col:initial}
	\end{subfigure}%
	\qquad\qquad\qquad%
	\begin{subfigure}[t]{.2\textwidth}
		\centering
		\begin{tikzpicture}[scale=1.2]
		\node[circle,fill=black,draw,inner sep=0pt,minimum size=4pt] (a) at (0,0) {};
		\node[circle,fill=black,draw,inner sep=0pt,minimum size=4pt] (b) at (1,0) {};
		\path (b) ++(150:1) node[circle,fill=black,draw,inner sep=0pt,minimum size=4pt] (c) {}; 
		\path (b) ++(120:1) node[circle,fill=black,draw,inner sep=0pt,minimum size=4pt] (d) {}; 
		\node[circle,fill=black,draw,inner sep=0pt,minimum size=4pt] (e) at (1,1) {};
		\node[circle,fill=black,draw,inner sep=0pt,minimum size=4pt] (f) at (1.8,1.4) {};
		\node[circle,fill=black,draw,inner sep=0pt,minimum size=4pt] (g) at (1,1.8) {};
		\node[circle,fill=black,draw,inner sep=0pt,minimum size=4pt] (h) at (1.8,2.2) {};
		\node[circle,fill=black,draw,inner sep=0pt,minimum size=4pt] (i) at (1,2.6) {};
		\node[circle,fill=black,draw,inner sep=0pt,minimum size=4pt] (j) at (1.8,3) {};
		\node[circle,fill=black,draw,inner sep=0pt,minimum size=4pt] (k) at (1,3.4) {};
		\node[circle,fill=black,draw,inner sep=0pt,minimum size=4pt] (l) at (1.8,3.8) {};
		\node[circle,fill=black,draw,inner sep=0pt,minimum size=4pt] (m) at (1,4.2) {};
		\node[circle,fill=black,draw,inner sep=0pt,minimum size=4pt] (n) at (1.8,4.6) {};
		
		\draw[ thick] (a) to node[midway,inner sep=1pt,outer sep=1pt,minimum size=4pt,fill=white] {$\epsilon$} (b);
		\draw[ thick] (b) to node[midway,inner sep=0.5pt,outer sep=0.5pt,minimum size=4pt,fill=white] {$\zeta$} (c) (b) to node[midway,inner sep=1pt,outer sep=1pt,minimum size=4pt,fill=white] {$\eta$} (d) (b) to node[midway,inner sep=1pt,outer sep=1pt,minimum size=4pt,fill=white] {$\alpha$} (e) to node[midway,inner sep=1pt,outer sep=1pt,minimum size=4pt,fill=white] {$\beta$} (f) to node[midway,inner sep=1pt,outer sep=1pt,minimum size=4pt,fill=white] {$\alpha$} (g) to node[midway,inner sep=1pt,outer sep=1pt,minimum size=4pt,fill=white] {$\beta$} (h) to node[midway,inner sep=1pt,outer sep=1pt,minimum size=4pt,fill=white] {$\alpha$} (i) to node[midway,inner sep=1pt,outer sep=1pt,minimum size=4pt,fill=white] {$\beta$} (j) to node[midway,inner sep=1pt,outer sep=1pt,minimum size=4pt,fill=white] {$\alpha$} (k) to node[midway,inner sep=1pt,outer sep=1pt,minimum size=4pt,fill=white] {$\beta$} (l) to node[midway,inner sep=1pt,outer sep=1pt,minimum size=4pt,fill=white] {$\alpha$} (m);
        \draw[ thick,dotted] (m) -- (n);
		\end{tikzpicture}
		\caption{After shifting.}\label{fig:Viz_col:shifted}
	\end{subfigure}%
	\caption{A Vizing chain $C = F + P$ before and after shifting.}\label{fig:Viz_col}
\end{figure}
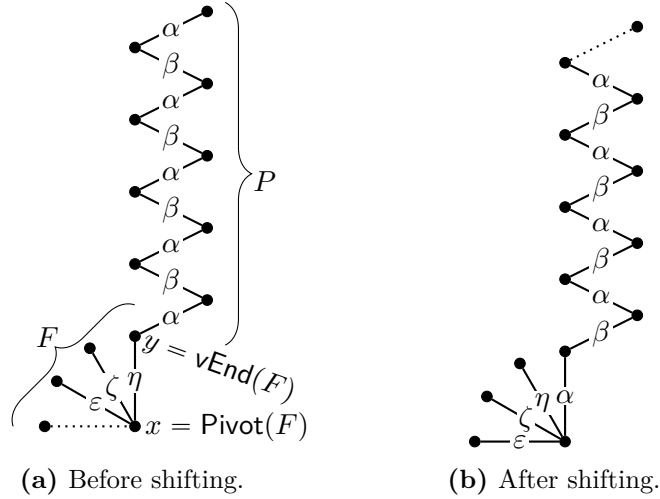

It follows from standard proofs of \hyperref[theo:Vizing]{Vizing's theorem} that for any uncolored edge $e$, one can find a $\phi$-happy Vizing chain $C = F+P$ with $\Start(F) = e$.
Combining several Vizing chains yields a multi-step Vizing chain:

\begin{defn}[Multi-step Vizing chains]
    A \emphd{$k$-step Vizing chain} is a chain of the form \[C \,=\, C_0 + \cdots + C_{k-1},\] where $C_i = F_i + P_i$ is a Vizing chain in the coloring $\Shift(\phi, C_0 + \cdots + C_{i-1})$ such that $\vend(P_i) = \vstart(F_{i+1})$ for all $0 \leq i < k-1$. See Fig.~\ref{fig:multi_Viz_chain} for an illustration.
\end{defn}

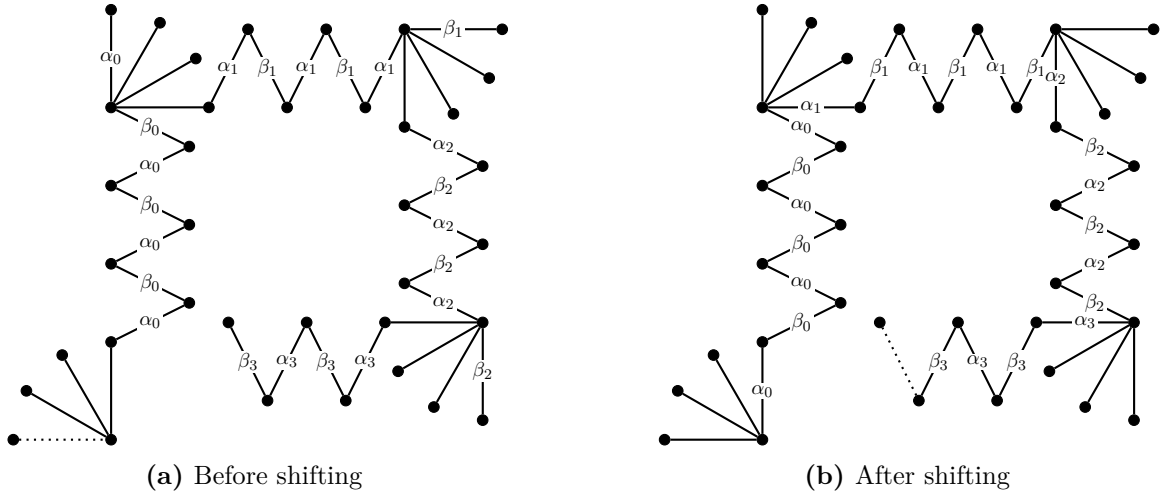
\begin{figure}[htb!]
    \begin{subfigure}[t]{.4\textwidth}
    	\centering
    	\begin{tikzpicture}[scale=1.3]
    	\node[circle,fill=black,draw,inner sep=0pt,minimum size=4pt] (a) at (0,0) {};
    	\node[circle,fill=black,draw,inner sep=0pt,minimum size=4pt] (b) at (1,0) {};
    	\path (b) ++(150:1) node[circle,fill=black,draw,inner sep=0pt,minimum size=4pt] (c) {};
    	\path (b) ++(120:1) node[circle,fill=black,draw,inner sep=0pt,minimum size=4pt] (d) {};
    	\node[circle,fill=black,draw,inner sep=0pt,minimum size=4pt] (e) at (1,1) {};
    	\node[circle,fill=black,draw,inner sep=0pt,minimum size=4pt] (f) at (1.8,1.4) {};
    	\node[circle,fill=black,draw,inner sep=0pt,minimum size=4pt] (g) at (1,1.8) {};
    	\node[circle,fill=black,draw,inner sep=0pt,minimum size=4pt] (h) at (1.8,2.2) {};
    	\node[circle,fill=black,draw,inner sep=0pt,minimum size=4pt] (i) at (1,2.6) {};
    	\node[circle,fill=black,draw,inner sep=0pt,minimum size=4pt] (j) at (1.8,3) {};
    	\node[circle,fill=black,draw,inner sep=0pt,minimum size=4pt] (k) at (1,3.4) {};
    	\node[circle,fill=black,draw,inner sep=0pt,minimum size=4pt] (l) at (1,4.4) {};
    	
    	\path (k) ++(60:1) node[circle,fill=black,draw,inner sep=0pt,minimum size=4pt] (m) {};
    	\path (k) ++(30:1) node[circle,fill=black,draw,inner sep=0pt,minimum size=4pt] (n) {};
    	\path (k) ++(0:1) node[circle,fill=black,draw,inner sep=0pt,minimum size=4pt] (o) {};
    	
    	\node[circle,fill=black,draw,inner sep=0pt,minimum size=4pt] (p) at (2.4,4.2) {};
    	\node[circle,fill=black,draw,inner sep=0pt,minimum size=4pt] (q) at (2.8,3.4) {};
    	\node[circle,fill=black,draw,inner sep=0pt,minimum size=4pt] (r) at (3.2,4.2) {};
    	\node[circle,fill=black,draw,inner sep=0pt,minimum size=4pt] (s) at (3.6,3.4) {};
    	\node[circle,fill=black,draw,inner sep=0pt,minimum size=4pt] (t) at (4,4.2) {};

    	\node[circle,fill=black,draw,inner sep=0pt,minimum size=4pt] (v) at (5,4.2) {};
    	\path (t) ++(-30:1) node[circle,fill=black,draw,inner sep=0pt,minimum size=4pt] (w) {};
    	\path (t) ++(-60:1) node[circle,fill=black,draw,inner sep=0pt,minimum size=4pt] (x) {};
    	\path (t) ++(-90:1) node[circle,fill=black,draw,inner sep=0pt,minimum size=4pt] (y) {};

    	\node[circle,fill=black,draw,inner sep=0pt,minimum size=4pt] (z) at (4.8,2.8) {};
    	\node[circle,fill=black,draw,inner sep=0pt,minimum size=4pt] (aa) at (4,2.4) {};
    	\node[circle,fill=black,draw,inner sep=0pt,minimum size=4pt] (ab) at (4.8,2) {};
    	\node[circle,fill=black,draw,inner sep=0pt,minimum size=4pt] (ac) at (4,1.6) {};
    	\node[circle,fill=black,draw,inner sep=0pt,minimum size=4pt] (ad) at (4.8,1.2) {};
    	\node[circle,fill=black,draw,inner sep=0pt,minimum size=4pt] (ae) at (4.8,0.2) {};
    	
    	\path (ad) ++(-120:1) node[circle,fill=black,draw,inner sep=0pt,minimum size=4pt] (ag) {};
    	\path (ad) ++(-150:1) node[circle,fill=black,draw,inner sep=0pt,minimum size=4pt] (ah) {};
    	\path (ad) ++(-180:1) node[circle,fill=black,draw,inner sep=0pt,minimum size=4pt] (ai) {};
    	
    	\node[circle,fill=black,draw,inner sep=0pt,minimum size=4pt] (aj) at (3.4,0.4) {};
    	\node[circle,fill=black,draw,inner sep=0pt,minimum size=4pt] (ak) at (3,1.2) {};
    	\node[circle,fill=black,draw,inner sep=0pt,minimum size=4pt] (al) at (2.6,0.4) {};
    	\node[circle,fill=black,draw,inner sep=0pt,minimum size=4pt] (am) at (2.2,1.2) {};
    	
    	\begin{scope}[every node/.style={scale=0.7}]
    	\draw[ thick,dotted] (a) -- (b);
    	\draw[ thick] (b) -- (c) (b) -- (d) (b) -- (e) to node[midway,inner sep=1pt,outer sep=1pt,minimum size=4pt,fill=white] {$\alpha_0$} (f) to node[midway,inner sep=1pt,outer sep=1pt,minimum size=4pt,fill=white] {$\beta_0$} (g) to node[midway,inner sep=1pt,outer sep=1pt,minimum size=4pt,fill=white] {$\alpha_0$} (h) to node[midway,inner sep=1pt,outer sep=1pt,minimum size=4pt,fill=white] {$\beta_0$} (i) to node[midway,inner sep=1pt,outer sep=1pt,minimum size=4pt,fill=white] {$\alpha_0$} (j) to node[midway,inner sep=1pt,outer sep=1pt,minimum size=4pt,fill=white] {$\beta_0$} (k) to node[midway,inner sep=1pt,outer sep=1pt,minimum size=4pt,fill=white] {$\alpha_0$} (l);
    	
    	\draw[ thick] (k) -- (m) (k) -- (n) (k) -- (o) to node[midway,inner sep=1pt,outer sep=1pt,minimum size=4pt,fill=white] {$\alpha_1$} (p) to node[midway,inner sep=1pt,outer sep=1pt,minimum size=4pt,fill=white] {$\beta_1$} (q) to node[midway,inner sep=1pt,outer sep=1pt,minimum size=4pt,fill=white] {$\alpha_1$} (r) to node[midway,inner sep=1pt,outer sep=1pt,minimum size=4pt,fill=white] {$\beta_1$} (s) to node[midway,inner sep=1pt,outer sep=1pt,minimum size=4pt,fill=white] {$\alpha_1$} (t) to node[midway,inner sep=1pt,outer sep=1pt,minimum size=4pt,fill=white] {$\beta_1$} (v);
    	
    	\draw[ thick] (t) -- (w) (t) -- (x) (t) -- (y) to node[midway,inner sep=1pt,outer sep=1pt,minimum size=4pt,fill=white] {$\alpha_2$} (z) to node[midway,inner sep=1pt,outer sep=1pt,minimum size=4pt,fill=white] {$\beta_2$} (aa) to node[midway,inner sep=1pt,outer sep=1pt,minimum size=4pt,fill=white] {$\alpha_2$} (ab) to node[midway,inner sep=1pt,outer sep=1pt,minimum size=4pt,fill=white] {$\beta_2$} (ac) to node[midway,inner sep=1pt,outer sep=1pt,minimum size=4pt,fill=white] {$\alpha_2$} (ad) to node[midway,inner sep=1pt,outer sep=1pt,minimum size=4pt,fill=white] {$\beta_2$} (ae);
    	
    	\draw[thick] (ad) -- (ag) (ad) -- (ah) (ad) -- (ai) to node[midway,inner sep=1pt,outer sep=1pt,minimum size=4pt,fill=white] {$\alpha_3$} (aj) to node[midway,inner sep=1pt,outer sep=1pt,minimum size=4pt,fill=white] {$\beta_3$} (ak) to node[midway,inner sep=1pt,outer sep=1pt,minimum size=4pt,fill=white] {$\alpha_3$} (al) to node[midway,inner sep=1pt,outer sep=1pt,minimum size=4pt,fill=white] {$\beta_3$} (am);
        \end{scope}
    	
    	\end{tikzpicture}
    	\caption{Before shifting}
    \end{subfigure}
    \qquad \qquad
    \begin{subfigure}[t]{.45\textwidth}
    	\centering
    	\begin{tikzpicture}[scale=1.3]
    	\node[circle,fill=black,draw,inner sep=0pt,minimum size=4pt] (a) at (0,0) {};
    	\node[circle,fill=black,draw,inner sep=0pt,minimum size=4pt] (b) at (1,0) {};
    	\path (b) ++(150:1) node[circle,fill=black,draw,inner sep=0pt,minimum size=4pt] (c) {};
    	\path (b) ++(120:1) node[circle,fill=black,draw,inner sep=0pt,minimum size=4pt] (d) {};
    	\node[circle,fill=black,draw,inner sep=0pt,minimum size=4pt] (e) at (1,1) {};
    	\node[circle,fill=black,draw,inner sep=0pt,minimum size=4pt] (f) at (1.8,1.4) {};
    	\node[circle,fill=black,draw,inner sep=0pt,minimum size=4pt] (g) at (1,1.8) {};
    	\node[circle,fill=black,draw,inner sep=0pt,minimum size=4pt] (h) at (1.8,2.2) {};
    	\node[circle,fill=black,draw,inner sep=0pt,minimum size=4pt] (i) at (1,2.6) {};
    	\node[circle,fill=black,draw,inner sep=0pt,minimum size=4pt] (j) at (1.8,3) {};
    	\node[circle,fill=black,draw,inner sep=0pt,minimum size=4pt] (k) at (1,3.4) {};
    	\node[circle,fill=black,draw,inner sep=0pt,minimum size=4pt] (l) at (1,4.4) {};
    	
    	\path (k) ++(60:1) node[circle,fill=black,draw,inner sep=0pt,minimum size=4pt] (m) {};
    	\path (k) ++(30:1) node[circle,fill=black,draw,inner sep=0pt,minimum size=4pt] (n) {};
    	\path (k) ++(0:1) node[circle,fill=black,draw,inner sep=0pt,minimum size=4pt] (o) {};
    	
    	\node[circle,fill=black,draw,inner sep=0pt,minimum size=4pt] (p) at (2.4,4.2) {};
    	\node[circle,fill=black,draw,inner sep=0pt,minimum size=4pt] (q) at (2.8,3.4) {};
    	\node[circle,fill=black,draw,inner sep=0pt,minimum size=4pt] (r) at (3.2,4.2) {};
    	\node[circle,fill=black,draw,inner sep=0pt,minimum size=4pt] (s) at (3.6,3.4) {};
    	\node[circle,fill=black,draw,inner sep=0pt,minimum size=4pt] (t) at (4,4.2) {};

    	\node[circle,fill=black,draw,inner sep=0pt,minimum size=4pt] (v) at (5,4.2) {};
    	\path (t) ++(-30:1) node[circle,fill=black,draw,inner sep=0pt,minimum size=4pt] (w) {};
    	\path (t) ++(-60:1) node[circle,fill=black,draw,inner sep=0pt,minimum size=4pt] (x) {};
    	\path (t) ++(-90:1) node[circle,fill=black,draw,inner sep=0pt,minimum size=4pt] (y) {};

    	\node[circle,fill=black,draw,inner sep=0pt,minimum size=4pt] (z) at (4.8,2.8) {};
    	\node[circle,fill=black,draw,inner sep=0pt,minimum size=4pt] (aa) at (4,2.4) {};
    	\node[circle,fill=black,draw,inner sep=0pt,minimum size=4pt] (ab) at (4.8,2) {};
    	\node[circle,fill=black,draw,inner sep=0pt,minimum size=4pt] (ac) at (4,1.6) {};
    	\node[circle,fill=black,draw,inner sep=0pt,minimum size=4pt] (ad) at (4.8,1.2) {};
    	\node[circle,fill=black,draw,inner sep=0pt,minimum size=4pt] (ae) at (4.8,0.2) {};
    	
    	\path (ad) ++(-120:1) node[circle,fill=black,draw,inner sep=0pt,minimum size=4pt] (ag) {};
    	\path (ad) ++(-150:1) node[circle,fill=black,draw,inner sep=0pt,minimum size=4pt] (ah) {};
    	\path (ad) ++(-180:1) node[circle,fill=black,draw,inner sep=0pt,minimum size=4pt] (ai) {};
    	
    	\node[circle,fill=black,draw,inner sep=0pt,minimum size=4pt] (aj) at (3.4,0.4) {};
    	\node[circle,fill=black,draw,inner sep=0pt,minimum size=4pt] (ak) at (3,1.2) {};
    	\node[circle,fill=black,draw,inner sep=0pt,minimum size=4pt] (al) at (2.6,0.4) {};
    	\node[circle,fill=black,draw,inner sep=0pt,minimum size=4pt] (am) at (2.2,1.2) {};
    	
    	\begin{scope}[every node/.style={scale=0.7}]
    	\draw[ thick] (a) -- (b) -- (c) (b) -- (d) (b) to node[midway,inner sep=1pt,outer sep=1pt,minimum size=4pt,fill=white] {$\alpha_0$} (e) to node[midway,inner sep=1pt,outer sep=1pt,minimum size=4pt,fill=white] {$\beta_0$} (f) to node[midway,inner sep=1pt,outer sep=1pt,minimum size=4pt,fill=white] {$\alpha_0$} (g) to node[midway,inner sep=1pt,outer sep=1pt,minimum size=4pt,fill=white] {$\beta_0$} (h) to node[midway,inner sep=1pt,outer sep=1pt,minimum size=4pt,fill=white] {$\alpha_0$} (i) to node[midway,inner sep=1pt,outer sep=1pt,minimum size=4pt,fill=white] {$\beta_0$} (j) to node[midway,inner sep=1pt,outer sep=1pt,minimum size=4pt,fill=white] {$\alpha_0$} (k);
    	
    	\draw[ thick] (l) -- (k) -- (m) (k) -- (n) (k) to node[midway,inner sep=1pt,outer sep=1pt,minimum size=4pt,fill=white] {$\alpha_1$} (o) to node[midway,inner sep=1pt,outer sep=1pt,minimum size=4pt,fill=white] {$\beta_1$} (p) to node[midway,inner sep=1pt,outer sep=1pt,minimum size=4pt,fill=white] {$\alpha_1$} (q) to node[midway,inner sep=1pt,outer sep=1pt,minimum size=4pt,fill=white] {$\beta_1$} (r) to node[midway,inner sep=1pt,outer sep=1pt,minimum size=4pt,fill=white] {$\alpha_1$} (s) to node[midway,inner sep=1pt,outer sep=1pt,minimum size=4pt,fill=white] {$\beta_1$} (t);
    	
    	\draw[ thick] (v) -- (t) -- (w) (t) -- (x) (t) to node[midway,inner sep=1pt,outer sep=1pt,minimum size=4pt,fill=white] {$\alpha_2$} (y) to node[midway,inner sep=1pt,outer sep=1pt,minimum size=4pt,fill=white] {$\beta_2$} (z) to node[midway,inner sep=1pt,outer sep=1pt,minimum size=4pt,fill=white] {$\alpha_2$} (aa) to node[midway,inner sep=1pt,outer sep=1pt,minimum size=4pt,fill=white] {$\beta_2$} (ab) to node[midway,inner sep=1pt,outer sep=1pt,minimum size=4pt,fill=white] {$\alpha_2$} (ac) to node[midway,inner sep=1pt,outer sep=1pt,minimum size=4pt,fill=white] {$\beta_2$} (ad);
    	
    	\draw[thick] (ae) -- (ad) -- (ag) (ad) -- (ah) (ad) to node[midway,inner sep=1pt,outer sep=1pt,minimum size=4pt,fill=white] {$\alpha_3$} (ai) to node[midway,inner sep=1pt,outer sep=1pt,minimum size=4pt,fill=white] {$\beta_3$} (aj) to node[midway,inner sep=1pt,outer sep=1pt,minimum size=4pt,fill=white] {$\alpha_3$} (ak) to node[midway,inner sep=1pt,outer sep=1pt,minimum size=4pt,fill=white] {$\beta_3$} (al);
            \draw[thick, dotted] (al) -- (am);
        \end{scope}
    	
    	\end{tikzpicture}
    	\caption{After shifting}
    \end{subfigure}
    \caption{A 4-step Vizing chain before and after shifting.}\label{fig:multi_Viz_chain}
\end{figure}

\section{Data Structures}\label{section: data_structures}

In this section, we describe how we will store our graph $G$ and partial coloring $\phi$.
Below, we discuss how our choices for the data structures affect the runtime of certain procedures.

\begin{itemize}
    \item We store $G$ as a list of vertices and edges, and include the partial coloring $\phi$ as an attribute of the graph.

    \item We store the partial coloring $\phi$ as a hash map, which maps edges to their respective colors (or $\blank$ if the edge is uncolored).
    Furthermore, the missing sets $M(\phi, \cdot)$ are also stored as hash maps, which map a vertex $x$ to a $q$-element array such that the following holds for each $\alpha \in [q]$:
    \[M(\phi, x)[\alpha] = \left\{\begin{array}{cc}
        y & \text{such that $\phi(xy) = \alpha$;} \\
        \blank & \text{if no such $y$ exists.}
    \end{array}\right.\]
    Note that as $\phi$ is a proper partial coloring, the vertex $y$ above is unique.
    In the remainder of the paper, we will use the notation $M(\phi, x)[\cdot]$ as described above in our algorithms, and the notation $M(\phi, x)$ to indicate the set of missing colors at $x$ in our proofs.
\end{itemize}

In the following algorithm, we formally describe the procedure of shifting a $\phi$-shiftable chain $C$.

\vspace{10pt}
\begin{breakablealgorithm}
\caption{Shifting a Chain}\algsize\label{alg:chain_shift}
\begin{flushleft}
\textbf{Input}: A partial coloring $\phi$ and a $\phi$-shiftable chain $C=(e_0, \ldots, e_{k-1})$. \\
\textbf{Output}: The coloring $\Shift(\phi, C)$.
\end{flushleft}
\begin{algorithmic}[1]
    \For{$i = 0, \ldots, k-2$}
        \State Let $x, y, z \in V(G)$ be such that $e_i = xy$ and $e_{i+1} = yz$.
        \State $\phi(e_i) \gets \phi(e_{i+1})$, \quad $\phi(e_{i+1}) \gets \blank$.
        \State $M(\phi, x)[\phi(e_i)] \gets y$, \quad $M(\phi, y)[\phi(e_i)] \gets x$, \quad $M(\phi, z)[\phi(e_i)] \gets \blank$.
    \EndFor
    \State \Return $\phi$.
\end{algorithmic}
\end{breakablealgorithm}
\vspace{10pt}

By our choice of data structures, each iteration of the \textsf{for} loop takes $O(1)$ time and so Algorithm~\ref{alg:chain_shift} has running time $O(\length(C))$.

Next, we show how to augment a $\phi$-happy chain $C$.

\vspace{10pt}
\begin{breakablealgorithm}
\caption{Augmenting a Chain}\algsize\label{alg:chain_aug}
\begin{flushleft}
\textbf{Input}: A partial coloring $\phi$, a $\phi$-happy chain $C=(e_0, \ldots, e_{k-1})$, and a color $\xi$ such that $\End(C)$ can be assigned $\xi$ in the coloring $\Shift(\phi, C)$. \\
\textbf{Output}: The coloring $\aug(\phi, C, \xi)$.
\end{flushleft}
\begin{algorithmic}[1]
    \State $\phi \gets \hyperref[alg:chain_shift]{\Shift}(\phi, C)$ \Comment{Algorithm~\ref{alg:chain_shift}}
    \State $\phi(\End(C)) \gets \xi$
    \State Let $x, y \in V(G)$ such that $\End(C) = xy$.
    \State $M(\phi, x)[\xi] \gets y$, \quad $M(\phi, y)[\xi] \gets x$
    \State \Return $\phi$.
\end{algorithmic}
\end{breakablealgorithm}
\vspace{10pt}

A similar argument as before shows that Algorithm~\ref{alg:chain_aug} has running time $O(\length(C))$.

\section{The Multi-Step Vizing Algorithm}\label{sec:MSVC}

In this section we will describe our \hyperref[alg:multi_viz_chain]{Multi-Step Vizing Algorithm} (MSVA for short), which constructs a $\phi$-happy multi-step Vizing chain given a partial coloring $\phi$, an uncolored edge $e$, and a vertex $x \in e$. This is a rigorous version of the sketch presented as Algorithm~\ref{inf:MSVC} in the introduction.
We split the section into two subsections. 
In the first, we provide an overview of the algorithm, and in the second, we prove the correctness of the \hyperref[alg:multi_viz_chain]{MSVA}.

\subsection{Algorithm Overview}

In this section we will provide an overview of the \hyperref[alg:multi_viz_chain]{MSVA}.
In \S\ref{section: sequential}, we will iteratively apply this procedure and bound the runtime in terms of $m$, $\epsilon$, and a parameter $\ell$ to be specified later.
For now, we may assume $\ell$ is a polynomial in $1/\eps$ of sufficiently high degree.

The first procedure we describe is the one we use to find a random color missing at a given vertex.
The algorithm takes as input a partial coloring $\phi$, a vertex $x$, and a color $\theta$ and outputs a random color from $M(\phi, x) \setminus\set{\theta}$.
As we will see in certain procedures, we have restrictions on the color being chosen at specific vertices.
The color $\theta$ will play the role of the ``forbidden color.''
We remark that in our application of Algorithm~\ref{alg:rand_col}, whenever $\theta \neq \blank$, there is an uncolored edge incident to $x$ under $\phi$.
In particular, $M(\phi, x) \setminus\set{\theta} \neq \0$ for $\eps \geq 1/\Delta$.
The algorithm repeatedly picks a color from $[q]$ uniformly at random until it picks one from $M(\phi, x) \setminus\set{\theta}$.

\begin{algorithm}[h]
\caption{Random Missing Color}\algsize\label{alg:rand_col}
\begin{flushleft}
\textbf{Input}: A partial coloring $\phi$, a vertex $x$, and a color $\theta \in [q] \cup \set{\blank}$. \\
\textbf{Output}: A uniformly random color in $M(\phi, x) \setminus \set{\theta}$.
\end{flushleft}
\begin{algorithmic}[1]
    \While{true}
        \State Choose $\eta \in [q]$ uniformly at random.
        \If{$\eta \neq \theta$ and $M(\phi, x)[\eta] = \blank$}
            \State \Return $\eta$
        \EndIf
    \EndWhile
\end{algorithmic}
\end{algorithm}

Let us show that Algorithm~\ref{alg:rand_col} returns a uniformly random color over the set of interest.

\begin{Lemma}\label{lemma:unif_rand_color}
    Let $\eta$ be the output after running Algorithm~\ref{alg:rand_col} on input $(\phi, x, \theta)$. Assuming $M(\phi,x) \setminus \set{\theta} \neq \0$,  
    for any $\alpha \in M(\phi, x) \setminus\set{\theta}$, we have
    \[\P[\eta = \alpha] = \frac{1}{|M(\phi, x) \setminus\set{\theta}|}.\]
\end{Lemma}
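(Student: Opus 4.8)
The plan is a straightforward rejection-sampling argument. First I would record the relevant structural fact about the data structures: for the partial coloring $\phi$ and the vertex $x$, a color $\eta \in [q]$ satisfies $M(\phi, x)[\eta] = \blank$ if and only if $\eta \in M(\phi, x)$, i.e., $\eta$ is missing at $x$. Combined with the explicit test $\eta \neq \theta$ on line~3, this means that a single iteration of the \textsf{while} loop returns $\eta$ precisely when $\eta \in S$, where $S \defeq M(\phi, x) \setminus \set{\theta}$; otherwise the loop repeats. Since each iteration draws $\eta$ uniformly and independently from $[q]$, a given iteration terminates the loop with probability $|S|/q$, which is positive by the hypothesis $S \neq \0$; hence the algorithm halts with probability $1$ and its output is well-defined almost surely.

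Next I would compute the output distribution. Fix $\alpha \in S$, and let $T$ denote the (random, almost surely finite) index of the iteration on which the loop exits. For each $t \geq 1$, the event that iterations $1, \dots, t-1$ each draw a color outside $S$ and iteration $t$ draws exactly $\alpha$ has probability $\big((q - |S|)/q\big)^{t-1} \cdot (1/q)$ by independence, and these events (over $t$) are disjoint and exhaust $\set{\eta = \alpha}$, so
\[
\P[\eta = \alpha] \;=\; \sum_{t \geq 1} \left(\frac{q - |S|}{q}\right)^{t-1} \cdot \frac{1}{q} \;=\; \frac{1}{q} \cdot \frac{1}{1 - (q - |S|)/q} \;=\; \frac{1}{|S|},
\]
which is the claimed identity. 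Alternatively, one can bypass the geometric series: conditioned on the loop exiting on a prescribed iteration, the color drawn on that iteration is uniform over $S$ by symmetry among the elements of $S$, and since this conditional law does not depend on which iteration it is, the unconditional law of the output is uniform over $S$.

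There is essentially no obstacle here; the only points deserving a word of care are (i) verifying that the set of colors that trigger a return is exactly $S$ — in particular, the case $\theta = \blank$ causes no problem, since $\blank \notin [q]$, so the condition $\eta \neq \theta$ holds automatically for every drawn color — and (ii) confirming almost-sure termination, so that conditioning on the exit iteration (or, equivalently, summing the geometric series) is legitimate; both follow at once from $|S| \geq 1$.
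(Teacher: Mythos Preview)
Your proof is correct and follows essentially the same approach as the paper: the paper's proof is a one-line symmetry argument (``the algorithm outputs every color in $M(\phi, x)\setminus \set{\theta}$ with the same probability''), which is exactly your alternative argument, and your explicit geometric-series computation just spells this out in more detail.
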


\begin{proof}
    It is clear that $\eta$ must belong to $M(\phi, x)\setminus \set{\theta}$ and that the algorithm outputs every color in $M(\phi, x)\setminus \set{\theta}$ with the same probability.
\end{proof}

We may also bound the probability the \textsf{while} loop in Algorithm~\ref{alg:rand_col} lasts at least $t$ iterations.

\begin{Lemma}\label{lemma:rand_color_runtime}
    Consider running Algorithm~\ref{alg:rand_col} on input $(\phi, x, \theta)$ such that if $\theta \neq \blank$ then $x$ is incident to an uncolored edge.
    Then,
    \[\P[\text{the \textsf{while} loop lasts at least $t$ iterations}] \,\leq \left(1 - \frac{\eps}{2}\right)^{t-1}.\]
\end{Lemma}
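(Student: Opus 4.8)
I want to bound the probability that Algorithm~\ref{alg:rand_col} executes the \textsf{while} loop at least $t$ times. Each iteration draws $\eta \in [q]$ uniformly and independently, and the loop terminates as soon as $\eta \in M(\phi,x)\setminus\set\theta$. So the number of iterations is a geometric random variable with success probability $p \defeq |M(\phi,x)\setminus\set\theta|/q$, and
\[
\P[\text{the \textsf{while} loop lasts at least $t$ iterations}] \,=\, (1-p)^{t-1}.
\]
Hence it suffices to show $p \geq \eps/2$, i.e. $|M(\phi,x)\setminus\set\theta| \geq (\eps/2)\,q$.

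**The size estimate.** Since $q = (1+\eps)\Delta$ and $x$ has degree at most $\Delta$, at least $q - \Delta = \eps\Delta$ colors are missing at $x$, so $|M(\phi,x)| \geq \eps\Delta$. Removing the single forbidden color $\theta$ costs at most one, giving $|M(\phi,x)\setminus\set\theta| \geq \eps\Delta - 1$. I now need $\eps\Delta - 1 \geq (\eps/2)(1+\eps)\Delta$. Rearranging, this is $\eps\Delta\big(1 - (1+\eps)/2\big) \geq 1$, i.e. $\eps\Delta(1-\eps)/2 \geq 1$. This is not quite true for all admissible $\eps$ and $\Delta$ (e.g. $\eps$ close to $1$), so I will want to be slightly more careful: use the hypothesis that when $\theta\neq\blank$ the vertex $x$ is incident to an uncolored edge, which means $x$ has degree at most $\Delta - 1$ among \emph{colored} edges, hence $|M(\phi,x)| \geq q - (\Delta-1) = \eps\Delta + 1$, and therefore $|M(\phi,x)\setminus\set\theta| \geq \eps\Delta \geq 1$. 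When $\theta = \blank$ we do not remove anything, so $|M(\phi,x)\setminus\set\theta| = |M(\phi,x)| \geq \eps\Delta$. Either way $|M(\phi,x)\setminus\set\theta| \geq \eps\Delta$. Then $p = |M(\phi,x)\setminus\set\theta|/q \geq \eps\Delta/((1+\eps)\Delta) = \eps/(1+\eps) \geq \eps/2$ since $\eps < 1$. This gives the claim.

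**Main obstacle.** The only real subtlety is the bookkeeping in the size bound: one must use the "uncolored edge incident to $x$'' hypothesis precisely in the case $\theta \neq \blank$ to recover the color lost by excluding $\theta$, and otherwise the naive $|M(\phi,x)|\geq \eps\Delta$ bound suffices. Everything else — identifying the loop count as geometric and the elementary inequality $\eps/(1+\eps) \geq \eps/2$ — is routine. I would write the proof in three short steps: (1) observe the iterations are i.i.d.\ trials and the loop length is geometric, so the probability in question equals $(1-p)^{t-1}$ with $p$ the per-trial success probability; (2) bound $|M(\phi,x)\setminus\set\theta| \geq \eps\Delta$ via the degree count, splitting on whether $\theta=\blank$; (3) conclude $p \geq \eps/(1+\eps) \geq \eps/2$.
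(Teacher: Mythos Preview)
Your proposal is correct and essentially identical to the paper's proof: both identify the loop length as geometric with success probability $|M(\phi,x)\setminus\set{\theta}|/q$, use the uncolored-edge hypothesis when $\theta\neq\blank$ to secure $|M(\phi,x)\setminus\set{\theta}|\geq \eps\Delta$, and finish with $\eps\Delta \geq \eps q/2$ (equivalently $\eps/(1+\eps)\geq \eps/2$). The paper compresses your case split into a single sentence, but the argument is the same.
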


\begin{proof}
    Let $S = M(\phi, x) \setminus\set{\theta}$.
    We have
    \[\P[\text{the \textsf{while} loop lasts at least $t$ iterations}] \,=\, \left(1 - \frac{|S|}{q}\right)^{t-1}.\]
    Note that $|S| \geq \eps \Delta$ as $\theta \neq \blank$ implies $x$ is incident to an uncolored edge.
    In particular, since $\eps \leq 1$, we have
    \[|S| \geq \eps\Delta \,\geq\, \frac{\eps\,q}{2},\]
    completing the proof.
\end{proof}

The next algorithm we describe is a procedure to construct fans that will be used as a subroutine for the \hyperref[alg:multi_viz_chain]{MSVA}.
Described formally in Algorithm~\ref{alg:rand_fan}, the \hyperref[alg:rand_fan]{Random Fan Algorithm} takes as input a partial coloring $\phi$, an edge $e = xy$, a vertex $x \in e$, and a color $\beta \in M(\phi, y) \cup \set{\blank}$.
The output is a tuple $(F, \delta, j)$, which satisfies certain properties we describe and prove in the next subsection.
To construct $F$, we follow a series of iterations.
At the start of an iteration, we have $F = (xy_0, \ldots, xy_k)$ where $y_0 = y$.
We pick a random color $\eta \in M(\phi, y_k)$ with the added restriction that $\eta \neq \beta$ when $k = 0$.
If $\eta \in M(\phi, x)$, then $F$ is $\phi$-happy and we return $(F, \eta, k+1)$.
Similarly, if $\eta = \beta$, we return $(F, \eta, k+1)$.
Now, we check whether $\eta \in M(\phi, y_j)$ for some $0 \leq j < k$.
If so, we return $(F, \eta, j+1)$.
If not, we let $y_{k+1}$ be the unique neighbor of $x$ such that $\phi(xy_{k+1}) = \eta$ and continue iterating.
We remark that when $\beta = \blank$, the algorithm is similar to the \textsf{First Fan Algorithm} of \cite[Algorithm 3.1]{bernshteyn2023fast}, while for $\beta \neq \blank$, it is similar to the \textsf{Next Fan Algorithm} of \cite[Algorithm~3.2]{bernshteyn2023fast}.
As we will see, we invoke this procedure with input $\beta = \blank$ precisely when constructing the first Vizing chain in the \hyperref[alg:multi_viz_chain]{MSVA}, and $\beta \neq \blank$ for every subsequent chain.

As mentioned in \S\ref{section: informal overview}, we require the length of the fan $F$ output by Algorithm~\ref{alg:rand_fan} to be at most $\poly(1/\eps)$ in order to avoid a dependence on $\Delta$ in our runtime.
To this end, we let $\kmax = \poly(1/\eps)$ (to be defined explicitly in \S\ref{section: sequential}) and ensure the length of $F$ is at most $\kmax$.
In particular, if the process described in the previous paragraph does not terminate within $\kmax$ iterations, we try again from the beginning.

\vspace{10pt}
\begin{breakablealgorithm}
\caption{Random Fan}\algsize\label{alg:rand_fan}
\begin{flushleft}
\textbf{Input}: A partial coloring $\phi$, an edge $e = xy$, a vertex $x \in e$, and a color $\beta \in M(\phi, y) \cup \set{\blank}$. \\
\textbf{Output}: A fan $F$ such that $\Start(F) = e$ and $\Pivot(F) = x$, a color $\delta \in [q]$, and an index $j$ such that $\delta \in M(\phi, \vend(F)) \cap M(\phi, \vend(F|j))$.
\end{flushleft}
\begin{algorithmic}[1]
    \State $F \gets (xy_0)$, \quad $k \gets 0$, \quad $y_0 \gets y$, \quad $\theta \gets \beta$ \label{step:start}
    \While{$k < \kmax$}
        \State $\eta \gets \hyperref[alg:rand_col]{\mathsf{RandomColor}}(\phi, y_k, \theta)$ \Comment{Algorithm \ref{alg:rand_col}} \label{step:missing_color_choice}
        \State $\theta \gets \blank$ \label{step:theta_blank}
        \If{$M(\phi, x)[\eta] = \blank$}
            \State \Return $(F, \eta, k+1)$ \label{step:happy_fan}
        \ElsIf{$\eta = \beta$}
            \State \Return $(F, \eta, k+1)$ \label{step:same_colors}
        \EndIf
        \For{$j = 1,\ldots,k$}\label{step: loop_in_fan}
            \If{$M(\phi, y_{j-1})[\eta] = \blank$}
                \State \Return $(F, \eta, j)$ \label{step:success_fan}
            \EndIf
        \EndFor
        \State $k \gets k + 1$
        \State $y_k \gets M(\phi, x)[\eta]$
        \State $\mathsf{append}(F, xy_k)$
    \EndWhile
    \State Return to Step \ref{step:start}.\label{step:return}
\end{algorithmic}
\end{breakablealgorithm}
\vspace{10pt}

The next algorithm we describe is a procedure to construct Vizing chains (see Algorithm~\ref{alg:rand_chain}).
Given a partial coloring $\phi$, an uncolored edge $e = xy$, a vertex $x \in e$, and a pair of colors $\alpha,\,\beta \in [q] \cup \set{\blank}$ such that either $\alpha = \beta = \blank$ or $\alpha \in M(\phi, x) \setminus M(\phi, y)$ and $\beta \in M(\phi, y)$, it returns a fan $F$, a path $P$, and a color $\eta$ such that
\begin{itemize}
    \item $C = F + P$ is a Vizing chain, and
    \item if $C$ is $\phi$-happy, then $\eta$ is a valid color to assign to $\End(P)$ in the coloring $\Shift(\phi, C)$.
\end{itemize}
The chain $C$ satisfies certain properties we describe and prove in the next subsection.
The colors $\alpha$, $\beta$ represent the colors on the path of the previous Vizing chain (which are $\blank$ if we are constructing the first chain), and the coloring $\phi$ refers to the shifted coloring with respect to the multi-step Vizing chain computed so far.

We start by applying the \hyperref[alg:rand_fan]{Random Fan Algorithm} (Algorithm \ref{alg:rand_fan}) as a subprocedure. 
Let $(\tilde F, \delta, j)$ be its output.
If $\delta \in M(\phi, x)$, we let $F = \tilde F$, $P = (\End(\tilde F))$, and $\eta = \delta$.
If $\delta = \beta$, we will show that the fan $\tilde F$ is $(\phi, \alpha\beta)$-hopeful.
In that case, we will return the fan $F = \tilde F$ and the path $P \defeq P(\End(F); \Shift(\phi, F), \alpha\beta)|2\ell$ (this notation is defined in \S\ref{subsec:pathchains}).
We then define $\eta$ as follows:
\begin{itemize}
    \item if $P = (\End(F))$, we let $\eta = \alpha$,
    \item otherwise, we let $\eta \in \set{\alpha, \beta}$ be a color distinct from $\phi(\End(P))$.
\end{itemize}
If $\delta \neq \beta$, we pick a color $\gamma \in M(\phi, x) \setminus \set{\alpha}$ uniformly at random by applying Algorithm~\ref{alg:rand_col}.
We will show that either $\tilde F$ or $\tilde F'\defeq \tilde F|j$ is $(\phi, \gamma\delta)$-successful.
Let 
\[\tilde P \,\defeq\, P(\End(F); \Shift(\phi, F), \gamma\delta)|2\ell \quad \text{and} \quad \tilde P' \,\defeq\, P(\End(F'); \Shift(\phi, F'), \gamma\delta)|2\ell.\] 
If $\vend(\tilde P) \neq x$, then we let $F = \tilde F$ and $P = \tilde P$; otherwise, we let $F = \tilde F'$ and $P = \tilde P'$.
Finally, we define $\eta$ as follows:
\begin{itemize}
    \item if $P = (\End(F))$, we let $\eta = \gamma$,
    \item otherwise, we let $\eta \in \set{\gamma, \delta}$ be a color distinct from $\phi(\End(P))$.
\end{itemize}

\vspace{10pt}
\begin{breakablealgorithm}
\caption{Random Vizing Chain}\algsize\label{alg:rand_chain}
\begin{flushleft}
\textbf{Input}: A proper partial edge-coloring $\phi$, an uncolored edge $e = xy$, a vertex $x \in e$, and a pair of colors $\alpha,\,\beta \in [q] \cup \set{\blank}$ such that either $\alpha = \beta = \blank$ or $\alpha \in M(\phi, x) \setminus M(\phi, y)$ and $\beta \in M(\phi, y)$. \\
\textbf{Output}: A fan $F$ with $\Start(F) = e$ and $\Pivot(F) = x$, a path $P$ with $\Start(P) = \End(F)$ and $\vstart(P) = \Pivot(F) = x$, and a color $\eta$.
\end{flushleft}
\begin{algorithmic}[1]
    \State $(\tilde F, \delta, j) \gets \hyperref[alg:rand_fan]{\mathsf{RandomFan}}(\phi, xy, x, \beta)$ \Comment{Algorithm \ref{alg:rand_fan}} \label{step:call_to_fan}
    \medskip
    \If{$M(\phi, x)[\delta] = \blank$}
        \State \Return $\tilde F$, $(\End(\tilde F))$, $\delta$ \label{step:happy_fan}
    \EndIf
    \medskip
    \If{$\delta  = \beta$}
        \State $P \gets P(\End(\tilde F);\, \hyperref[alg:chain_shift]{\Shift}(\phi, \tilde F),\, \alpha\beta)|2\ell$, \quad $\eta \gets \alpha$
        \If{$P \neq (\End(\tilde F))$ and $\phi(\End(P)) = \alpha$}
            \State $\eta \gets \beta$
        \EndIf
        \State \Return $\tilde F$, $P$, $\eta$ \label{step:beta_hopeful}
    \EndIf
    \medskip
    \State $\gamma \gets \hyperref[alg:rand_col]{\mathsf{RandomColor}}(\phi, x, \alpha)$ \Comment{Algorithm \ref{alg:rand_col}} \label{step:choose_gamma}
    \State $\tilde F' \gets \tilde F|j$
    \State $\tilde P \gets P(\End(\tilde F); \, \hyperref[alg:chain_shift]{\Shift}(\phi, \tilde F),\, \gamma\delta)|2\ell, \quad \tilde P' \gets P(\End(\tilde F'); \, \hyperref[alg:chain_shift]{\Shift}(\phi, \tilde F'), \, \gamma\delta)|2\ell$
    \If{$\vend(\tilde P) \neq x$} \label{step:path_condition}
        \State $F \gets \tilde F$, $P \gets \tilde P$
    \Else
        \State $F \gets \tilde F'$, $P \gets \tilde P'$
    \EndIf
    \State $\eta \gets \gamma$
    \If{$P \neq (\End(F))$ and $\phi(\End(P)) = \gamma$}
        \State $\eta \gets \delta$
    \EndIf
    \State \Return $F$, $P$, $\eta$
\end{algorithmic}
\end{breakablealgorithm}
\vspace{10pt}

Note that by our choice of data structures, each of the path chains in Algorithm~\ref{alg:rand_chain} can be computed in time $O(\ell)$.

Before we describe the \hyperref[alg:multi_viz_chain]{MSVA}, we define non-intersecting multi-step Vizing chains $C$ as in \cite{bernshteyn2023fast}.
Recall that $\IE(P)$ and $\IV(P)$ are the sets of internal edges and vertices of a path chain $P$, introduced in Definition~\ref{defn:internal}.

\begin{defn}[Non-intersecting chains]\label{defn:non-int}
    A $k$-step Vizing chain $C = F_0 + P_0 + \cdots + F_{k-1} + P_{k-1}$ is \emphd{non-intersecting} if for all $0\leq i < j < k$, $V(F_i) \cap V(F_j + P_j) = \0$ and $\IE(P_i) \cap E(F_j + P_j) = \0$.
\end{defn}

In our algorithm, we build a non-intersecting multi-step Vizing chain $C$.
As discussed in \S\ref{section: informal overview}, the algorithm is nearly identical to that of \cite{bernshteyn2023fast} with the main changes appearing in the subprocedures described earlier in this section.
Before we state the algorithm rigorously, we provide an informal overview. 
To start with, we have a chain $C = (xy)$ containing just the uncolored edge.
Using Algorithm~\ref{alg:rand_chain}, we find the first Vizing chain $F+P$.
Additionally, we let $\xi$ be the color in the output.
With these defined, we begin iterating.

At the start of each iteration, we have a non-intersecting chain $C = F_0 + P_0 + \cdots + F_{k-1} +P_{k-1}$, a \emphd{candidate chain} $F+P$, and a color $\xi$ satisfying the following properties for $\psi \defeq \Shift(\phi, C)$:
\begin{enumerate}[label=\ep{\normalfont{}\texttt{Inv}\arabic*},labelindent=15pt,leftmargin=*]
    \item\label{inv:start_F_end_C} $\Start(F) = \End(C)$ and $\vstart(F) = \vend(C)$,
    \item\label{inv:non_intersecting_shiftable} $C + F + P$ is non-intersecting and $\phi$-shiftable, 
    \item\label{inv:hopeful_length} $F$ is either $\psi$-happy or $(\psi, \alpha\beta)$-hopeful, where $P$ is an $\alpha\beta$-path; furthermore, if $F$ is $(\psi, \alpha\beta)$-disappointed, then $\length(P) = 2\ell$, and
    \item \label{inv:eta} if $F + P$ is $\psi$-happy, then $\xi$ is a valid color for $\End(P)$ in the coloring $\Shift(\psi, F+P)$.
\end{enumerate}
We will prove that these invariants hold in the next subsection. For now, let us take them to be true. Fig.~\ref{fig:iteration_start} shows an example of $C$ (in black) and $F+P$ (in red) at the start of an iteration.
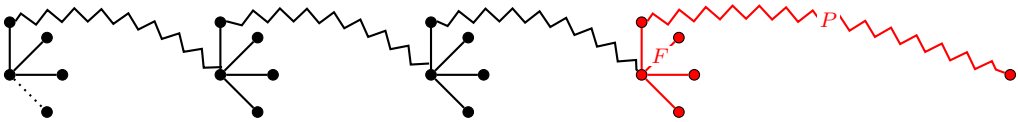
\begin{figure}[htb!]
    \centering
    \begin{tikzpicture}[xscale = 0.7,yscale=0.7]
        \node[circle,fill=black,draw,inner sep=0pt,minimum size=4pt] (a) at (0,0) {};
        	\path (a) ++(-45:1) node[circle,fill=black,draw,inner sep=0pt,minimum size=4pt] (b) {};
        	\path (a) ++(0:1) node[circle,fill=black,draw,inner sep=0pt,minimum size=4pt] (c) {};
        	\path (a) ++(45:1) node[circle,fill=black,draw,inner sep=0pt,minimum size=4pt] (d) {};
        	\path (a) ++(90:1) node[circle,fill=black,draw,inner sep=0pt,minimum size=4pt] (e) {};

            \path (c) ++(0:3) node[circle,fill=black,draw,inner sep=0pt,minimum size=4pt] (f) {};
            
        	\path (f) ++(-45:1) node[circle,fill=black,draw,inner sep=0pt,minimum size=4pt] (g) {};
        	\path (f) ++(0:1) node[circle,fill=black,draw,inner sep=0pt,minimum size=4pt] (h) {};
        	\path (f) ++(45:1) node[circle,fill=black,draw,inner sep=0pt,minimum size=4pt] (i) {};
        	\path (f) ++(90:1) node[circle,fill=black,draw,inner sep=0pt,minimum size=4pt] (j) {};
        	
        	\path (h) ++(0:3) node[circle,fill=black,draw,inner sep=0pt,minimum size=4pt] (k) {};

            \path (k) ++(-45:1) node[circle,fill=black,draw,inner sep=0pt,minimum size=4pt] (l) {};
        	\path (k) ++(0:1) node[circle,fill=black,draw,inner sep=0pt,minimum size=4pt] (m) {};
        	\path (k) ++(45:1) node[circle,fill=black,draw,inner sep=0pt,minimum size=4pt] (n) {};
        	\path (k) ++(90:1) node[circle,fill=black,draw,inner sep=0pt,minimum size=4pt] (o) {};
        	
        	\path (m) ++(0:3) node[circle,fill=red,draw,inner sep=0pt,minimum size=4pt] (p) {};
        	
        	\path (p) ++(-45:1) node[circle,fill=red,draw,inner sep=0pt,minimum size=4pt] (q) {};
        	\path (p) ++(0:1) node[circle,fill=red,draw,inner sep=0pt,minimum size=4pt] (r) {};
        	\path (p) ++(45:1) node[circle,fill=red,draw,inner sep=0pt,minimum size=4pt] (s) {};
        	\path (p) ++(90:1) node[circle,fill=red,draw,inner sep=0pt,minimum size=4pt] (t) {};
        	
        	\path (r) ++(0:6) node[circle,fill=red,draw,inner sep=0pt,minimum size=4pt] (u) {};
        	
        	\draw[thick,dotted] (a) -- (b);
        	
        	\draw[thick, decorate,decoration=zigzag] (e) to[out=10,in=135] (f) (j) to[out=10,in=135] (k) (o) to[out=10,in=135] (p);
            \draw[thick, decorate,decoration=zigzag, red](t) to[out=10,in=160] node[font=\fontsize{8}{8},midway,inner sep=1pt,outer sep=1pt,minimum size=4pt,fill=white] {$P$} (u);
        	
        	\draw[thick] (a) -- (c) (a) -- (d) (a) -- (e) (f) -- (g) (f) -- (h) (f) -- (i) (f) -- (j) (k) -- (l) (k) -- (m) (k) -- (n) (k) -- (o);
            \draw[thick, red] (p) -- (q) (p) -- (r) (p) to node[font=\fontsize{8}{8},midway,inner sep=1pt,outer sep=1pt,minimum size=4pt,fill=white] {$F$} (s) (p) -- (t);
    	
    \end{tikzpicture}
    \caption{The chain $C$ and candidate chain $F+P$ at the start of an iteration.}
    \label{fig:iteration_start}
\end{figure}
We first check whether $\length(P) < 2\ell$, in which case $F$ is $(\psi, \alpha\beta)$-successful.
If so, we have found a $\phi$-happy multi-step Vizing chain and we return $C+F+P$ and the color $\xi$.
If not, we let $F_k = F$ and consider a random initial segment $P_k$ of $P$ of length between $\ell$ and $2\ell - 1$. 
Using Algorithm \ref{alg:rand_chain}, we construct a chain $\tilde F + \tilde P$ with $\Start(\tilde F) = \End(P_k)$ and $\vstart(\tilde{F}) = \vend(P_k)$.
Additionally, we let $\eta$ be the color in the output.

At this point, we have two cases to consider. First, suppose $C + F_k + P_k + \tilde F + \tilde P$ is non-intersecting.
We then continue on, updating the chain to be $C+F_k+P_k$, the candidate chain to be $\tilde F + \tilde P$, and the color $\xi$ to be $\eta$.
We call such an update a \textsf{forward} iteration.
Fig.~\ref{fig:non_intersecting_iteration} shows an example of such an update with $\tilde F + \tilde P$ shown in blue.

\begin{figure}[htb!]
    \centering
        \begin{tikzpicture}[xscale = 0.7,yscale=0.7]
            \node[circle,fill=black,draw,inner sep=0pt,minimum size=4pt] (a) at (0,0) {};
        	\path (a) ++(-45:1) node[circle,fill=black,draw,inner sep=0pt,minimum size=4pt] (b) {};
        	\path (a) ++(0:1) node[circle,fill=black,draw,inner sep=0pt,minimum size=4pt] (c) {};
        	\path (a) ++(45:1) node[circle,fill=black,draw,inner sep=0pt,minimum size=4pt] (d) {};
        	\path (a) ++(90:1) node[circle,fill=black,draw,inner sep=0pt,minimum size=4pt] (e) {};

            \path (c) ++(0:3) node[circle,fill=black,draw,inner sep=0pt,minimum size=4pt] (f) {};
            
        	\path (f) ++(-45:1) node[circle,fill=black,draw,inner sep=0pt,minimum size=4pt] (g) {};
        	\path (f) ++(0:1) node[circle,fill=black,draw,inner sep=0pt,minimum size=4pt] (h) {};
        	\path (f) ++(45:1) node[circle,fill=black,draw,inner sep=0pt,minimum size=4pt] (i) {};
        	\path (f) ++(90:1) node[circle,fill=black,draw,inner sep=0pt,minimum size=4pt] (j) {};
        	
        	\path (h) ++(0:3) node[circle,fill=black,draw,inner sep=0pt,minimum size=4pt] (k) {};

            \path (k) ++(-45:1) node[circle,fill=black,draw,inner sep=0pt,minimum size=4pt] (l) {};
        	\path (k) ++(0:1) node[circle,fill=black,draw,inner sep=0pt,minimum size=4pt] (m) {};
        	\path (k) ++(45:1) node[circle,fill=black,draw,inner sep=0pt,minimum size=4pt] (n) {};
        	\path (k) ++(90:1) node[circle,fill=black,draw,inner sep=0pt,minimum size=4pt] (o) {};
        	
        	\path (m) ++(0:3) node[circle,fill=black,draw,inner sep=0pt,minimum size=4pt] (p) {};
        	
        	\path (p) ++(-45:1) node[circle,fill=black,draw,inner sep=0pt,minimum size=4pt] (q) {};
        	\path (p) ++(0:1) node[circle,fill=black,draw,inner sep=0pt,minimum size=4pt] (r) {};
        	\path (p) ++(45:1) node[circle,fill=black,draw,inner sep=0pt,minimum size=4pt] (s) {};
        	\path (p) ++(90:1) node[circle,fill=black,draw,inner sep=0pt,minimum size=4pt] (t) {};
        	
        	\path (r) ++(0:3) node[circle,fill=red,draw,inner sep=0pt,minimum size=4pt] (u) {};
        	
        	\path (u) ++(-45:1) node[circle,fill=red,draw,inner sep=0pt,minimum size=4pt] (v) {};
        	\path (u) ++(0:1) node[circle,fill=red,draw,inner sep=0pt,minimum size=4pt] (w) {};
        	\path (u) ++(45:1) node[circle,fill=red,draw,inner sep=0pt,minimum size=4pt] (x) {};
        	\path (u) ++(90:1) node[circle,fill=red,draw,inner sep=0pt,minimum size=4pt] (y) {};
        	
        	\path (w) ++(0:6) node[circle,fill=red,draw,inner sep=0pt,minimum size=4pt] (z) {};
        	
        	\draw[thick,dotted] (a) -- (b);
        	
        	\draw[thick, decorate,decoration=zigzag] (e) to[out=10,in=135] (f) (j) to[out=10,in=135] (k) (o) to[out=10,in=135] (p) (t) to[out=10,in=135] node[font=\fontsize{8}{8},midway,inner sep=1pt,outer sep=1pt,minimum size=4pt,fill=white] {$P_k$} (u);
            \draw[thick, decorate,decoration=zigzag, red](y) to[out=10,in=160] node[font=\fontsize{8}{8},midway,inner sep=1pt,outer sep=1pt,minimum size=4pt,fill=white] {$\tilde P$} (z);
        	
        	\draw[thick] (a) -- (c) (a) -- (d) (a) -- (e) (f) -- (g) (f) -- (h) (f) -- (i) (f) -- (j) (k) -- (l) (k) -- (m) (k) -- (n) (k) -- (o) (p) -- (q) (p) -- (r) (p) to node[font=\fontsize{8}{8},midway,inner sep=1pt,outer sep=1pt,minimum size=4pt,fill=white] {$F_k$} (s) (p) -- (t);
            \draw[thick, red] (u) -- (v) (u) -- (w) (u) to node[font=\fontsize{8}{8},midway,inner sep=1pt,outer sep=1pt,minimum size=4pt,fill=white] {$\tilde F$} (x) (u) -- (y);

        \begin{scope}[yshift=5.5cm]
            \node[circle,fill=black,draw,inner sep=0pt,minimum size=4pt] (a) at (0,0) {};
        	\path (a) ++(-45:1) node[circle,fill=black,draw,inner sep=0pt,minimum size=4pt] (b) {};
        	\path (a) ++(0:1) node[circle,fill=black,draw,inner sep=0pt,minimum size=4pt] (c) {};
        	\path (a) ++(45:1) node[circle,fill=black,draw,inner sep=0pt,minimum size=4pt] (d) {};
        	\path (a) ++(90:1) node[circle,fill=black,draw,inner sep=0pt,minimum size=4pt] (e) {};

            \path (c) ++(0:3) node[circle,fill=black,draw,inner sep=0pt,minimum size=4pt] (f) {};
            
        	\path (f) ++(-45:1) node[circle,fill=black,draw,inner sep=0pt,minimum size=4pt] (g) {};
        	\path (f) ++(0:1) node[circle,fill=black,draw,inner sep=0pt,minimum size=4pt] (h) {};
        	\path (f) ++(45:1) node[circle,fill=black,draw,inner sep=0pt,minimum size=4pt] (i) {};
        	\path (f) ++(90:1) node[circle,fill=black,draw,inner sep=0pt,minimum size=4pt] (j) {};
        	
        	\path (h) ++(0:3) node[circle,fill=black,draw,inner sep=0pt,minimum size=4pt] (k) {};

            \path (k) ++(-45:1) node[circle,fill=black,draw,inner sep=0pt,minimum size=4pt] (l) {};
        	\path (k) ++(0:1) node[circle,fill=black,draw,inner sep=0pt,minimum size=4pt] (m) {};
        	\path (k) ++(45:1) node[circle,fill=black,draw,inner sep=0pt,minimum size=4pt] (n) {};
        	\path (k) ++(90:1) node[circle,fill=black,draw,inner sep=0pt,minimum size=4pt] (o) {};
        	
        	\path (m) ++(0:3) node[circle,fill=red,draw,inner sep=0pt,minimum size=4pt] (p) {};
        	
        	\path (p) ++(-45:1) node[circle,fill=red,draw,inner sep=0pt,minimum size=4pt] (q) {};
        	\path (p) ++(0:1) node[circle,fill=red,draw,inner sep=0pt,minimum size=4pt] (r) {};
        	\path (p) ++(45:1) node[circle,fill=red,draw,inner sep=0pt,minimum size=4pt] (s) {};
        	\path (p) ++(90:1) node[circle,fill=red,draw,inner sep=0pt,minimum size=4pt] (t) {};
        	
        	\path (r) ++(0:3) node[circle,fill=blue,draw,inner sep=0pt,minimum size=4pt] (u) {};
        	
        	\path (u) ++(-45:1) node[circle,fill=blue,draw,inner sep=0pt,minimum size=4pt] (v) {};
        	\path (u) ++(0:1) node[circle,fill=blue,draw,inner sep=0pt,minimum size=4pt] (w) {};
        	\path (u) ++(45:1) node[circle,fill=blue,draw,inner sep=0pt,minimum size=4pt] (x) {};
        	\path (u) ++(90:1) node[circle,fill=blue,draw,inner sep=0pt,minimum size=4pt] (y) {};
        	
        	\path (w) ++(0:6) node[circle,fill=blue,draw,inner sep=0pt,minimum size=4pt] (z) {};
        	
        	\draw[thick,dotted] (a) -- (b);
        	
        	\draw[thick, decorate,decoration=zigzag] (e) to[out=10,in=135] (f) (j) to[out=10,in=135] (k) (o) to[out=10,in=135] (p);
            \draw[thick, decorate,decoration=zigzag, red](t) to[out=10,in=135] node[font=\fontsize{8}{8},midway,inner sep=1pt,outer sep=1pt,minimum size=4pt,fill=white] {$P_k$} (u);
            \draw[thick, decorate,decoration=zigzag, blue](y) to[out=10,in=160] node[font=\fontsize{8}{8},midway,inner sep=1pt,outer sep=1pt,minimum size=4pt,fill=white] {$\tilde P$} (z);
        	
        	\draw[thick] (a) -- (c) (a) -- (d) (a) -- (e) (f) -- (g) (f) -- (h) (f) -- (i) (f) -- (j) (k) -- (l) (k) -- (m) (k) -- (n) (k) -- (o);
            \draw[thick, red] (p) -- (q) (p) -- (r) (p) to node[font=\fontsize{8}{8},midway,inner sep=1pt,outer sep=1pt,minimum size=4pt,fill=white] {$F_k$} (s) (p) -- (t);
            \draw[thick, blue] (u) -- (v) (u) -- (w) (u) to node[font=\fontsize{8}{8},midway,inner sep=1pt,outer sep=1pt,minimum size=4pt,fill=white] {$\tilde F$} (x) (u) -- (y);
        \end{scope}

        \begin{scope}[yshift=3cm]
            \draw[-{Stealth[length=3mm,width=2mm]},very thick,decoration = {snake,pre length=3pt,post length=7pt,},decorate] (11.5,1) -- (11.5,-1);
        \end{scope}
        	
        \end{tikzpicture}
    \caption{Example of a \textsf{forward} iteration.}
    \label{fig:non_intersecting_iteration}
\end{figure}

Now suppose $\tilde F + \tilde P$ intersects $C + F_k + P_k$. The edges and vertices of $\tilde{F} + \tilde{P}$ are naturally ordered, and we let $0 \leq j \leq k$ be the index such that the first intersection point of $\tilde{F} + \tilde{P}$ with $C + F_k + P_k$ occurred at $F_j+P_j$. Then we update $C$ to $C' \defeq F_0 + P_0 + \cdots + F_{j-1} + P_{j-1}$ and $F+P$ to $F_j + P'$, where $P'$ is the path of length $2\ell$ from which $P_j$ was obtained as an initial segment.
Finally, assuming $P'$ is a $\gamma\delta$-path, we update $\xi$ to be the unique color in $\set{\gamma, \delta} \setminus \set{\Shift(\phi, C')(\End(P'))}$.
For $r = k-j$, we call such an update an \textsf{$r$-backward} iteration.
Fig.~\ref{fig:intersecting_iteration} shows an example of such an update.

\begin{figure}[htb!]
    \centering
        \begin{tikzpicture}[xscale = 0.7, yscale=0.7]
            \clip (-0.5,-9.5) rectangle (20,2.5);
    
            \node[circle,fill=black,draw,inner sep=0pt,minimum size=4pt] (a) at (0,0) {};
        	\path (a) ++(-45:1) node[circle,fill=black,draw,inner sep=0pt,minimum size=4pt] (b) {};
        	\path (a) ++(0:1) node[circle,fill=black,draw,inner sep=0pt,minimum size=4pt] (c) {};
        	\path (a) ++(45:1) node[circle,fill=black,draw,inner sep=0pt,minimum size=4pt] (d) {};
        	\path (a) ++(90:1) node[circle,fill=black,draw,inner sep=0pt,minimum size=4pt] (e) {};

            \path (c) ++(0:3) node[circle,fill=black,draw,inner sep=0pt,minimum size=4pt] (f) {};
            
        	\path (f) ++(-45:1) node[circle,fill=black,draw,inner sep=0pt,minimum size=4pt] (g) {};
        	\path (f) ++(0:1) node[circle,fill=black,draw,inner sep=0pt,minimum size=4pt] (h) {};
        	\path (f) ++(45:1) node[circle,fill=black,draw,inner sep=0pt,minimum size=4pt] (i) {};
        	\path (f) ++(90:1) node[circle,fill=black,draw,inner sep=0pt,minimum size=4pt] (j) {};
        	
        	\path (h) ++(0:3) node[circle,fill=black,draw,inner sep=0pt,minimum size=4pt] (k) {};

            \path (k) ++(-45:1) node[circle,fill=black,draw,inner sep=0pt,minimum size=4pt] (l) {};
        	\path (k) ++(0:1) node[circle,fill=black,draw,inner sep=0pt,minimum size=4pt] (m) {};
        	\path (k) ++(45:1) node[circle,fill=black,draw,inner sep=0pt,minimum size=4pt] (n) {};
        	\path (k) ++(90:1) node[circle,fill=black,draw,inner sep=0pt,minimum size=4pt] (o) {};
        	
        	\path (m) ++(0:3) node[circle,fill=red,draw,inner sep=0pt,minimum size=4pt] (p) {};
        	
        	\path (p) ++(-45:1) node[circle,fill=red,draw,inner sep=0pt,minimum size=4pt] (q) {};
        	\path (p) ++(0:1) node[circle,fill=red,draw,inner sep=0pt,minimum size=4pt] (r) {};
        	\path (p) ++(45:1) node[circle,fill=red,draw,inner sep=0pt,minimum size=4pt] (s) {};
        	\path (p) ++(90:1) node[circle,fill=red,draw,inner sep=0pt,minimum size=4pt] (t) {};
        	
        	\path (r) ++(0:3) node[circle,fill=blue,draw,inner sep=0pt,minimum size=4pt] (u) {};
        	
        	\path (u) ++(-45:1) node[circle,fill=blue,draw,inner sep=0pt,minimum size=4pt] (v) {};sep=0pt,minimum size=4pt] (q) {};
        	\path (u) ++(0:1) node[circle,fill=blue,draw,inner sep=0pt,minimum size=4pt] (w) {};
        	\path (u) ++(45:1) node[circle,fill=blue,draw,inner sep=0pt,minimum size=4pt] (x) {};
        	\path (u) ++(90:1) node[circle,fill=blue,draw,inner sep=0pt,minimum size=4pt] (y) {};
        	
        	\node[circle,fill=gray!30,draw,inner sep=1.3pt] (z) at (6, 1.1) {\textbf{!}};
        	\path (z) ++(80:1.5) node (aa) {};

        	\draw[thick,dotted] (a) -- (b);
        	
        	\draw[thick, decorate,decoration=zigzag] (e) to[out=10,in=135] (f) (o) to[out=10,in=135] (p) (j) to[out=10,in=180] (z) to[out=-20,in=135] (k);
            \draw[thick,decorate,decoration=zigzag, red] (t) to[out=10,in=135] node[font=\fontsize{8}{8},midway,inner sep=1pt,outer sep=1pt,minimum size=3pt,fill=white] {$P_k$} (u); 
            
            \draw[thick,decorate,decoration=zigzag, blue] (y) to[out=10,in=-80, looseness=3] node[font=\fontsize{8}{8},midway,inner sep=1pt,outer sep=1pt,minimum size=3pt,fill=white] {$\tilde P$} (z) -- (aa);
        	
        	\draw[thick] (a) -- (c) (a) -- (d) (a) -- (e) (f) -- (g) (f) -- (h) (f) -- (i) (f) -- (j) (k) -- (l) (k) -- (m) (k) to node[font=\fontsize{8}{8},midway,inner sep=1pt,outer sep=1pt,minimum size=4pt,fill=white] {$F_j$} (n) (k) -- (o);
            \draw[thick, red] (p) -- (q) (p) -- (r) (p) to node[font=\fontsize{8}{8},midway,inner sep=1pt,outer sep=1pt,minimum size=4pt,fill=white] {$F_k$} (s) (p) -- (t);
            \draw[thick, blue] (u) -- (v) (u) -- (w) (u) to node[font=\fontsize{8}{8},midway,inner sep=1pt,outer sep=1pt,minimum size=4pt,fill=white] {$\tilde F$} (x) (u) -- (y);

            \node[circle,fill=gray!30,draw,inner sep=1.3pt] at (6, 1.1) {\textbf{!}};
        
        \begin{scope}[yshift=-5.5cm]
            \draw[-{Stealth[length=3mm,width=2mm]},very thick,decoration = {snake,pre length=3pt,post length=7pt,},decorate] (10,1) -- (10,-1);
        \end{scope}
        
        \begin{scope}[yshift=-8.5cm]
            \node[circle,fill=black,draw,inner sep=0pt,minimum size=4pt] (a) at (0,0) {};
        	\path (a) ++(-45:1) node[circle,fill=black,draw,inner sep=0pt,minimum size=4pt] (b) {};
        	\path (a) ++(0:1) node[circle,fill=black,draw,inner sep=0pt,minimum size=4pt] (c) {};
        	\path (a) ++(45:1) node[circle,fill=black,draw,inner sep=0pt,minimum size=4pt] (d) {};
        	\path (a) ++(90:1) node[circle,fill=black,draw,inner sep=0pt,minimum size=4pt] (e) {};

            \path (c) ++(0:3) node[circle,fill=red,draw,inner sep=0pt,minimum size=4pt] (f) {};
            
        	\path (f) ++(-45:1) node[circle,fill=red,draw,inner sep=0pt,minimum size=4pt] (g) {};
        	\path (f) ++(0:1) node[circle,fill=red,draw,inner sep=0pt,minimum size=4pt] (h) {};
        	\path (f) ++(45:1) node[circle,fill=red,draw,inner sep=0pt,minimum size=4pt] (i) {};
        	\path (f) ++(90:1) node[circle,fill=red,draw,inner sep=0pt,minimum size=4pt] (j) {};
        	
        	\path (h) ++(0:6) node[circle,fill=red,draw,inner sep=0pt,minimum size=4pt] (k) {};

        \draw[thick, dotted] (a) -- (b);
         \draw[thick] (a) -- (c) (a) -- (d) (a) -- (e);
         \draw[thick, red] (f) -- (g) (f) -- (h) (f) to node[font=\fontsize{8}{8},midway,inner sep=1pt,outer sep=1pt,minimum size=4pt,fill=white] {$F_j$} (i) (f) -- (j);

         \draw[thick, decorate,decoration=zigzag] (e) to[out=10,in=135] (f);
         \draw[thick, decorate,decoration=zigzag, red] (j) to[out=10,in=160] node[font=\fontsize{8}{8},midway,inner sep=1pt,outer sep=1pt,minimum size=4pt,fill=white] {$P'$} (k);
        \end{scope}
        \end{tikzpicture}
    \caption{Example of a \textsf{$2$-backward} iteration.}
    \label{fig:intersecting_iteration}
\end{figure}

In order to track intersections we define a hash map $\visited$ with key set $V\cup E$. We let $\visited(v) = 1$ if and only if $v$ is a vertex on a fan in the current chain $C$ and $\visited(e) = 1$ if and only if $e$ is an internal edge of a path in the current chain $C$.
The formal statement of the \hyperref[alg:multi_viz_chain]{Multi-Step Vizing Algorithm} is given in Algorithm~\ref{alg:multi_viz_chain}.

\vspace{10pt}
\begin{breakablealgorithm}
\caption{Multi-Step Vizing Algorithm (MSVA)}\algsize\label{alg:multi_viz_chain}
\begin{flushleft}
\textbf{Input}: A proper partial coloring $\phi$, an uncolored edge $xy$, and a vertex $x \in e$. \\
\textbf{Output}: A $\phi$-happy multi-step Vizing chain $C$ with $\Start(C) = xy$ and a color $\xi$.
\end{flushleft}
\begin{algorithmic}[1]
    \State $\visited(e) \gets 0, \quad \visited(v) \gets 0$ \quad \textbf{for each} $e \in E$, $v \in V$
    \State $(F,\,P,\,\xi) \gets \hyperref[alg:rand_chain]{\mathsf{RandomChain}}(\phi, xy, x, \blank, \blank)$ \label{step:first_chain} \Comment{Algorithm \ref{alg:rand_chain}}
    \State $C\gets (xy), \quad \psi \gets \phi, \quad k \gets 0$
    \medskip
    \While{true}\label{line:basic_loop_fr}
        \If{$\length(P) < 2\ell$}
            \State \Return $C+F+P$, $\xi$ \label{step:success} \Comment{Success}
        \EndIf
        \medskip
        \State\label{step:random_choice} Let $\ell' \in [\ell,2\ell-1]$ be an integer chosen uniformly at random.
        \State $F_k \gets F,\quad P_k\gets P|\ell'$ \label{step:Pk} \Comment{Randomly shorten the path}
        \State Let $\alpha$, $\beta$ be such that $P_k$ is an $\alpha\beta$-path where $\psi(\End(P_k)) = \beta$.
        \State $\psi \gets \hyperref[alg:chain_shift]{\Shift}(\psi, F_k+P_k)$ 
        \State $\visited(v) \gets 1$ \textbf{for each} $v \in V(F_k)$
        \State $\visited(e) \gets 1$ \textbf{for each} $e \in \IE(P_k)$
        \State $uv \gets \End(P_k), \quad v \gets \vend(P_k)$
        \State $(\tilde F,\, \tilde P,\, \eta) \gets \hyperref[alg:rand_chain]{\mathsf{RandomChain}}(\psi, uv, u, \alpha, \beta)$ \label{step:alpha_beta_order} \Comment{Algorithm \ref{alg:rand_chain}}
        \medskip
        \If{$\visited(v) = 1$ or $\visited(e) = 1$ for some $v\in V(\tilde F + \tilde P)$, $e \in E(\tilde F + \tilde P)$}
            \State Let $0 \leq j \leq k$ be such that the first intersection occurs at $F_j + P_j$.\label{step:choosej}
            \State $\psi \gets \hyperref[alg:chain_shift]{\Shift}(\psi, (F_j + P_j + \cdots + F_k + P_k)^*)$ \label{step:psi}
            \State $\visited(v) \gets 0$ \textbf{for each} $v \in V(F_j) \cup \cdots \cup V(F_k)$
            \State $\visited(e) \gets 0$ \textbf{for each} $e \in \IE(P_j) \cup \cdots \cup \IE(P_k)$\label{step:visited}
            \State $C\gets F_0 + P_0 + \cdots + F_{j-1} + P_{j-1}, \quad k \gets j$ \label{step:truncate_chain} \Comment{Return to step $j$}
            \State $F\gets F_j, \quad P \gets P'$ \quad where $P_j$ is an initial segment of $P'$ as described earlier.
            \State Let $\gamma$, $\delta$ be such that $P'$ is a $\gamma\delta$-path where $\psi(\End(P')) = \delta$.
            \State $\xi \gets \gamma$
        \medskip
        \ElsIf{$2 \leq \length(\tilde P) < 2\ell$ \textbf{and} $\vend(\tilde P) = \Pivot(\tilde F)$}
            \State\label{step:fail_chain} \Return \textbf{\textsf{FAIL}} \Comment{Failure}
        \medskip
        \Else
            \State $ C \gets C + F_k + P_k, 
        \quad F \gets \tilde F, \quad P \gets \tilde P, \quad \xi \gets \eta, \quad k \gets k + 1$ \label{step:append} \Comment{Append}
        \EndIf
    \EndWhile
\end{algorithmic}
\end{breakablealgorithm}
\vspace{10pt}

A few remarks are in order. The \textsf{while} loop starting on line \ref{line:basic_loop_fr} of Algorithm~\ref{alg:multi_viz_chain} is what we called the ``basic \textsf{while} loop'' in the introduction. 
Note that in steps \ref{step:psi}--\ref{step:visited} of the algorithm, we can update $\visited$ while simultaneously preforming the $\Shift$ operation as in Algorithm~\ref{alg:chain_shift}.
By construction, $\length(P_k) \geq \ell > 2$ for all $k$. This ensures that at least one edge of each color $\alpha$, $\beta$ is on the $\alpha\beta$-path $P_k$ and also guarantees that $V(\End(F_j)) \cap V(\Start(F_{j+1})) = \0$ for all $j$.
In Step \ref{step:truncate_chain} of the algorithm, we truncate the current chain at the \emph{first} vertex $v$ or edge $e$ on $\tilde F + \tilde P$ such that $\visited(v) = 1$ or $\visited(e) = 1$.
These observations will be important for the proofs in the sequel.

\subsection{Proof of Correctness}\label{subsec: poc}

In this subsection, we prove the correctness of the \hyperref[alg:multi_viz_chain]{MSVA} as well as some auxiliary results on the chain it outputs.
These results will be important for the analysis in later sections.
First, let us consider the output of the \hyperref[alg:rand_fan]{Random Fan Algorithm}.

\begin{Lemma}\label{lemma:fan_lemma}
    Let $\phi$ be a proper partial coloring, let $xy$ be an uncolored edge, and let $\beta \in [q]$ be such that $\beta \in M(\phi, y) \cup \set{\blank}$. 
    Let $(F, \delta, j)$ be the output of Algorithm \ref{alg:rand_fan} on input $(\phi, xy, x, \beta)$, let $F' \defeq F|j$, and let $\alpha \in M(\phi, x) \setminus M(\phi, y)$ be arbitrary.
    Then no edge in $F$ is colored $\alpha$ or $\beta$ and at least one of the following statements holds:
    \begin{itemize}
        \item $F$ is $\phi$-happy, or
        \item $\delta = \beta$ and the fan $F$ is $(\phi, \alpha\beta)$-hopeful, or
        \item either $F$ or $F'$ is $(\phi, \gamma\delta)$-successful for any $\gamma \in M(\phi, x) \setminus \{\alpha\}$.
    \end{itemize}
\end{Lemma}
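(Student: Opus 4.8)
The plan is to analyze the \hyperref[alg:rand_fan]{Random Fan Algorithm} by tracking the invariants maintained by the \textsf{while} loop and then doing a case analysis on which \texttt{return} statement produced the output $(F,\delta,j)$. Throughout, write $F = (xy_0, \ldots, xy_k)$ with $y_0 = y$, so that $F' = F|j = (xy_0, \ldots, xy_{j-1})$ and $\vend(F) = y_k$, $\vend(F') = y_{j-1}$.

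First I would establish the structural properties that hold at the end of every iteration of the \textsf{while} loop, regardless of how the algorithm terminates. Namely: (i) the vertices $y_0, \ldots, y_k$ are distinct (this is needed for $F$ to be a legitimate fan) and (ii) for each $1 \leq i \leq k$ we have $\phi(xy_i) = \eta_{i-1}$, where $\eta_{i-1}$ is the color chosen by $\mathsf{RandomColor}$ when the fan had end vertex $y_{i-1}$, and moreover $\eta_{i-1} \in M(\phi, y_{i-1})$. The distinctness in (i) follows because when we chose $\eta = \phi(xy_i)$ at vertex $y_{i-1}$, the \textsf{for} loop of Step~\ref{step: loop_in_fan} checked that $\eta \notin M(\phi, y_{i'-1})$ for all $i' \le i-1$; since $\eta \in M(\phi,y_{i-1})$, we cannot have $y_{i-1} = y_{i'-1}$, and the edge $xy_i$ with $\phi(xy_i)=\eta$ is the unique such edge, so $y_i \neq y_{i'}$ for $i' < i$ (otherwise $\eta$ would be used twice at $x$). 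I would also record that no edge $xy_i$ for $i \geq 1$ is colored $\alpha$: since $\alpha \in M(\phi,x)$, no edge incident to $x$ has color $\alpha$ at all. And no edge $xy_i$ is colored $\beta$: the edge $xy_0 = xy$ is uncolored; for $i \geq 1$, $\phi(xy_i) = \eta_{i-1}$ was returned by $\mathsf{RandomColor}$, which excludes $\beta$ only at $y_0$ (via $\theta = \beta$), but if $\eta_0 = \beta$ the algorithm would have returned at Step~\ref{step:same_colors} rather than appending $xy_1$; for $i \geq 2$ we used the observation that had any $\eta_{i-1} = \beta$ then since $\beta \in M(\phi,y) = M(\phi,y_0)$, the \textsf{for} loop at $j=1$ would have triggered Step~\ref{step:success_fan}. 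Hence no edge in $F$ is colored $\alpha$ or $\beta$, giving the first assertion of the lemma.

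Next I would split into the three cases according to the exit point, mirroring the structure of \cite[Lemma 4.8 or similar]{bernshteyn2023fast} adapted to the randomized fan:
\textbf{Exit at Step~\ref{step:happy_fan}} ($M(\phi,x)[\delta] = \blank$, i.e.\ $\delta \in M(\phi,x)$, and $j = k+1$, so $F' = F$): Here $\delta \in M(\phi, y_k) \cap M(\phi, x)$. Since $F$ is $\phi$-shiftable (which I would note follows from the standard fan-shifting fact, e.g.\ as in \cite{VizingChain}, using that $\phi(xy_i) = \eta_{i-1} \in M(\phi, y_{i-1})$ for all $i\geq 1$ — the defining property of a shiftable fan), shifting moves the colors so that $\Shift(\phi,F)(xy_{k-1}) = \eta_{k-1}$ and leaves $xy_k$ uncolored, while $M(\Shift(\phi,F), x) = M(\phi,x)$ and $M(\phi,y_k) \subseteq M(\Shift(\phi,F),y_k)$. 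Thus $\delta$ is missing at both endpoints of $\End(F) = xy_k$ under $\Shift(\phi,F)$, so $F$ is $\phi$-happy. \textbf{Exit at Step~\ref{step:same_colors}} ($\delta = \eta = \beta$, $j = k+1$): We need $F$ to be $(\phi,\alpha\beta)$-hopeful. First $F$ is not $\phi$-happy — otherwise we would have exited at Step~\ref{step:happy_fan} first — so the definition of hopeful applies. Now $\beta = \delta \in M(\phi, y_k) = M(\phi, \vend(F))$, and $\beta \in M(\phi,y) = M(\phi,\vstart(F))$. I would invoke the standard computation (as in the remark after Definition~\ref{defn:hsf}) that $\deg(x;\phi,\alpha\beta) = \deg(x;\Shift(\phi,F),\alpha\beta)$ and $\deg(y_k;\Shift(\phi,F),\alpha\beta) \leq \deg(y_k;\phi,\alpha\beta)$; since $\alpha \in M(\phi,x)$ we have $\deg(x;\phi,\alpha\beta) \le 1$, and since $\beta \in M(\phi,y_k)$ we have $\deg(y_k;\phi,\alpha\beta) \leq 1$, so both are $< 2$ and $F$ is $(\phi,\alpha\beta)$-hopeful. \textbf{Exit at Step~\ref{step:success_fan}} (some $1\leq j \leq k$ with $\eta = \delta \in M(\phi,y_{j-1})$, and also $\delta \in M(\phi,y_k)$, $\delta \notin M(\phi,x)$, $\delta \ne \beta$): Here I would show that for any $\gamma \in M(\phi,x)\setminus\{\alpha\}$, either $F$ or $F' = F|j$ is $(\phi,\gamma\delta)$-successful. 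Both $F$ and $F'$ are $\phi$-shiftable fans, neither is $\phi$-happy (else earlier exit), $\gamma \in M(\phi,x)$ gives $\deg(x;\phi,\gamma\delta) < 2$, and $\delta \in M(\phi,y_k)\cap M(\phi,y_{j-1})$ gives $\deg(\vend(F);\phi,\gamma\delta) < 2$ and $\deg(\vend(F');\phi,\gamma\delta) < 2$; hence both fans are $(\phi,\gamma\delta)$-hopeful. It remains to show at least one is successful, i.e.\ that $x$ is \emph{not} $(\Shift(\phi,F),\gamma\delta)$-related to $y_k$ or $x$ is not $(\Shift(\phi,F'),\gamma\delta)$-related to $y_{j-1}$. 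This is the crux.

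The main obstacle — and the heart of the proof — is precisely this last "at least one is successful" claim, which is the randomized-fan analogue of the deterministic argument in \cite{bernshteyn2023fast} (their \textsf{First/Next Fan} analysis). The plan is to argue by contradiction: suppose $x$ and $y_k$ are $(\Shift(\phi,F),\gamma\delta)$-related \emph{and} $x$ and $y_{j-1}$ are $(\Shift(\phi,F'),\gamma\delta)$-related. One then considers the $\gamma\delta$-alternating path $Q$ starting at $x$ in $G(\Shift(\phi,F),\gamma\delta)$. Note $\Shift(\phi,F)$ and $\Shift(\phi,F')$ differ only on the edges $xy_j, \ldots, xy_k$ (the portion of the fan beyond $F'$), and each of those edges has color in $\{\eta_{j-1},\ldots,\eta_{k-1}\}$; crucially none of those colors equals $\gamma$ or $\delta$ — $\delta$ because $\phi(xy_i) \ne \delta$ for $i \geq 1$ would need checking via the \textsf{for} loop (if $\eta_{i-1} = \delta \in M(\phi,y_{j-1})$, the loop would have exited at index $j$ earlier, contradicting that we reached $y_k$... one must be careful about the order of discovery here), and $\gamma$ because $\gamma \in M(\phi,x)$ so no edge at $x$ is colored $\gamma$. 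Hence $G(\Shift(\phi,F),\gamma\delta)$ and $G(\Shift(\phi,F'),\gamma\delta)$ have the same edge set \emph{except} possibly for edges at $x$, and in fact restricted to the relevant components they coincide away from $x$. Under $\Shift(\phi,F')$ the edge $xy_{j-1}$ is the uncolored one, while under $\Shift(\phi,F)$ it is $xy_k$ that is uncolored (and $xy_{j-1}$ carries color $\eta_{j-1}$). Following the two alternating paths from $x$ and using that $\delta \in M(\Shift(\phi,F),y_k) \cap M(\Shift(\phi,F'),y_{j-1})$ — so these vertices are path-endpoints — one derives that the same $\gamma\delta$-component would have to contain both $y_k$ and $y_{j-1}$ as distinct degree-$\le 1$ endpoints on the $x$-side, which is impossible for a path. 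I expect the bookkeeping about exactly which edges change color under the two shifts, and a careful tracking of the endpoints and parities of the alternating paths, to be the delicate part; I would lean heavily on \cite[\S4]{VizingChain} (Facts 4.4, 4.5, 4.7 and the surrounding lemmas) and on the corresponding argument in \cite[\S3]{bernshteyn2023fast} to package this cleanly rather than reprove the alternating-path combinatorics from scratch.
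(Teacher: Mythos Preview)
Your proposal is correct and mirrors the paper's own treatment, which simply cites \cite[Lemmas~4.8 and~4.9]{bernshteyn2023fast} for the cases $\beta = \blank$ and $\beta \neq \blank$ respectively; your sketch is in fact a faithful unpacking of exactly that argument. One small slip worth flagging: your justification ``$F$ (resp.\ $F'$) is not $\phi$-happy, else we would have exited at Step~\ref{step:happy_fan}'' is not valid---that step only tests whether the \emph{particular} color $\delta$ lies in $M(\phi,x)$, not whether $\End(F)$ is $\Shift(\phi,F)$-happy via some other color---but this is harmless, since if $F$ is $\phi$-happy the first bullet of the lemma already holds and there is nothing left to prove.
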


\begin{proof}
    For $\beta = \blank$, the proof is identical to \cite[Lemma 4.8]{bernshteyn2023fast}, \textit{mutatis mutandis}.
    For $\beta \neq \blank$, the proof is identical to \cite[Lemma 4.9]{bernshteyn2023fast}, \textit{mutatis mutandis}.
\end{proof}

Next, let us consider the output of Algorithm~\ref{alg:rand_chain}.

\begin{Lemma}\label{lemma:chain_lemma}
    Let $\phi$ be a proper partial coloring, let $xy$ be an uncolored edge, and let $\alpha,\,\beta \in [q]$ be such that $\alpha \in \left(M(\phi, x) \setminus M(\phi, y)\right) \cup \set{\blank}$ and $\beta \in M(\phi, y) \cup \set{\blank}$. 
    Let $(\tilde F, \,\tilde P,\, \eta)$ be the output of Algorithm \ref{alg:rand_chain} on input $(\phi, xy, x, \alpha, \beta)$, where $\tilde P$ is a $\gamma\delta$-path. Then no edge in $\tilde F$ is colored $\alpha$, $\beta$, $\gamma$, or $\delta$, and
    \begin{itemize}
        \item either $\tilde F$ is $\phi$-happy and $\tilde P = (\End(\tilde F))$, or
        \item $\tilde F$ is $(\phi, \gamma\delta)$-hopeful and $\{\gamma, \delta\} = \{\alpha, \beta\}$, or
        \item $\tilde F$ is $(\phi, \gamma\delta)$-hopeful, $\{\gamma, \delta\} \cap \{\alpha, \beta\} = \0$, and $\length(\tilde P) = 2\ell$, or
        \item $\tilde F$ is $(\phi, \gamma\delta)$-successful and $\{\gamma, \delta\} \cap \{\alpha, \beta\} = \0$.
    \end{itemize}
    Furthermore, if $\tilde F$ is $\phi$-happy or $(\phi, \gamma\delta)$-successful, then $\eta \in M(\psi, u) \cap M(\psi, v)$, where $\psi \defeq \Shift(\phi, \tilde F + \tilde P)$ and $uv = \End(\tilde P)$.
\end{Lemma}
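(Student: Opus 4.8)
The plan is to track Algorithm~\ref{alg:rand_chain} branch by branch, using Lemma~\ref{lemma:fan_lemma} applied to the call $\mathsf{RandomFan}(\phi, xy, x, \beta)$ on line~\ref{step:call_to_fan}, which outputs $(\tilde F, \delta, j)$. Lemma~\ref{lemma:fan_lemma} already tells us that no edge of $\tilde F$ is colored $\alpha$ or $\beta$, and gives us three cases for $\tilde F$. I would handle the three \texttt{if}-branches of Algorithm~\ref{alg:rand_chain} in turn. First, if $M(\phi,x)[\delta] = \blank$, i.e.\ $\delta \in M(\phi,x)$, then the first bullet of Lemma~\ref{lemma:fan_lemma} must be the one that holds (the second requires $\delta = \beta \notin M(\phi,x)$, and the third produces a \emph{successful}, hence not happy, fan in a way incompatible with $\delta$ being missing at the pivot — this needs a short argument), so $\tilde F$ is $\phi$-happy; the algorithm returns $\tilde F$, $(\End(\tilde F))$, $\delta$, so $\tilde P = (\End(\tilde F))$ is a single edge, which is vacuously a $\gamma\delta$-path for any $\gamma$, and we are in the first bullet of the lemma. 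For the ``furthermore'' clause, since $\tilde F$ is $\phi$-happy and $\tilde P = (\End(\tilde F))$, we have $\psi = \Shift(\phi, \tilde F)$, $uv = \End(\tilde F)$, and $\delta \in M(\psi, u) \cap M(\psi, v)$ by definition of $\phi$-happy together with the identities $M(\phi,x) = M(\Shift(\phi,F),x)$ and $M(\phi,y) \subseteq M(\Shift(\phi,F),y)$ recorded after Definition~\ref{defn:hsf}.

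Next, the branch $\delta = \beta$: here the second bullet of Lemma~\ref{lemma:fan_lemma} applies, so $\tilde F$ is $(\phi,\alpha\beta)$-hopeful. The path $P \defeq P(\End(\tilde F); \Shift(\phi,\tilde F), \alpha\beta)|2\ell$ is then a $\gamma\delta$-path with $\{\gamma,\delta\} = \{\alpha,\beta\}$ (possibly $P$ is a single edge, in which case it is an $\alpha\beta$-path trivially), placing us in the second bullet of the lemma. Since $\{\gamma,\delta\} = \{\alpha,\beta\}$ and $\alpha \in M(\phi,x)$, I should double-check that no edge of $\tilde F$ is colored $\gamma$ or $\delta$ — but that is exactly ``no edge colored $\alpha$ or $\beta$'' from Lemma~\ref{lemma:fan_lemma}. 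In this branch the algorithm never reaches the ``successful'' conclusion, so the ``furthermore'' clause is vacuous (a $(\phi,\alpha\beta)$-hopeful-but-not-happy fan here need not be successful; if it happens to be $\phi$-happy after all, we must check that too, using that $\beta \in M(\Shift(\phi,\tilde F), y)$ and examining whether $P$ is a single edge).

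Finally, the branch $\delta \neq \beta$ and $\delta \notin M(\phi,x)$: now the third bullet of Lemma~\ref{lemma:fan_lemma} applies, so for the randomly chosen $\gamma \in M(\phi,x)\setminus\{\alpha\}$ (chosen on line~\ref{step:choose_gamma}), either $\tilde F$ or $\tilde F' \defeq \tilde F|j$ is $(\phi,\gamma\delta)$-successful. The algorithm forms $\tilde P$ and $\tilde P'$ as the length-$2\ell$ truncations of the corresponding maximal $\gamma\delta$-paths and selects $F \in \{\tilde F, \tilde F'\}$, $P \in \{\tilde P, \tilde P'\}$ according to whether $\vend(\tilde P) = x$. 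The key sub-claim I would prove here is: the chosen $F$ is $(\phi,\gamma\delta)$-successful whenever $\vend(P) \neq x$ (i.e.\ whenever the path ``escaped'' the pivot), and in the remaining case $\vend(P) = 2\ell$ so $P$ is a genuine truncation of length $2\ell$ and $F$ is at least $(\phi,\gamma\delta)$-hopeful — matching the third or fourth bullet of the lemma, with $\{\gamma,\delta\}\cap\{\alpha,\beta\} = \0$ since $\gamma \neq \alpha$ and $\gamma \in M(\phi,x)$ is disjoint from $M(\phi,y) \ni \beta$ (using $\alpha \in M(\phi,x)\setminus M(\phi,y)$), while $\delta \neq \beta$ and $\delta \notin \{\alpha\}$ by the case hypotheses; the statement ``no edge of $\tilde F$ colored $\gamma$ or $\delta$'' again follows from Lemma~\ref{lemma:fan_lemma} (no edge colored $\alpha$ or $\beta$) once one observes the fan-coloring argument of that lemma actually excludes the colors missing at the pivot, so I may need to invoke the precise statement, not just its conclusion — this is the spot to be careful.

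For the ``furthermore'' clause in this last branch, when $F$ is $(\phi,\gamma\delta)$-successful the path $P = P(\End(F); \Shift(\phi,F), \gamma\delta)|2\ell$ is $\phi$-happy by \cite[Fact 4.5]{VizingChain} together with the hopeful-to-successful transfer recorded after Definition~\ref{defn:hsf}; hence with $\psi = \Shift(\phi, F + P)$ and $uv = \End(P)$, one of $\gamma, \delta$ lies in $M(\psi,u)\cap M(\psi,v)$, and the algorithm sets $\eta$ to exactly the color in $\{\gamma,\delta\}$ distinct from $\phi(\End(P)) = \psi(\End(P|\length(P)-1))$'s partner — I would verify $\eta$ is the correct one by a short case split on whether $P$ is a single edge. \textbf{The main obstacle} I anticipate is not any single branch but the bookkeeping in the $\delta \neq \beta$ branch: reconciling the selection rule ``$\vend(\tilde P)\neq x$'' with the guarantee from Lemma~\ref{lemma:fan_lemma} that \emph{at least one} of $\tilde F, \tilde F'$ is successful, i.e.\ showing the selection never picks a fan that is merely disappointed when a successful alternative exists, and that in the genuinely disappointed case the length is forced to be $2\ell$. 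This likely requires reproving the relevant parts of \cite[Lemma 4.8--4.9]{bernshteyn2023fast} / \cite[\S4]{VizingChain} in the present notation, rather than citing them wholesale, since the truncation at $2\ell$ and the pivot-avoidance condition are specific to this algorithm.
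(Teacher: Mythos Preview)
The paper does not give a detailed proof of this lemma: it simply cites \cite[Lemmas 5.2 and 5.3]{bernshteyn2023fast} for the four bullet points and calls the ``furthermore'' clause a standard Vizing-chain argument. Your plan---tracking Algorithm~\ref{alg:rand_chain} branch by branch and invoking Lemma~\ref{lemma:fan_lemma} at the call to \textsf{RandomFan}---is exactly the argument that those cited lemmas carry out, so you are reconstructing the omitted proof rather than taking a different route. Your identification of the third branch (the selection between $\tilde F$ and $\tilde F'$ and the forced length $2\ell$ in the disappointed case) as the main technical content is accurate.

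There is one concrete gap in your reasoning. In the third branch you want $\{\gamma,\delta\}\cap\{\alpha,\beta\}=\0$, and you argue $\gamma\neq\beta$ by claiming that $M(\phi,x)$ (which contains $\gamma$) is disjoint from $M(\phi,y)$ (which contains $\beta$), citing $\alpha\in M(\phi,x)\setminus M(\phi,y)$. That deduction is invalid: the hypothesis only says $\alpha$ lies in one set and not the other, not that the two sets are disjoint. In fact the lemma as stated does not exclude $\beta\in M(\phi,x)$, in which case $\gamma=\beta$ is a live possibility for the random choice on line~\ref{step:choose_gamma}. The way out is to observe that in every call to Algorithm~\ref{alg:rand_chain} made by the \hyperref[alg:multi_viz_chain]{MSVA} with $\beta\neq\blank$, the vertex $x=u$ is the penultimate vertex of a just-shifted $\alpha\beta$-path, so the edge preceding $\End(P_k)$ now carries color $\beta$ and hence $\beta\notin M(\psi,u)$. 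You should either add this as an explicit hypothesis or note it holds in the only context where the lemma is applied. The same observation is what makes your ``no edge of $\tilde F$ colored $\gamma$'' claim go through (via $\gamma\in M(\phi,x)$), while for $\delta$ you need the fan-construction invariant that the colors $\eta^{(0)},\ldots,\eta^{(k-1)}$ are distinct from the final $\eta^{(k)}=\delta$ on the returned fan; this is where the choice between $\tilde F$ and $\tilde F|j$ matters, as you anticipated.
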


The proof of the bullet points is identical to \cite[Lemmas 5.2 and 5.3]{bernshteyn2023fast}, \textit{mutatis mutandis}.
Furthermore, the proof of the last claim regarding the color $\eta$ follows from standard arguments regarding Vizing chains and so we omit it here.

It follows that Algorithm \ref{alg:multi_viz_chain} outputs a $\phi$-happy multi-step Vizing chain as long as 
\begin{enumerate}[label=\ep{\normalfont{}\texttt{Happy}\arabic*},labelindent=15pt,leftmargin=*]
    \item\label{item:valid_input} the input to Algorithm \ref{alg:rand_chain} at Step \ref{step:alpha_beta_order} is valid,

    \item\label{item:never_fail} we never reach Step \ref{step:fail_chain}, and

    \item\label{item:invariants} the invariants \ref{inv:start_F_end_C}--\ref{inv:eta} hold for each iteration of the \textsf{while} loop.
\end{enumerate}
The proof that conditions \ref{item:valid_input} and \ref{item:never_fail} hold follows identically to \cite[Lemmas 5.4 and 5.5]{bernshteyn2023fast}.
Similarly, the proofs of items \ref{inv:start_F_end_C}--\ref{inv:hopeful_length} follow identically to \cite[Lemma 5.6]{bernshteyn2023fast}.
It remains to prove \ref{inv:eta}.

\begin{Lemma}\label{lemma:inv_eta}
    Consider running Algorithm \ref{alg:multi_viz_chain} on input $(\phi, xy, x, \ell)$.
    Let $C$, $F + P$, and $\xi$ be the multi-step chain, the candidate chain, and the color at the beginning of some iteration of the {\upshape\textsf{while}} loop, respectively.
    Then $\xi$ satisfies the invariant \ref{inv:eta}.
\end{Lemma}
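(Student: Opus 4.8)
The plan is to argue by induction on the number of iterations of the \textsf{while} loop, tracking how $\xi$ is set or updated in each of the three branches of the loop body, and verifying that in each case the value agrees with the requirement of \ref{inv:eta}: namely, whenever the candidate chain $F+P$ is $\psi$-happy (where $\psi = \Shift(\phi,C)$), the color $\xi$ is valid for $\End(P)$ in $\Shift(\psi,F+P)$. For the base case, the initial candidate chain and color come from the call $\mathsf{RandomChain}(\phi,xy,x,\blank,\blank)$ on line \ref{step:first_chain}. Here I would invoke Lemma~\ref{lemma:chain_lemma}: its last sentence says precisely that if $\tilde F$ is $\phi$-happy or $(\phi,\gamma\delta)$-successful, then $\eta \in M(\psi', u)\cap M(\psi', v)$ for $\psi' = \Shift(\phi,\tilde F+\tilde P)$ and $uv=\End(\tilde P)$. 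Since $F+P$ being $\psi$-happy (with $\psi = \phi$ at the start) forces $F$ to be $\phi$-happy or $(\phi,\gamma\delta)$-successful by \ref{inv:hopeful_length}, this is exactly the validity claim for $\xi$.

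For the inductive step, I would examine the two ways $\xi$ changes. In a \textsf{forward} iteration (the \textbf{else} branch at line \ref{step:append}), the new candidate chain is $\tilde F+\tilde P$, freshly produced by $\mathsf{RandomChain}(\psi, uv, u, \alpha,\beta)$, and the new color is $\eta$. By Lemma~\ref{lemma:chain_lemma} applied with the coloring $\psi$ (its hypotheses being exactly condition \ref{item:valid_input}, which we may assume), if $\tilde F+\tilde P$ is $\psi$-happy then $\eta$ is valid for $\End(\tilde P)$ in $\Shift(\psi,\tilde F+\tilde P)$, so \ref{inv:eta} is preserved. In an \textsf{$r$-backward} iteration, the new candidate chain is $F_j + P'$, where $P_j$ was an initial segment of the length-$2\ell$ path $P'$; crucially, this is \emph{not} a fresh output of $\mathsf{RandomChain}$, so I cannot appeal to Lemma~\ref{lemma:chain_lemma} directly. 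Instead I would argue that if $F_j + P'$ is $\psi$-happy (with $\psi = \Shift(\phi,C')$, $C' = F_0+P_0+\cdots+F_{j-1}+P_{j-1}$), then necessarily $\length(P') < 2\ell$ — because $F_j$ being $(\psi,\gamma\delta)$-disappointed would force $\length(P') = 2\ell$ by \ref{inv:hopeful_length} and would make $F_j+P'$ not happy. Hence $P'$ is a genuine $\gamma\delta$-alternating path terminating because one endpoint misses $\gamma$ and the other misses $\delta$ reach the same color, and the standard Vizing-chain fact identifies the valid color at $\End(P')$ as the unique element of $\{\gamma,\delta\}\setminus\{\psi(\End(P'))\}$; this is exactly what the algorithm assigns to $\xi$ on the lines following \ref{step:truncate_chain}. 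I would phrase this using the colored-degree language of \S\ref{subsec:pathchains}: since $F_j$ is $\psi$-happy means $\End(F_j)$ is $\Shift(\psi,F_j)$-happy, and then $P'$ being a happy path chain lets me read off the missing color at its far endpoint.

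I expect the main obstacle to be the backward case, precisely because the candidate chain there is resurrected from stored data $(F_j,P_j)$ rather than recomputed, so the clean statement of Lemma~\ref{lemma:chain_lemma} does not apply and one must instead reason about the alternating path $P'$ from scratch using the $G(\psi,\gamma\delta)$ machinery. The subtlety is to confirm that the coloring $\psi$ obtained after the $\Shift$ on line \ref{step:psi} (which ``rewinds'' the chain back to step $j$) is exactly $\Shift(\phi, C')$, so that the candidate chain $F_j + P'$ is being considered in the correct coloring; this should follow from the composition property $\Shift(\Shift(\phi,C),C^*) = \phi$ stated in \S\ref{sec:notation} together with the fact that $C = C' + F_j + P_j + \cdots + F_k + P_k$ was $\phi$-shiftable by \ref{inv:non_intersecting_shiftable}. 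Once that bookkeeping is pinned down, the validity of $\xi = $ the unique color in $\{\gamma,\delta\}\setminus\{\psi(\End(P'))\}$ is a routine consequence of the definition of a $\phi$-happy path chain and the two-case analysis in \S\ref{subsec:pathchains}.
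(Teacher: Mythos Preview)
Your approach matches the paper's: both reduce to Lemma~\ref{lemma:chain_lemma} applied to a fresh $\mathsf{RandomChain}$ output. The paper's proof is only three sentences because it dispatches the backward case in a single observation: if $F+P$ is $\psi$-happy and arose from a backward step, then $F_j + P'$ was already the candidate chain at an earlier iteration (with the same $\psi$), and we would have returned at Step~\ref{step:success} then; hence a $\psi$-happy candidate must be fresh from the previous iteration's $\mathsf{RandomChain}$ call, and Lemma~\ref{lemma:chain_lemma} finishes.

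Your backward-case analysis reaches the same conclusion but in a roundabout and internally inconsistent way. You correctly note that $\length(P') = 2\ell$ by construction and argue that $\psi$-happy would force $\length(P') < 2\ell$; together these already show the backward branch is vacuous for \ref{inv:eta}, and you should stop there. Instead you continue to verify $\xi$ by hand in this impossible case, reasoning about ``$P'$ terminating because one endpoint misses $\gamma$ and the other $\delta$'' --- which directly contradicts the length-$2\ell$ fact you just used. That portion should be dropped. Your justification for ``$\psi$-happy $\Rightarrow \length(P') < 2\ell$'' via \ref{inv:hopeful_length} is also incomplete as stated: that invariant gives only ``disappointed $\Rightarrow \length(P) = 2\ell$,'' not the direction you need.
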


\begin{proof}
    The only case to consider is when $F + P$ is $\psi$-happy for $\psi \defeq \Shift(\phi, C)$.
    It must be the case that $F+P$ was constructed during the previous iteration (or else we would have terminated earlier).
    The claim follows by Lemma~\ref{lemma:chain_lemma}.
\end{proof}

We conclude this subsection with two lemmas from \cite{bernshteyn2023fast}.
The first lemma describes an implication of Lemma~\ref{lemma:chain_lemma} on a certain kind of intersection.

\begin{Lemma}[{\cite[Lemma~5.8]{bernshteyn2023fast}}]\label{lemma:intersection_prev}
    Suppose we have a {\upshape\textsf{$0$-backward}} iteration of the {\upshape\textsf{while}} loop in Algorithm~\ref{alg:multi_viz_chain}.
    Then the first intersection must occur at a vertex in $V(F_k)$.
\end{Lemma}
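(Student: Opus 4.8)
The plan is to analyze what a \textsf{$0$-backward} iteration means in terms of the definitions and then show the alternative—that the first intersection occurs at an internal edge of the path $P_k$—is impossible. Recall the setup: at the start of the iteration we have a non-intersecting chain $C = F_0 + P_0 + \cdots + F_k + P_k$ (after line~\ref{step:Pk} sets $F_k, P_k$), a candidate chain $\tilde F + \tilde P$ produced by Algorithm~\ref{alg:rand_chain} from $\End(P_k)$, and the iteration is \textsf{$0$-backward}, meaning $j = k$: the first vertex or edge of $\tilde F + \tilde P$ that has $\visited = 1$ lies in $V(F_k) \cup \IE(P_k)$, and in fact (since $j=k$ is the index such that the first intersection occurs at $F_k + P_k$) the first intersection point does \emph{not} lie in $F_0 + P_0 + \cdots + F_{k-1} + P_{k-1}$. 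So I must rule out that the first intersection is with an internal edge of $P_k$, which would leave $V(F_k)$ as the only possibility.

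First I would recall, as remarked after Algorithm~\ref{alg:multi_viz_chain}, that $\length(P_k) \geq \ell > 2$, so $P_k$ has genuine internal edges, all colored $\alpha$ or $\beta$ under $\psi = \Shift(\phi, C)$ before the shift on line~12; and that the edge $uv = \End(P_k)$ together with its endpoint $v = \vend(P_k)$ is where the new chain $\tilde F + \tilde P$ starts. By Lemma~\ref{lemma:chain_lemma} applied to the call on line~\ref{step:alpha_beta_order} with input colors $\alpha, \beta$, writing $\tilde P$ as a $\gamma\delta$-path, no edge of $\tilde F$ is colored $\alpha$, $\beta$, $\gamma$, or $\delta$, and the edges of $\tilde P$ (other than $\Start(\tilde P)$) are colored $\gamma$ or $\delta$. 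The key point is that $\IE(P_k)$ consists of edges colored $\alpha$ or $\beta$ (under the coloring in force before the shift — but the shift only moves colors along $F_k + P_k$, so internal edges of $P_k$ retain colors in $\{\alpha,\beta\}$ in $\psi$ after line~12 as well, by the structure of $\Shift$ on a path/fan chain). Now suppose for contradiction that the first intersection of $\tilde F + \tilde P$ with the current chain occurs at an internal edge $e \in \IE(P_k)$. That edge $e$ would have to be an edge of $\tilde F + \tilde P$; but an edge of $\tilde F$ cannot be colored $\alpha$ or $\beta$, and an internal edge of $\tilde P$ cannot be colored $\alpha$ or $\beta$ either (they are colored $\gamma$ or $\delta$, and by Lemma~\ref{lemma:chain_lemma}, either $\{\gamma,\delta\} = \{\alpha,\beta\}$ or $\{\gamma,\delta\}\cap\{\alpha,\beta\} = \0$). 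In the case $\{\gamma,\delta\}\cap\{\alpha,\beta\} = \0$ we are immediately done. In the case $\{\gamma,\delta\} = \{\alpha,\beta\}$, I would need a separate, more careful argument: there $\tilde P$ is an initial segment of the $\alpha\beta$-path starting at $v$ in the shifted coloring, and an internal edge $e$ of $P_k$ that were also on $\tilde P$ would force $v$ to be $(\psi', \alpha\beta)$-related to an internal vertex of $P_k$ (where $\psi'$ is the coloring after line~12), which I would show contradicts either the $\phi$-shiftability / non-intersecting structure already established or the definition of how $\tilde P$ is grown as a \emph{maximal} alternating path from $v$ (the component of $v$ in $G(\psi', \alpha\beta)$ is exactly $P_k$ continued, so the ``first'' intersection along $\tilde F + \tilde P$ would actually be at $v \in V(\tilde F)$, not at an earlier internal edge — contradiction with $e$ being the first intersection).

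This last case—$\{\gamma,\delta\}=\{\alpha,\beta\}$—is the main obstacle, and it corresponds exactly to the $\delta = \beta$ branch (line~\ref{step:beta_hopeful}) of Algorithm~\ref{alg:rand_chain}. The clean way to handle it is to observe that when $\{\gamma,\delta\} = \{\alpha,\beta\}$, the candidate path $\tilde P$ is (an initial segment of) the maximal $\alpha\beta$-alternating path in $\Shift(\psi, F_k + P_k)$ starting from $v = \vend(P_k)$, and since $v$ is the endpoint of $P_k$ and the colors on $P_k$ are precisely $\alpha,\beta$ in alternation, the component of $G(\Shift(\psi, F_k+P_k), \alpha\beta)$ containing $v$ is the reversed tail of $P_k$; hence the only way $\tilde F + \tilde P$ can meet $\IE(P_k)$ is by first meeting $V(F_k)$ at the pivot/start vertex where the fan $\tilde F$ attaches. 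Since $\tilde F$'s start vertex is $v = \vend(P_k)$, and $\vend(P_k)$ is not an internal vertex of $P_k$ but is incident to $\End(P_k)$ — I'd double-check against the remark that $\length(P_k) \geq \ell$ guarantees $V(\End(F_j)) \cap V(\Start(F_{j+1})) = \0$ — the first $\visited$-flagged object encountered must be a vertex of $V(F_k)$. Since the excerpt cites this as \cite[Lemma~5.8]{bernshteyn2023fast}, I expect the actual proof to simply invoke that the same combinatorial argument from \cite{bernshteyn2023fast} goes through verbatim, with Lemma~\ref{lemma:chain_lemma} here playing the role of the corresponding lemma there; I would state it that way and only spell out the color-counting contradiction above in the one or two lines where the argument genuinely uses the structure of $\tilde F$ and $\tilde P$.
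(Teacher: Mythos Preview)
The paper does not prove this lemma; it simply cites \cite[Lemma~5.8]{bernshteyn2023fast}. Your closing instinct---to defer to that reference---is exactly what the paper does.

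That said, your attempted argument has a concrete error in Case~2 ($\{\gamma,\delta\}=\{\alpha,\beta\}$). You write that $\tilde P$ is ``the maximal $\alpha\beta$-alternating path \ldots starting from $v = \vend(P_k)$,'' and then reason about the $\alpha\beta$-component of $v$. But the call on line~\ref{step:alpha_beta_order} is $\mathsf{RandomChain}(\psi', uv, u, \alpha, \beta)$: the pivot of $\tilde F$ is $u$, and $\tilde P$ begins at $\vend(\tilde F)$, which is some neighbor $z$ of $u$ with $z\neq v$ (in the $\delta=\beta$ branch of Algorithm~\ref{alg:rand_chain} one has $\length(\tilde F)\ge 2$, since Step~\ref{step:missing_color_choice} of Algorithm~\ref{alg:rand_fan} excludes $\beta$ at $y_0$). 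Your subsequent claims---that the $\alpha\beta$-component of $v$ under $\Shift(\psi,F_k+P_k)$ is ``the reversed tail of $P_k$'' (it is not: after the shift, $uv$ is uncolored and $v$'s component continues \emph{forward} along $P\setminus P_k$), and that therefore the first visited object lies in $V(F_k)$---do not follow from the actual setup. The analysis of where the $\alpha\beta$-path from $z$ can go, and why it cannot reach $\IE(P_k)$ without first hitting $V(F_k)$, is the substance of \cite[Lemma~5.8]{bernshteyn2023fast}, and you have not reproduced it.

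Your Case~1 ($\{\gamma,\delta\}\cap\{\alpha,\beta\}=\0$), by contrast, is correct: the color-comparison via Lemma~\ref{lemma:chain_lemma} does rule out $E(\tilde F+\tilde P)\cap\IE(P_k)\neq\0$ directly.
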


The next lemma describes some properties of non-intersecting chains that will be useful in the proofs presented in the subsequent sections.

\begin{Lemma}[{\cite[Lemma~5.7]{bernshteyn2023fast}}]\label{lemma:non-intersecting_degrees}
    Let $\phi$ be a proper partial coloring and let $e = xy$ be an uncolored edge. 
    Consider running Algorithm~\ref{alg:multi_viz_chain} with input $(\phi, e, x)$.
    Let $C = F_0+P_0+\cdots+F_{k-1}+P_{k-1}$ be the multi-step Vizing chain at the beginning of an iteration of the {\upshape\textsf{while}} loop and let $F_k + P_k$ be the chain formed at Step \ref{step:Pk} such that, for each $j$, $P_j$ is an $\alpha_j\beta_j$-path in the coloring $\Shift(\phi, F_0+P_0+\cdots +F_{j-1}+P_{j-1})$. Then:
    \begin{enumerate}[label=\ep{\normalfont{}\texttt{Chain}\arabic*},labelindent=15pt,leftmargin=*]
        \item\label{item:degree_end} $\deg(\vend(F_j); \phi, \alpha_j\beta_j) = 1$  for each $0 \leq j \leq k$, and
        
        \item\label{item:related_phi} for each $0 \leq j \leq k$, all edges of $P_j$ except $\Start(P_j)$ are colored $\alpha_j$ or $\beta_j$ under $\phi$. 
    \end{enumerate}
\end{Lemma}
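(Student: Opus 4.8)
The plan is to prove the two statements essentially by tracking what happens to the relevant colorings over the course of the algorithm, exploiting the non-intersecting property heavily. Let me set $\phi_j \defeq \Shift(\phi, F_0 + P_0 + \cdots + F_{j-1} + P_{j-1})$, so that $\phi_0 = \phi$ and $P_j$ is an $\alpha_j\beta_j$-path in $\phi_j$. By invariant~\ref{inv:hopeful_length}, each fan $F_j$ is either $\phi_j$-happy or $(\phi_j, \alpha_j\beta_j)$-hopeful, and by Lemma~\ref{lemma:chain_lemma}, no edge of $F_j$ is colored $\alpha_j$ or $\beta_j$ under $\phi_j$.

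For \ref{item:degree_end}: the key is that $\vend(F_j) = \vstart(P_j)$'s neighbor along the path, and from the construction in Algorithm~\ref{alg:rand_chain} together with Lemma~\ref{lemma:fan_lemma}/Lemma~\ref{lemma:chain_lemma}, one of the two colors $\alpha_j, \beta_j$ is missing at $\vend(F_j)$ in the shifted coloring $\Shift(\phi_j, F_j)$ — this is exactly the color $\beta$ that gets fed into the path construction $P(\End(F); \Shift(\phi,F),\alpha\beta)$, which by definition starts at $\vend(F_j)$ with an edge colored $\alpha$. Since the path $P_j$ actually has length $\geq \ell > 2$ (as noted after the algorithm, $\length(P_k) \geq \ell$), the vertex $\vend(F_j)$ is incident to exactly one edge colored in $\{\alpha_j,\beta_j\}$ in $\Shift(\phi_j, F_j)$, hence also in $\phi_j$ (shifting the fan does not touch colors on edges incident to $\vend(F_j)$ other than possibly removing one — we need to check $M(\phi_j, \vend(F_j)) \subseteq M(\Shift(\phi_j,F_j), \vend(F_j))$, which is the remark following Definition~\ref{defn:hsf}). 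So $\deg(\vend(F_j); \phi_j, \alpha_j\beta_j) = 1$. To upgrade this to $\deg(\vend(F_j); \phi, \alpha_j\beta_j) = 1$, I argue that $\phi$ and $\phi_j$ agree on all edges incident to $\vend(F_j)$ with colors in $\{\alpha_j,\beta_j\}$: the only edges whose colors differ between $\phi$ and $\phi_j$ lie in $F_0 + P_0 + \cdots + F_{j-1} + P_{j-1}$, and by the non-intersecting condition (Definition~\ref{defn:non-int}) together with \ref{inv:non_intersecting_shiftable}, the vertex $\vend(F_j) \in V(F_j)$ is disjoint from all $V(F_i)$ for $i < j$ and is not an internal vertex of any $P_i$; one then checks that the endpoints of the $P_i$'s that could be incident to $\vend(F_j)$ carry colors that were already $\alpha_i$ or $\beta_i$ (hence, after shifting, still consistent) — this is where item \ref{item:related_phi} feeds back in.

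For \ref{item:related_phi}: this should follow by downward induction on $j$ or by a direct argument. The edges of $P_j$ other than $\Start(P_j)$ are colored $\alpha_j$ or $\beta_j$ under $\phi_j$ by definition of an $\alpha_j\beta_j$-path. I must show these colors are unchanged when passing from $\phi_j$ to $\phi$, i.e., under the inverse shifts by $F_{j-1} + P_{j-1}, \ldots, F_0 + P_0$. The shift operation only changes colors of edges lying in the respective fan-path pieces, so it suffices to show that $\IE(P_j)$ (and the last edge $\End(P_j)$, but $\End(P_j) \notin \dom$ matters differently) does not meet $E(F_i + P_i)$ for $i < j$ — which is precisely the non-intersecting condition $\IE(P_i) \cap E(F_j+P_j) = \0$ read the other way, combined with $V(F_i) \cap V(F_j + P_j) = \0$ to handle the fan edges. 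A small subtlety: $\Start(P_j) = \End(F_j)$ and $\End(P_j)$ are not internal, so they are not protected by $\IE$; but $\Start(P_j)$ is a fan edge and already excluded from the claim, and $\End(P_j)$ is handled because it is uncolored in $\phi_j$ after shifting (or we track it separately).

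The main obstacle I anticipate is the bookkeeping around the \emph{endpoints} of the paths and the precise interaction between "shiftable" and the non-intersecting condition — specifically, verifying that shifting an earlier piece $F_i + P_i$ cannot alter a color in $\{\alpha_j,\beta_j\}$ on an edge incident to $\vend(F_j)$ or on an internal edge of $P_j$. The definition of non-intersecting protects \emph{internal} edges of paths but not their endpoints, so one has to argue separately that the (at most two) non-internal, non-start edges cannot cause trouble, using that $\vstart(P_j) = \Pivot(F_j)$ lies in $V(F_j)$ (protected) and that $\End(P_j)$ becomes uncolored. Since the lemma is quoted verbatim from \cite[Lemma~5.7]{bernshteyn2023fast} and the present algorithm differs from that paper only in the randomized subroutines (which do not affect the combinatorial structure of the chains), the cleanest route is to observe that the structural invariants \ref{inv:start_F_end_C}--\ref{inv:hopeful_length} and the non-intersecting property are identical to those in \cite{bernshteyn2023fast}, so the proof there applies \textit{mutatis mutandis}; I would nonetheless include the short argument above for completeness.
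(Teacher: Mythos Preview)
The paper does not give a proof of this lemma at all: it is stated with the citation \cite[Lemma~5.7]{bernshteyn2023fast} and then immediately used, with no argument reproduced. Your final paragraph already anticipates exactly this, and your recommendation to invoke the earlier paper \textit{mutatis mutandis} (noting that the randomized subroutines do not change the chain structure or the invariants \ref{inv:start_F_end_C}--\ref{inv:hopeful_length}) is precisely what the paper does.

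Your accompanying sketch is a reasonable outline of the argument in \cite{bernshteyn2023fast}, but if you intend to actually include it, one spot needs more care. You assert that $\vend(F_j)$ ``is not an internal vertex of any $P_i$'' for $i<j$; however, Definition~\ref{defn:non-int} only forbids $V(F_i)\cap V(F_j+P_j)=\0$ and $\IE(P_i)\cap E(F_j+P_j)=\0$, neither of which directly excludes $\vend(F_j)\in \IV(P_i)$. What one really needs for \ref{item:degree_end} is that any edge of $P_i$ incident to $\vend(F_j)$ has color in $\{\alpha_i,\beta_i\}$ under both $\phi$ and $\phi_j$ (or is uncolored), and then argue that this does not affect the $\{\alpha_j,\beta_j\}$-degree; this in turn uses \ref{item:related_phi} inductively together with the dichotomy $\{\gamma,\delta\}=\{\alpha,\beta\}$ or $\{\gamma,\delta\}\cap\{\alpha,\beta\}=\0$ from Lemma~\ref{lemma:chain_lemma}. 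Since the paper itself simply cites, you are safe matching that; just be aware the sketch as written glosses over this point.
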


\section{Analysis of the Random Fan Algorithm}\label{subsec: fan analysis}

Consider running Algorithm~\ref{alg:rand_fan} on input $(\phi, xy, x, \beta)$.
We say the algorithm \emphd{succeeds} during a run of the \textsf{while} loop if it successfully returns a fan within $\kmax$ iterations.
The main result of this section is the following:

\begin{prop}\label{lemma:fan_number_of_tries}
    Consider running Algorithm~\ref{alg:rand_fan} with input $(\phi, xy, x, \beta)$.
    Let $T$ be the number of times we reach Step~\ref{step:return} before a successful run of the {\upshape{\textsf{while}}} loop.
    For $\kmax \geq \frac{8(1+\eps)}{\eps(2- \eps)}$, we have
    \[\P[T \geq t] \,\leq\, \exp\left(-\eps^2\,t\,\kmax/100\right).\]
\end{prop}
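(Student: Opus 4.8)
The plan is to show that each individual run of the \textsf{while} loop in Algorithm~\ref{alg:rand_fan} succeeds (i.e., returns a fan within $\kmax$ iterations rather than reaching Step~\ref{step:return}) with probability at least $1/2$, \emph{regardless} of the outcomes of all previous runs. Granting this, the number $T$ of failed runs is stochastically dominated by a geometric random variable, so $\P[T \geq t] \leq 2^{-t}$, and since $\kmax \geq \tfrac{8(1+\eps)}{\eps(2-\eps)} \geq 8/\eps^2 \cdot \tfrac{\eps(1+\eps)}{2-\eps} \geq \Omega(1/\eps^2)$ we have $2^{-t} \leq \exp(-\eps^2 t \kmax/100)$ after absorbing constants; one checks the numerics go through with the stated constant $100$ because $\eps^2 \kmax/100 \leq \ln 2$ forces nothing problematic — rather we want $\eps^2 \kmax /100 \leq \ln 2$, wait, we want the \emph{reverse}: the bound $\exp(-\eps^2 t\kmax/100)$ should be \emph{larger} than $2^{-t}$, i.e. we need $\eps^2\kmax/100 \leq \ln 2$, which need not hold for large $\kmax$. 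So the correct route is to show each run fails with probability at most $\exp(-\Omega(\eps^2 \kmax))$, not merely $\le 1/2$; then $\P[T\ge t]$ is at most that quantity raised to the $t$-th power, giving $\exp(-\Omega(\eps^2 t \kmax))$, and the constant $100$ accommodates the $\Omega$.

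So the core estimate I would establish is: \emph{conditioned on any history, a single run of the \textsf{while} loop reaches Step~\ref{step:return} with probability at most $\exp(-\eps^2 \kmax / c)$ for a suitable absolute constant $c$.} To see this, fix a run and consider the iterations $k = 0, 1, \dots, \kmax - 1$ of the inner \textsf{while} loop. At iteration $k$ (with current fan $(xy_0,\dots,xy_k)$) the algorithm picks $\eta = \mathsf{RandomColor}(\phi, y_k, \theta)$, which by Lemma~\ref{lemma:unif_rand_color} is uniform over $M(\phi, y_k)\setminus\{\theta\}$. The run \emph{continues} past iteration $k$ (rather than returning at Step~\ref{step:happy_fan}, \ref{step:same_colors}, or \ref{step:success_fan}) precisely when $\eta \notin M(\phi,x)$, $\eta \neq \beta$, and $\eta \notin M(\phi, y_{j-1})$ for all $1 \le j \le k$. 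In particular, for the run to continue it is necessary that $\eta \notin M(\phi, x)$. Now $M(\phi, x)$ has size at least $\eps\Delta$ (there is an uncolored edge at $x$, namely $xy$ if $y=y_0$ has not yet been colored — more carefully, $xy_0 = xy$ is uncolored, so $x$ misses at least $\eps\Delta$ colors), while $\eta$ ranges over a subset of $[q]$ of size at least $|M(\phi,y_k)\setminus\{\theta\}| \ge \eps\Delta - 1$. The cleanest bound: $\P[\eta \notin M(\phi,x) \mid \text{history}] \le \P[\eta \in [q]\setminus M(\phi,x)]$ computed against the uniform distribution on $M(\phi,y_k)\setminus\{\theta\}$; since $|M(\phi,x)| \ge \eps\Delta \ge \eps q/(1+\eps)$ and the support of $\eta$ has size $\le q$, a short calculation bounds this continuation probability by $1 - \tfrac{\eps(2-\eps)}{2(1+\eps)}$ or similar — precisely the quantity whose reciprocal appears in the hypothesis $\kmax \ge \tfrac{8(1+\eps)}{\eps(2-\eps)}$.

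Chaining these bounds across the $\kmax$ iterations (the choices of $\eta$ at different iterations are independent given the history, since each is a fresh call to Algorithm~\ref{alg:rand_col}): the probability the run reaches Step~\ref{step:return} is at most $\big(1 - \tfrac{\eps(2-\eps)}{2(1+\eps)}\big)^{\kmax} \le \exp\big(-\tfrac{\eps(2-\eps)}{2(1+\eps)}\kmax\big)$. Using $\tfrac{2-\eps}{1+\eps} \ge \tfrac12$ for $\eps \le 1$ (indeed $\ge 1/2$) and that $\eps \le 1$ one more time, this is at most $\exp(-\eps\kmax/4)$; combined with the hypothesis $\kmax \ge 8/(\eps^2)$-ish one gets a bound of the form $\exp(-\eps^2\kmax/ c)$, hence raising to the $t$-th power (one factor per failed run, each conditionally independent of the past) yields $\P[T \ge t] \le \exp(-\eps^2 t \kmax / c)$, and $c = 100$ is comfortably large enough to absorb all the slack. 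The \textbf{main obstacle} I anticipate is being careful about the conditioning and the exact constants: one must verify that the ``continuation'' event at each iteration has the claimed probability bound \emph{uniformly over the history} (the fan built so far, the previously rejected colors, which vertices $y_j$ appeared), and in particular that the relevant vertices always miss at least $\eps\Delta$ colors — this uses that $xy_0$ is uncolored for the bound at $x$, and that we only need the upper bound on the support size $q$ for the $\eta$-distribution. The bookkeeping to turn $\big(1-\tfrac{\eps(2-\eps)}{2(1+\eps)}\big)^{\kmax}$ and the threshold $\kmax \ge \tfrac{8(1+\eps)}{\eps(2-\eps)}$ into the clean exponent $\eps^2\kmax/100$ is routine but needs to be done with a little care.
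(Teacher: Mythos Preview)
Your approach has a genuine gap. You propose to bound, at each iteration $k$, the probability that the run continues past $k$ by bounding $\P[\eta \notin M(\phi,x)]$. But $\eta$ is chosen uniformly from $M(\phi,y_k)\setminus\{\theta\}$, so
\[
\P[\eta \in M(\phi,x)] \,=\, \frac{|M(\phi,y_k)\cap M(\phi,x)\setminus\{\theta\}|}{|M(\phi,y_k)\setminus\{\theta\}|},
\]
and there is no lower bound on the numerator: the sets $M(\phi,x)$ and $M(\phi,y_k)$ can be disjoint (take $x$ and $y_k$ of degree $\Delta$ with all colours in $M(\phi,x)$ used on edges at $y_k$). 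Your sentence ``the support of $\eta$ has size $\le q$'' is irrelevant --- that would help if you were \emph{upper}-bounding $\P[\eta\in M(\phi,x)]$, not lower-bounding it. So the claimed per-iteration continuation probability $\le 1-\Theta(\eps)$ is simply false in general, and no constant-chasing will rescue it.

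The paper's proof uses precisely the return conditions you listed but discarded: it exploits that the run also terminates when $\eta\in M(\phi,y_{j-1})$ for some earlier $j$. The ``bad'' set $B_i = M(\phi,y_{i-1})\setminus\bigl(\{\beta\}\cup M(\phi,x)\cup\bigcup_{j<i-1}M(\phi,y_j)\bigr)$ of colours that \emph{don't} cause a return at iteration $i$ satisfies $B_i\cap B_j=\varnothing$ for $i\neq j$ (each $B_i$ is disjoint from all earlier $M(\phi,y_j)$, which contain the earlier $B_j$'s), hence $\sum_i |B_i|\le q$. This forces all but at most $(1+\eps)/(\eps-\gamma)$ iterations to have $|B_i|\le(\eps-\gamma)\Delta$, i.e.\ at least $\gamma\Delta$ ``good'' colours, giving success probability $\ge\gamma/(1+\eps)$ there. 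A martingale concentration bound (multiplicative Azuma) on the weighted sum $W_i = X_i + \gamma Z_i/(1+\eps)$ then yields the $\exp(-\Omega(\eps^2\kmax))$ single-run failure bound you were aiming for. The missing idea is the disjointness of the $B_i$'s; without it the argument does not go through.
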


Consider a single run of the \textsf{while} loop.
In order to prove the above proposition, we define the following parameters for the $i$-th iteration:
\begin{align*}
    \eta_i &\defeq \text{the color chosen at Step~\ref{step:missing_color_choice}}, \\
    y_0 &\defeq y, \\
    y_i &\defeq \text{the unique vertex in $N_G(x)$ such that $\phi(xy_i) = \eta_{i}$}, \\
    B_1 &\defeq M(\phi, y_0) \setminus \left(\set{\beta} \cup M(\phi, x)\right),\\
    B_i &\defeq M(\phi, y_{i-1}) \setminus \left(\set{\beta} \cup M(\phi, x) \cup \left(\bigcup\limits_{j = 0}^{i-2}M(\phi, y_j)\right)\right), \quad i \geq 2, \\
    G_1 &= M(\phi, y_0) \setminus \left(\set{\beta} \cup B_1\right), \\
    G_i &\defeq M(\phi, y_{i-1}) \setminus B_i, \quad i \geq 2.
\end{align*}
We think of $G_i$ as the \emphd{good} colors and $B_i$ as the \emphd{bad} colors missing at $y_{i-1}$ during the $i$-th iteration.
In particular, $\eta_{i}$ is good if it is in $G_i$ as then the algorithm would succeed during the $i$-th iteration.
At the start of the $i$-th iteration, we say $y_{i-1}$ is \textsf{Happy} if $|G_i| \geq \gamma\Delta$ for $0 < \gamma < \eps$ to be chosen later.
To assist with our proofs, we define the following random variables for $i \geq 1$:
\begin{align*}
    Y_i &\defeq \bbone\set{y_{i-1} \text{ is \textsf{Happy}}}, \\
    X_i &\defeq \bbone\set{y_{i-1} \text{ is \textsf{Happy} and } \eta_{i} \in G_{i}}, \\
    Z_i &\defeq 1 - Y_i, \\
    W_i &\defeq X_i + \frac{\gamma Z_i}{1+\eps}.
\end{align*}
Suppose we run the \textsf{while} loop for $t$ iterations.
Let us consider $W_i$ for $1 \leq i \leq t$.

\begin{Lemma}\label{lemma:Wi}
    $\E[W_i\mid W_1, \ldots, W_{i - 1}] \geq \frac{\gamma}{1 + \eps}$.
\end{Lemma}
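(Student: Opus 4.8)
The plan is to condition on the state of the algorithm at the start of the $i$-th iteration and split into the two cases recorded by $Y_i$ and $Z_i$, and to note that for the purpose of proving the claimed lower bound we may work conditionally on the entire history, not just on $W_1, \ldots, W_{i-1}$. More precisely, let $\mathcal F_{i-1}$ denote the $\sigma$-algebra generated by everything that happened during the first $i-1$ iterations; this determines $y_0, \ldots, y_{i-1}$, the sets $G_i$, $B_i$, hence the values $Y_i$ and $Z_i$ (these are $\mathcal F_{i-1}$-measurable, since whether $y_{i-1}$ is \textsf{Happy} depends only on $|G_i|$, which is determined by the vertices visited so far), whereas the only new randomness in the $i$-th iteration is the color $\eta_i$, chosen via Algorithm~\ref{alg:rand_col} on input $(\phi, y_{i-1}, \theta)$ for the appropriate $\theta$. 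Since $W_1, \ldots, W_{i-1}$ is a function of $\mathcal F_{i-1}$, it suffices to show $\E[W_i \mid \mathcal F_{i-1}] \geq \gamma/(1+\eps)$ pointwise on every atom of $\mathcal F_{i-1}$, and then take the (conditional) expectation given $W_1, \ldots, W_{i-1}$.

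So fix an atom of $\mathcal F_{i-1}$. If $y_{i-1}$ is not \textsf{Happy}, then $Y_i = 0$, $X_i = 0$, $Z_i = 1$, so $W_i = \gamma/(1+\eps)$ deterministically and the bound holds with equality. If $y_{i-1}$ is \textsf{Happy}, then $Z_i = 0$ and $W_i = X_i = \bbone\{\eta_i \in G_i\}$, so $\E[W_i \mid \mathcal F_{i-1}] = \P[\eta_i \in G_i \mid \mathcal F_{i-1}]$. Here I use Lemma~\ref{lemma:unif_rand_color}: conditionally on $\mathcal F_{i-1}$, the color $\eta_i$ is uniformly distributed over $M(\phi, y_{i-1}) \setminus \{\theta\}$, where $\theta = \beta$ if $i = 1$ and $\theta = \blank$ otherwise (this is the effect of Steps~\ref{step:missing_color_choice}--\ref{step:theta_blank} of Algorithm~\ref{alg:rand_fan}). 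In either case $M(\phi, y_{i-1}) \setminus \{\theta\} \subseteq M(\phi, y_{i-1})$ has size at most $q = (1+\eps)\Delta$, while $G_i \subseteq M(\phi, y_{i-1}) \setminus \{\theta\}$ — one checks that $\beta \notin G_i$ from the definitions ($G_1 = M(\phi, y_0) \setminus (\{\beta\} \cup B_1)$ explicitly excludes $\beta$, and for $i \geq 2$ the colors of $G_i$ are missing at $y_{i-1}$ but the construction guarantees none of them equals $\beta$, since $\beta$ would have been caught at Step~\ref{step:same_colors}; one should spell this out carefully). Hence
\[
    \P[\eta_i \in G_i \mid \mathcal F_{i-1}] \;=\; \frac{|G_i|}{|M(\phi, y_{i-1}) \setminus \{\theta\}|} \;\geq\; \frac{|G_i|}{q} \;\geq\; \frac{\gamma\Delta}{(1+\eps)\Delta} \;=\; \frac{\gamma}{1+\eps},
\]
using $|G_i| \geq \gamma\Delta$ in the \textsf{Happy} case. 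This gives the bound in both cases, and averaging over the atoms contained in a fixed value of $(W_1, \ldots, W_{i-1})$ finishes the proof.

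\textbf{Main obstacle.} The only genuinely delicate point is the bookkeeping around which set $\eta_i$ is uniform over and whether $G_i$ is contained in it — in particular verifying $\beta \notin G_i$ and $\theta \notin G_i$, and confirming that $|G_i|$ (hence the event $\{y_{i-1} \text{ \textsf{Happy}}\}$) really is determined by $\mathcal F_{i-1}$ and not by $\eta_i$ itself. Everything else is a two-line case split plus Lemma~\ref{lemma:unif_rand_color}; the numerics ($|M(\phi, y_{i-1})| \leq q = (1+\eps)\Delta$) are immediate. I would also double-check the edge case $i = 1$ separately, since there $\theta = \beta$ (so $\beta$ is literally removed from the sample space, making the containment $G_1 \subseteq M(\phi, y_0) \setminus \{\beta\}$ automatic) whereas for $i \geq 2$ one needs the structural fact that the algorithm would already have returned if $\eta$ equalled $\beta$.
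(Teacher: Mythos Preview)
Your proposal is correct and essentially identical to the paper's proof: both condition on the history (the paper conditions on $(Y_i, y_{i-1})$, you on the full $\mathcal F_{i-1}$), split into the \textsf{Happy}/not-\textsf{Happy} cases, and in the \textsf{Happy} case invoke Lemma~\ref{lemma:unif_rand_color} together with $|M(\phi, y_{i-1})| \leq (1+\eps)\Delta$ to get $\P[\eta_i \in G_i] \geq \gamma/(1+\eps)$. Your treatment is more careful about the measurability and containment bookkeeping, but one small remark: for $i \geq 2$ you have $\theta = \blank$, so $G_i \subseteq M(\phi, y_{i-1}) \setminus \{\theta\} = M(\phi, y_{i-1})$ is automatic and there is no need to argue $\beta \notin G_i$ there.
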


\begin{proof}
    Note that $W_i \in \set{0, 1, \gamma/(1+\eps)}$.
    Furthermore, we have the following:
    \[\P[X_i = 1\mid Y_i = 1,\, y_{i-1}] \geq \frac{\gamma\Delta}{(1+\eps)\Delta} = \frac{\gamma}{1 + \eps},\]
    by Lemma~\ref{lemma:unif_rand_color} and since $|M(\phi, y_{i-1})| \leq (1+\eps)\Delta$.
    It follows that
    \[\E[W_i\mid Y_i,\, y_{i-1}] \geq Y_i \frac{\gamma}{1 + \eps} + \frac{\gamma (1 - Y_i)}{1+\eps} = \frac{\gamma}{1 + \eps}.\]
    As the above is independent of $Y_i$ and $y_{i-1}$, and since $W_i$ is independent of $W_1, \ldots, W_{i-1}$ given $Y_i$ and $y_{i-1}$, the proof is complete.
\end{proof}

Next, we will prove an upper bound on $\sum_{i = 1}^{t}Z_i$.

\begin{Lemma}\label{lemma:sum_Zi}
    $\sum_{i = 1}^{t}Z_i \leq \dfrac{1+\eps}{\eps - \gamma}$.
\end{Lemma}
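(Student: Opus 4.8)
## Proof Proposal for Lemma~\ref{lemma:sum_Zi}

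The plan is to bound $\sum_{i=1}^t Z_i$ by analyzing what it means for $y_{i-1}$ to fail to be \textsf{Happy}, i.e., for $Z_i = 1$. Recall $Z_i = 1$ exactly when $|G_i| < \gamma\Delta$. Since $M(\phi, y_{i-1}) = G_i \sqcup B_i$ and $|M(\phi, y_{i-1})| \geq \eps\Delta$ (because $y_{i-1}$ is incident to the uncolored edge $xy_0$ through the fan construction — more precisely, $M(\phi, y_{i-1})$ has size at least $\eps\Delta$ since the edge $xy_0 = e$ is uncolored and $y_{i-1} \in N_G(x)$), a failure at step $i$ forces $|B_i| \geq |M(\phi, y_{i-1})| - \gamma\Delta \geq \eps\Delta - \gamma\Delta = (\eps - \gamma)\Delta$. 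So each ``unhappy'' step contributes at least $(\eps-\gamma)\Delta$ bad colors.

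The key observation is that the bad sets $B_i$ are \emph{pairwise disjoint} and all contained in a set of size at most $(1+\eps)\Delta$. Disjointness: by definition, for $i \geq 2$, $B_i \subseteq M(\phi, y_{i-1}) \setminus \bigcup_{j=0}^{i-2} M(\phi, y_j)$, and for any $j' < i-1$ we have $B_{j'+1} \subseteq M(\phi, y_{j'})$, so $B_{j'+1}$ is disjoint from $B_i$; the case involving $B_1$ is handled the same way since $B_1 \subseteq M(\phi, y_0)$. Containment: every color in $B_i$ is by definition \emph{not} in $M(\phi, x)$, hence lies in $[q] \setminus M(\phi, x)$, which has size $q - |M(\phi,x)| \le q = (1+\eps)\Delta$; in fact one can use the cleaner bound $|[q] \setminus M(\phi,x)| \le \Delta$, but $(1+\eps)\Delta$ suffices. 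Therefore
\[
\sum_{i=1}^t |B_i| \;=\; \Bigl|\bigsqcup_{i=1}^t B_i\Bigr| \;\leq\; (1+\eps)\Delta.
\]

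Combining the two estimates: since $Z_i = 1$ implies $|B_i| \geq (\eps - \gamma)\Delta$, we get
\[
(\eps - \gamma)\Delta \sum_{i=1}^t Z_i \;\leq\; \sum_{i=1}^t |B_i| \;\leq\; (1+\eps)\Delta,
\]
and dividing by $(\eps-\gamma)\Delta$ yields $\sum_{i=1}^t Z_i \leq \frac{1+\eps}{\eps - \gamma}$, as claimed. (Here one needs $\gamma < \eps$, which is part of the standing assumption on $\gamma$.)

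The only subtle point — the ``main obstacle'' — is justifying the two structural facts about the $B_i$ precisely: that $|M(\phi, y_{i-1})| \geq \eps\Delta$ for every $i$ that occurs in the run (this uses that $y_{i-1}$ is a genuine neighbor of $x$ appearing in the fan, and that $x$ is incident to the uncolored edge $e = xy$, so $y_{i-1}$'s edge to $x$ is uncolored when $i=1$, or one appeals directly to $\deg(y_{i-1}; \phi) \le \Delta$ together with the fact that $y_{i-1}y_{\text{something}}$... — actually the simplest route is: $|M(\phi,v)| = q - \deg_G(v) \geq (1+\eps)\Delta - \Delta = \eps\Delta$ for \emph{every} vertex $v$, which needs no special hypothesis at all), and the disjointness of the $B_i$, which is immediate from unwinding the definitions. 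Everything else is a one-line counting argument. I would write the disjointness and the size bound as two short displayed claims, then close with the three-line computation above.
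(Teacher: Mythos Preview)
Your proof is correct and follows essentially the same approach as the paper: show the $B_i$ are pairwise disjoint subsets of $[q]$ (hence $\sum_i |B_i| \le (1+\eps)\Delta$), observe that $Z_i = 1$ forces $|B_i| > (\eps-\gamma)\Delta$, and combine. One tiny imprecision: the decomposition $M(\phi, y_{i-1}) = G_i \sqcup B_i$ is off by the single color $\beta$ when $i=1$ and $\beta \neq \blank$ (there $G_1 \sqcup B_1 = M(\phi,y_0)\setminus\{\beta\}$), but this does not affect the bound since $y_0$ is incident to the uncolored edge $e$ and hence $|M(\phi,y_0)| \ge \eps\Delta + 1$; the paper glosses over the same point.
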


\begin{proof}
    Note the following for $i \geq 2$:
    \[B_1 = M(\phi, y_0) \setminus \left(\set{\beta} \cup M(\phi, x)\right), \qquad B_i = M(\phi, y_{i-1}) \setminus \left(\set{\beta} \cup M(\phi, x) \cup \left(\bigcup\limits_{j = 0}^{i-2}M(\phi, y_j)\right)\right).\]
    In particular, since $B_j \subseteq M(\phi, y_{j-1})$, it follows that the sets $B_i$ are pairwise disjoint for $1 \leq i \leq t$. 
    Therefore, we may conclude
    \begin{align}\label{eq:ub_Bi}
        \sum_{i = 1}^{t}|B_i| \leq (1+\eps)\Delta.
    \end{align}
    Note that if $Z_i = 1$, by the definition of \textsf{Happy} we have
    \[|B_i| > |M(\phi, y_{i-1})| - \gamma \Delta \geq (\eps - \gamma)\Delta.\]
    Since $Z_i \in \set{0, 1}$ we may conclude
    \begin{align}\label{eq:lb_Bi}
        \sum_{i = 1}^{t}|B_i| \,\geq\, \sum_{i = 1}^{t}Z_i|B_i| \,\geq\, (\eps - \gamma)\Delta \sum_{i = 1}^{t}Z_i.
    \end{align}
    Putting together \eqref{eq:ub_Bi} and \eqref{eq:lb_Bi} completes the proof.
\end{proof}

Note that if $X_i = 1$, then the algorithm succeeds during the $i$-th iteration.
Therefore, it is enough to show that $\sum_{i = 1}^{\kmax}X_i > 0$ with high probability.
In light of Lemma~\ref{lemma:sum_Zi}, we have
\[\sum_{i = 1}^{\kmax}W_i > \frac{\gamma}{\eps - \gamma} \implies \sum_{i = 1}^{\kmax}X_i > 0.\]
It is now sufficient to consider the following:
\[\P\left[\sum_{i = 1}^{\kmax}W_i \leq \frac{\gamma}{\eps - \gamma}\right].\]
We shall employ the following concentration inequality due to Kuszmaul and Qi \cite{azuma}, which is a special case of their version of multiplicative Azuma's inequality for supermartingales:

\begin{theo}[{Kuszmaul--Qi \cite[Corollary 6]{azuma}}]\label{theo:azuma_supermartingale}
    Let $c > 0$ and let $\tilde X_1$, \ldots, $\tilde X_n$ be 
    random variables taking values in $[0,c]$.
    Suppose that $\E[\tilde X_i\mid \tilde X_1, \ldots, \tilde X_{i-1}] \leq a_i$ for all $i$.
    Let $\mu \defeq \sum_{i = 1}^na_i$. Then, for any $\delta > 0$,
    \[\P\left[\sum_{i = 1}^n\tilde X_i \geq (1+\delta)\mu\right] \,\leq\, \exp\left(-\frac{\delta^2\mu}{(2+\delta)c}\right).\]
\end{theo}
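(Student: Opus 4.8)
The plan is to prove this by the exponential-moment (Chernoff) method; the only subtlety is that the $\tilde X_i$ are not assumed independent, so the hypothesis on conditional expectations has to be exploited through the tower property rather than by factoring a joint moment generating function.

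First I would fix $t > 0$ and, writing $S_n \defeq \sum_{i=1}^n \tilde X_i$, apply Markov's inequality to $e^{tS_n}$ to obtain
\[ \P\left[S_n \ge (1+\delta)\mu\right] \,\le\, e^{-t(1+\delta)\mu}\,\E\!\left[e^{tS_n}\right]. \]
To control $\E[e^{tS_n}]$ I would peel off one variable at a time: writing $e^{tS_n} = e^{tS_{n-1}}\,e^{t\tilde X_n}$ and conditioning on $\tilde X_1, \dots, \tilde X_{n-1}$ (on which $S_{n-1}$ depends measurably), convexity of $x \mapsto e^{tx}$ on $[0,c]$ gives the chord estimate $e^{tx} \le 1 + \tfrac{x}{c}(e^{tc}-1)$, so that, using the hypothesis $\E[\tilde X_n \mid \tilde X_1, \dots, \tilde X_{n-1}] \le a_n$ together with $1+y \le e^y$,
\[ \E\!\left[e^{t\tilde X_n} \mid \tilde X_1, \dots, \tilde X_{n-1}\right] \,\le\, 1 + \frac{a_n}{c}\left(e^{tc}-1\right) \,\le\, \exp\!\left(\frac{a_n}{c}\left(e^{tc}-1\right)\right). \]
Iterating this bound with the tower property yields $\E[e^{tS_n}] \le \exp\!\big(\tfrac{\mu}{c}(e^{tc}-1)\big)$, and hence
\[ \P\left[S_n \ge (1+\delta)\mu\right] \,\le\, \exp\!\left(\frac{\mu}{c}\left(e^{tc}-1\right) - t(1+\delta)\mu\right). \]

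Next I would optimize over $t$. Substituting $s \defeq tc > 0$ turns the exponent into $\tfrac{\mu}{c}\big(e^s - 1 - s(1+\delta)\big)$, which is minimized at $s = \ln(1+\delta)$ (positive since $\delta > 0$), giving
\[ \P\left[S_n \ge (1+\delta)\mu\right] \,\le\, \exp\!\left(\frac{\mu}{c}\left(\delta - (1+\delta)\ln(1+\delta)\right)\right). \]
The proof is then completed by the elementary inequality $(1+\delta)\ln(1+\delta) - \delta \ge \tfrac{\delta^2}{2+\delta}$, valid for all $\delta > 0$, which upgrades the above to the claimed bound $\exp\!\big(-\tfrac{\delta^2\mu}{(2+\delta)c}\big)$.

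The main obstacle — such as it is — is this last inequality, which I would verify by calculus: setting $h(\delta) \defeq (1+\delta)\ln(1+\delta) - \delta - \tfrac{\delta^2}{2+\delta}$, one has $h(0) = h'(0) = 0$ and $h''(\delta) = \tfrac{1}{1+\delta} - \tfrac{8}{(2+\delta)^3} \ge 0$, since $(2+\delta)^3 - 8(1+\delta)$ vanishes at $\delta = 0$ and has derivative $3(2+\delta)^2 - 8 > 0$ on $[0,\infty)$; thus $h$ is nonnegative on $[0,\infty)$. (Alternatively, this is the standard simplification of the Poisson/binomial Chernoff tail and may simply be cited.) The one other point I would flag is conceptual rather than technical: the argument never invokes independence — the supermartingale-type hypothesis $\E[\tilde X_i \mid \tilde X_1, \dots, \tilde X_{i-1}] \le a_i$ is precisely what makes the one-step moment generating function estimate compose cleanly under iterated conditioning, which is the sense in which this is an Azuma-type (rather than a plain Chernoff-type) statement.
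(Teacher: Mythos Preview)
Your proof is correct and follows the standard exponential-moment route: the chord bound for $e^{tx}$ on $[0,c]$, iterated conditioning via the tower property, optimization at $tc=\ln(1+\delta)$, and the elementary inequality $(1+\delta)\ln(1+\delta)-\delta\ge\delta^2/(2+\delta)$ are all handled cleanly, and your verification of the last inequality via $h''\ge 0$ is accurate.

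There is nothing to compare against, however: the paper does not prove this theorem at all. It is quoted verbatim as a black box from Kuszmaul and Qi (their Corollary~6) and then applied in the proof of Proposition~\ref{lemma:fan_number_of_tries}. So your write-up goes beyond what the paper does here, supplying a self-contained argument where the paper simply cites the literature.
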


We are now ready to prove Proposition~\ref{lemma:fan_number_of_tries}.

\begin{proof}[Proof of Proposition~\ref{lemma:fan_number_of_tries}]
    Consider the random variables $\tilde W_i \defeq 1 - W_i$ for $1 \leq i \leq \kmax$.
    As a result of Lemma~\ref{lemma:Wi}, we have
    \[\E[\tilde W_i\mid \tilde W_1, \ldots, \tilde W_{i-1}] \leq 1 - \frac{\gamma}{1 + \eps}.\]
    We may apply Theorem~\ref{theo:azuma_supermartingale} with 
    \[\tilde X_i = \tilde W_i, \quad c = 1, \quad \mu = \kmax\left(1 - \frac{\gamma}{1 + \eps}\right),\]
    to get
    \[\P\left[\sum_{i = 1}^{\kmax}\tilde W_i \geq (1+\delta)\mu\right] \,\leq\, \exp\left(-\frac{\delta^2\kmax(1 + \eps - \gamma)}{(2+\delta)(1+\eps)}\right).\]
    Note the following:
    \[\sum_{i = 1}^{\kmax}\tilde W_i \geq (1+\delta)\mu \iff \sum_{i = 1}^{\kmax}W_i \,\leq\, \kmax - (1+\delta)\mu \,=\, \kmax\left(\frac{\gamma(1 + \delta)}{1 + \eps} - \delta\right).\]
    Let $\gamma = \eps/2$ and $\delta = \eps/4$.
    We have
    \[\frac{\gamma(1 + \delta)}{1 + \eps} - \delta = \frac{\eps(2-\eps)}{8(1 + \eps)} \quad \text{and} \quad \frac{\gamma}{\eps - \gamma} = 1.\]
    We may now conclude the following for $\kmax \geq \frac{8(1+\eps)}{\eps(2- \eps)}$:
    \begin{align*}
        \P\left[\text{we reach Step~\ref{step:return}}\right] &\leq \P\left[\sum_{i = 1}^{\kmax}X_i = 0\right] \\
        &\leq \P\left[\sum_{i = 1}^{\kmax}W_i \leq 1\right] \\
        &\leq \P\left[\sum_{i = 1}^{\kmax}\tilde W_i \geq (1+\delta)\mu\right] \\
        &\leq \exp\left(-\frac{\eps^2\kmax(1 + \eps/2)}{16(2+\eps/4)(1+\eps)}\right) \\
        &\leq \exp\left(-\eps^2\kmax/100\right).
    \end{align*}
    The claim now follows as each run of the \textsf{while} loop is independent.
\end{proof}

\section{Analysis of the MSVA}\label{subsec: rmsva analysis}

In this section, we will bound the probability the \textsf{while} loop in Algorithm~\ref{alg:multi_viz_chain} terminates within $t$ iterations. 
We will assume the partial coloring $\phi$ is fixed for this section.
Additionally, we fix an ordering of the vertices of $G$.
Our main result is the following:

\begin{prop}\label{theo:num_iter_while}
    Let $\phi$ be a proper partial $(1+\eps)\Delta$-edge-coloring and let $e = xy$ be an uncolored edge. For any $t > 0$, $\kmax \geq 16/\eps$, and $\ell \geq 6400\kmax^4$, the {\upshape{\textsf{while}}} loop in Algorithm \ref{alg:multi_viz_chain} with input $(\phi, xy, x)$ terminates within $t$ iterations with probability at least $1 - \frac{8m}{\eps}\left(\frac{6400\kmax^4}{\ell}\right)^{t/2}$.
\end{prop}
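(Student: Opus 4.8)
The proof is an entropy‑compression argument in the style of \cite{bernshteyn2023fast}, but adapted to the present setting where the fans are built randomly and $\ell$ is only $\poly(1/\eps)$. The random inputs consumed by the \textsf{while} loop are of two kinds: the integers $\ell' \in [\ell, 2\ell-1]$ chosen on Step~\ref{step:random_choice}, and the random colors produced inside the calls to $\mathsf{RandomChain}$ (which in turn come from $\mathsf{RandomFan}$ and $\mathsf{RandomColor}$). The idea is to fix $t$ and suppose the loop runs for at least $t$ iterations; then I design an encoding of the execution — recording, for each iteration, whether it was a \textsf{forward} iteration or an \textsf{$r$‑backward} iteration, the value $r$ in the latter case, plus a bounded amount of "local" data needed to reconstruct the fans and paths produced on that iteration, together with a small amount of terminal data (the final chain's last edge, the colors $\gamma,\delta$, etc.) — and argue two things: (i) the number of distinct random‑input sequences of the relevant length that are consistent with a fixed encoding is small, and (ii) the total number of encodings arising from $t$ iterations is small. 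Comparing the size of the space of random inputs against the number of (encoding, consistent‑input) pairs forces $t$ to be bounded with the claimed probability.

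\medskip

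More concretely, here is the order I would carry this out. First, set up the bookkeeping: each iteration either appends a new Vizing chain (a \textsf{forward} step, encoded by a single bit) or backtracks by $r \geq 0$ steps (encoded by the bit and the number $r$; the sum of all backtrack amounts is at most the number of forward steps, which is a standard ballot‑type counting bound giving at most $4^t$ choices for the sequence of step‑types with their $r$‑values over $t$ iterations). Second, bound the amount of extra data needed per iteration to recover the actual chain: by Lemma~\ref{lemma:chain_lemma} and Lemma~\ref{lemma:non-intersecting_degrees}, each path segment $P_k$ is an $\alpha_k\beta_k$‑path of length in $[\ell, 2\ell-1]$ whose colors are determined by $\phi$ once we know its first edge and one of the two colors, and each fan has length at most $\kmax$; crucially, because $\mathsf{RandomFan}$ chooses its missing colors uniformly at random, reconstructing the fan requires recording roughly $\kmax$ "pointers," but each such pointer is into a set of at least $\eps\Delta$ (indeed, the exact number of missing colors) options — so this data is worth a factor of roughly $(\eps\Delta)^{\kmax}$ in the count of consistent inputs, which is exactly what will cancel the $\Delta$'s. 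Third, count the random inputs: over $t$ iterations the loop consumes one $\ell'$ per iteration (a factor of $\ell^t$) plus, inside the $\mathsf{RandomChain}$/$\mathsf{RandomFan}$ calls, a sequence of uniformly random missing colors; using Lemma~\ref{lemma:rand_color_runtime} (and Proposition~\ref{lemma:fan_number_of_tries} to control the number of restarts in $\mathsf{RandomFan}$) one shows that, outside of an exponentially small in $t$ bad event, the number of color‑draws per iteration is $O(\kmax)$, each drawn from $[q]$, contributing a factor of about $q^{O(\kmax t)}$. Fourth, assemble the inequality: the number of (encoding, consistent‑input) pairs is at most (roughly) $4^t \cdot m \cdot q \cdot (\text{terminal factors}) \cdot \Delta^{O(\kmax t)}$, while the number of random‑input sequences is about $\ell^t \cdot (\eps\Delta)^{\Theta(\kmax t)}$; for the event "$T \geq t$" to have probability more than the claimed bound one needs these to be comparable, and balancing the powers of $\Delta$ (this is where the exact missing‑color counts are used to keep the $\eps$‑dependence polynomial) yields $\ell \gtrsim \kmax^{O(1)}$; with the explicit choice $\ell \geq 6400\,\kmax^4$ the ratio of "bad" outcomes to all outcomes per two iterations is at most $6400\kmax^4/\ell \leq 1$, and a union bound over the $O(m/\eps)$ possible starting configurations of each segment gives the stated $1 - \frac{8m}{\eps}\left(\frac{6400\kmax^4}{\ell}\right)^{t/2}$.

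\medskip

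The main obstacle — and the place where this argument genuinely differs from \cite{bernshteyn2023fast} — is step three combined with the balancing in step four: I must account \emph{exactly} for how much entropy each randomly chosen missing color injects versus how much information the encoding must spend to later reconstruct that same fan, so that the powers of $\Delta$ match on the nose. A single surplus factor of $\Delta$ anywhere reintroduces a $\log\Delta$ (or worse) into the final runtime. Handling this requires: (a) using $|M(\phi,v)|$ rather than the crude lower bound $\eps\Delta$, so that the "$1/\eps$" factors stay polynomial rather than blowing up to $(1/\eps)^{\Theta(1/\eps)}$; (b) carefully treating the $\mathsf{RandomFan}$ restarts, which consume randomness without directly contributing to the chain, via Proposition~\ref{lemma:fan_number_of_tries} so the restart‑randomness is absorbed into an exponentially small failure term rather than into the main count; and (c) being careful that the "$2$" in the exponent $t/2$ comes from the fact that a \textsf{backward} iteration can undo the randomness of a \textsf{forward} iteration, so only half of the iterations can be counted on to make net progress — a subtlety that is already present in \cite{bernshteyn2023fast} and is handled there by the same device, so I would import that part essentially verbatim. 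The remaining steps (the ballot bound, the $O(1)$ terminal data, the reconstruction of alternating paths from $\phi$ and their first edges) are routine and follow \cite{bernshteyn2023fast} with only cosmetic changes.
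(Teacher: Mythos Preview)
Your high–level plan — encode each iteration by its type (forward or $r$-backward), bound the type-sequences by a ballot count $4^t$, record terminal data, and compare the number of encodings to the space of random inputs — is exactly the paper's route. However, your account of where the $\Delta$'s live and how they cancel does not match the paper's mechanism, and as you describe it the inequality would not close.

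In your fourth step you assert that the number of (encoding, consistent-input) pairs is roughly $4^t\cdot m\cdot \Delta^{O(\kmax t)}$ while the number of random-input sequences is roughly $\ell^t\cdot(\eps\Delta)^{\Theta(\kmax t)}$. There is no $\Delta^{O(\kmax t)}$ term anywhere in the paper's argument. The paper includes the full color sequence $\boldcol=(\col_0,\ldots,\col_t)$ \emph{as part of the record}; once the colors are known, each fan vertex is recovered as ``the neighbor of $x$ with color $\eta_j$,'' so the per-step reconstruction cost $\val(d_i)$ is at most $50\ell\kmax^3$ (Lemma~\ref{lemma:size_bounded_by_wt}), involving only $\kmax$ and $\ell$. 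One then writes $\P[I]=\P[\boldcol]\cdot\P[\ell_1,\ldots,\ell_t\mid\boldcol]$, bounds the latter by $\wt(D)/\ell^t$, and sums $\sum_{\boldcol}\P[\boldcol]\leq 1$: the color-randomness is simply passed through and contributes nothing to the count. The \emph{only} $\Delta$ in the whole proof is a single factor coming from the index $\tilde y\in[\Delta]$ needed to locate the starting edge $f=wz$ from the terminus. It is cancelled by a single factor of $2/(\eps\Delta)$, obtained (Lemma~\ref{lemma:prob_bound_wt}, case $k>0$) from the observation that the fixed color $\phi(\End(C))$ of the terminal edge must equal one of the last two entries of $\col_{t_{k^\star}}$ for a specific iteration $t_{k^\star}$ determined by $D$ and $S$ — a uniform draw from a missing set of size at least $\eps\Delta$. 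You do not identify this single-factor constraint, and your $\Delta^{O(\kmax t)}$ versus $(\eps\Delta)^{\Theta(\kmax t)}$ picture, even with exact $|M(\phi,v)|$ substituted in, leaves a residual ratio that does not collapse to $\poly(\kmax)$ per step. (Your second and fourth steps are also internally inconsistent: if the pointers you record in step two already determine the fan, there should be no $\Delta^{O(\kmax t)}$ left in step four.)

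A smaller discrepancy: you propose to invoke Proposition~\ref{lemma:fan_number_of_tries} inside the entropy-compression count to control the $\mathsf{RandomFan}$ restarts. The paper sidesteps this entirely by defining the input sequence $I$ to record only the colors drawn during the \emph{successful} pass of the fan loop; the restart randomness never enters the encoding, and Proposition~\ref{lemma:fan_number_of_tries} plays no role in the proof of Proposition~\ref{theo:num_iter_while} (it is used only later, in the runtime bound of \S\ref{section: sequential}).
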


Before we prove the above result, we show how it implies Theorem~\ref{theo:augment_small}, i.e., the existence of augmenting subgraphs with $O((\log n)/\epsilon^4)$ edges.

\begin{proof}[Proof of Theorem~\ref{theo:augment_small}]
    Let $\phi$ be a proper partial $q$-edge-coloring and let $e = xy$ be an uncolored edge.
    Set $\kmax = 20/\eps$ and $\ell = 10000/\kmax^4$ and run Algorithm~\ref{alg:multi_viz_chain} on the input $(\phi, e, x)$.
    With probability at least $1 - 1/\poly(n)$, the algorithm terminates within $t = \Theta(\log n)$ iterations.
    Consider such an outcome and let $C = F_0 + P_0 + \cdots + F_{k-1} + P_{k-1}$ be the chain returned.
    By construction, we have $\length(C) = O((\ell + \kmax)\,k)$ and $k \leq t$.
    It follows that for any uncolored edge $e$, there exists a $\phi$-happy multi-step Vizing chain $C$ such that $\length(C) = O((\log n)/\eps^4)$.
    Letting $H$ be the subgraph induced by $E(C)$ completes the proof.
\end{proof}

Consider running Algorithm~\ref{alg:multi_viz_chain} with input $(\phi, f, z)$ where $f \in E$ and $z \in f$.
We will employ a version of the entropy compression argument from \cite{bernshteyn2023fast, dhawan2024edge}.
To this end, we first define an \emphd{input sequence} in order to encode the execution of Algorithm~\ref{alg:multi_viz_chain}.
Let $I = (f, z, \col_0, \ell_1, \col_1, \ldots, \ell_t, \col_t)$ be an input sequence of length $t$.
Here, $\col_i$ contains the random colors chosen during the $i$-th iteration of the \textsf{while} loop (for $i = 0$, this corresponds to Step~\ref{step:first_chain}), and $\ell_i$ is the random choice made at Step~\ref{step:Pk}.
We remark that we only consider the random colors chosen during the successful execution of the \textsf{while} loop in Algorithm~\ref{alg:rand_fan}.
In particular, $\col_i$ contains at most $\kmax + 1$ entries.
Clearly, the execution of the first $t$ iterations of the \textsf{while} loop is uniquely determined by the input sequence $I$.
We let $\mathcal{I}^{(t)}$ be the set of all input sequences for which Algorithm~\ref{alg:multi_viz_chain} does not terminate within the first $t$ iterations.

\begin{defn}[Records and termini]\label{def:record}
    Let $I = (f, z, \col_0, \ell_1, \col_1, \ldots, \ell_t, \col_t) \in \mathcal{I}^{(t)}$.
    Consider running Algorithm~\ref{alg:multi_viz_chain} for the first $t$ iterations with this input sequence.
    We define the \emphd{record} of $I$ to be the tuple $(D(I), S(I))$, where $D(I) = (d_1, \ldots, d_t) \in \Z^t$ and $S(I) = (s_1, \ldots, s_t) \in \set{0, 1}^t$.
    For each $i$, we define the relevant parameters as follows:
    \[
        d_i \,\defeq\, \begin{cases}
            1 &\text{if we reach Step \ref{step:append}};\\
            j-k &\text{if we reach Step \ref{step:truncate_chain}}.
        \end{cases}, \qquad \text{and} \qquad s_i \,\defeq\, \begin{cases}
            1 &\text{if we reach Step \ref{step:beta_hopeful}};\\
            0 &\text{otherwise}.
        \end{cases}
    \]
    Note that $d_i \leq 0$ if we reach Step~\ref{step:truncate_chain}.
    Let $C$ be the multi-step Vizing chain at the end of the $t$-th iteration.
    The \emphd{terminus} of $I$ is the pair $\tau(I) = (\End(C), \vend(C))$ (in the case that $C = (f)$, we define $\vend(C)$ such that $\vend(C) \neq z$).
\end{defn}

Let $\mathcal{D}^{(t)}$ be the set of all tuples $D \in \Z^t$ such that $D = D(I)$ for some $I \in \mathcal{I}^{(t)}$, and define $\mathcal{S}^{(t)}$ similarly.
Given $D \in \mathcal{D}^{(t)}$, $S \in \mathcal{S}^{(t)}$, a pair $(uv, u)$ such that $uv \in E$, $\y \in [\Delta]$, and a sequence of colors $\boldcol = (\col_0, \ldots, \col_t)$, we let $\mathcal{I}^{(t)}(D, S, \y, uv, u, \boldcol)$ be the set of all input sequences $I = (f, z, \col_0, \ell_1, \col_1, \ldots, \ell_t, \col_t) \in \mathcal{I}^{(t)}$ satisfying the following:
\begin{itemize}
    \item the record $(D(I), S(I))$ is $(D, S)$, 
    \item the terminus $\tau(I)$ is $(uv, u)$, and
    \item for $f = wz$, the vertex $w$ is the $\y$-th neighbor of $z$ (here we use that the vertices are ordered).
\end{itemize}
The following functions will assist with our proof:
\begin{align*}
    \val(z) \,&\defeq\, \left\{\begin{array}{cc}
        2\kmax & \text{if } z = 1 \\
        28\kmax^3 & \text{if } z = 0 \\
        50\ell\kmax^3 & \text{if } z < 0
    \end{array}\right., &z \in \Z,\ z\leq 1, \\
    \wt(D) \,&\defeq\, \prod_{i = 1}^t\val(d_i), &D = (d_1, \ldots, d_t) \in \mathcal{D}^{(t)}.
\end{align*}

We begin with the following lemma.

\begin{Lemma}\label{lemma:chain_formation}
    Consider running Algorithm~\ref{alg:multi_viz_chain} on input sequence $I =(f, z, \col_0, \ell_1, \col_1, \ldots, \ell_t, \col_t) \in \mathcal{I}^{(t)}$.
    Let $C = F_0 + P_0 + \cdots + F_{k-1} + P_{k-1}$ and $F_k + P_k$ be the chain and candidate chain at the end of the $t$-th iteration, respectively.
    For each $0 \leq j \leq k$, given $D(I)$, we may determine $ 0 \leq t_j \leq t$ such that $F_j + Pj$ is an initial segment of the chain constructed during the $t_j$-th iteration.
\end{Lemma}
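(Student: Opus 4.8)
The plan is to reconstruct, using only the record $D(I)$, a bookkeeping of how the multi‑step chain $C$ and the candidate chain $F+P$ evolve, and in particular of which $\mathsf{RandomChain}$ call (counting the call at Step~\ref{step:first_chain} as ``iteration $0$'', consistently with the convention used for input sequences) produced each Vizing‑chain piece currently held by the algorithm.

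First I would reconstruct from $D(I)=(d_1,\ldots,d_t)$ the number $k_i$ of complete Vizing chains in $C$ at the start of iteration $i$. Since $I\in\mathcal I^{(t)}$, none of the first $t$ iterations returns, so each of them either reaches Step~\ref{step:append} (a \textsf{forward} iteration, where $d_i=1$ and the slot index increases by $1$) or Step~\ref{step:truncate_chain} (a \textsf{backward} iteration, where the slot index is reset to $k_i+d_i\le k_i$). Hence $k_1=0$ and $k_{i+1}=k_i+d_i$, so the whole sequence $k_1,\ldots,k_{t+1}$ (and $k=k_{t+1}$) is a function of $D(I)$. Because each $d_i$ is either $1$ or $\le 0$, this sequence climbs only in unit steps, so it passes through every integer between $0$ and its running maximum; in particular, for every $0\le j<k_{t+1}$ the quantity $i^\ast(j):=\max\{1\le i\le t:k_i=j\}$ is well defined and is itself a function of $D(I)$.

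Next I would track the \emph{provenance} of each stored piece. The key structural fact is that whenever the algorithm executes Step~\ref{step:Pk}, the pair $(F,P)$ it truncates is exactly the output of a single earlier $\mathsf{RandomChain}$ call: if the previous iteration was \textsf{forward} it is the output of Step~\ref{step:alpha_beta_order} of that iteration (or of Step~\ref{step:first_chain} when $i=1$), and if the previous iteration was \textsf{backward} to target $j'$ it is $(F_{j'},P')$, where by the very definition of $P'$ the chain $F_{j'}+P'$ coincides with the output chain of the $\mathsf{RandomChain}$ call that originally produced $F_{j'}$ (indeed $P'$ is the length‑$2\ell$ path from which $P_{j'}$ was obtained as an initial segment). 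Consequently the fan and the path stored in any slot always originate from one and the same $\mathsf{RandomChain}$ call, and the stored piece $F_j+P_j$ is always an initial segment of that call's output chain. I would formalize this by induction on the iteration count, maintaining a ``source iteration'' $\sigma_i\in\{0,1,\ldots,t\}$ of the candidate chain at the start of iteration $i$ together with a recorded source index for every occupied slot, under the rules: $\sigma_1=0$; at Step~\ref{step:Pk} of iteration $i$ the slot $k_i$ receives source $\sigma_i$; and $\sigma_{i+1}=i$ on a \textsf{forward} iteration, while $\sigma_{i+1}$ equals the already‑recorded source of slot $k_{i+1}$ on a \textsf{backward} iteration. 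Every one of these rules reads only the \textsf{forward}/\textsf{backward} pattern and the targets $k_{i+1}$, i.e.\ only $D(I)$; hence the source index of slot $j$ at the end of iteration $t$, which I take to be $t_j$, is a function of $D(I)$ — concretely $t_j=\sigma_{i^\ast(j)}$ for $j<k_{t+1}$, and $t_j=\sigma_{t+1}$ for $j=k_{t+1}$.

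It then remains to check that this $t_j$ is the one the statement wants. A slot is rewritten only at Step~\ref{step:Pk}, and only when the current index equals the slot's label, so by maximality of $i^\ast(j)$ the content of slot $j$ at the end of iteration $t$ is exactly what was written there at iteration $i^\ast(j)$ (for $j=k_{t+1}$ it is the final candidate chain), and by the provenance analysis this is an initial segment of the chain constructed during iteration $t_j$. The part I expect to be the main obstacle, and would write with the most care, is the provenance bookkeeping in the \textsf{backward} case: one must verify that after the truncation at Step~\ref{step:truncate_chain} the candidate chain $F_j+P'$ is \emph{literally identical} to the output chain of the $\mathsf{RandomChain}$ call that created $F_j$, rather than merely a chain sharing the same fan, so that the initial‑segment relation survives through arbitrarily many nested backward steps, and that the recursion defining the source indices is well founded (each backward target is written at an earlier or the current iteration, never a later one).
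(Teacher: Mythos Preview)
Your proposal is correct and follows essentially the same approach as the paper: both reconstruct the sequence of slot indices $k_i$ from $D(I)$ and locate the iteration that produced the chain currently occupying slot $j$. The paper is simply more terse — it writes down the closed formula $t_0=0$ and $t_j=\max\{l:d_l=1,\ \sum_{r\le l}d_r=j\}$ for $j\ge 1$ and dispatches correctness with a short contradiction (if the true source $l$ were earlier than this maximum, the chain length would have dropped below $j$ after $t_j$ and then climbed back, producing a later forward iteration with partial sum $j$), whereas you simulate the provenance map $\sigma_i$ explicitly; the two definitions coincide. Your caution about the backward case (that $F_j+P'$ is literally the earlier \textsf{RandomChain} output, so the initial-segment relation survives nested backtracks) is exactly the content the paper hides behind the word ``clearly''.
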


\begin{proof}
    Let $D(I) = (d_1, \ldots, d_t)$.
    Note that by definition of $D(I)$, the chain at the end of the $t'$-th iteration is a $k'$-step Vizing chain, where $k' = \sum_{i = 1}^{t'}d_i$.
    
    Clearly, $t_0 = 0$.
    For $1\leq j \leq k$, we define $t_j$ as follows:
    \[t_j \defeq \max\left\{l\,:\,1 \leq l \leq t,\, d_l = 1,\, \sum_{r = 1}^{l} d_r = j\right\}.\]
    In particular, the $t_j$-th iteration is the latest \textsf{forward} iteration such that the chain at the end of the iteration is a $j$-step Vizing chain.
    We claim that $F_j + P_j$ is an initial segment of the candidate chain at the end of this iteration.
    Let $\tilde F_i + \tilde P_i$ be the chain constructed during the $i$-th iteration for each $i \in [t]$.
    Clearly, there is some \textsf{forward} iteration $l$ where $F_j = \tilde F_l$ and $P_j \subseteq \tilde P_l$.
    Furthermore, at the end of this iteration, the current chain will be a $j$-step Vizing chain.
    If $t_j$ is not this iteration, there is some \textsf{backward} iteration $l' > t_j$ such that the chain after the $l'$-th iteration has length less than $j$. 
    This contradicts the definition of $t_j$ and completes the proof.
\end{proof}

Now we apply the above result to determine the colors on the paths constructed at each iteration.

\begin{Lemma}\label{lemma:all_colors}
    Consider running Algorithm~\ref{alg:multi_viz_chain} on input sequence $I =(f, z, \col_0, \ell_1, \col_1, \ldots, \ell_t, \col_t) \in \mathcal{I}^{(t)}$.
    Let $\tilde P_i$ be the path formed during the $i$-th call to Algorithm~\ref{alg:rand_chain}.
    The colors on the paths $\tilde P_i$ can be determined by the entries $D(I),\,S(I)$, and $\col_0, \ldots, \col_t$.
\end{Lemma}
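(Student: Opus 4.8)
The plan is to prove the statement by induction on $i$ from $0$ to $t$: I will show that, given $D(I)$, $S(I)$, and $\col_0,\ldots,\col_i$, one can recover the pair of colors $\{\gamma_i,\delta_i\}$ carried by $\tilde P_i$ (when $\tilde P_i$ is the single edge $(\End(\tilde F_i))$, where $\tilde F_i$ is the fan produced by the $i$-th call, there is no color to recover, and this is the only degenerate case). The engine of the induction is the remark that the input colors $(\alpha_i,\beta_i)$ fed into the $i$-th call of Algorithm~\ref{alg:rand_chain} always coincide with the colors of a path built during an \emph{earlier} call, and that $D(I)$ alone tells us which earlier call it is.

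First I would pin down $(\alpha_i,\beta_i)$. The case $i=0$ corresponds to Step~\ref{step:first_chain}, where $\alpha=\beta=\blank$, so there is nothing to track. For $i\geq 1$, the candidate path $P$ at the start of iteration $i$ is whatever the previous step left behind. If the previous step was Step~\ref{step:first_chain} (i.e.\ $i=1$) or a \textsf{forward} iteration ($d_{i-1}=1$), then $P=\tilde P_{i-1}$ and $\{\alpha_i,\beta_i\}=\{\gamma_{i-1},\delta_{i-1}\}$. If the previous step was a \textsf{backward} iteration ($d_{i-1}\leq 0$), then on Step~\ref{step:truncate_chain} we reverted to level $j=\sum_{r=1}^{i-1}d_r$ and set $P\gets P'$, the length-$2\ell$ path of which $P_j$ is an initial segment; applying Lemma~\ref{lemma:chain_formation} to the chain present at the end of iteration $i-1$, the piece $F_j+P_j$ — hence also $P'$, which uses the same two colors — is an initial segment of the chain built during iteration $t_j$, where $t_j=\max\{\,l\leq i-1:\ d_l=1,\ \sum_{r=1}^{l}d_r=j\,\}$ is a function of $D(I)$ only. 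So $\{\alpha_i,\beta_i\}=\{\gamma_{t_j},\delta_{t_j}\}$, which is known by the inductive hypothesis because $t_j\leq i-1$. In all cases $\{\alpha_i,\beta_i\}$ is determined by $D(I)$ together with the already-reconstructed earlier path colors.

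Next I would read off $\{\gamma_i,\delta_i\}$ from $s_i$ and $\col_i$ (for $i=0$ there is no $s_0$, but one checks directly that the $\delta=\beta$ and happy-fan branches of Algorithm~\ref{alg:rand_chain} are impossible there, so only the last subcase below occurs). If $s_i=1$, iteration $i$ returned at Step~\ref{step:beta_hopeful}, and then Lemma~\ref{lemma:chain_lemma} (the case $\{\gamma,\delta\}=\{\alpha,\beta\}$) gives $\{\gamma_i,\delta_i\}=\{\alpha_i,\beta_i\}$, while $\col_i$ consists only of the colors chosen during the successful run of Algorithm~\ref{alg:rand_fan}. If $s_i=0$, then either $\tilde F_i$ is $\phi$-happy — so $\tilde P_i=(\End(\tilde F_i))$ and there is nothing to recover — or the branch at Step~\ref{step:choose_gamma} was executed, in which case $\col_i$ is the list of colors chosen in the successful run of Algorithm~\ref{alg:rand_fan} followed by one extra entry equal to $\gamma_i$; then $\gamma_i$ is the last entry of $\col_i$, the color $\delta_i$ returned by Algorithm~\ref{alg:rand_fan} (the one triggering its return) is the second-to-last entry, and Lemma~\ref{lemma:chain_lemma} (the disjoint cases) gives that $\tilde P_i$ is a $\gamma_i\delta_i$-path. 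To distinguish these two $s_i=0$ subcases — equivalently, to know how to parse $\col_i$ — I would use that $\tilde F_i$ can be $\phi$-happy only when $i=t$: if $i<t$ (or $i=0$), a single-edge $\tilde P_i$ would have length $1<2\ell$, and iteration $i+1$ (resp.\ iteration $1$) would exit at Step~\ref{step:success}, contradicting $I\in\mathcal{I}^{(t)}$; and for $i=t$ the only residual ambiguity concerns a single-edge path, for which ``the colors on $\tilde P_t$'' is vacuous.

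The hard part will be the \textsf{backward} case in the second paragraph: one must verify with care that the level $j$ and the index $t_j$ depend on $D(I)$ alone (never on the unknown truncation lengths $\ell_i$), and that the reverted path $P'$ genuinely inherits the color pair of $\tilde P_{t_j}$ — this is exactly where Lemma~\ref{lemma:chain_formation} is invoked, applied to prefixes of the run rather than to the full length-$t$ run. A secondary nuisance is the bookkeeping of the last step, namely splitting $\col_i$ into the portion generated by Algorithm~\ref{alg:rand_fan} and the possible trailing color from Step~\ref{step:choose_gamma}; this is what forces us to record $S(I)$ and to observe that an empty path (a happy fan) cannot occur before the final iteration.
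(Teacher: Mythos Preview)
Your approach is the same as the paper's --- induction on $i$, with the backward case handled via Lemma~\ref{lemma:chain_formation} --- and you are in fact more careful than the paper about separating the happy-fan branch from the $\gamma$-branch when $s_i=0$.

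One small correction: the claim ``$\tilde F_i$ can be $\phi$-happy only when $i=t$'' is not quite right. Your justification assumes that the candidate path at the start of iteration $i+1$ is $\tilde P_i$, but that holds only when iteration $i$ is \textsf{forward}. If $d_i\leq 0$, the candidate at iteration $i+1$ is $F_j+P'$ coming from an earlier step, so a single-edge $\tilde P_i$ does not force iteration $i+1$ to exit at Step~\ref{step:success}; thus a happy fan with $i<t$ and $d_i\leq 0$ is perfectly possible. The fix is immediate and already implicit in your setup: your induction only ever consumes the colors of $\tilde P_{i-1}$ when $d_{i-1}=1$, and of $\tilde P_{t_j}$ where $t_j$ is (by Lemma~\ref{lemma:chain_formation}) either $0$ or a \textsf{forward} iteration --- exactly the cases where your exit argument \emph{does} apply and rules out the happy branch. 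In the remaining case (happy fan with $d_i\leq 0$) there are no colors on $\tilde P_i$ to recover and none are needed downstream, so the induction goes through unchanged.
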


\begin{proof}
    We will prove this by induction on $i$.
    For $i = 0$, the colors on the path $\tilde P_0$ are the last two entries in $\col_0$.
    Suppose we have determined the colors on the paths $\tilde P_j$ for all $0 \leq j < i$.
    Consider the $i$-th iteration of the \textsf{while} loop.
    Let $C,\,F+P$ be the chain and the candidate chain at the start of this iteration and let $\tilde F + \tilde P$ be the chain returned at Step~\ref{step:alpha_beta_order}.
    If $\same_i = 0$, then the colors on the path $\tilde P$ are once again the last two entries in $\col_i$.
    Suppose $\same_i = 1$.
    Then, $\tilde P$ and $P$ have the same colors.
    Note that $P = \tilde P_j$ for some $j < i$.
    It is therefore enough to determine $j$ as we may then apply the induction hypothesis.
    The proof now follows by Lemma~\ref{lemma:chain_formation}.
\end{proof}

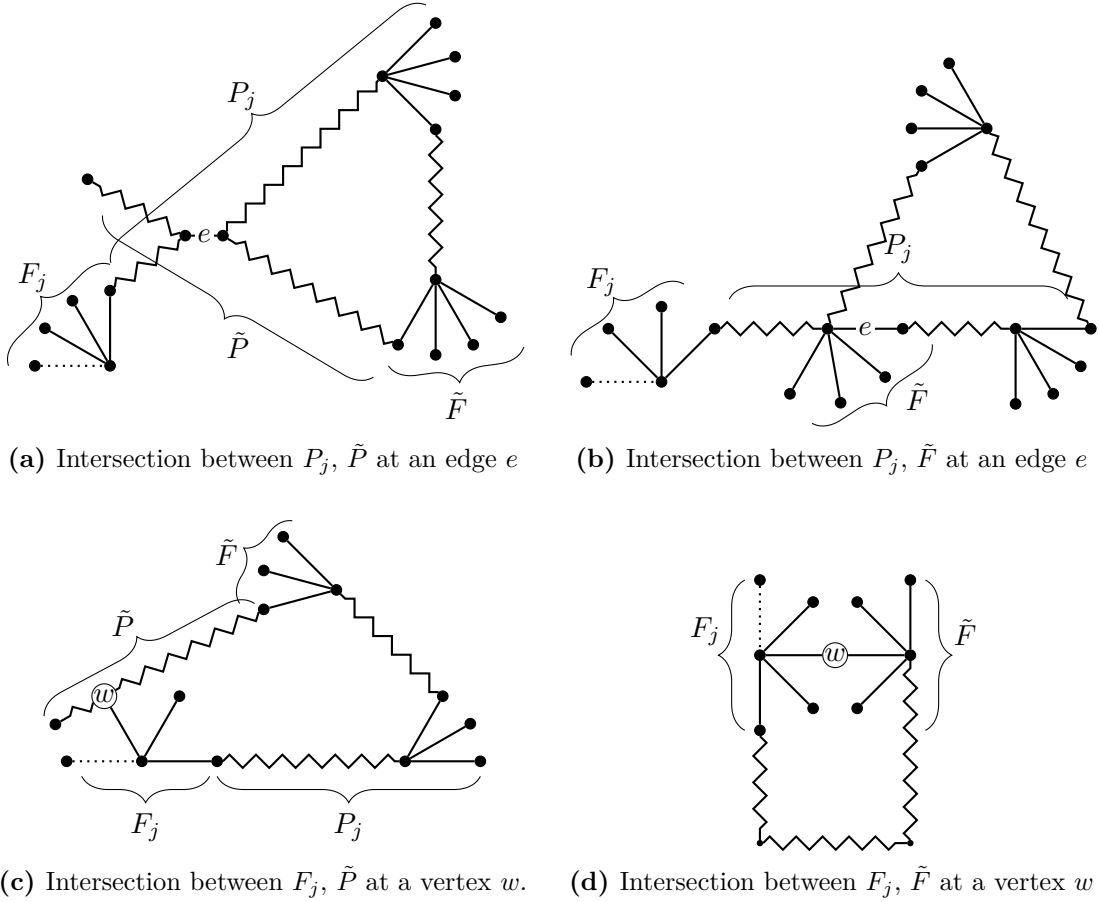
\begin{figure}[htb!]
    \begin{subfigure}[t]{0.45\textwidth}
        \centering
    	\begin{tikzpicture}
    	    \node[circle,fill=black,draw,inner sep=0pt,minimum size=4pt] (a) at (0,0) {};
    		\node[circle,fill=black,draw,inner sep=0pt,minimum size=4pt] (b) at (1,0) {};
    		\path (b) ++(150:1) node[circle,fill=black,draw,inner sep=0pt,minimum size=4pt] (c) {};
    		\path (b) ++(120:1) node[circle,fill=black,draw,inner sep=0pt,minimum size=4pt] (d) {};
    		\path (b) ++(90:1) node[circle,fill=black,draw,inner sep=0pt,minimum size=4pt] (e) {};
    		
    		\draw[decoration={brace,amplitude=10pt},decorate] (-0.35,0) -- node [midway,above,yshift=5pt,xshift=-10pt] {$F_j$} (1,1.35);
    		
    		\path (b) ++(60:2) node[circle,fill=black,draw,inner sep=0pt,minimum size=4pt] (f) {};
    		\path (f) ++(0:0.5) node[circle,fill=black,draw,inner sep=0pt,minimum size=4pt] (g) {};
    		
    		\draw[decoration={brace,amplitude=10pt},decorate] (1.01,1.37) -- node [midway,above,yshift=5pt,xshift=-10pt] {$P_j$} (5.2,4.8);
    		
    		\path (g) ++(45:3) node[circle,fill=black,draw,inner sep=0pt,minimum size=4pt] (h) {};
    		\path (h) ++(15:1) node[circle,fill=black,draw,inner sep=0pt,minimum size=4pt] (i) {};
    		\path (h) ++(45:1) node[circle,fill=black,draw,inner sep=0pt,minimum size=4pt] (j) {};
    		\path (h) ++(-15:1) node[circle,fill=black,draw,inner sep=0pt,minimum size=4pt] (k) {};
    		\path (h) ++(-45:1) node[circle,fill=black,draw,inner sep=0pt,minimum size=4pt] (l) {};
    		
    		\draw[decoration={brace,amplitude=10pt,mirror},decorate] (4.7,0) -- node [midway,below,yshift=-10pt,xshift=0pt] {$\tilde F$} (6.5,0.25);
    		
    		\path (l) ++(270:2) node[circle,fill=black,draw,inner sep=0pt,minimum size=4pt] (m) {};
    		\path (m) ++(270:1) node[circle,fill=black,draw,inner sep=0pt,minimum size=4pt] (n) {};
    		\path (m) ++(-60:1) node[circle,fill=black,draw,inner sep=0pt,minimum size=4pt] (o) {};
    		\path (m) ++(-30:1) node[circle,fill=black,draw,inner sep=0pt,minimum size=4pt] (p) {};
    		\path (m) ++(-120:1) node[circle,fill=black,draw,inner sep=0pt,minimum size=4pt] (q) {};
    		
    		\draw[decoration={brace,amplitude=10pt,mirror}, decorate] (0.9, 2) -- node [midway,below,xshift=0pt,yshift=-10pt] {$\tilde P$} (4.5,-0.15);
    		
    		\path (f) ++(150:1.5) node[circle,fill=black,draw,inner sep=0pt,minimum size=4pt] (r) {};
    		
    		\draw[thick,dotted] (a) -- (b);
    		\draw[thick] (f) to node[midway,inner sep=1pt,outer sep=1pt,minimum size=4pt,fill=white] {$e$} (g);
    		\draw[thick, snake=zigzag] (e) -- (f) -- (r) (g) -- (h) (l) -- (m) (q) -- (g);
    		\draw[ thick] (b) -- (c) (b) -- (d) (b) -- (e) (h) --  (i) (h) -- (j) (h) - -(k) (h) -- (l) (m) -- (n) (m) -- (o) (m) -- (p) (m) -- (q);
    		
    	\end{tikzpicture}
    	\caption{Intersection between $P_j,\, \tilde P$ at an edge $e$}\label{fig:Viz_path_path_intersect}
    \end{subfigure}
    \begin{subfigure}[t]{0.45\textwidth}
        \centering
    	\begin{tikzpicture}
    	    \node[circle,fill=black,draw,inner sep=0pt,minimum size=4pt] (a) at (0,0) {};
    		\node[circle,fill=black,draw,inner sep=0pt,minimum size=4pt] (b) at (1,0) {};
    		\path (b) ++(135:1) node[circle,fill=black,draw,inner sep=0pt,minimum size=4pt] (c) {};
    		\path (b) ++(90:1) node[circle,fill=black,draw,inner sep=0pt,minimum size=4pt] (d) {};
    		\path (b) ++(45:1) node[circle,fill=black,draw,inner sep=0pt,minimum size=4pt] (e) {};
    		
    		\draw[decoration={brace,amplitude=10pt},decorate] (-0.2,0.35) -- node [midway,above,yshift=5pt,xshift=-10pt] {$F_j$} (1.3, 1.3);
    		
    		\path (e) ++(0:1.5) node[circle,fill=black,draw,inner sep=0pt,minimum size=4pt] (f) {};
    		\path (f) ++(0:1) node[circle,fill=black,draw,inner sep=0pt,minimum size=4pt] (g) {};
    		\path (g) ++(0:1.5) node[circle,fill=black,draw,inner sep=0pt,minimum size=4pt] (h) {};
    		
    		\draw[decoration={brace,amplitude=10pt}, decorate] (1.9, 1.1) -- node [midway,above,xshift=0pt,yshift=10pt] {$P_j$} (6.4,1.1);
    		
    		\path (h) ++(-30:1) node[circle,fill=black,draw,inner sep=0pt,minimum size=4pt] (i) {};
    		\path (h) ++(-60:1) node[circle,fill=black,draw,inner sep=0pt,minimum size=4pt] (j) {};
    		\path (h) ++(-90:1) node[circle,fill=black,draw,inner sep=0pt,minimum size=4pt] (k) {};
    		\path (h) ++(0:1) node[circle,fill=black,draw,inner sep=0pt,minimum size=4pt] (l) {};

    		\path (f) ++(-120:1) node[circle,fill=black,draw,inner sep=0pt,minimum size=4pt] (m) {};
    		\path (f) ++(-80:1) node[circle,fill=black,draw,inner sep=0pt,minimum size=4pt] (n) {};
    		\path (f) ++(-40:1) node[circle,fill=black,draw,inner sep=0pt,minimum size=4pt] (o) {};

    		\draw[decoration={brace,amplitude=10pt,mirror},decorate] (3,-0.55) -- node [midway,yshift=-5pt,xshift=17pt] {$\tilde F$} (4.6, 0.5);
    		
    		\path (f) ++(60:2.5) node[circle,fill=black,draw,inner sep=0pt,minimum size=4pt] (p) {};
    		\path (p) ++(30:1) node[circle,fill=black,draw,inner sep=0pt,minimum size=4pt] (q) {};
    		\path (q) ++(180:1) node[circle,fill=black,draw,inner sep=0pt,minimum size=4pt] (r) {};
    		\path (q) ++(150:1) node[circle,fill=black,draw,inner sep=0pt,minimum size=4pt] (s) {};
    		\path (q) ++(120:1) node[circle,fill=black,draw,inner sep=0pt,minimum size=4pt] (t) {};

    		\draw[thick,dotted] (a) -- (b);
    		\draw[thick] (f) to node[midway,inner sep=1pt,outer sep=1pt,minimum size=4pt,fill=white] {$e$} (g);
    		\draw[thick, snake=zigzag] (e) -- (f) (g) -- (h) (f) -- (p) (q) -- (l);
    		\draw[ thick] (b) -- (c) (b) -- (d) (b) -- (e) (h) -- (i) (h) -- (j) (h) -- (k) (h) -- (l) (f) -- (m) (f) -- (n) (f) -- (o) (p) -- (q) -- (r) (q) -- (s) (q) -- (t);

    	\end{tikzpicture}
    	\caption{Intersection between $P_j,\,\tilde F$ at an edge $e$}\label{fig:Viz_path_fan_intersect}
    \end{subfigure}

    \vspace{15pt}
    
    \begin{subfigure}[b]{0.45\textwidth}
        \centering
    	\begin{tikzpicture}
    	    \node[circle,fill=black,draw,inner sep=0pt,minimum size=4pt] (a) at (0,0) {};
    		\node[circle,fill=black,draw,inner sep=0pt,minimum size=4pt] (b) at (1,0) {};
    		\path (b) ++(120:1) node[circle,draw,inner sep=0pt,minimum size=4pt] (c) {$w$};
    		\path (b) ++(60:1) node[circle,fill=black,draw,inner sep=0pt,minimum size=4pt] (d) {};
    		\path (b) ++(0:1) node[circle,fill=black,draw,inner sep=0pt,minimum size=4pt] (e) {};
    		
    		\draw[decoration={brace,amplitude=10pt,mirror}, decorate] (0.2, -0.2) -- node [midway,below,xshift=0pt,yshift=-10pt] {$F_j$} (1.9,-0.2);
    		
    		\path (e) ++(0:2.5) node[circle,fill=black,draw,inner sep=0pt,minimum size=4pt] (f) {};
    		\path (f) ++(0:1) node[circle,fill=black,draw,inner sep=0pt,minimum size=4pt] (g) {};
    		\path (f) ++(30:1) node[circle,fill=black,draw,inner sep=0pt,minimum size=4pt] (h) {};
    		\path (f) ++(60:1) node[circle,fill=black,draw,inner sep=0pt,minimum size=4pt] (i) {};
    		
    		\draw[decoration={brace,amplitude=10pt,mirror}, decorate] (2, -0.2) -- node [midway,below,xshift=0pt,yshift=-10pt] {$P_j$} (5.5,-0.2);
    		
    		\path (i) ++(135:2) node[circle,fill=black,draw,inner sep=0pt,minimum size=4pt] (j) {};
    		\path (j) ++(135:1) node[circle,fill=black,draw,inner sep=0pt,minimum size=4pt] (k) {};
    		\path (j) ++(165:1) node[circle,fill=black,draw,inner sep=0pt,minimum size=4pt] (l) {};
    		\path (j) ++(195:1) node[circle,fill=black,draw,inner sep=0pt,minimum size=4pt] (m) {};
    		
    		\draw[decoration={brace,amplitude=10pt}, decorate] (2.3, 2.1) -- node [midway,above,xshift=-15pt,yshift=-5pt] {$\tilde F$} (3,3.2);
    		
    		\path (c) ++(-150:0.75) node[circle,fill=black,draw,inner sep=0pt,minimum size=4pt] (n) {};
    		
    		\draw[decoration={brace,amplitude=10pt},decorate] (-0.3,0.6) -- node [midway,above,yshift=5pt,xshift=-10pt] {$\tilde P$} (2.5,2.1);

    		\draw[thick,dotted] (a) to (b);
    		\draw[thick, snake=zigzag] (e) -- (f) (i) -- (j) (m) -- (c) (c) -- (n);
    		\draw[thick] (b) -- (c) (b) -- (d) (b) -- (e) (f) -- (g) (f) -- (h) (f) -- (i) (j) -- (k) (j) -- (l) (j) -- (m);
    		
    	\end{tikzpicture}
    	
    	\caption{Intersection between $F_j,\, \tilde P$ at a vertex $w$.}\label{fig:Viz_fan_path_intersect}
    \end{subfigure}
    \begin{subfigure}[b]{0.45\textwidth}
        \centering
    	\begin{tikzpicture}
    	    \node[circle,fill=black,draw,inner sep=0pt,minimum size=4pt] (a) at (0,0) {};
    		\node[circle,fill=black,draw,inner sep=0pt,minimum size=4pt] (b) at (0,-1) {};
    		\path (b) ++(45:1) node[circle,fill=black,draw,inner sep=0pt,minimum size=4pt] (c) {};
    		\path (b) ++(0:1) node[circle,draw,inner sep=0pt,minimum size=4pt] (d) {$w$};
    		\path (b) ++(-45:1) node[circle,fill=black,draw,inner sep=0pt,minimum size=4pt] (e) {};
    		\path (b) ++(-90:1) node[circle,fill=black,draw,inner sep=0pt,minimum size=4pt] (f) {};
    		
    		\draw[decoration={brace,amplitude=10pt},decorate] (-0.2,-2) -- node [midway,above,yshift=0pt,xshift=-15pt] {$F_j$} (-0.2, 0);
    		
    		\path (f) ++(-90:1.5) node[circle,fill=black,draw,inner sep=0pt,minimum size=2pt] (g) {};
    		\path (g) ++(0:2) node[circle,fill=black,draw,inner sep=0pt,minimum size=2pt] (h) {};
    		\path (h) ++(90:2.5) node[circle,fill=black,draw,inner sep=0pt,minimum size=4pt] (i) {};
    		
    		\draw[decoration={brace,amplitude=10pt},decorate] (2.2,0) -- node [midway,above,yshift=0pt,xshift=15pt] {$\tilde F$} (2.2, -2);
    		
    		\path (i) ++(90:1) node[circle,fill=black,draw,inner sep=0pt,minimum size=4pt] (j) {};
    		\path (i) ++(135:1) node[circle,fill=black,draw,inner sep=0pt,minimum size=4pt] (k) {};
    		\path (i) ++(-135:1) node[circle,fill=black,draw,inner sep=0pt,minimum size=4pt] (l) {};
    		
    		\draw[thick,dotted] (a) to (b);
    		\draw[thick, snake=zigzag] (f) -- (g) -- (h) -- (i);
    		\draw[thick] (b) -- (c) (b) -- (d) (b) -- (e) (b) -- (f) (i) -- (j) (i) -- (k) (i) -- (l) (i) -- (d);
    		
    	\end{tikzpicture}
    	
    	\caption{Intersection between $F_j,\, \tilde F$ at a vertex $w$}\label{fig:Viz_fan_fan_intersect}
    \end{subfigure}
    \caption{Intersecting Vizing Chains at Step \ref{step:truncate_chain}.}
    \label{fig:Viz_intersect}
\end{figure}

The next two lemmas constitute the heart of our argument.
First, we prove an upper bound on $|\mathcal{I}^{(t)}(D, S, \y, uv, u, \boldcol)|$.

\begin{Lemma}\label{lemma:size_bounded_by_wt}
    Let $D \in \mathcal{D}^{(t)}$, $S \in \mathcal{S}^{(t)}$, $(uv, u)$ be such that $uv \in E$, $\y \in [\Delta]$, and let $\boldcol = (\col_0, \ldots, \col_t)$ be a sequence of colors.
    For $\ell \geq \kmax \geq 4$, we have $|\mathcal{I}^{(t)}(D, S, \y, uv, u, \boldcol)| \leq \wt(D)$.
\end{Lemma}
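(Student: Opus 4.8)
\textbf{Proof plan for Lemma~\ref{lemma:size_bounded_by_wt}.}

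The plan is to show that, given the record $(D,S)$, the terminus $(uv,u)$, the neighbor index $\y$, and the color sequence $\boldcol$, one can reconstruct an input sequence $I \in \mathcal{I}^{(t)}(D,S,\y,uv,u,\boldcol)$ by running Algorithm~\ref{alg:multi_viz_chain} \emph{backwards}, from the final state to the initial state, and that the only ambiguity encountered at the $i$-th backward step is controlled by $\val(d_i)$. Multiplying these ambiguities over $i = 1,\dots,t$ yields the bound $\wt(D) = \prod_{i=1}^t \val(d_i)$, which is exactly the claim. The data $\boldcol$ together with $D(I)$ and $S(I)$ already determines, via Lemma~\ref{lemma:all_colors}, all the colors on all the paths $\tilde P_i$ constructed during the execution; so the missing information we must recover at each step is purely combinatorial (where chains intersect, and where paths were truncated), and the weight function $\val$ is designed precisely to count those choices.

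First I would set up the backward reconstruction. Starting from the terminus $(uv,u)$ we know the last edge and last vertex of the chain $C$ at the end of iteration $t$; since $\boldcol, D, S$ determine the colors of $\tilde P_t$ and which branch of Algorithm~\ref{alg:rand_chain} was taken, we can in principle re-run iteration $t$ in reverse if we know the state at the start of iteration $t$. The point is that $d_t$ tells us whether iteration $t$ was a \textsf{forward} iteration ($d_t = 1$, reaching Step~\ref{step:append}) or an \textsf{$r$-backward} iteration ($d_t = j-k \le 0$, reaching Step~\ref{step:truncate_chain}), and $s_t$ distinguishes the sub-case inside Algorithm~\ref{alg:rand_chain} (Step~\ref{step:beta_hopeful}). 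In the \textsf{forward} case we must recover the random length $\ell' = \ell_t \in [\ell, 2\ell-1]$ used at Step~\ref{step:Pk}; but we should \emph{not} need a full factor of $\ell$ here — instead, because $P_t$ is an initial segment of the candidate path and we already know its colors and its endpoint (which is $\vend(P_t) = \vstart(\tilde F_{t})$ — derivable from the reconstructed chain), the length $\ell'$ is almost pinned down, with only an $O(\kmax^3)$-sized ambiguity coming from the fan geometry: this is the content of the $\val(1) = 2\kmax$ and $\val(0) = 28\kmax^3$ factors. Here I would lean on the four figures (Fig.~\ref{fig:Viz_intersect}) and the case analysis of intersection types (path-path, path-fan, fan-path, fan-fan) to count the number of ways each intersection can be undone, mirroring the corresponding argument in \cite[\S6]{bernshteyn2023fast} but with the fan now accounting for up to $\kmax$ edges rather than up to $\Delta$ edges.

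The genuinely new and most delicate case — and what I expect to be the main obstacle — is the \textsf{$r$-backward} iteration, where $d_i = j - k < 0$ and we reach Step~\ref{step:truncate_chain}. Here, to reverse the step, we must recover: (a) the truncated portion $F_{j} + P_j + \cdots + F_k + P_k$ of the chain (i.e., the full $2\ell$-length paths $P'$ from which the $P_i$ were obtained, and the fans), and (b) the random length $\ell_i$. The paths' colors are known, but their \emph{lengths} and the exact intersection location must be encoded implicitly by the terminus and the record of subsequent iterations; the bound $\val(d_i) = 50\ell\kmax^3$ allows exactly one factor of $\ell$ (for the location of the truncation/intersection along a path of length $\le 2\ell$) times $O(\kmax^3)$ (for the fan geometry and the choice among $O(\kmax)$ fan vertices at each of a bounded number of fans involved in the intersection). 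The subtlety is that $r = k-j$ can be large, so naively one might fear an $\ell^r$ or $\kmax^r$ blow-up; the resolution is that the $r$ fans and paths being discarded were themselves recorded during earlier \textsf{forward} iterations whose contributions to $\wt(D)$ have already been "paid for," so that reversing the backward jump only costs the \emph{single} factor $50\ell\kmax^3$ regardless of $r$ — this is precisely why $\val$ assigns the same value to every $z < 0$. Making this bookkeeping rigorous (ensuring no double-counting and that the reconstruction is well-defined and injective) is the heart of the proof. I would structure it as: (i) prove by downward induction on $i$ that the state at the start of iteration $i+1$ plus $(d_1,\dots,d_i), (s_1,\dots,s_i), \boldcol$ determines a set of at most $\val(d_{i+1})$ possible states at the start of iteration $i+1$... more precisely, that the map from input sequences in $\mathcal{I}^{(t)}(D,S,\y,uv,u,\boldcol)$ to the sequence of "ambiguity choices" $(c_1,\dots,c_t)$ with $c_i \in [\val(d_i)]$ is injective; (ii) conclude $|\mathcal{I}^{(t)}(D,S,\y,uv,u,\boldcol)| \le \prod_{i=1}^t \val(d_i) = \wt(D)$. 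The base case $i = 0$ uses that $f = wz$ is determined by $\y$ (the $\y$-th neighbor of $z$) and $z$ is part of the state, and $\col_0$ is given. Throughout, the condition $\ell \geq \kmax \geq 4$ is used only to guarantee the various crude inequalities like $\length(P_k) \geq \ell > 2$ and $\kmax \ge 4$ so that the counting estimates ($2\kmax$, $28\kmax^3$, $50\ell\kmax^3$) genuinely dominate the number of reconstruction options.
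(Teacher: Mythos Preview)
Your high-level strategy---induction on $t$, with the ambiguity at each backward step bounded by $\val(d_i)$, using Lemma~\ref{lemma:all_colors} to recover all path colors---is exactly the paper's approach. But your account of the backward case contains a genuine misconception that would derail the execution.

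You write that in an $r$-backward iteration one must recover ``the truncated portion $F_j + P_j + \cdots + F_k + P_k$,'' and then worry about an $\ell^r$ blow-up. This is not what the induction requires. The inductive step only needs the \emph{terminus} after iteration $t-1$, which is the single pair $(\Start(F_k),\vstart(F_k))$, together with $\ell_t$. The intermediate pieces $F_{j+1},P_{j+1},\ldots,F_{k-1},P_{k-1}$ are \emph{not} recovered at this step at all; they are traced out one by one when the induction subsequently reverses the earlier forward iterations that built them. So there is no $\ell^r$ issue to resolve, and your ``already paid for'' explanation, while intuitively gesturing at the right bookkeeping, misidentifies what is being paid for and when. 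The paper's actual mechanism is a two-stage trace-back: from the known terminus $(uv,u)$ one knows $\Start(F_j)$ and hence (via Lemma~\ref{lemma:all_colors} and the non-intersecting property) the full $F_j+P_j$; the intersection of $\tilde F+\tilde P$ with $F_j+P_j$ then pins down $(\Start(\tilde F),\Pivot(\tilde F))$ up to $O(\ell\kmax^2)$ choices (this is the four-subfigure case analysis you mention); and finally a separate claim shows that $(\Start(\tilde F),\Pivot(\tilde F))$ determines $(\Start(F_k),\vstart(F_k),\ell_t)$ up to $2\kmax$ choices, by following the $\alpha\beta$-path back to $\vend(F_k)$ and using that $\phi(\End(F_k))$ is among the first $\kmax$ entries of the relevant $\col$.

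Two further gaps: you conflate the $d_t=1$ and $d_t=0$ cases (the forward case costs only $2\kmax$, not $O(\kmax^3)$, because $(\Start(\tilde F),\Pivot(\tilde F))$ \emph{is} the terminus), and you do not invoke Lemma~\ref{lemma:intersection_prev}, which is precisely what forces a $0$-backward intersection to land on $V(F_k)$ rather than $\IE(P_k)$ and is the sole reason $\val(0)$ carries no factor of $\ell$.
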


\begin{proof}\stepcounter{ForClaims}\renewcommand{\theForClaims}{\ref{lemma:size_bounded_by_wt}}
    We will prove this by induction on $t$.
    Let us first consider the case where $t = 0$.
    Then
    \[\mathcal{I}^{(t)}(D, S, \y, uv, u, \boldcol) = \set{(uv, v, \col_0)}.\]
    In particular, $|\mathcal{I}^{(t)}(D, S, \y, uv, u, \boldcol)| = 1 = \wt(())$ as the empty product is $1$.

    Now consider $t \geq 1$. 
    Consider any $I = (f, z, \col_0, \ell_1, \col_1, \ldots, \ell_t, \col_t) \in \mathcal{I}^{(t)}(D, S, \y, uv, u, \boldcol)$ and run Algorithm~\ref{alg:multi_viz_chain} with this input for $t$ iterations of the \textsf{while} loop.
    Let $C = F_0 + P_0 + \cdots + F_{k-1} + P_{k-1}$ and $F + P$ be the chain and candidate chain at the start of the $t$-th iteration, respectively.
    On Step~\ref{step:Pk} we compute $P_k \defeq P|\ell_t$ and then construct $\tilde F + \tilde P$ on Step~\ref{step:alpha_beta_order}.
    Note that for
    \begin{align*}
        D' &\defeq (d_1, \ldots, d_{t-1}), \quad S' \defeq (s_1, \ldots, s_{t-1}), \\ 
        v' &\defeq \Pivot(F), \quad u' \defeq \vstart(F), \quad \boldcol' \defeq (\col_0, \ldots, \col_{t-1}),
    \end{align*}
    we have $(f, z, \col_0, \ell_1, \col_1, \ldots, \ell_{t-1}, \col_{t-1}) \in \mathcal{I}^{(t)}(D', S', \y, u'v', u', \boldcol')$.
    Since 
    \[|\mathcal{I}^{(t)}(D', S', \y, u'v', u', \boldcol')| \leq \wt(D')\]
    by the inductive hypothesis, it is enough to show that there are at most $\val(d_t)$ choices for the tuple $(u', v', \ell_t)$.
    The following claim will assist with our proof.

    \begin{claim}\label{claim:one_step}
        Given the tuple $(\Start(\tilde F), \Pivot(\tilde F))$, there are at most $2\kmax$ choices for $(u', v', \ell_t)$.
    \end{claim}

    \begin{claimproof}
        Note that for some $t' < t$ the path $P$ was formed during the $t'$-th iteration.
        By Lemma~\ref{lemma:chain_formation}, we may determine $t'$.
        Furthermore, by Lemma~\ref{lemma:all_colors}, we may determine the colors $\alpha$ and $\beta$ such that $P$ is an $\alpha\beta$-path.
        By Lemma~\ref{lemma:non-intersecting_degrees} \ref{item:degree_end} and \ref{item:related_phi}, the vertex $\vend(F)$ is the endpoint of the $\alpha\beta$-path passing through $\Start(\tilde F)$ closer to $\Pivot(\tilde F)$.
        Once we have $\vend(F)$, we note that by the non-intersecting property, $\phi(\End(F))$ is one of the first $\kmax$ entries in $\col_{t'}$.
        Therefore, there are at most $\kmax$ choices for $v'$.
        Given $\End(F)$ and the $\alpha\beta$-path from $\vend(F)$ to $\Start(\tilde F)$, we may determine $\ell_t$.
        In order to determine $u'$, there are two cases to consider.
        First, $k = 1$.
        In this case, $u'v' = f$ and $v' = z$.
        Therefore, $\y$ determines $u'$.
        Next, $k \geq 2$.
        Applying a similar argument as earlier, we may determine the colors $\gamma, \delta$ such that $P_{k-1}$ is a $\gamma\delta$-path.
        As a result of Lemma~\ref{lemma:non-intersecting_degrees} \ref{item:related_phi}, we must have $\phi(u'v') \in \set{\gamma, \delta}$.
        Therefore, there are at most $2$ choices of $u'$, as desired.
    \end{claimproof}

    \vspace{10pt}
    As a result of the above claim, it is now enough to show that there are at most $\val(d_t)/(2\kmax)$ choices for the tuple $(\Start(\tilde F), \Pivot(\tilde F))$.
    We will split into cases depending on the value $d_t$:
    \begin{enumerate}[label=\ep{\textbf{Case \arabic*}}, wide] 
        \item\label{case:d_i_1} $d_t = 1$.
        Here, we have $\Start(\tilde F) = uv$ and $\Pivot(\tilde F) = v$.

        \item\label{case:d_i_neg} $d_t < 0$.
        Here, we have $v = \Pivot(F_j)$ and $u = \vstart(F_j)$ where $j = k + d_t$ and the first intersection between $\tilde F + \tilde P$ and $C + F + P$ was with $F_j + P_j$.
        Once again, by Lemma~\ref{lemma:chain_formation}, we may determine $t' < t$ such that $F_j + P'$ was constructed during the $t'$-th iteration (where $P_j \subseteq P'$).
        Furthermore, by Lemma~\ref{lemma:all_colors}, we may determine the colors $\alpha$ and $\beta$ such that $P'$ is an $\alpha\beta$-path.

        Recall that we have $V(\tilde F + \tilde P) \cap V(F_j) \neq \0$ or $E(\tilde F + \tilde P) \cap \IE(P_j) \neq \0$ (see Fig.~\ref{fig:Viz_intersect} for examples).
        We will consider each case separately.

        \begin{enumerate}[label=\ep{\textit{Subcase 2\alph*}}, wide]
            \item\label{subcase_F} $V(\tilde F + \tilde P) \cap V(F_j) \neq \0$.
            Let $w$ be the common vertex at the first intersection.
            There are at most $\kmax + 1 \leq 2\kmax$ choices for $w$.

            Let us first consider the case that $w \in V(\tilde F)$.
            Suppose $w \in \set{\Pivot(\tilde F), \vstart(\tilde F)}$.
            As $\Start(\tilde F) \in E(P_k)$ and we may determine the colors on $P_k$ by Lemma~\ref{lemma:all_colors}, there are at most $2$ choices for the vertex in $\set{\Pivot(\tilde F), \vstart(\tilde F)} \setminus \set{w}$ and therefore, at most $2$ choices for the tuple $(\Start(\tilde F), \Pivot(\tilde F))$ in each of these cases.
            Now suppose $w \notin \set{\Pivot(\tilde F), \vstart(\tilde F)}$.
            Then it must be the case that $\phi(w\Pivot(\tilde F)) = \Shift(\phi, C + F_k + P_k)(w\Pivot(\tilde F))$.
            If not, we would have truncated earlier.
            Therefore, there are at most $\kmax$ choices for $\Pivot(\tilde F)$ (as $\phi(w\Pivot(\tilde F))$ is one of the first $\kmax$ entries of $\col_t$), and at most $2$ choices for $\vstart(\tilde F)$ given $\Pivot(\tilde F)$ (as we know the colors on $P_k$).
            It follows that there are at most $4 + 2\kmax \leq 3\kmax$ choices for $(\Start(\tilde F), \Pivot(\tilde F))$ in this case.

            Now, suppose $w \in V(\tilde P)$.
            As we truncate at the first intersection, we may conclude that $w$ and $\vend(\tilde F)$ are $(\phi, \gamma\delta)$-related, where $\tilde P$ is a $\gamma\delta$-path (the colors $\gamma$ and $\delta$ can be determined as a result of Lemma~\ref{lemma:all_colors}).
            In particular, there are at most $2$ choices for $\vend(\tilde F)$.
            Following an identical argument as that of Claim~\ref{claim:one_step}, there are at most $2\kmax$ choices for $(\Start(\tilde F), \Pivot(\tilde F))$ from here.
            Therefore, there are at most $4\kmax$ choices for $(\Start(\tilde F), \Pivot(\tilde F))$ in this case.

            Putting the above together, we conclude there are at most $2\kmax(3\kmax + 4\kmax) = 14\kmax^2$ choices for $(\Start(\tilde F), \Pivot(\tilde F))$.

            \item\label{subcase_P} $E(\tilde F + \tilde P) \cap \IE(P_j) \neq \0$.
            By the non-intersecting property, $\phi(\End(F_j))$ is one of the first $\kmax$ entries in $\col_{t'}$ and so there are at most $\kmax$ choices for $\vend(F_j)$.
            Similarly, by Lemma~\ref{lemma:non-intersecting_degrees} \ref{item:related_phi} $P_j$ is an initial segment of $G(\End(F_j); \phi, \alpha\beta)$ of length at most $2\ell$.
            Let $e = ww'$ be the common edge at the first intersection.
            There are at most $2\ell$ choices for $e$.

            Let us first consider the case that $e \in E(\tilde F)$.
            Clearly, $\Pivot(\tilde F) \in \set{w, w'}$ and so there are at most $2$ choices for $\Pivot(\tilde F)$.
            Note that $e \neq \Start(\tilde F)$ as this would imply $P_k$ intersects $P_j$ contradicting the fact that $C + F + P$ is non-intersecting.
            Furthermore, as we are truncating at the first intersection, we may conclude that $\phi(\Start(\tilde F)) = \Shift(\phi, C)(\Start(\tilde F))$.
            Therefore, there are at most $2$ choices for $\Start(\tilde F)$ as we know $\phi(\Start(\tilde F))$ is one of the colors on the path $P_k$, which may be determined by Lemma~\ref{lemma:all_colors}.
            In this case, there are at most $4$ choices for $(\Start(\tilde F), \Pivot(\tilde F))$.

            Now, suppose $e \in E(\tilde P)$.
            Either $w$ and $\vend(\tilde F)$ or $w'$ and $\vend(\tilde F)$ are $(\phi, \gamma\delta)$-related where $\tilde P$ is a $\gamma\delta$-path.
            If not, we would have truncated earlier.
            So there are at most $4$ choices for $\vend(\tilde F)$.
            From here, as in previous cases, we may conclude there are at most $8\kmax$ choices for $(\Start(\tilde F), \Pivot(\tilde F))$ in this case.

            Putting the above together, we conclude there are at most $2\ell\kmax(4 + 8\kmax) \leq 20\ell\kmax^2$ choices for $(\Start(\tilde F), \Pivot(\tilde F))$.

        \end{enumerate}

        Putting both subcases together, we conclude that there are at most $20\ell\kmax^2 + 14\kmax^2 \leq 25\ell\kmax^2$ choices for $(\Start(\tilde F), \Pivot(\tilde F))$ in this case (where we use the fact that $\ell \geq 3$).

        \item\label{case:d_i_0} $d_i = 0$.
        As a result of Lemma~\ref{lemma:intersection_prev}, we need only consider \ref{subcase_F} above and so there are at most $14\kmax^2$ choices for $(\Start(\tilde F), \Pivot(\tilde F))$.
        
    \end{enumerate}
    
    This covers all cases and completes the proof.
\end{proof}

To assist with the remainder of the proof, we define the following sets:
\[\mathcal{D}_k^{(t)} \defeq \left\{D \in \mathcal{D}^{(t)}\,:\, \sum_{i = 0}^td_i = k\right\}.\]
Given $D \in \mathcal{D}^{(t)}$, $S \in \mathcal{S}^{(t)}$, a pair $(uv, u)$ such that $uv \in E$, and $\y \in [\Delta]$, we define $\mathcal{I}^{(t)}(D, S, \y, uv, u)$ to be the union of $\mathcal{I}^{(t)}(D, S, \y, uv, u, \boldcol)$ over all possible options for $\boldcol$.
For an input sequence $I = (f, z, \col_0, \ell_1, \col_1, \ldots, \ell_t, \col_t) \in \mathcal{I}^{(t)}$, we let $\P[I]$ denote the probability of $I$ occurring during the first $t$ iterations of the \textsf{while} loop of Algorithm~\ref{alg:multi_viz_chain} with input $(\phi, f, z)$.
The next lemma bounds the probability of input sequences in $\mathcal{I}^{(t)}(D, S, \y, uv, u)$.

\begin{Lemma}\label{lemma:prob_bound_wt}
    Let $D \in \mathcal{D}_k^{(t)}$, $S \in \mathcal{S}^{(t)}$, $(uv, u)$ be such that $uv \in E$, and $\y \in [\Delta]$.
    Then,
    \[\sum_{I \in \mathcal{I}^{(t)}(D, S, \y, uv, u)}\P[I] \,\leq\, \left\{\begin{array}{cc}
        \dfrac{\wt(D)}{\ell^t} & k = 0, \vspace{5pt} \\
        \dfrac{2\,\wt(D)}{\ell^t\,\eps\Delta} & k > 0.
    \end{array}\right.\]
\end{Lemma}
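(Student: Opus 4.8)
The goal is to bound $\sum_{I \in \mathcal{I}^{(t)}(D, S, \y, uv, u)}\P[I]$ by decomposing the probability of each input sequence into the contribution of the path-length choices $\ell_1, \ldots, \ell_t$ (each uniform over an interval of length $\ell$) and the contribution of the random colors recorded in $\boldcol = (\col_0, \ldots, \col_t)$. The key structural fact is Lemma~\ref{lemma:size_bounded_by_wt}: for any fixed color sequence $\boldcol$, the set $\mathcal{I}^{(t)}(D, S, \y, uv, u, \boldcol)$ has size at most $\wt(D)$, and moreover (by the determinism of the algorithm given its inputs) all input sequences in $\mathcal{I}^{(t)}(D, S, \y, uv, u)$ sharing the same $\boldcol$ lead to $t$ iterations that differ only in the choices $f, z, \ell_1, \ldots, \ell_t$. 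So I would write
\[
    \sum_{I \in \mathcal{I}^{(t)}(D, S, \y, uv, u)}\P[I] \;=\; \sum_{\boldcol} \;\sum_{I \in \mathcal{I}^{(t)}(D, S, \y, uv, u, \boldcol)} \P[I],
\]
and bound each inner sum by $\wt(D) \cdot \max_I \P[I \mid \boldcol]\cdot \P[\boldcol]$-type estimates. Here $\P[I]$ factors as a product over iterations, and I separate the factor coming from the choices of $\ell_i$ at Step~\ref{step:random_choice} (each contributing exactly $1/\ell$ since $\ell' \in [\ell, 2\ell-1]$ is uniform over $\ell$ values) from the factor coming from the colors chosen in the calls to \hyperref[alg:rand_col]{$\mathsf{RandomColor}$}.

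**The color-probability computation.** For the color part, the crucial observation is that each color appended to a fan via Algorithm~\ref{alg:rand_col} is, by Lemma~\ref{lemma:unif_rand_color}, uniform over a missing set of size at most $(1+\eps)\Delta \le q$; hence each recorded color in $\col_i$ contributes a factor of at most $1$ to $\P[I]$ when we are merely summing (not conditioning) — but we need the sharper gain. The point distinguishing the $k = 0$ and $k > 0$ cases is the starting edge $f = wz$: when the final chain has $k \ge 1$ steps, the very first color chosen in $\col_0$ at Step~\ref{step:first_chain} — specifically the missing color at the start vertex that determines the first edge of the first fan — must equal the (determined) color on the first edge of $P_0$, and this color is uniform over a set of size $\ge \eps\Delta$ (it is a missing color at $z$), so it contributes a factor $\le 1/(\eps\Delta)$; combined with the fact that there are then $\le \Delta$ choices for the other vertex of $f$ consistent with $\y$ being absorbed, one picks up the claimed $2/(\eps\Delta)$ (with the factor $2$ coming from a binary choice among the two path colors $\{\gamma,\delta\}$ of the first path). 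When $k = 0$, the chain is just the first candidate chain, there is no such constraint tying down $\col_0$'s first color, and one only gets the $1/\ell^t$ from the $\ell_i$'s. So the plan is: peel off the $t$ factors of $1/\ell$ from the $\ell_i$ choices, use $\sum_{\boldcol} \P[\boldcol \text{ restricted appropriately}] \le 1$ for all the "free" color choices, and for $k > 0$ extract the single extra factor $\le 2/(\eps\Delta)$ from the forced first color and the neighbor index.

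**Assembling the bound.** Concretely, I would argue: fix $D, S, \y, uv, u$. Partition $\mathcal{I}^{(t)}(D,S,\y,uv,u)$ by the color sequence $\boldcol$. For each $\boldcol$, Lemma~\ref{lemma:size_bounded_by_wt} gives at most $\wt(D)$ sequences, and each such sequence $I$ has $\P[I] = \P[\boldcol\text{-consistent colors}] \cdot \prod_{i=1}^t \frac{1}{\ell}$ times possibly a further factor from selecting $f$ among edges at $z$. Summing over the at most $\wt(D)$ sequences for a fixed $\boldcol$ and then over $\boldcol$, the colour-probabilities telescope to at most $1$ in the $k=0$ case, giving $\wt(D)/\ell^t$; in the $k > 0$ case they telescope to at most $2/(\eps\Delta)$ because of the forced first-edge colour and the two-colour ambiguity on $P_0$, giving $2\wt(D)/(\ell^t \eps\Delta)$. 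I expect the main obstacle to be making the factorization of $\P[I]$ rigorous — in particular, verifying that conditioning on $\boldcol$ renders the $\ell_i$ choices genuinely independent uniform over $\ell$ values each, and that the "leftover" probability mass of colour choices sums to at most $1$ (respectively at most $2/(\eps\Delta)$) rather than merely being bounded crudely; this requires carefully tracking which colour-draws are pinned down by the record/terminus data and which remain free, and invoking Lemma~\ref{lemma:all_colors} to see that the pinned colours are determined by $D(I), S(I), \boldcol$ and hence do not contribute extra summation freedom. The bookkeeping resembles the entropy-compression probability estimate in \cite{bernshteyn2023fast}, with the novel $\eps\Delta$ gain coming from the randomized first fan.
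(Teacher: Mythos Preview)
Your overall decomposition is right and matches the paper: partition by the colour sequence $\boldcol$, use Lemma~\ref{lemma:size_bounded_by_wt} to bound the number of input sequences per $\boldcol$ by $\wt(D)$, and peel off a factor $1/\ell^t$ from the independent uniform choices of $\ell_1,\ldots,\ell_t$. For $k=0$ this gives the claimed bound exactly as in the paper.

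Your argument for $k>0$, however, misidentifies the source of the extra $1/(\eps\Delta)$ factor. You claim that ``the very first color chosen in $\col_0$ \ldots\ must equal the (determined) color on the first edge of $P_0$''. This is wrong on two counts. First, the initial colour in $\col_0$ is $\eta_1$, the first random missing colour at $y_0$ in Algorithm~\ref{alg:rand_fan}; if the fan does not terminate immediately, this is $\phi(xy_1)$, the colour of the \emph{second fan edge}, not a colour on $P_0$. The path colours $\gamma,\delta$ are the \emph{last two} entries of $\col_0$ (see Lemma~\ref{lemma:all_colors}). Second, and more fundamentally, even if some entry of $\col_0$ were forced, the value it is forced to equal must be determined by the data $(D,S,\y,uv,u)$ alone, since that is what you are conditioning on. The colours on $P_0$ are not determined by these data: the reconstruction in Lemma~\ref{lemma:size_bounded_by_wt} works \emph{backwards} from the terminus and requires $\boldcol$ to recover $P_0$.

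The paper obtains the gain from the \emph{terminus} instead. Since the final chain is a $k$-step Vizing chain ending at $(uv,u)$, one of the two colours of the $k$-th path is $\alpha \defeq \phi(uv)$, which is a fixed colour known from the data. Using Lemma~\ref{lemma:chain_formation} one determines the iteration $t_{k^\star}$ (where $k^\star = \max\{0\le i<k: s_{t_i}=0\}$, computable from $D$ and $S$) at which these path colours were freshly drawn; validity of $\boldcol$ then forces $\alpha$ to be one of the last two entries of $\col_{t_{k^\star}}$. Each of these entries is a uniformly random missing colour at some vertex, hence equals the fixed value $\alpha$ with probability at most $1/(\eps\Delta)$, and a union bound over the two positions yields $\sum_{\boldcol}\P[\boldcol]\le 2/(\eps\Delta)$. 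Your remarks about ``$\leq \Delta$ choices for the other vertex of $f$ consistent with $\y$'' are also misplaced: the role of $\y$ is already absorbed into the base case of the induction in Lemma~\ref{lemma:size_bounded_by_wt}, and no further factor of $\Delta$ enters here (the sum over $\y\in[\Delta]$ appears only later, in the proof of Proposition~\ref{theo:num_iter_while}).
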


\begin{proof}
    We let $\mathcal{C}$ denote the set of all sequences $\boldcol$ such that $\mathcal{I}^{(t)}(D, S, \y, uv, u, \boldcol) \neq \0$.
    We say the sequences $\boldcol \in \mathcal{C}$ are \emphd{valid}.
    Consider the following:
    \begin{align*}
        \sum_{I \in \mathcal{I}^{(t)}(D, S, \y, uv, u)}\P[I] &= \sum_{I \in \mathcal{I}^{(t)}(D, S, \y, uv, u)}\sum_{\boldcol\in \mathcal{C}}\P[\boldcol]\,\P[I\mid\boldcol] \\
        &= \sum_{\boldcol\in \mathcal{C}}\P[\boldcol]\sum_{I \in \mathcal{I}^{(t)}(D, S, \y, uv, u)}\P[I\mid \boldcol],
    \end{align*}
    where $\P[\boldcol]$ is the probability of observing the sequence of colors in $\boldcol$.
    By Lemma~\ref{lemma:size_bounded_by_wt} and since each $\ell_i$ is chosen independently and uniformly at random from $\ell$ options, it follows that
    \begin{align}\label{eqn:prob_wt_col}
        \sum_{I \in \mathcal{I}^{(t)}(D, S, \y, uv, u)}\P[I] \,\leq\, \sum_{\boldcol\in \mathcal{C}}\P[\boldcol]\sum_{I \in \mathcal{I}^{(t)}(D, S, \y, uv, u, \boldcol)}\frac{1}{\ell^t} \,\leq\, \frac{\wt(D)}{\ell^t}\sum_{\boldcol\in \mathcal{C}}\P[\boldcol].
    \end{align}
    Note that $\sum_{\boldcol\in \mathcal{C}}\P[\boldcol] \leq 1$ trivially.
    With this bound, the proof for $k = 0$ is complete.
    It turns out that for $k > 0$, we can provide a better bound on $\sum_{\boldcol\in \mathcal{C}}\P[\boldcol] \leq 1$, which will be important for the proof of Proposition~\ref{theo:num_iter_while}.

    Suppose $k > 0$.
    Note that for any $I \in \mathcal{I}^{(t)}(D, S, \y, uv, u)$, we have $\tau(I) = (uv, u)$ and the chain after running Algorithm~\ref{alg:multi_viz_chain} with input sequence $I$ is a $k$-step Vizing chain.
    In particular, one of the colors on the $k$-th chain is $\alpha = \phi(uv)$.
    By Lemma~\ref{lemma:chain_formation}, we may determine the iterations $t_0, \ldots, t_{k-1}$ at which each of the Vizing chains within the $k$-step chain would have been constructed.
    Let $k^\star \defeq \max\set{0 \leq i < k\,:\, s_{t_i} = 0}$.
    For $\boldcol = (\col_0, \ldots, \col_t)$ to be valid, it must be the case that $\alpha$ appears as one of the last two entries in $\col_{t_{k^\star}}$.
    Let $A$ be the event that $\alpha$ appears as the last entry in $\col_{t_{k^\star}}$ and let $B$ the event that $\alpha$ appears as the second to last entry in $\col_{t_{k^\star}}$.
    Then we have
    \begin{align}\label{eqn:prob_A_B}
        \sum_{\boldcol}\P[\boldcol] \leq \P[A] + \P[B].
    \end{align}
    Let us consider $\P[A]$ (the analysis for $\P[B]$ is identical).
    Note that $\alpha$ is chosen from the missing set of some random vertex $z$ under some random coloring $\psi$.
    We have
    \begin{align*}
        \P[A] &= \sum_{z', \psi'}\P[z = z', \psi = \psi']\P[A\mid z = z', \psi = \psi'] \\
        &= \sum_{z', \psi'}\frac{\P[z = z', \psi = \psi']}{|M(\psi', z')|} \\
        &\leq \sum_{z', \psi'}\frac{\P[z = z', \psi = \psi']}{\eps\Delta} = \frac{1}{\eps\Delta},
    \end{align*}
    where the last step follows since $|M(\phi, x)| \geq \eps\Delta$ for any $x \in V$ and any partial coloring $\phi$.
    The desired bound follows by \eqref{eqn:prob_wt_col}, \eqref{eqn:prob_A_B}, and the computation above.
\end{proof}

Let us now bound $\wt(D)$.

\begin{Lemma}\label{lemma:wt_bound}
    Let $D \in \mathcal{D}_k^{(t)}$. 
    Then, $\wt(D) \leq \left(100\ell\kmax^4\right)^{t/2}\left(25\ell\kmax^2\right)^{-k/2}$, for $\ell \geq 8\kmax^2$.
\end{Lemma}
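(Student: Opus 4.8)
\textbf{Proof proposal for Lemma~\ref{lemma:wt_bound}.}

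The plan is to bound $\wt(D) = \prod_{i=1}^t \val(d_i)$ by carefully accounting for the contribution of each type of step ($d_i = 1$, $d_i = 0$, $d_i < 0$) separately, using the constraint $\sum_{i=0}^t d_i = k$ to relate the counts of forward and backward steps. First I would partition $[t]$ into three index sets: $A_+ \defeq \{i : d_i = 1\}$, $A_0 \defeq \{i : d_i = 0\}$, and $A_- \defeq \{i : d_i < 0\}$, with sizes $t_+$, $t_0$, $t_-$ respectively, so $t_+ + t_0 + t_- = t$. Then
\[
    \wt(D) \,=\, (2\kmax)^{t_+} \, (28\kmax^3)^{t_0} \, \prod_{i \in A_-} (50\ell\kmax^3).
\]
Since $\val(d_i) = 50\ell\kmax^3$ for \emph{every} $i \in A_-$ regardless of the actual (negative) value of $d_i$, the last product is simply $(50\ell\kmax^3)^{t_-}$, so $\wt(D) = (2\kmax)^{t_+}(28\kmax^3)^{t_0}(50\ell\kmax^3)^{t_-}$.

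The next step is to exploit the telescoping constraint. Each forward step increments the chain length by $1$, each zero step leaves it unchanged, and each backward step with value $d_i < 0$ decreases it by $|d_i| \geq 1$. Hence $k = \sum_{i=0}^t d_i \geq t_+ - \sum_{i \in A_-}|d_i| \geq$ (something), but more usefully, since the chain length is always $\geq 0$ and starts at $0$, the number of "unit-equivalent" decrements is at least $t_-$, giving $t_+ - t_- \geq k \geq 0$, i.e. $t_+ \geq t_- + k \geq t_-$. In particular $t_+ \geq t_-$ and $t_+ \geq k$, so $t_+ \geq (t_+ + t_-)/2 + k/2$ is not quite right — rather $t_- \leq t_+ - k$, which I'll use to trade factors: since $\val(1) = 2\kmax$ is much smaller than $\val(d) = 50\ell\kmax^3$ for $d < 0$, I want to pair up backward steps with forward steps. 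Writing $t_- = s$ and $t_+ = s + k + r$ for some $r \geq 0$ (so $t_0 = t - 2s - k - r$), the weight becomes
\[
    \wt(D) \,=\, (2\kmax)^{s+k+r}\,(28\kmax^3)^{t - 2s - k - r}\,(50\ell\kmax^3)^{s}.
\]
Now I group the factors as $\big[(2\kmax)(50\ell\kmax^3)\big]^{s} \cdot (2\kmax)^{k+r} \cdot (28\kmax^3)^{t-2s-k-r}$ and bound each piece. The key inequality needed is that $(2\kmax)(50\ell\kmax^3) = 100\ell\kmax^4$ while $(28\kmax^3)^2 = 784\kmax^6 \leq 100\ell\kmax^4$ precisely when $\ell \geq 7.84\,\kmax^2$, which is guaranteed by the hypothesis $\ell \geq 8\kmax^2$. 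This lets me replace every occurrence of $(28\kmax^3)^2$ by $(100\ell\kmax^4)$ and, counting, conclude $\wt(D) \leq (100\ell\kmax^4)^{t/2} \cdot (\text{correction in }\kmax\text{ and }\ell)$. To land exactly on $(100\ell\kmax^4)^{t/2}(25\ell\kmax^2)^{-k/2}$, I track the $k$-dependence: the extra $(2\kmax)^{k}$ coming from the $k$ surplus forward steps must be absorbed into $(25\ell\kmax^2)^{-k/2}$, which requires $(2\kmax) \cdot (25\ell\kmax^2)^{1/2} \leq (100\ell\kmax^4)^{1/2}$, i.e. $4\kmax^2 \cdot 25\ell\kmax^2 \leq 100\ell\kmax^4$, which holds with equality — this is the arithmetic reason the constants are chosen as they are. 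The residual parameter $r$ (excess forward steps beyond $k$, balanced against $t_0$) is handled the same way as the $t_0$ steps, pairing $(2\kmax)(28\kmax^3) = 56\kmax^4 \leq 100\ell\kmax^4$.

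The main obstacle I anticipate is keeping the bookkeeping of exponents exact: the statement demands a clean bound with no slack in the powers of $\ell$ and $\kmax$, so every pairing must be verified to respect $\ell \geq 8\kmax^2$ (and $\kmax \geq 4$) with the right constant. Concretely, one must check three inequalities — $(28\kmax^3)^2 \leq 100\ell\kmax^4$, $(2\kmax)^2(25\ell\kmax^2) \leq 100\ell\kmax^4$, and $(2\kmax)(28\kmax^3) \leq 100\ell\kmax^4$ — and confirm that after substituting $t_+ = s+k+r$, $t_- = s$, $t_0 = t-2s-k-r$ and distributing, the product of the bounded pieces is exactly $(100\ell\kmax^4)^{(2s + k + r + (t-2s-k-r))/2}(25\ell\kmax^2)^{-k/2} = (100\ell\kmax^4)^{t/2}(25\ell\kmax^2)^{-k/2}$. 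The only genuinely subtle point is justifying $t_+ \geq t_- + k$, i.e. that the net displacement of the chain length over $t$ iterations equals $k$ and the length never goes negative; this follows because the chain length at the end of iteration $i$ is $\sum_{r=1}^{i} d_r \geq 0$ (a chain has nonnegative length) and equals $k$ at $i = t$, combined with the fact that each negative $d_i$ contributes at most $-1$ per backward step to the running sum is \emph{not} needed — we only need $\sum_{i} d_i = k$ together with $d_i \leq 1$ for all $i$, which gives $k = t_+ \cdot 1 + t_0 \cdot 0 + \sum_{i \in A_-} d_i \leq t_+ - t_-$, hence $t_+ \geq t_- + k$ as claimed.
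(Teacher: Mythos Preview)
Your proposal is correct and follows essentially the same approach as the paper: both partition the indices by the sign of $d_i$, use the constraint $t_+ - t_- \geq k$ (equivalently $|K| \leq (t-k-|J|)/2$), and invoke $\ell \geq 8\kmax^2$ to absorb the contribution of the zero steps. The paper's write-up is slightly more streamlined—it factors $\wt(D) = (2\kmax)^t(14\kmax^2)^{|J|}(25\ell\kmax^2)^{|K|}$ and substitutes the bound on $|K|$ directly rather than introducing the auxiliary parameters $s,r$—but the arithmetic is identical; one small cleanup: your ``pairing'' of the $r$ surplus forward steps with $t_0$ steps is unnecessary (and need not be well-defined when $t_0 < r$), since $(2\kmax)^r \leq (100\ell\kmax^4)^{r/2}$ and $(28\kmax^3)^{t_0} \leq (100\ell\kmax^4)^{t_0/2}$ each hold on their own.
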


\begin{proof}
    Let $D = (d_1, \ldots, d_t) \in \mathcal{D}_k^{(t)}$.
    Define
    \begin{align*}
        I\defeq \{i\,:\,d_i = 1\}, \quad J\defeq \{i\,:\,d_i = 0\}, \quad K\defeq [t] \setminus (I\cup J).
    \end{align*}
    Using the definition of $\wt(D)$ and $\val(z)$, we write
    \begin{align*}
        \wt(D) \,&=\, (2\kmax)^{|I|}\,(28\kmax^3)^{|J|}\, (50\ell\kmax^3)^{|K|} \\
        &=\, (2\kmax)^t\,(14\kmax^2)^{|J|}\, (25\ell\kmax^2)^{|K|},
    \end{align*}
    where we use the fact that $|I| + |J| + |K| = t$.
    Note the following:
    \[k \,=\, |I| - \sum_{k \in K}|d_k| \,\leq\, |I| - |K| \,=\, t - |J| - 2|K|.\]
    It follows that $|K| \leq (t - k - |J|)/2$. With this in hand, we have, for $\ell \geq 8\kmax^2$,
    \begin{align*}
        \wt(D) \,&\leq\, (2\kmax)^t\,(14\kmax^2)^{|J|}\, (25\ell\kmax^2)^{(t - k - |J|)/2} \\
        &\leq\, \left(100\ell\kmax^4\right)^{t/2}\left(25\ell\kmax^2\right)^{-k/2}\left(\frac{8\kmax^2}{\ell}\right)^{|J|/2} \,\leq\, \left(100\ell\kmax^4\right)^{t/2}\left(25\ell\kmax^2\right)^{-k/2}. \qedhere
    \end{align*}
\end{proof}

We will use the following result of \cite[Lemma~6.5]{bernshteyn2023fast}:

\begin{Lemma}\label{lemma:Dst}
    $|\mathcal{D}_k^{(t)}|\leq 4^t$.
\end{Lemma}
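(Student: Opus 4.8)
\textbf{Proof plan for Lemma~\ref{lemma:Dst}.} Recall that $\mathcal{D}_k^{(t)}$ consists of tuples $D = (d_1,\dots,d_t) \in \Z^t$ arising as records of actual executions, where each $d_i$ is either $1$ (a \textsf{forward} iteration) or an integer $\le 0$ (a \textsf{backward} iteration, with $d_i = j - k$ recording how far back the chain is truncated). The plan is to bound $|\mathcal{D}^{(t)}_k|$ by bounding the larger set of \emph{all} integer sequences that could conceivably appear as records, ignoring the constraint $\sum d_i = k$ (this only makes the bound easier). The key structural observation is that the record of a valid execution behaves like a walk: after $i$ iterations the current chain has $\sum_{r=1}^{i} d_r$ steps, and this quantity is always $\ge 0$ (the chain length is never negative). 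Moreover, on a \textsf{backward} iteration we have $d_i \le 0$ and $k + d_i = j \ge 0$, so the size of the backward jump $|d_i|$ is at most the current number of steps $k$.

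First I would set up the counting via a ``stack height'' / ballot-type argument. Think of each \textsf{forward} iteration as a $+1$ step and each \textsf{backward} iteration with value $d_i = -r$ (where $r \ge 0$) as a move that decreases the height by $r$. The running sum $h_i := \sum_{r \le i} d_r$ stays $\ge 0$ throughout, starting at $h_0 = 0$. So the sequence $D$ is determined by: (i) for each $i$, whether it is a forward step or a backward step, and (ii) for each backward step, the magnitude $r \in \{0, 1, \dots, h_{i-1}\}$ of the decrease. This is exactly the data of a lattice path that takes unit up-steps and arbitrary-size down-steps while staying nonnegative — equivalently, a sequence of push/multi-pop operations on a stack, returning to a stack of some height $k \ge 0$ at the end. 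The number of such sequences of length $t$ is a classical quantity bounded by a Catalan-type / Motzkin-type estimate; concretely one shows it is at most $4^t$.

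The cleanest way to carry out step (ii) rigorously is to encode the record bijectively as a lattice path with only unit steps. Replace each backward iteration $d_i = -r$ by a block of $r$ consecutive $-1$ steps (and a separator marking where one iteration ends), and each forward iteration by a single $+1$ step followed by its own separator. Because the partial sums stay $\ge 0$, the total number of $-1$ steps across all backward iterations in the first $t$ iterations is at most the total number of $+1$ steps, i.e. at most $t$; so the expanded path has length at most $2t$, consists of $\pm 1$ steps, and stays nonnegative. The number of nonnegative $\pm 1$ paths of length $\le 2t$ is at most $\sum_{s \le 2t} \binom{s}{\lfloor s/2 \rfloor} \le (2t+1)\cdot 2^{2t}$, and we must also multiply by at most $2^t$ to record the positions of the $t$ iteration-separators among the at-most-$2t$ steps — but a more careful bookkeeping (reconstructing iteration boundaries from the sign-change structure, since each iteration is ``one up'' or ``a maximal run of downs'') removes the separator factor and yields $|\mathcal{D}^{(t)}_k| \le C^t$ for a small constant; pinning the constant down to exactly $4$ is the routine arithmetic at the end. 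Since this is precisely \cite[Lemma~6.5]{bernshteyn2023fast}, in the paper I would simply cite it rather than reprove it; if a self-contained argument is wanted, the above lattice-path encoding is the route.

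The main obstacle — really the only nontrivial point — is making the lattice-path encoding genuinely injective and checking that the nonnegativity constraint $h_i \ge 0$ is valid (i.e., that a real execution never produces a record whose partial sums go negative). Nonnegativity follows because $h_i$ equals the number of Vizing-chain steps in the chain $C$ at the end of iteration $i$, and Algorithm~\ref{alg:multi_viz_chain} only ever truncates to $C = F_0 + P_0 + \cdots + F_{j-1} + P_{j-1}$ with $0 \le j \le k$, so $h_i \ge 0$ always. Everything else is elementary counting of lattice paths, for which the bound $4^t$ is comfortably loose.
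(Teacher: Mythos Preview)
Your primary proposal---simply citing \cite[Lemma~6.5]{bernshteyn2023fast}---is exactly what the paper does, so on that level you are correct.

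Your offered self-contained sketch, however, has a real gap. The claim that one can ``reconstruct iteration boundaries from the sign-change structure, since each iteration is `one up' or `a maximal run of downs'\,'' is false: a $0$-backward iteration ($d_i = 0$) contributes \emph{no} steps at all to your expanded $\pm 1$ path, and two consecutive backward iterations produce a single merged run of $-1$'s. So without separators the encoding is not injective, and with separators your own arithmetic gives a bound of order $(2t+1)\,4^t\cdot 2^t$, not $4^t$; the ``routine arithmetic'' you allude to does not close this.

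The clean way to get exactly $4^t$ (and, I believe, the argument in \cite{bernshteyn2023fast}) is to use \emph{two} binary strings of length $t$ rather than one lattice path. First record $a \in \{0,1\}^t$ with $a_i = 1$ iff $d_i = 1$. Second, build $b$ by concatenating, for each backward iteration in order, the block $1\,0^{|d_i|}$; since the number of backward iterations plus $\sum_{d_i \le 0} |d_i|$ is at most $t$ (the partial sums stay nonnegative), $b$ has length at most $t$ and can be padded to length exactly $t$. The pair $(a,b) \in \{0,1\}^{2t}$ determines $D$ injectively: $a$ tells you which positions are backward, and the $1$'s in $b$ delimit the blocks (you know from $a$ how many blocks to read, so padding is harmless). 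This gives $|\mathcal{D}_k^{(t)}| \le |\mathcal{D}^{(t)}| \le 4^t$ directly.
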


We are now ready to prove Proposition~\ref{theo:num_iter_while}.

\begin{proof}[Proof of Proposition~\ref{theo:num_iter_while}]
    First, we note the folllowing:
    \begin{align*}
        \P[\text{Algorithm~\ref{alg:multi_viz_chain} does not terminate after $t$ iterations}] &= \sum_{I \in \mathcal{I}^{(t)}}\P[I].
    \end{align*}
    By Lemmas~\ref{lemma:prob_bound_wt}, \ref{lemma:wt_bound}, and \ref{lemma:Dst}, we have
    \begin{align*}
        \sum_{I \in \mathcal{I}^{(t)}}\P[I] &= \sum_{e\in E, u \in e}\sum_{D \in \mathcal{D}^{(t)}}\sum_{S \in \mathcal{S}^{(t)}}\sum_{\y = 1}^\Delta\sum_{I \in \mathcal{I}^{(t)}(D, S, \y, uv, u)}\P[I] \\
        &= \sum_{e\in E, u \in e}\sum_{k = 0}^t\sum_{D \in \mathcal{D}_k^{(t)}}\sum_{S \in \mathcal{S}^{(t)}}\sum_{\y = 1}^\Delta\sum_{I \in \mathcal{I}^{(t)}(D, S, \y, uv, u)}\P[I] \\
        [\text{by Lemma \ref{lemma:prob_bound_wt}}]\qquad &\leq \sum_{e\in E, u \in e}\sum_{k = 0}^t\sum_{D \in \mathcal{D}_k^{(t)}}\sum_{S \in \mathcal{S}^{(t)}}\sum_{\y = 1}^\Delta\frac{\wt(D)}{\ell^t}\left(\frac{2}{\eps\Delta}\right)^{\bbone\set{k > 0}} \\
        [\text{by Lemma \ref{lemma:wt_bound}}]\qquad &\leq \sum_{e\in E, u \in e}\sum_{k = 0}^t\sum_{\y = 1}^\Delta\sum_{D \in \mathcal{D}_k^{(t)}}\sum_{S \in \mathcal{S}^{(t)}}\left(\frac{100\kmax^4}{\ell}\right)^{t/2}\left(25\ell\kmax^2\right)^{-k/2}\left(\frac{2}{\eps\Delta}\right)^{\bbone\set{k > 0}} \\
        [\text{by Lemma \ref{lemma:Dst}}]\qquad &\leq \sum_{e\in E, u \in e}\sum_{k = 0}^t\sum_{S \in \mathcal{S}^{(t)}}\left(\frac{1600\kmax^4}{\ell}\right)^{t/2}\sum_{\y = 1}^\Delta\left(25\ell\kmax^2\right)^{-k/2}\left(\frac{2}{\eps\Delta}\right)^{\bbone\set{k > 0}} \\
        [|\mathcal{S}^{(t)}| \leq 2^t]\qquad &\leq \left(\frac{6400\kmax^4}{\ell}\right)^{t/2}\sum_{e\in E, u \in e}\sum_{k = 0}^t\sum_{\y = 1}^\Delta\left(25\ell\kmax^2\right)^{-k/2}\left(\frac{2}{\eps\Delta}\right)^{\bbone\set{k > 0}}.
    \end{align*}
    Note that when $k = 0$, we have $\y = l$ where $u$ is the $l$-th neighbor of the vertex $v$ such that $e = uv$.
    With this in hand, we may simplify the above as follows:
    \begin{align}
        \sum_{I \in \mathcal{I}^{(t)}}\P[I] &\leq \left(\frac{6400\kmax^4}{\ell}\right)^{t/2}\sum_{e\in E, u \in e}\left(1 + \sum_{k = 1}^t\sum_{\y = 1}^\Delta\frac{2}{\eps\Delta\left(25\ell\kmax^2\right)^{k/2}}\right) \nonumber\\ 
        &= \left(\frac{6400\kmax^4}{\ell}\right)^{t/2}\sum_{e\in E, u \in e}\left(1 + \sum_{k = 1}^t\frac{2}{\eps\left(25\ell\kmax^2\right)^{k/2}}\right) \nonumber\\
        &\leq \left(\frac{6400\kmax^4}{\ell}\right)^{t/2}\sum_{e\in E, u \in e}\sum_{k = 0}^t\frac{2}{\eps\left(25\ell\kmax^2\right)^{k/2}} \nonumber\\
        &\leq \frac{8m}{\eps}\left(\frac{6400\kmax^4}{\ell}\right)^{t/2}, \label{eq: bound_on_P[I]}
    \end{align}
    for $25\ell\kmax^2 \geq 4$.
\end{proof}

We remark that the condition $\kmax \geq 16/\eps$ in the statement of Proposition~\ref{theo:num_iter_while} is not necessary for any of the computations in this section.
However, as a result of Proposition~\ref{lemma:fan_number_of_tries}, it is sufficient to ensure each call to Algorithm~\ref{alg:rand_fan} does in fact succeed (and so the input sequences are well defined).

We conclude this section with a lemma regarding the runtime of Algorithm~\ref{alg:multi_viz_chain}.

\begin{Lemma}\label{lemma:runtime_RMSVA}
    Consider running Algorithm~\ref{alg:multi_viz_chain} on input $(\phi, xy, x)$, and suppose the algorithm succeeds during the $(t+1)$-th iteration of the \textsf{while} loop.
    For $0 \leq i \leq t$, let $R_i$ denote the runtime of the $i$-th call to Algorithm~\ref{alg:rand_chain}.
    Then, the runtime of Algorithm~\ref{alg:multi_viz_chain} is $O(\ell\,t + \sum_{i = 0}^tR_i)$.
\end{Lemma}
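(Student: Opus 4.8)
The plan is to walk through Algorithm~\ref{alg:multi_viz_chain} and charge every elementary operation either to one of the $t+1$ invocations of Algorithm~\ref{alg:rand_chain} (the call at Step~\ref{step:first_chain}, indexed $0$, and the call at Step~\ref{step:alpha_beta_order} during each of the $t$ non-terminating iterations of the \textsf{while} loop, indexed $1,\dots,t$) or to a single pass over a fan or a short sub-chain. I would first record the uniform size bounds that make this work. Every fan occurring in the algorithm is an output of Algorithm~\ref{alg:rand_fan}, hence has at most $\kmax$ edges; every path is either an output of Algorithm~\ref{alg:rand_chain} or one of the ``parent paths'' $P'$ retained when the corresponding $P_j$ is first formed, hence has at most $2\ell$ edges; and in the regime $\ell\geq 6400\kmax^4$ of Proposition~\ref{theo:num_iter_while} we have $\kmax\leq\ell$, so each of the chains $F_k+P_k$, $\tilde F+\tilde P$, and $F+P$ has $O(\ell)$ edges and vertices. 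By the data structures of \S\ref{section: data_structures}, the following each run in time linear in the number of edges of the chain they act on: executing $\Shift$ (Algorithm~\ref{alg:chain_shift}) and its inverse, traversing a chain while querying or updating the hash map $\visited$, and appending a chain to $C$. All the remaining per-line operations (picking $\ell'$, reading off the two colours of an $\alpha\beta$-path from its last edges, evaluating $\End$, $\vend$, $\Pivot$, comparing a length to $2\ell$) take $O(1)$ time, and $\psi\gets\phi$ is a reference assignment, not a copy. Treating $\visited$ as a hash map, its initialization costs $O(1)$ and the $O(\length(C))$ entries it modifies during the run can be reset afterwards within the claimed budget.

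With this in hand the per-iteration accounting is routine. A generic non-terminating iteration $i$ ($1\leq i\leq t$) spends $O(\ell)$ on Steps~\ref{step:random_choice}--\ref{step:alpha_beta_order} (including the in-place $\Shift$ of $F_k+P_k$ and the marking of $\visited$), then $R_i$ inside the call to Algorithm~\ref{alg:rand_chain}, then $O(\ell)$ scanning $\tilde F+\tilde P$ for the first intersection point (and the index $j$ of the chain it lies on). A \textsf{forward} iteration spends a further $O(\ell)$ appending $F_k+P_k$ to $C$, for a total of $O(\ell)+R_i$. An \textsf{$r$-backward} iteration with $r=|d_i|$ instead un-shifts, unmarks, and truncates the sub-chain $F_j+P_j+\cdots+F_k+P_k$ and installs $F_j+P'$, touching $O((|d_i|+1)\ell)$ edges and vertices, for a total of $O((|d_i|+1)\ell)+R_i$. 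The terminating $(t+1)$-st iteration merely produces and returns $C+F+P$, at cost $O(\length(C)+\ell)=O((k_{\mathrm{fin}}+1)\ell)$, where $k_{\mathrm{fin}}$ is the number of Vizing chains in the final $C$. Adding the cost $R_0$ of Step~\ref{step:first_chain}, the total runtime is
\[
\sum_{i=0}^{t}R_i\;+\;O(\ell t)\;+\;O\!\Big(\ell\sum_{i\in B}(|d_i|+1)\Big)\;+\;O\big((k_{\mathrm{fin}}+1)\ell\big),
\]
where $B\subseteq\{1,\dots,t\}$ is the set of \textsf{backward} iterations.

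The one non-mechanical point --- and the step I expect to be the crux --- is bounding the last two terms by $O(\ell t)$: naively a single \textsf{backward} iteration can undo $\Theta(k)$ steps and cost $\Theta(k\ell)$, so one has to see that the \emph{total} amount of undoing over the whole execution is only $O(t)$. For this I would treat the step counter $k$ as a potential function: it starts at $0$, a \textsf{forward} iteration raises it by $1$, an \textsf{$r$-backward} iteration lowers it by $r=|d_i|\geq 0$, and at the end of iteration $t$ it equals $k_{\mathrm{fin}}=\sum_{i=1}^{t}d_i\geq 0$ (the number of Vizing chains in $C$, a nonnegative integer, the identity following from the definitions of $d_i$ in Definition~\ref{def:record} and Steps~\ref{step:truncate_chain} and~\ref{step:append}). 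Consequently
\[
\sum_{i\in B}|d_i|\;=\;\big(\text{number of \textsf{forward} iterations}\big)-k_{\mathrm{fin}}\;\leq\;t,
\]
so that $\sum_{i\in B}(|d_i|+1)\leq|B|+t\leq 2t$ and $k_{\mathrm{fin}}\leq t$. Substituting these into the displayed bound collapses it to $O\big(\ell t+\sum_{i=0}^{t}R_i\big)$, which is the assertion of the lemma.
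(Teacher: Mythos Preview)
Your proposal is correct and follows essentially the same approach as the paper: a case split on forward versus backward iterations, with per-iteration costs $O(\ell)+R_i$ and $O((|d_i|+1)\ell)+R_i$ respectively, followed by the telescoping observation that $\sum_{i=1}^t d_i\geq 0$ forces $\sum_{d_i\leq 0}|d_i|\leq t$. Your write-up is somewhat more explicit (you separately account for the terminating iteration and the initialization of $\visited$, and you phrase the telescoping as a potential-function argument), but the substance is the same.
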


\begin{proof}
    Let $I \in \mathcal{I}^{(t)}$ be an input sequence and let $D = (d_1, \ldots, d_t) \defeq D(I)$. We bound the runtime of the $i$-th iteration in terms of $d_i$. 
    Note that while updating the coloring $\psi$, we must update the sets of missing colors as well. 
    By the way we store our missing sets (see \S\ref{section: data_structures}), each update to $M(\cdot)$ takes $O(1)$ time. 
    Let us now consider two cases depending on whether $d_i$ is positive.
        
    \begin{enumerate}[label=\ep{\textbf{Case \arabic*}}, wide] 
        \item $d_i = 1$. Here, we test for success, shorten the path $P$ to $P_k$, update $\psi$ and the hash map $\visited$, call Algorithm \ref{alg:rand_chain}, and check that the resulting chain is non-intersecting. All of these steps can be performed in time $O(R_i + \ell)$.
        
        \item $d_i \leq 0$. Here we again conduct all the steps mentioned in the previous case, which takes $O(R_i + \ell)$ time.
        Once we locate the intersection, we then update $\psi$ and the hash map $\visited$, which takes $O((|d_i| + 1)\ell)$ time (we are adding $1$ to account for the case $d_i = 0$).
    \end{enumerate}
    
    Since $\sum_{i=1}^t d_i \geq 0$, we have
    \[\sum_{i,\, d_i \leq 0}|d_i| \leq \sum_{i,\, d_i = 1}d_i \,\leq\, t.\]
    It follows that the running time of $t$ iterations of the \textsf{while} loop is $O(\ell\,t + \sum_{i = 0}^tR_i)$, as desired.
\end{proof}

\section{Proof of Theorem~\ref{theo:main_theo}}\label{section: sequential}

In this section we prove Theorem~\ref{theo:main_theo}.
The algorithm takes as input a graph $G$ of maximum degree $\Delta$ along with parameters $\eps > 0$ and $\ell \in \N$, and outputs a proper $(1+\eps)\Delta$-edge-coloring of $G$.
At each iteration, the algorithm picks an uncolored edge uniformly at random and colors it by finding an augmenting multi-step Vizing chain using Algorithm~\ref{alg:multi_viz_chain}. See Algorithm~\ref{alg:seq} for the details.

\begin{algorithm}[h]\algsize
\caption{Sequential Coloring with Multi-Step Vizing Chains}\label{alg:seq}
\begin{flushleft}
\textbf{Input}: A graph $G = (V, E)$ of maximum degree $\Delta$ and parameters $\eps > 0$ and $\ell \in \N$. \\
\textbf{Output}: A proper $(1+ \eps)\Delta$-edge-coloring $\phi$ of $G$.
\end{flushleft}
\begin{algorithmic}[1]
    \State $U \gets E$, $\phi(e) \gets \blank$ for each $e \in U$.
    \While{$U \neq \0$}
        \State Pick an edge $e \in U$ and a vertex $x \in e$ uniformly at random.
        \State $(C, \alpha) \gets \hyperref[alg:multi_viz_chain]{\mathsf{RMSVA}}(\phi, e, x, \ell)$ \Comment{Algorithm \ref{alg:multi_viz_chain}}
        \State $\phi \gets \hyperref[alg:chain_aug]{\aug}(\phi, C, \alpha)$ \Comment{Algorithm~\ref{alg:chain_aug}}
        \State $U \gets U \setminus \set{e}$
    \EndWhile
    \State \Return $\phi$
\end{algorithmic}
\end{algorithm}

The correctness of Algorithm~\ref{alg:seq} follows from the results of \S\ref{subsec: poc}.
To assist with the analysis, we define the following variables:
\begin{align*}
    \phi_i &\defeq \text{ the coloring at the start of iteration } i, \\
    T_i &\defeq \text{ the number of iterations of the \textsf{while} loop during  the $i$-th call to Algorithm \ref{alg:multi_viz_chain}},  \\
    C_i &\defeq \text{ the chain produced by Algorithm \ref{alg:multi_viz_chain} at the $i$-th iteration}, \\
    U_i &\defeq \set{e\in E\,:\, \phi_i(e) = \blank}.
\end{align*}
Furthermore, we set $\kmax = \Theta(1/\eps)$ and $\ell = \Theta(1/\kmax^4)$, where the implicit constant factors are assumed to be sufficiently large for the computations that follow to hold.
Let $T = \sum_{i = 1}^mT_i$.
Note that there are $T$ total calls to Algorithm~\ref{alg:rand_chain} (as there is no call to Algorithm~\ref{alg:rand_chain} in the final iteration of the \textsf{while} loop in Algorithm~\ref{alg:multi_viz_chain} and there is one call before beginning the \textsf{while} loop).
Moreover, since the chain $C_i$ can be augmented in time $O(\length(C_i)) = O(\ell\,T_i)$, as a result of Lemma~\ref{lemma:runtime_RMSVA} we may conclude that the runtime of Algorithm~\ref{alg:multi_viz_chain} is at most
\begin{align}\label{eq:runtime}
    O\left(\ell\,T + \sum_{i = 1}^{T}R_i\right),
\end{align}
where $R_i$ is the runtime of the $i$-th call to Algorithm~\ref{alg:rand_chain}.
It remains to show that \eqref{eq:runtime} is $O(m\,\log(1/\eps)/\eps^4)$ with high probability.
To this end, we define a number of random variables.

For the $i$-th call to Algorithm~\ref{alg:rand_chain}, let $S_i$ be the number of times we reach Step~\ref{step:return} during the corresponding call to Algorithm~\ref{alg:rand_fan}, and let $K_i$ be the number of calls to Algorithm~\ref{alg:rand_col}.
Note that 
\begin{align}\label{eqn: num_rand_col}
    K_i \,\leq\, \kmax(S_i + 1) + 1.
\end{align}
For $i \in [m]$ and $j \in [K_i]$, we let $Y_{i,j}$ denote the runtime of the corresponding call to Algorithm~\ref{alg:rand_col}.
Since Step~\ref{step: loop_in_fan} of Algorithm~\ref{alg:rand_fan} takes $O(\kmax)$ time and we reach this step at most $K_i$ times, 
\[R_i = O\left(\ell + \kmax\,K_i + \sum_{j = 1}^{K_i}Y_{i,j}\right).\]
Plugging this into \eqref{eq:runtime}, we conclude that the runtime of Algorithm~\ref{alg:seq} is
\begin{align}\label{eqn: runtime}
    O\left(\ell\,T + \ell\,m + \kmax\sum_{i = 1}^TK_i + \sum_{i = 1}^T\sum_{j = 1}^{K_i}Y_{i,j}\right) \,=\, O\left(\ell\,T + \ell\,m + \kmax K + \sum_{i = 1}^KY_{i}\right),
\end{align}
where $K \defeq \sum_{j = 1}^TK_j$ and $Y_i$ is the runtime of the corresponding call to Algorithm~\ref{alg:rand_col}.

In what follows, we will bound the random 
the variables $T$, $K$, and $Y \defeq \sum_{i = 1}^KY_{i}$. 
Let us first consider the random variable $T$.
We begin with the following lemma.

\begin{Lemma}\label{lemma:t_i_prob_bound}
    For all $t_1$, \ldots, $t_i$, we have $\P[T_i > t_i\mid T_1 > t_1, \ldots, T_{i-1} > t_{i-1}] \leq \frac{4m}{\eps\,|U_i|}\,\left(\frac{6400\kmax^{4}}{\ell}\right)^{t_i/2}$.
\end{Lemma}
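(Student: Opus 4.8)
The plan is to reduce the statement to Proposition~\ref{theo:num_iter_while} by conditioning appropriately. The key observation is that once we condition on the colorings $\phi_1, \ldots, \phi_i$ (equivalently, on which edges got colored and how in the first $i-1$ iterations, which is what determines $T_1, \ldots, T_{i-1}$), the $i$-th call to Algorithm~\ref{alg:multi_viz_chain} is run on the fixed partial coloring $\phi_i$ with a \emph{fresh} source of randomness, independent of everything that happened before. So the conditional distribution of $T_i$ given $T_1 > t_1, \ldots, T_{i-1} > t_{i-1}$ is a mixture (over the possible realizations of $\phi_i$ and of the randomly chosen edge $e$ and endpoint $x$) of the distributions to which Proposition~\ref{theo:num_iter_while} applies.

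First I would make this conditioning precise: fix any history consistent with $T_1 > t_1, \ldots, T_{i-1} > t_{i-1}$; this determines $\phi_i$ and $U_i$. Then the $i$-th iteration picks $e \in U_i$ and $x \in e$ uniformly at random and runs Algorithm~\ref{alg:multi_viz_chain} on $(\phi_i, e, x)$. Conditioning further on the specific choice of $(e,x)$, Proposition~\ref{theo:num_iter_while} gives
\[
\P[T_i > t_i \mid \text{history}, e, x] \,\leq\, \frac{8m}{\eps}\left(\frac{6400\kmax^4}{\ell}\right)^{t_i/2},
\]
since $\kmax = \Theta(1/\eps) \geq 16/\eps$ and $\ell = \Theta(1/\kmax^4)$ is chosen large enough that $\ell \geq 6400\kmax^4$ (these are exactly the hypotheses of that proposition, which hold by our choice of parameters in this section). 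Averaging over the uniform choice of $(e,x)$ among the $2|U_i|$ options and then over the realizations of the history does not change this bound, since it is uniform in $(e,x)$ and in the history.

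The remaining point is to improve the leading constant from $\frac{8m}{\eps}$ to $\frac{4m}{\eps|U_i|}$, i.e.\ to gain the factor $|U_i|$ in the denominator. This comes from a sharper bookkeeping inside the proof of Proposition~\ref{theo:num_iter_while}: in the sum $\sum_{I \in \mathcal{I}^{(t)}} \P[I]$ one sums over all starting edges $f \in E$ and endpoints $z \in f$, producing the factor $2m$; but here the starting edge is not arbitrary — it is the randomly chosen edge $e \in U_i$, which is uncolored under $\phi_i$, so only $|U_i|$ choices of $f$ (namely $f \in U_i$) contribute, and each is selected with probability $1/(2|U_i|)$ rather than being summed with weight $1$. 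Tracing through \eqref{eq: bound_on_P[I]}, the outer sum $\sum_{e \in E, u \in e}$ gets replaced by $\frac{1}{2|U_i|}\sum_{e \in U_i, u \in e}$, which has at most $2|U_i|$ terms, so the $2m$ in the bound becomes $2m \cdot \frac{1}{2|U_i|} \cdot |U_i|$... more carefully: the bound $\frac{8m}{\eps}(\cdots)^{t/2}$ arises as $\frac{2}{\eps}\cdot 2m \cdot (\cdots)^{t/2} \cdot 2$ (with the $|\mathcal S^{(t)}|\le 2^t$ and geometric-series factors absorbed), and replacing the unweighted sum over $2m$ edge–endpoint pairs by the probability-weighted sum over $2|U_i|$ pairs, each of weight $\frac{1}{2|U_i|}$, replaces the factor $2m$ by $\frac{2m}{2|U_i|}\cdot 2|U_i| / |U_i| = \frac{2m}{|U_i|}$, yielding $\frac{4m}{\eps|U_i|}(\cdots)^{t/2}$ as claimed.

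I expect the main obstacle to be making the "fresh randomness" conditioning airtight: one must check that the random bits consumed by the $i$-th invocation of Algorithm~\ref{alg:multi_viz_chain} (the colors from Algorithm~\ref{alg:rand_col}, the truncation lengths $\ell'$, and the restarts) are genuinely independent of the sigma-algebra generated by the first $i-1$ iterations, so that Proposition~\ref{theo:num_iter_while}—which is stated for a \emph{fixed} coloring $\phi$—can be applied with $\phi = \phi_i$ even though $\phi_i$ is itself random. This is intuitively clear from the structure of Algorithm~\ref{alg:seq} (each iteration draws new randomness) but deserves an explicit sentence. The constant-sharpening in the last paragraph is routine once one re-examines the final display chain in the proof of Proposition~\ref{theo:num_iter_while}, noting that the only place $E$ (rather than $U_i$) entered was the union bound over possible starting edges.
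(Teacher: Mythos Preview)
Your approach is essentially the same as the paper's, and the core idea---condition on $\phi_i$, use the fresh randomness of the $i$-th call, and gain a factor of $|U_i|$ from the random choice of starting edge---is correct. However, your explanation in the last paragraph of \emph{where} the $|U_i|$ saving enters is confused.

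You write that in $\sum_{I \in \mathcal{I}^{(t)}} \P[I]$ ``one sums over all starting edges $f \in E$ and endpoints $z \in f$, producing the factor $2m$,'' and that the outer sum $\sum_{e \in E, u \in e}$ in \eqref{eq: bound_on_P[I]} should be restricted to $U_i$. But that outer sum is over \emph{termini} $(uv,u) = \tau(I)$, not over starting edges; the terminus can be any edge of $E$ regardless of where the chain started, so you cannot restrict it to $U_i$. The factor $2m$ in the bound $8m/\eps$ genuinely comes from the terminus sum, not the starting-edge sum.

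The paper's route is much cleaner and avoids re-entering the proof of Proposition~\ref{theo:num_iter_while} at all: since the starting pair $(f,z)$ is chosen uniformly from the $2|U_i|$ options, and input sequences with different starting pairs are disjoint,
\[
\P[T_i > t_i \mid \phi_i] \,=\, \frac{1}{2|U_i|}\sum_{f \in U_i,\, z \in f}\ \sum_{\substack{I \in \mathcal{I}^{(t_i)} \\ I \text{ starts at } (f,z)}} \P[I] \,\leq\, \frac{1}{2|U_i|}\sum_{I \in \mathcal{I}^{(t_i)}} \P[I],
\]
and then one simply plugs in \eqref{eq: bound_on_P[I]}. The $\frac{1}{2|U_i|}$ sits \emph{outside} the entire entropy-compression calculation; no internal modification is needed. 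Your conditioning argument in the first two paragraphs is fine and matches the paper's ``$T_i$ is independent of $T_1,\ldots,T_{i-1}$ given $\phi_i$''; only the bookkeeping in your final paragraph needs to be replaced by the one-line inequality above.
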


\begin{proof}
    To analyze the number of iterations of the \textsf{while} loop in the $i$-th call to Algorithm \ref{alg:multi_viz_chain}, we use the definition of input sequences from \S\ref{subsec: rmsva analysis}. Note that an input sequence $I = (f, z, \col_0, \ell_1, \col_1, \ldots, \ell_t, \col_t)$ contains not only the random choices $\ell_i$ and $\col_i$, but also the starting edge $f$ and the vertex $z \in f$. 
    Since at every iteration of Algorithm~\ref{alg:seq}, the starting edge and vertex are chosen randomly, we have
    \[\P[T_i > t_i\mid \phi_i] \,\leq\, \frac{1}{2|U_i|}\sum_{I \in \mathcal{I}^{(t_i)}}\P[I].\]
    By the inequality \eqref{eq: bound_on_P[I]} from the proof of Proposition~\ref{theo:num_iter_while}, we have
    \[\sum_{I \in \mathcal{I}^{(t_i)}}\P[I] \,\leq\,
    \frac{8m}{\eps}\left(\frac{6400\kmax^4}{\ell}\right)^{t_i/2}.\]
    Furthermore, $T_i$ is independent of $T_1$, \ldots, $T_{i-1}$ given $\phi_i$. Therefore, 
    \begin{align*}
        \P[T_i > t_i\mid T_1 > t_1, \ldots T_{i-1} > t_{i-1}] \,\leq\, \frac{1}{2|U_i|}\sum_{I \in \mathcal{I}^{(t_i)}}\P[I] \,\leq\, \frac{4m}{\eps\,|U_i|}\,\left(\frac{6400\kmax^{4}}{\ell}\right)^{t_i/2}, 
    \end{align*}
    as desired.
\end{proof}

For the rest of the proof, we fix the following parameters:
\[t = \Theta(m\,\log(1/\eps)), \quad s = \Theta(t/\eps), \quad k = \kmax(s + m) + m,\]
where the implicit constants are assumed to be sufficiently large.

\begin{Lemma}\label{lemma:T_bound}
    $\P[T > t] = \eps^{\Omega(m)}$.
\end{Lemma}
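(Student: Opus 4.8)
\textbf{Proof plan for Lemma~\ref{lemma:T_bound}.}

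The plan is to bound $\P[T > t]$ by controlling the tail of each $T_i$ using Lemma~\ref{lemma:t_i_prob_bound} and summing over iterations of Algorithm~\ref{alg:seq}. The key point is that $|U_i| = m - i + 1$ decreases by exactly one at each iteration, so the factor $m/(\eps |U_i|)$ in Lemma~\ref{lemma:t_i_prob_bound} is large only for the last $O(1)$ iterations (when $|U_i|$ is small) and otherwise is polynomially bounded. Meanwhile $\left(6400\kmax^4/\ell\right)^{t_i/2}$ is a fixed geometric decay in $t_i$: by our choice $\ell = \Theta(1/\kmax^4)$ with a sufficiently large constant, $6400\kmax^4/\ell \leq 1/2$ (say), so increasing the threshold $t_i$ buys exponential savings.

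First I would set up the right event to union-bound over. Write $T = \sum_{i=1}^m T_i$; if $T > t$ then for some choice of thresholds $t_1, \ldots, t_m$ with $\sum_i t_i = t$ we have $T_i > t_i$ for all $i$ — but a cleaner route is the following. Fix a threshold $t^\star \defeq t/m = \Theta(\log(1/\eps))$ with the implicit constant chosen large. If $T > t$, then $\sum_i T_i > m\,t^\star$, which forces $T_i > t^\star$ for at least one $i$; more usefully, by a standard counting argument the event $\{T > t\}$ is contained in the union over all $i$ of $\{T_i > c\,t^\star\}$ for a suitable constant $c$ — actually the simplest is: partition $[m]$ and use that $\sum_i \max\{T_i - t^\star, 0\} \leq T$, so $T > t = m t^\star$ is impossible unless $\sum_i (T_i - t^\star)_+ > 0$. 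I would instead just bound
\[
\P[T > t] \,\leq\, \sum_{\substack{(t_1,\dots,t_m)\in\N^m\\ \sum_i t_i = t}} \P[T_1 > t_1, \ldots, T_m > t_m]
\]
when each $T_i \geq 1$ deterministically (so the number of compositions is $\binom{t-1}{m-1} \leq 2^t$), then apply the chain rule and Lemma~\ref{lemma:t_i_prob_bound}:
\[
\P[T_1 > t_1, \ldots, T_m > t_m] \,\leq\, \prod_{i=1}^m \frac{4m}{\eps\,|U_i|}\left(\frac{6400\kmax^4}{\ell}\right)^{t_i/2} \,=\, \left(\prod_{i=1}^m \frac{4m}{\eps\,|U_i|}\right)\left(\frac{6400\kmax^4}{\ell}\right)^{t/2}.
\]
The product $\prod_{i=1}^m \frac{4m}{\eps\,|U_i|} = \frac{(4m/\eps)^m}{m!} \leq (4e/\eps)^m = (1/\eps)^{O(m)}$ by Stirling. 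Combining with the $2^t$ compositions and $\left(6400\kmax^4/\ell\right)^{t/2} \leq 4^{-t/2} \cdot 2^{-t/2} = 2^{-3t/2}$ (using $\ell$ large enough that $6400\kmax^4/\ell \leq 1/8$), the $2^t$ factor is absorbed, leaving $\P[T > t] \leq (1/\eps)^{O(m)}\,2^{-\Omega(t)}$. Since $t = \Theta(m\log(1/\eps))$ with a large enough constant, $2^{-\Omega(t)} = \eps^{\Omega(m\log(1/\eps))}$ dominates $(1/\eps)^{O(m)} = \eps^{-O(m)}$, giving $\P[T > t] = \eps^{\Omega(m)}$ as claimed.

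The main obstacle I anticipate is getting the bookkeeping of the conditioning exactly right: Lemma~\ref{lemma:t_i_prob_bound} conditions on $\{T_1 > t_1, \ldots, T_{i-1} > t_{i-1}\}$, but $|U_i| = m - i + 1$ is deterministic (it depends only on the iteration count of Algorithm~\ref{alg:seq}, not on the random choices inside the MSVA calls), so the bound factorizes cleanly across $i$ — this needs to be stated carefully but causes no real difficulty. The other delicate point is checking that $\sum_{\text{compositions}} \prod_i (\cdots)^{t_i/2} = 2^{-3t/2} \cdot \#\{\text{compositions}\}$ and that $\#\{\text{compositions of } t \text{ into } m \text{ positive parts}\} = \binom{t-1}{m-1} \leq 2^{t-1}$, so the geometric decay survives with room to spare; here one uses $t \geq m$, which holds since $t = \Theta(m\log(1/\eps))$ and $\eps < 1$. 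Finally, one should confirm the constants: we need $6400\kmax^4/\ell$ bounded below $1$ by a fixed margin, which is exactly guaranteed by the stipulation in Algorithm~\ref{alg:seq} that $\ell = \Theta(1/\kmax^4) = \Theta(\eps^4 \cdot \text{const})$... wait — rather $\ell$ is a large constant multiple of $\kmax^4$, i.e. $\ell \gg \kmax^4$, so $6400\kmax^4/\ell$ is an arbitrarily small constant; this is consistent with the hypothesis $\ell \geq 6400\kmax^4$ in Proposition~\ref{theo:num_iter_while} and we simply demand a larger leading constant.
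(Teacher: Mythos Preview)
Your approach is essentially identical to the paper's: union bound over compositions of $t$, chain rule with Lemma~\ref{lemma:t_i_prob_bound}, evaluate $\prod_i \frac{4m}{\eps|U_i|} = (4m/\eps)^m/m!$ via $|U_i|=m-i+1$, and absorb everything into the geometric factor $(6400\kmax^4/\ell)^{t/2}$ using $t=\Theta(m\log(1/\eps))$. One slip: you sum over $(t_1,\dots,t_m)\in\N^m$ (non-negative parts) but then count the terms as $\binom{t-1}{m-1}$, which is the count for \emph{positive} parts; the correct count is $\binom{t+m-1}{m-1}$, which the paper bounds by $(e(1+t/(m-1)))^{m-1}\leq e^{m+t}$ rather than $2^t$. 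This changes nothing structurally---your $2^t$ becomes at worst $4^t$ since $m\leq t$, and is still swallowed by choosing $\ell$ a large enough multiple of $\kmax^4$---but the parenthetical justification ``each $T_i\geq 1$ deterministically'' does not actually let you restrict to positive $t_i$, since you need $T_i>t_i$ strictly.
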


\begin{proof}
    Note that if $T > t$, then there exist non-negative integers $t_1$, \ldots, $t_m$ such that $\sum_it_i = t$ and $T_i > t_i$ for each $i$.
    Using the union bound we obtain
    \begin{align*}
        \P[T > t] \,&\leq\, \sum_{\substack{t_1, \ldots, t_m \\ \sum_it_i = t}}\P[T_1 > t_1, \ldots, T_m > t_m] \\
        &=\, \sum_{\substack{t_1, \ldots, t_m \\ \sum_it_i = t}}\,\prod_{i = 1}^m\P[T_i > t_i\mid T_1 > t_1, \ldots, T_{i-1} > t_{i-1}] \\
        [\text{by Lemma \ref{lemma:t_i_prob_bound}}]\qquad &\leq\, \sum_{\substack{t_1, \ldots, t_m \\ \sum_it_i = t}}\,\prod_{i = 1}^m\frac{4m}{\eps\,|U_i|}\,\left(\frac{6400\kmax^{4}}{\ell}\right)^{t_i/2} \\
        &=\, \sum_{\substack{t_1, \ldots, t_m \\ \sum_it_i = t}}\frac{1}{m!}\,\left(\frac{4m}{\eps}\right)^m\,\left(\frac{6400\kmax^{4}}{\ell}\right)^{t/2} \\
        [m! \geq (m/e)^m] \qquad &\leq\, \binom{t+m-1}{m-1} \,\left(\frac{4e}{\eps}\right)^m\,\left(\frac{6400\kmax^{4}}{\ell}\right)^{t/2} \\
        [{\textstyle {a \choose b} \leq (ea/b)^b}] \qquad &\leq\, \left(e\left(1 + \frac{t}{m-1}\right)\right)^{m-1}\,\left(\frac{4e}{\eps}\right)^m\,\left(\frac{6400\kmax^{4}}{\ell}\right)^{t/2} \\
        [(1 + a)\leq e^a] \qquad &\leq\, \left(\frac{4e^2}{\eps}\right)^m\,\left(\frac{6400e^2\kmax^{4}}{\ell}\right)^{t/2}.
    \end{align*}
    The desired bound now follows, assuming the hidden constants in the definitions of $t,\,\kmax,\, \ell$ are sufficiently large.
\end{proof}

The following proposition will assist with the remainder of the proof:

\begin{prop}\label{prop: geometric}
    Let $X_1, \ldots, X_n$ be i.i.d geometric random variables with probability of success $p$.
    Then for any $\mu \geq n/p$,
    \[\P\left[\sum_{i = 1}^nX_i \,\geq\, \frac{6\mu}{p(1-p)}\right] \leq \exp\left(-\frac{\mu}{1-p}\right).\]
\end{prop}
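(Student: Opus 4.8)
\textbf{Proof proposal for Proposition~\ref{prop: geometric}.}
The plan is to apply a standard Chernoff-type bound for sums of i.i.d.\ geometric random variables, obtained via the exponential moment method. First I would recall that each $X_i$ takes values in $\N^+$ with $\P[X_i = j] = (1-p)^{j-1}p$, so its moment generating function is
\[
\E[e^{\lambda X_i}] \,=\, \frac{p\,e^{\lambda}}{1 - (1-p)e^{\lambda}} \quad\text{for } e^{\lambda} < \frac{1}{1-p}.
\]
Writing $S \defeq \sum_{i=1}^n X_i$ and using independence, $\E[e^{\lambda S}] = \left(\frac{p\,e^{\lambda}}{1-(1-p)e^{\lambda}}\right)^n$. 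Then for any threshold $a$ and any admissible $\lambda > 0$, Markov's inequality gives $\P[S \geq a] \leq e^{-\lambda a}\,\E[e^{\lambda S}]$.

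The key step is choosing $\lambda$ well. I would set $e^{\lambda} = \frac{1}{1 - p/2}$ (equivalently, $1 - (1-p)e^{\lambda} = \frac{p/2}{1-p/2}$ and $e^{\lambda} \leq 2$ since $p \leq 1$), which is safely below the radius of convergence $1/(1-p)$. With this choice, $\frac{p\,e^{\lambda}}{1-(1-p)e^{\lambda}} = \frac{p/(1-p/2)}{(p/2)/(1-p/2)} = 2$, so $\E[e^{\lambda S}] \leq 2^n$. Also $\lambda = \log\frac{1}{1-p/2} \geq p/2$. Plugging in $a = \frac{6\mu}{p(1-p)}$ and using $n \leq p\mu$ (from $\mu \geq n/p$):
\[
\P[S \geq a] \,\leq\, \exp\!\left(-\lambda a + n\log 2\right) \,\leq\, \exp\!\left(-\frac{p}{2}\cdot\frac{6\mu}{p(1-p)} + p\mu \cdot \log 2\right) \,=\, \exp\!\left(-\frac{3\mu}{1-p} + (\log 2)\,p\mu\right).
\]
Since $(\log 2)\,p\mu \leq (\log 2)\,\mu \leq \frac{2\mu}{1-p}$ (using $\log 2 < 1 \leq \frac{2}{1-p}$), we get $\P[S \geq a] \leq \exp\!\left(-\frac{3\mu}{1-p} + \frac{2\mu}{1-p}\right) = \exp\!\left(-\frac{\mu}{1-p}\right)$, as claimed.

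I expect the main (minor) obstacle to be bookkeeping the constants: one must verify that the chosen $\lambda$ stays inside the domain of the MGF, that $\lambda \geq p/2$ holds for all $p \in (0,1)$, and that the slack between $\frac{3\mu}{1-p}$ and the error term $(\log 2)p\mu$ is genuinely enough to absorb it down to $\frac{\mu}{1-p}$ — all of which are elementary but need the inequality $\mu \geq n/p$ to convert the $n\log 2$ term into something controlled by $\mu$. An alternative, if one prefers to avoid exact MGF computations, is to write each geometric $X_i$ as a stopping time for a sequence of Bernoulli$(p)$ trials and apply a Chernoff bound to the total number of trials needed for $n$ successes; this reduces the statement to a concentration bound on a Binomial, but the exponential-moment computation above is the cleanest route and directly yields the stated form.
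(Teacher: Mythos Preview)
Your proof is correct and self-contained, but it takes a genuinely different route from the paper. The paper does not compute the moment generating function directly; instead it invokes, as a black box, a general concentration inequality of Assadi (stated in the paper as Proposition~\ref{prop: assadi}) for sums of independent nonnegative random variables with exponentially decaying tails. The paper then simply verifies that geometric random variables satisfy the hypotheses with $\mu_i = 1/p$, $\eta = p/(1-p)$, and $\kappa = p$, and reads off the conclusion.

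Your approach is more elementary and transparent: it avoids importing an external proposition and instead carries out the exponential-moment (Chernoff) argument by hand, with the convenient choice $e^{\lambda} = 1/(1-p/2)$ making $\E[e^{\lambda S}] = 2^n$ exactly. The paper's approach is shorter on the page only because the work is hidden in the cited black box; yours makes the constants and where the hypothesis $\mu \geq n/p$ enters fully explicit. Both are valid; yours would stand on its own without the auxiliary proposition.
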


This follows as a corollary to the following result of Assadi:

\begin{prop}[{\cite[Proposition 2.2]{Assadi}}]\label{prop: assadi}
    Let $\mu_1, \ldots, \mu_n$ be non-negative numbers and $X_1, \ldots, X_n$ be independent non-negative random variables such that for every $i \in [n]$:
    \[\E[X_i] \leq \mu_i \quad \text{and} \quad \P[X_i \geq x] \leq \eta\cdot e^{-\kappa x}\mu_i,\]
    for some $\eta$, $\kappa > 0$. Define $X \defeq \sum_{i = 1}^nX_i$ and let $\mu \geq \sum_{i = 1}^n \mu_i$. Then,
    \[\P\left[X \geq \frac{6\eta\,\mu}{\kappa^2}\right] \,\leq\, \exp\left(-\frac{\eta\,\mu}{\kappa}\right).\]
\end{prop}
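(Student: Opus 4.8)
The plan is to prove Proposition~\ref{prop: assadi} by the standard exponential-moment (Chernoff) method, with the tail hypothesis used only to control the moment generating function of each $X_i$. Throughout one works in the regime $0 < \kappa \leq 1$ — the only regime in which the statement is applied (in Proposition~\ref{prop: geometric} one has $\kappa = p \leq 1$), and the one in which the numerical constants close up at the end.

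\textbf{Step 1: bound the per-variable MGF.} Fix $\theta \defeq \kappa/2$, so $0 < \theta < \kappa$. Since $X_i \geq 0$, the layer-cake identity (an instance of Tonelli's theorem applied to $e^{\theta x} = 1 + \theta\int_0^x e^{\theta t}\,dt$) gives
\[
\E[e^{\theta X_i}] \,=\, 1 + \theta\int_0^\infty e^{\theta x}\,\P[X_i > x]\,dx .
\]
Substituting the hypothesis $\P[X_i > x] \leq \P[X_i \geq x] \leq \eta\mu_i e^{-\kappa x}$ (and noting the integral is finite precisely because $\theta < \kappa$) yields
\[
\E[e^{\theta X_i}] \,\leq\, 1 + \theta\eta\mu_i\int_0^\infty e^{-(\kappa-\theta)x}\,dx \,=\, 1 + \frac{\theta\eta\mu_i}{\kappa-\theta} \,=\, 1 + \eta\mu_i \,\leq\, e^{\eta\mu_i},
\]
the clean cancellation $\theta/(\kappa-\theta)=1$ being exactly what the choice $\theta=\kappa/2$ buys. (The hypothesis $\E[X_i]\leq\mu_i$ is not needed here; in this regime it is already implied by the tail bound.)

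\textbf{Step 2: assemble and apply Markov.} By independence of the $X_i$,
\[
\E\!\big[e^{\theta X}\big] \,=\, \prod_{i=1}^n \E\!\big[e^{\theta X_i}\big] \,\leq\, \exp\!\Big(\eta\sum_{i=1}^n\mu_i\Big) \,\leq\, e^{\eta\mu}.
\]
Now apply Markov's inequality to the nonnegative variable $e^{\theta X}$ at the threshold $a\defeq 6\eta\mu/\kappa^2$; since $\theta a = \tfrac{\kappa}{2}\cdot\tfrac{6\eta\mu}{\kappa^2} = \tfrac{3\eta\mu}{\kappa}$, this gives
\[
\P[X \geq a] \,\leq\, e^{-\theta a}\,\E\!\big[e^{\theta X}\big] \,\leq\, \exp\!\Big(\eta\mu - \frac{3\eta\mu}{\kappa}\Big) \,=\, \exp\!\Big(-\eta\mu\Big(\frac{3}{\kappa}-1\Big)\Big).
\]
Finally, $\kappa \leq 1$ gives $\tfrac{3}{\kappa}-1 \geq \tfrac{1}{\kappa}$ (this holds for all $\kappa\leq 2$), so $\P[X\geq a]\leq \exp(-\eta\mu/\kappa)$, which is exactly the claim.

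\textbf{Main obstacle.} There is essentially no hard step — the whole proof is the one-line MGF estimate of Step~1 plus bookkeeping — but the delicate point is the balancing of constants: the choice $\theta=\kappa/2$ produces the exponent $\eta\mu(3/\kappa-1)$, and concluding $\exp(-\eta\mu/\kappa)$ requires $3/\kappa-1\geq 1/\kappa$, i.e.\ $\kappa\leq 2$. This is harmless here (indeed $\kappa\leq 1$ throughout the paper), but it must be recorded, since for large $\kappa$ the inequality genuinely fails — a deterministic $X_i$ taking a tiny value while $\eta\mu_i$ is enormous is a counterexample. Two minor routine items are to justify the layer-cake identity and the sum/integral interchange for a general nonnegative $X_i$ (immediate from Tonelli) and to confirm $\E[e^{\theta X_i}]<\infty$, which follows from $\theta<\kappa$ and the tail bound.
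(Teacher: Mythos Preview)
The paper does not give its own proof of this proposition; it is quoted verbatim from \cite[Proposition~2.2]{Assadi} and used as a black box. Your exponential-moment argument is correct and is the standard way such a bound is established: the layer-cake computation with $\theta=\kappa/2$ giving $\E[e^{\theta X_i}]\leq e^{\eta\mu_i}$ is exactly right, and the Markov step is clean.

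Your observation that the stated constants only close up when $\kappa\leq 2$ is accurate and worth recording; since the paper's single application (Proposition~\ref{prop: geometric}) has $\kappa=p\leq 1$, this is harmless here. There is nothing further to compare against in the paper itself.
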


\begin{proof}[Proof of Proposition~\ref{prop: geometric}]
    Let $\mu_i = 1/p$, $\eta = p/(1-p)$, and $\kappa = p$. 
    Note that
    \[\P[X_i \geq x] = (1 - p)^{x-1} = \frac{p}{1-p}\,(1 - p)^x\,\frac{1}{p} \leq \eta\, e^{-\kappa x}\,\mu_i.\]
    Additionally, $\E[X_i] = \mu_i$.
    Therefore, we may apply Proposition~\ref{prop: assadi} to complete the proof.
\end{proof}

Let us now consider the random variable $S = \sum_{i = 1}^TS_i$.

\begin{Lemma}\label{lemma:total_fail_rand_fan}
    $\P[S > s] \,=\, \eps^{\Omega(m)}$.
\end{Lemma}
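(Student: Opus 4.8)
\textbf{Proof plan for Lemma~\ref{lemma:total_fail_rand_fan}.}

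The plan is to bound $S = \sum_{i=1}^T S_i$ by a sum of geometric random variables and apply Proposition~\ref{prop: geometric}. First I would observe that, by Proposition~\ref{lemma:fan_number_of_tries} (with $\kmax \geq 16/\eps \geq 8(1+\eps)/(\eps(2-\eps))$, which holds since $\eps < 1$), each individual call to Algorithm~\ref{alg:rand_fan} reaches Step~\ref{step:return} at most a geometrically-controlled number of times: $\P[S_i \geq s_i] \leq \exp(-\eps^2 s_i \kmax/100)$, and this bound is conditional on the history before the $i$-th call, since each run of the \textsf{while} loop in Algorithm~\ref{alg:rand_fan} is independent of everything that came before. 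Thus $S_i$ is stochastically dominated by a geometric random variable with success probability $p = 1 - \exp(-\eps^2\kmax/100)$, uniformly over the past.

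The subtlety is that the number $T$ of calls to Algorithm~\ref{alg:rand_chain} is itself random, so $S$ is a sum of a random number of (conditionally dominated) geometric variables. To handle this I would condition on the event $\{T \leq t\}$, whose complement has probability $\eps^{\Omega(m)}$ by Lemma~\ref{lemma:T_bound}. On this event, $S \leq \sum_{i=1}^{t} S_i'$ where the $S_i'$ are i.i.d.\ geometric$(p)$ variables dominating the $S_i$ (padding with extra independent copies for $i > T$). Since $\kmax = \Theta(1/\eps)$, we have $\eps^2\kmax/100 = \Theta(\eps)$, so $p = \Theta(\eps)$ and $1/p = \Theta(1/\eps)$; hence $t/p = \Theta(t/\eps)$. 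Taking $\mu = \Theta(t/\eps) \geq t/p$ (with the constant in $s = \Theta(t/\eps)$ chosen so that $s \geq 6\mu/(p(1-p))$), Proposition~\ref{prop: geometric} gives
\[
\P\left[\sum_{i=1}^t S_i' \geq s\right] \leq \exp\left(-\frac{\mu}{1-p}\right) = \exp(-\Omega(t/\eps)) = \eps^{\Omega(m)},
\]
using $t = \Theta(m\log(1/\eps))$, so that $t/\eps = \Omega(m\log(1/\eps)) = \Omega(m\log(1/\eps))$ and $\exp(-\Omega(t/\eps)) \leq \eps^{\Omega(m)}$.

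Combining the two pieces via a union bound, $\P[S > s] \leq \P[T > t] + \P[\sum_{i=1}^t S_i' \geq s] = \eps^{\Omega(m)} + \eps^{\Omega(m)} = \eps^{\Omega(m)}$, as claimed. The main obstacle I anticipate is making the stochastic-domination argument fully rigorous in the presence of the random stopping time $T$: one needs to verify that the conditional tail bound for $S_i$ holds with respect to a filtration that also records $\mathbbm{1}\{T \geq i\}$ (i.e.\ whether the $i$-th call happens at all), so that truncating at $i = t$ and padding with independent geometrics is legitimate. This is a standard maneuver — define the dominating sequence on an enlarged probability space — but it should be spelled out carefully rather than waved through, since the whole point of the paper is that surplus factors and sloppy conditioning are fatal to the $\poly(1/\eps)$ bound.
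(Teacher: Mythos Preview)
Your proposal is correct and follows essentially the same approach as the paper: define truncated variables $\tilde S_i$ (the paper's version of your ``padding'' construction), split off the event $\{T > t\}$ via Lemma~\ref{lemma:T_bound}, use Proposition~\ref{lemma:fan_number_of_tries} to stochastically dominate each $\tilde S_i$ by a geometric variable with $p = 1 - \exp(-\eps^2\kmax/100) = \Theta(\eps)$, and apply Proposition~\ref{prop: geometric}. The only cosmetic difference is that the paper notes $X_i - 1$ (rather than $X_i$) dominates $\tilde S_i$ and hence bounds $\P[X > s - t]$ instead of $\P[X > s]$, but this is immaterial to the argument.
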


\begin{proof}
    To assist with our proof, we define the following for $i \in [t]$:
    \[\tilde S_i = \left\{\begin{array}{cc}
        S_i & i \leq T; \\
        0 & \text{otherwise.}
    \end{array}\right., \qquad \tilde S = \sum_{j = 1}^t\tilde S_j.\]
    In particular, if $T \leq t$, then $\tilde{S} = S$. Note the following as a result of Lemma~\ref{lemma:T_bound}:
    \begin{align*}
        \P[S > s] &= \P[S > s,\, T \leq t] + \P[S > s,\, T > t] \\
        &\leq \P[S > s,\, T \leq t] + \P[T > t] \\
        &\leq \P[\tilde S > s] + \eps^{\Omega(m)}.
    \end{align*}
    It is now sufficient to show that $\P\left[\tilde S > s\right] = \eps^{\Omega(m)}$.
    
    To this end, let $X_1$, \ldots, $X_t$ be i.i.d. geometric random variables with probability of success
    \[p \defeq 1 - \exp\left(-\eps^2\kmax/100\right).\]
    Take any $1 \leq i \leq t$. By definition, for any $s_i \geq 0$, we have
    \[\P\left[\tilde S_i \geq s_i \mid T < i\right] = \bbone\set{s_i = 0} \leq (1 - p)^{s_i}.\]
    Additionally, by Proposition~\ref{lemma:fan_number_of_tries}, we have that
    \[\P\left[\tilde S_i \geq s_i \mid T \geq i, \, \psi, e, x\right] \leq (1 - p)^{s_i},\]
    where $(\psi, e, x)$ is the input to Algorithm~\ref{alg:rand_fan} at the relevant step.
    As the expression on the right-hand side of the above inequalities is independent of $T$, $\psi$, $e$, $x$, we conclude that the random variable $X_i - 1$ stochastically dominates $\tilde S_i$.
    Furthermore, note that
    \[\mu = \frac{(s-t)\,p\,(1-p)}{6} \geq t/p.\]
    In particular, we may apply Proposition~\ref{prop: geometric} to get
    \[\P\left[\tilde S > s\right] \,\leq\, \P[X > s - t] \,\leq\, \exp\left(-\frac{(s-t)\,p}{6}\right) \,=\, \exp\left(-\Omega(\eps\,s)\right),\]
    where the last step follows by the definition of $s$ and since $p = \Theta(\eps)$ by the definition of $\kmax$.
\end{proof}

As a result of \eqref{eqn: num_rand_col}, we obtain the following as a corollary to the above lemma.

\begin{Lemma}\label{lemma:total_num_rand_col}
    $\P[K \geq k] =  \eps^{\Omega(m)}$.
\end{Lemma}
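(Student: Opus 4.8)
The plan is to derive Lemma~\ref{lemma:total_num_rand_col} as an immediate consequence of Lemma~\ref{lemma:total_fail_rand_fan} together with the deterministic bound \eqref{eqn: num_rand_col} on the number of calls to Algorithm~\ref{alg:rand_col}. Recall that $K = \sum_{j=1}^T K_j$ with $K_j \le \kmax(S_j+1)+1$, so summing over $j \le T$ gives $K \le \kmax(S + T) + T = \kmax S + (\kmax+1)T$. Since $k = \kmax(s+m)+m$ was chosen with the implicit constants large, it suffices to show $\kmax S + (\kmax+1)T \ge k$ forces either $S > s$ or $T > t$ — indeed $\kmax s + (\kmax+1)t < k$ once the constants in $t = \Theta(m\log(1/\eps))$, $s = \Theta(t/\eps)$, $k = \kmax(s+m)+m$ are chosen appropriately (using that $\kmax t = \Theta(\kmax m \log(1/\eps))$ and $\kmax s = \Theta(\kmax t/\eps) = \Theta(k)$ with room to spare relative to the leading $\kmax s$ term in $k$). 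Hence $\{K \ge k\} \subseteq \{S > s\} \cup \{T > t\}$.

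So the proof is essentially one line plus a union bound:
\begin{align*}
    \P[K \ge k] \,\le\, \P[S > s] + \P[T > t] \,=\, \eps^{\Omega(m)} + \eps^{\Omega(m)} \,=\, \eps^{\Omega(m)},
\end{align*}
where the first probability is bounded by Lemma~\ref{lemma:total_fail_rand_fan} and the second by Lemma~\ref{lemma:T_bound}. The only thing to be careful about is the arithmetic comparing $k$ with $\kmax s + (\kmax+1)t + t$ (the "$+T$" term is itself controlled by $T \le t$ on the good event), making sure the constants chosen for $t$, $s$, $k$ are mutually consistent; this is exactly the kind of constant-chasing the paper defers by writing "the implicit constants are assumed to be sufficiently large."

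I do not anticipate a genuine obstacle here — the lemma is explicitly flagged in the text as "a corollary to the above lemma" via \eqref{eqn: num_rand_col}. The only mild subtlety is bookkeeping: one must confirm that $K$ is a sum over the \emph{actual} (random) number $T$ of calls, so the bound $K \le \kmax S + (\kmax+1)T$ does not introduce any additional randomness beyond $S$ and $T$, and then the event inclusion $\{K\ge k\}\subseteq\{S>s\}\cup\{T>t\}$ is purely deterministic given the chosen parameters. After that, the union bound and the two cited lemmas finish the argument.

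\begin{proof}[Proof of Lemma~\ref{lemma:total_num_rand_col}]
    By \eqref{eqn: num_rand_col}, for each $j \in [T]$ we have $K_j \le \kmax(S_j + 1) + 1$, and therefore
    \[
        K \,=\, \sum_{j = 1}^T K_j \,\le\, \kmax S + (\kmax + 1)\,T.
    \]
    With the parameters $t = \Theta(m\log(1/\eps))$, $s = \Theta(t/\eps)$, and $k = \kmax(s + m) + m$ chosen with sufficiently large implicit constants, we have $\kmax s + (\kmax+1)t < k$. Consequently, if $K \ge k$ then either $S > s$ or $T > t$, i.e.
    \[
        \set{K \ge k} \,\subseteq\, \set{S > s} \cup \set{T > t}.
    \]
    Applying the union bound together with Lemmas~\ref{lemma:total_fail_rand_fan} and \ref{lemma:T_bound} yields
    \[
        \P[K \ge k] \,\le\, \P[S > s] + \P[T > t] \,=\, \eps^{\Omega(m)},
    \]
    as claimed.
\end{proof}
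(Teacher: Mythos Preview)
Your approach is exactly what the paper intends --- it explicitly flags this lemma as a corollary of Lemma~\ref{lemma:total_fail_rand_fan} via \eqref{eqn: num_rand_col}, and your deterministic inclusion $\{K\ge k\}\subseteq\{S>s\}\cup\{T>t\}$ followed by a union bound with Lemmas~\ref{lemma:T_bound} and~\ref{lemma:total_fail_rand_fan} is the natural (and only reasonable) reading of that remark.

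There is, however, an arithmetic slip in your verification that you paper over with ``sufficiently large implicit constants.'' Your key inequality $\kmax s+(\kmax+1)t<k$ with $k=\kmax(s+m)+m$ simplifies, after cancelling the common $\kmax s$ term, to $(\kmax+1)t<(\kmax+1)m$, i.e., $t<m$. But $t=\Theta(m\log(1/\eps))$ with a large constant, so this is false. Enlarging the constants in $t$ or $s$ does not help: both sides share the dominant $\kmax s$ term, and the residual comparison is purely $t$ versus $m$, with no free constant in $k$. This appears to be a typo in the paper's parameter list --- replacing $m$ by $t$, i.e., setting $k=\kmax(s+t)+t$, makes your inclusion hold on the nose and leaves all downstream runtime bounds (the $\kmax K$ and $Y=O(k/\eps^2)$ terms in \eqref{eqn: runtime}) still $O(m\log(1/\eps)/\eps^4)$. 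With that correction your proof is complete.
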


Finally, let us consider the random variable $Y$.

\begin{Lemma}\label{lemma:total_iters_rand_col}
    $\P\left[Y \geq \frac{50k}{\eps^2(2-\eps)}\right] = \eps^{\Omega(m)}$.
\end{Lemma}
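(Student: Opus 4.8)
The goal is to bound $Y = \sum_{i=1}^K Y_i$, the total time spent across all calls to Algorithm~\ref{alg:rand_col} during the execution of Algorithm~\ref{alg:seq}. The natural approach is to condition on the event $\{K < k\}$, which holds with probability $1 - \eps^{\Omega(m)}$ by Lemma~\ref{lemma:total_num_rand_col}, and then apply the stochastic domination machinery of Proposition~\ref{prop: geometric} exactly as in the proof of Lemma~\ref{lemma:total_fail_rand_fan}. The plan is to first introduce padded variables: define $\tilde Y_i \defeq Y_i$ for $i \leq K$ and $\tilde Y_i \defeq 0$ for $K < i \leq k$, and set $\tilde Y \defeq \sum_{i=1}^k \tilde Y_i$, so that on the event $\{K < k\}$ we have $\tilde Y \geq Y$. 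Then
\[
\P\left[Y \geq \tfrac{50k}{\eps^2(2-\eps)}\right] \,\leq\, \P\left[\tilde Y \geq \tfrac{50k}{\eps^2(2-\eps)}\right] + \P[K \geq k] \,\leq\, \P\left[\tilde Y \geq \tfrac{50k}{\eps^2(2-\eps)}\right] + \eps^{\Omega(m)},
\]
so it suffices to control $\tilde Y$.

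Next I would show that each $\tilde Y_i$ is stochastically dominated by a geometric random variable with success probability $p \defeq \eps/2$. This is precisely the content of Lemma~\ref{lemma:rand_color_runtime}: the runtime of a call to Algorithm~\ref{alg:rand_col} equals the number of iterations of its \textsf{while} loop (up to a constant factor, which we absorb), and $\P[\text{at least } t \text{ iterations}] \leq (1 - \eps/2)^{t-1}$. The key point requiring care is that this tail bound must hold conditionally: conditioned on $K \geq i$ together with the entire input tuple $(\phi, x, \theta)$ to the $i$-th call, and also conditioned on $K < i$ (in which case $\tilde Y_i = 0$ deterministically). Since the bound $(1-\eps/2)^{t-1}$ is uniform over all valid inputs (using $|M(\phi,x)\setminus\{\theta\}| \geq \eps\Delta$, which holds in all our invocations — when $\theta \neq \blank$ there is an uncolored edge at $x$), the variable $X_i - 1$, where $X_i$ is geometric with success probability $\eps/2$, stochastically dominates $\tilde Y_i$, and the $X_i$ can be taken i.i.d.\ independent of all history. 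One should also confirm the $O(1)$-per-iteration claim for Algorithm~\ref{alg:rand_col}: each iteration picks a uniform color from $[q]$ and checks $M(\phi,x)[\eta]$ in $O(1)$ time by the hash-map data structure of \S\ref{section: data_structures}.

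Finally I would apply Proposition~\ref{prop: geometric} with $n = k$ and $p = \eps/2$. Choosing $\mu \defeq k/p = 2k/\eps$ (which satisfies $\mu \geq n/p = 2k/\eps$ with equality), the proposition gives
\[
\P\left[\sum_{i=1}^k X_i \geq \frac{6\mu}{p(1-p)}\right] \,\leq\, \exp\left(-\frac{\mu}{1-p}\right).
\]
Here $\frac{6\mu}{p(1-p)} = \frac{6 \cdot (2k/\eps)}{(\eps/2)(1-\eps/2)} = \frac{24k}{\eps^2(1 - \eps/2)} = \frac{48k}{\eps^2(2-\eps)}$, and since $\tilde Y \leq \sum_i X_i$ stochastically (and the additive constant from the $O(1)$ absorption is harmless, bumping $48$ to at most $50$ for $k$ large), we get $\P[\tilde Y \geq \tfrac{50k}{\eps^2(2-\eps)}] \leq \exp(-\mu/(1-p)) \leq \exp(-2k/\eps)$. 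Since $k = \kmax(s+m) + m = \Omega(m)$ (as $\kmax, s \geq 1$ and in fact $s = \Theta(t/\eps) = \Omega(m)$), this is $\eps^{\Omega(m)}$ — indeed it is $\exp(-\Omega(m/\eps))$, comfortably of the required form. Combining with the two-term bound above completes the proof.

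The main obstacle is the bookkeeping in the conditional stochastic domination step: one must be careful that the tail bound from Lemma~\ref{lemma:rand_color_runtime} applies uniformly regardless of which call this is (first chain vs.\ subsequent, and whether $\theta = \blank$), regardless of the current partial coloring, and regardless of how many prior calls there have been — this uniformity is exactly what licenses replacing the $\tilde Y_i$ by i.i.d.\ geometrics and then invoking Proposition~\ref{prop: geometric}. Once that is set up, the rest is a direct substitution into Proposition~\ref{prop: geometric}, mirroring Lemma~\ref{lemma:total_fail_rand_fan} almost verbatim.
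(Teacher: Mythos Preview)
Your proposal is correct and follows essentially the same approach as the paper: pad the $Y_i$'s to a fixed-length sum $\tilde Y$, reduce to $\P[\tilde Y \geq y]$ via Lemma~\ref{lemma:total_num_rand_col}, stochastically dominate by i.i.d.\ geometrics with $p = \eps/2$ using Lemma~\ref{lemma:rand_color_runtime}, and apply Proposition~\ref{prop: geometric}. One small slip: the correct domination here is $X_i \succeq \tilde Y_i$ (since $\P[Y_i \geq t] \leq (1-p)^{t-1} = \P[X_i \geq t]$), not $X_i - 1 \succeq \tilde Y_i$; this does not affect your argument since you correctly use $\sum_i X_i$ in the final bound, and your choice $\mu = k/p$ differs only cosmetically from the paper's $\mu = 25k/(12\eps)$.
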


\begin{proof}
    To assist with our proof, we define the following for $i \in [k]$:
    \[\tilde Y_i = \left\{\begin{array}{cc}
        Y_i & i \leq K; \\
        0 & \text{otherwise.}
    \end{array}\right., \qquad \tilde Y = \sum_{j = 1}^k\tilde Y_j, \qquad y = \frac{50k}{\eps^2(2-\eps)}.\]
    As in the proof of Lemma~\ref{lemma:total_fail_rand_fan}, we note that $\tilde{Y} = Y$ when $K \leq k$. Hence, by Lemma~\ref{lemma:total_num_rand_col}:
    \begin{align*}
        \P\left[Y \geq y\right] &= \P\left[Y \geq y,\, K \leq k\right] + \P\left[Y \geq y,\, K > k\right] \\
        &\leq \P\left[Y \geq y,\, K \leq k\right] + \P\left[K > k\right] \\
        &\leq \P\left[\tilde Y \geq y\right] + \eps^{\Omega(m)}.
    \end{align*}
    It is now sufficient to show that $\P\left[\tilde Y > y\right] = \eps^{\Omega(m)}$.

    To this end, let $X_1$, \ldots, $X_t$ be i.i.d. geometric random variables with probability of success $p \defeq \eps/2$.
    Take any $1 \leq i \leq k$ and let $y_i \geq 1$. Then 
    \[\P\left[\tilde Y_i \geq y_i \mid K < i\right] = 0 \leq (1 - p)^{y_i - 1}.\]
    Additionally, by Lemma~\ref{lemma:rand_color_runtime}, we note that
    \[\P\left[\tilde Y_i \geq y_i \mid K \geq i, \, \psi, x, \theta\right] \leq (1 - p)^{y_i - 1},\]
    where $(\psi, x, \theta)$ is the input to Algorithm~\ref{alg:rand_col} at the relevant step.
    As the expression on the right-hand side of the above inequalities is independent of $K$, $\psi$, $x$, $\theta$, we conclude that the random variable $X_i$ stochastically dominates $\tilde Y_i$.
    Furthermore, note that
    \[\mu \,=\, \frac{y\,p\,(1-p)}{6} \,=\, \frac{25k}{12\eps} \,\geq\, k/p.\]
    In particular, we may apply Proposition~\ref{prop: geometric} to get
    \[\P\left[\tilde Y > y\right] \,\leq\, \P\left[X > y\right] \,\leq\, \exp\left(-\frac{25k}{12\eps}\right) \,=\, \exp\left(-\Omega(t/\eps^3)\right),\]
    where the last step follows by the definition of $k$ and $\kmax$.
\end{proof}

Note that if $T \leq t$, $K \leq k$, and $Y \leq \frac{50k}{\eps^2(2-\eps)}$, the runtime of Algorithm~\ref{alg:seq} is $O(m\,\log(1/\eps)/\eps^4)$ as a result of \eqref{eqn: runtime}.
By Lemmas~\ref{lemma:T_bound}, \ref{lemma:total_fail_rand_fan}, and \ref{lemma:total_num_rand_col}, we have
\[\P\left[T > t \text{ or } K > k \text{ or } Y > \frac{50k}{\eps^2(2-\eps)}\right] = \eps^{\Omega(m)},\]
completing the proof of Theorem~\ref{theo:main_theo}.

\printbibliography

\end{document}